\numberwithin{equation}{section}
\numberwithin{figure}{section}
\theoremstyle{plain}
\newtheorem{prop}{Proposition}[section]
\newtheorem{theo}{Theorem}[section]
\newtheorem{lem}{Lemma}[section]
\newtheorem{assu}{Assumption}[section]
\newtheorem{rhp}{RH problem}[section]
\newtheorem{rhp-Dbar}{RH-$\bar{\partial}$ problem}[section]
\def\rd{{\rm d}}
\def\re{{\rm e}}
\def\ri{{\rm i}}
\def\rim{{\rm {Im}}\,}
\def\rre{{\rm {Re}}\,}
\newtheorem{remark}{Remark}[section]
\def\ps@pprintTitle{%
  \let\@oddhead\@empty
  \let\@evenhead\@empty
  \def\@oddfoot{\footnotesize\itshape
  }%
  \let\@evenfoot\@oddfoot
}
\begin{document}

\title{{\bf Riemann-Hilbert problem and long-time asymptotics of the Yajima-Oikawa equation}}

\author[1]{Deng-Shan Wang}
\author[1]{Yingmin Yang \corref{cor1}}
\cortext[cor1]{Corresponding author}
\ead{ymyang011@163.com}
\author[1]{Xiaodong Zhu}

\address[1]{School of Mathematical Sciences,
	Beijing Normal University, Beijing 100875, China}

\begin{abstract}

The Yajima-Oikawa equation is an integrable long wave-short wave resonance interaction  model arising as a deformation of the Zakharov system for Langmuir waves coupled to ion-acoustic waves. In this work, a Riemann-Hilbert approach is developed for the Cauchy problem for the Yajima-Oikawa equation with rapidly decaying initial data. A main novelty is the formulation of a direct and inverse scattering theory adapted to its third-order  spectral problem, including a detailed treatment of the singular spectral point \(k=0\). The associated Riemann-Hilbert problem is expressed in terms of two reflection coefficients determined by the initial data, together with possible discrete eigenvalues and norming constants. We prove a vanishing lemma which ensures the unique solvability of the Riemann-Hilbert problem under suitable positivity assumptions, and hence obtain a rigorous reconstruction formula for the solution. We also classify the admissible discrete spectrum and derive exact pure soliton solutions from the reflectionless Riemann-Hilbert problem. In the solitonless case, we apply the Deift-Zhou nonlinear steepest descent method to obtain rigorous long-time asymptotic formulas in the different regions of the upper \((x,t)\)-plane. The leading oscillatory behavior of the short-wave component is described explicitly in terms of the reflection coefficients evaluated at the stationary phase points, while the long-wave component is shown to be of lower order away from the transition region. These results provide, to the best of our knowledge, the first Riemann-Hilbert framework for the long-time asymptotic analysis of the Yajima-Oikawa equation in the presence of continuous spectrum.
	
\bigskip
\noindent{\bf Keywords:} Yajima-Oikawa equation, Long-time asymptotics, Riemann-Hilbert problem, Lax pair
\end{abstract}

\maketitle

	\tableofcontents
	\section{Introduction}

The resonant interaction between long waves and short waves is a fundamental mechanism in nonlinear wave propagation. It occurs when a slowly varying, weakly dispersive long wave is coupled resonantly to the envelope of a rapidly oscillating short wave. Such a mechanism appears in a variety of physical settings, including plasma physics, stratified fluids, nonlinear optics, and Bose-Einstein condensates. Among the integrable models describing this type of interaction, the Yajima-Oikawa (YO) equation occupies a particularly important place. In its normalized form, the YO equation \cite{YO_1976,YO_1975} reads
\begin{equation}\label{YO}
		\left\lbrace 
		\begin{aligned}
			&\ri\frac{\partial\phi}{\partial t}+\ri\frac{\partial\phi}{\partial x}+\frac{1}{2}\frac{\partial^2\phi}{\partial x^2}-\phi n=0,\\
			&\frac{\partial n}{\partial t}+\frac{\partial n}{\partial x}+\frac{\partial \left|\phi \right|^2 }{\partial x}=0,
		\end{aligned}
		\right.
\end{equation}
where the real-valued function \(n(x,t)\) represents the long-wave component and the complex-valued function \(\phi(x,t)\) represents the envelope of the short-wave component.
\par
The YO equation (\ref{YO}) was introduced by Yajima and Oikawa \cite{YO_1976,YO_1975} in their study of sonic-Langmuir solitons and may be regarded as a deformation or reduction of the Zakharov model for Langmuir waves interacting with ion-acoustic waves \cite{Zakharov_1972}. Closely related long-wave-short-wave resonance models were also investigated in the classical works of Benney \cite{Benney_1977}, Newell \cite{Newell-1978}, Ma and Redekopp \cite{MaYC_1978,Ma-Redekopp-1979}. In this context, the YO equation provides a mathematically tractable and physically meaningful model for the nonlinear trapping of high-frequency oscillations by low-frequency density variations. Its complete integrability, \(3\times3\) Lax representation, soliton solutions, and inverse-scattering structure make it one of the canonical examples of integrable long wave-short wave interaction systems.
\par
Since the original works of Yajima and Oikawa, many aspects of the YO equation and its variants have been studied. The inverse scattering method and explicit \(N\)-soliton solutions were developed in the early literature \cite{YO_1976,MaYC_1978}. Wright \cite{Wright_2006} constructed Bäcklund transformations and studied homoclinic structures associated with unstable plane-wave solutions. Nistazakis and his colleagues \cite{Nistazakis} derived the YO equation as an effective model from multicomponent Gross-Pitaevskii systems arising in spinor Bose-Einstein condensates. Zabolotskii \cite{Zabolotskii_2009}  later constructed an inverse scattering transform for the YO equation with mixed vanishing and nonvanishing boundary values. More recently, long wave-short wave interaction models generalizing the YO equation, such as the Yajima-Oikawa-Newell system, have attracted renewed attention from the viewpoints of integrability, stability spectra, periodic waves, solitary waves and breathers \cite{Degasperis-2021,Degasperis-2022,Li-Geng-2019}. These developments show that the YO equation remains an active and relevant model both in nonlinear science and in the theory of integrable systems.
\par
A different but closely related question concerns the long-time asymptotic behavior of solutions. For integrable dispersive equations, the inverse scattering transform converts the nonlinear Cauchy problem into a spectral problem. The long-time asymptotics of the solution can then be extracted from the associated oscillatory Riemann-Hilbert (RH) problem. One of the earliest systematic studies in this direction was carried out by Zakharov and Manakov \cite{Zakharov-Manakov-1976}, who investigated the non-soliton long-time asymptotics of integrable nonlinear wave systems within the inverse scattering framework. A major breakthrough was later achieved by Deift and Zhou \cite{DZ_1993}, who introduced the nonlinear steepest descent method for oscillatory RH problems. Since its appearance, this method has become one of the most powerful tools for obtaining rigorous asymptotic formulas for integrable equations with rapidly decaying initial data. It has been successfully applied to the KdV equation \cite{DVZ_1994}, the nonlinear Schrödinger equation \cite{DZ_1994-Tokyo,Kamvissis_2996}, the Toda lattice \cite{Kamvissis_2993,DKKZ_1996}, the sine-Gordon equation \cite{CVZ_1999}, the derivative nonlinear Schrödinger equation \cite{KV_1999}, the Camassa-Holm equation \cite{BKST_2009}, and the Ablowitz-Ladik equation \cite{Yamane_2014}. 
\par
In recent years, the Deift-Zhou method has been extended beyond the standard \(2\times2\) AKNS-type setting to integrable equations whose Lax pairs are of third order or whose associated spectral problems naturally lead to \(3\times3\) matrix RH problems. For example, for the Degasperis-Procesi equation, Boutet de Monvel, Lenells, and Shepelsky \cite{BLS_2019} studied the long-time asymptotics on the half-line, while Fan \cite{Fan-JDE-2025} established soliton resolution and the asymptotic stability of \(N\)-solitons on the line. Geng and Liu \cite{Geng-Liu-2018} considered the long-time asymptotics of the coupled nonlinear Schrödinger equation. More recently, Charlier, Lenells and Wang \cite{Analysis_2023}, as well as Charlier and Lenells \cite{CL-JMPA_2023}, analyzed the good and bad Boussinesq equations by means of \(3\times3\) RH techniques. Wang and Zhu \cite{Wang-Zhu_2023} studied the Sawada-Kotera equations on the line, and Huang, Wang, and Zhu \cite{Huang-Wang-Zhu_2024} investigated the Tzitzéica equation from affine differential geometry. These works demonstrate that third-order Lax pairs typically require new analytic ingredients, including nonstandard analyticity domains, singular behavior at special spectral points, and more involved factorizations of jump matrices.
\par 
Despite the complete integrability of the YO equation (\ref{YO}), its rigorous long-time asymptotic analysis for rapidly decaying initial data appears to have remained open. In particular, to the best of our knowledge, a RH formulation suitable for the Deift-Zhou nonlinear steepest descent analysis of the decaying Cauchy problem for the YO equation has not been previously available. The main purpose of the present paper is to fill this gap. We consider the initial-value problem
\begin{equation}\label{YOE}
		\left\lbrace 
		\begin{aligned}
			&\ri\frac{\partial\phi}{\partial t}+\ri\frac{\partial\phi}{\partial x}
			+\frac{1}{2}\frac{\partial^2\phi}{\partial x^2}-\phi n=0,\\
			&\frac{\partial n}{\partial t}+\frac{\partial n}{\partial x}
			+\frac{\partial \left|\phi \right|^2 }{\partial x}=0,\\
			&\phi(x,0)=\phi_0(x)\in\mathcal{S}(\mathbb{R}),\quad
			n(x,0)=n_0(x)\in\mathcal{S}(\mathbb{R}),
		\end{aligned}
		\right.
	\end{equation}
and study its direct scattering transform, inverse problem, soliton solutions, and long-time asymptotic behavior by means of the RH approach and the Deift-Zhou nonlinear steepest descent method \cite{DZ_1993}. In this sense, the present work provides the first RH framework designed for the rigorous long-time asymptotic analysis of the YO equation with continuous spectrum.
\par
The starting point of our analysis is the following \(3\times3\) Lax pair for the YO equation (\ref{YO}):
\begin{equation}\label{lax_phi}
		\begin{aligned}
			&\Phi_x(x,t,k)=\tilde{L}(x,t,k)\Phi(x,t,k), \\ 
			&\Phi_t(x,t,k)=\tilde{Z}(x,t,k)\Phi(x,t,k),
		\end{aligned}
	\end{equation}
where 
\begin{equation*}
\tilde{L}=\begin{pmatrix}
				3\ri k & 2\ri n & \overline{\phi}\\
				-\ri & -\ri k & 0\\
				0 & 2\phi & \ri k\\
			\end{pmatrix},\quad \tilde{Z}=\begin{pmatrix}
				-2\ri k+2 \ri k^2/3 & -\ri \phi\overline{\phi}-2\ri n & (k-1)\overline{\phi}-\ri \overline{\phi}_x/2 \\
				\ri & 2\ri k+2 \ri k^2/3 & -\overline{\phi}/2\\
				-\phi & -2(k+1)\phi+\ri \phi_x & -4\ri k^2/3
			\end{pmatrix}.
\end{equation*}
\par
Compared with the standard \(2\times2\) scattering problems, the spectral analysis of \eqref{lax_phi} has several distinctive features. First, the diagonalizing transformation used to normalize the \(x\)-part of the Lax pair is singular at \(k=0\), and hence the behavior of the Jost solutions and scattering coefficients near the origin requires special care. Second, the continuous scattering data are naturally described by two reflection coefficients, denoted by \(r_1(k)\) and \(r_2(k)\), rather than by a single scalar reflection coefficient. These two reflection coefficients enter the jump matrix of the RH problem in an essential way and satisfy nontrivial symmetry and positivity relations. Third, the phase functions appearing in the jump matrix possess different stationary phase structures in different regions of the upper \((x,t)\)-half-plane. This leads to a division of the asymptotic plane into several regions, including two Zakharov-Manakov-type regions and a transition regime near the characteristic line \(x=t\).
\par
More precisely, we first develop the direct and inverse scattering theory for Schwartz-class initial data and formulate a row-vector RH problem whose jump matrix is expressed in terms of two reflection coefficients. The discrete spectrum is described through the zeros of the relevant spectral function, and the corresponding reflectionless RH problem yields exact pure soliton solutions. In the solitonless case, we apply the Deift-Zhou nonlinear steepest descent method to derive the leading long-time asymptotics in different regions. In particular, the short-wave component \(\phi(x,t)\) exhibits oscillatory leading terms, with amplitudes and phases determined by the reflection coefficients at the stationary phase points, whereas the long-wave component \(n(x,t)\) is of lower order away from the transition region.
\par
The remainder of this paper is organized as follows. Section~\ref{sec:main}
states the main results, including the scattering data, the RH problem, the pure soliton formulas, and the long-time asymptotic theorems. Section~\ref{sec:direct} develops the direct scattering theory and formulates the associated RH problems. Section~\ref{sec:inverse} establishes the vanishing lemma and derives the pure $N$-soliton solutions. Section~\ref{sec_LT} carries out the Deift-Zhou nonlinear steepest descent analysis in the different asymptotic regions. Technical estimates for the direct scattering problem are collected in Appendix~\ref{appendix:direct}, the determinant formulas for the \(N\)-soliton solutions are given in Appendix~\ref{sec_Nsoliton}, and the local model RH problems used in the asymptotic analysis are summarized in Appendix~\ref{AppendixA}.
\par
\subsection{Notations} \label{Notations-1.1}
     We summarize some notations that will be used throughout the paper:
	  \begin{itemize}
	  	\item Define $[X]_j$ as the $j$-th column and $X_{ij}$ as the element in the $i$-th row and $j$-th column of the $3 \times 3$ matrix $X$.
	  	\item $C$ and $c$ denote generic positive constants which may change within a computation.
        \item Define $\mathcal{I}_1=[0,\tau_{\text{max}}]$, $\mathcal{I}_4=[-\tau_{\text{max}},0]$ with $\tau_{\text{max}}\in(0,1)$, and let $\mathcal{I}_2 \subset (1,\infty)$ and $\mathcal{I}_3 \subset (-\infty,1)$ be fixed compact sets.
        \item $\mathcal{S}(\mathbb{R})$ denotes the Schwartz space of rapidly decreasing functions on $\mathbb{R}$.
	  	\item $r^*(k)$ denotes the Schwartz reflection of $r(k)$.
        \item The superscript $\dagger$ denotes the conjugate transpose of a matrix.
	  \end{itemize}

\section{Main results}\label{sec:main}

   This section presents the assumptions adopted in this work and summarizes the main results. For the initial-value problem \eqref{YOE} of the YO equation, we derive several results for equation \eqref{YO} by analyzing its associated RH problem \ref{rhp_M}.
   \subsection{Direct scattering and scattering data}\ \ \ \
   Define $\mathcal{L}(k)=\text{diag}(l_1(k),l_2(k),l_3(k))$ and $\mathcal{Z}(k)=\text{diag}(z_1(k),z_2(k),z_3(k))$, where
   \begin{align*}
       &l_1(k)=3\ri k, \qquad\qquad\,\, l_2(k)=\ri k,\qquad\quad\, l_3(k)=-\ri k,\\
       &z_1(k)=\frac{2\ri k^2}{3}-2\ri k,\quad\, z_2(k)=-\frac{4\ri k^2}{3},\quad\,\, z_3(k)=\frac{2\ri k^2}{3}+2\ri k,
   \end{align*}
   and take $P(k)$ and $L_1(x,k)$ in the following forms:
   \begin{equation}\label{diagonalize}
		P(k)=\begin{pmatrix}
			-4k & 0 & 0\\
			1 & 0 & 1\\
			0 & 1  & 0\\
		\end{pmatrix},\quad \det P(k)=4k,\quad L_1(x,k)=P^{-1}(k)
		    \begin{pmatrix}
				0 & 2\ri n_0 & \overline{\phi}_0\\
				0 & 0 & 0\\
				0 & 2\phi_0 & 0\\
			\end{pmatrix}P(k).
	\end{equation}

    Let \(X(x,k)\), \(X^A(x,k)\), \(Y(x,k)\), and \(Y^A(x,k)\) be the unique solutions defined by the following Volterra integral equations:
  \begin{equation}\label{XY_XAYA}
\begin{aligned}
X(x,k)
&=I-\int_x^\infty
\re^{(x-y)\widehat{\mathcal{L}}(k)}
(L_1X)(y,k)\,\rd y,
\qquad
X^A(x,k)
=I+\int_x^\infty
\re^{-(x-y)\widehat{\mathcal{L}}(k)}
(L_1^TX^A)(y,k)\,\rd y,
\\
Y(x,k)
&=I+\int_{-\infty}^x
\re^{(x-y)\widehat{\mathcal{L}}(k)}
(L_1Y)(y,k)\,\rd y,
\qquad
Y^A(x,k)
=I-\int_{-\infty}^x
\re^{-(x-y)\widehat{\mathcal{L}}(k)}
(L_1^TY^A)(y,k)\,\rd y,
\end{aligned}
\end{equation}
    where $\re^{\widehat{\mathcal{L}}}$ is an operator acting on a $3\times3$ matrix $A$: $\re^{\widehat{\mathcal{L}}}A=\re^{\mathcal{L}}A\re^{-\mathcal{L}}$. The scattering matrices $s(k)= (s_{ij}(k))_{3×3}$ and $s^A(k)= (s^A_{ij}(k))_{3×3}=((-1)^{i+j}m_{ij})_{3\times3}$ are defined through direct scattering analysis:
\begin{align}
s(k)&=I-\int_{\mathbb{R}}\re^{-x\widehat{\mathcal{L}}(k)}(L_1X)(x,k)\,\rd x,
\qquad
s^A(k)=I+\int_{\mathbb{R}}\re^{x\widehat{\mathcal{L}}(k)}(L_1^TX^A)(x,k)\,\rd x .
\label{sk}
\end{align}

    The scattering data are primarily expressed in terms of two reflection coefficients:
    \begin{equation}\label{r1_r2}\left\lbrace 
		\begin{aligned}
			&r_1(k):=\frac{s_{12}(k)}{s_{11}(k)},\\
			&r_2(k):=\frac{s^A_{31}(k)}{s^A_{33}(k)},
		\end{aligned}\right. \qquad k\in \mathbb{R}\setminus\{0\}.                                        
	\end{equation}
\subsubsection{Discrete Spectrum}
The zeros of the analytic spectral function $s_{11}(k)$ generate the discrete
spectrum and hence the soliton component of the solution. In view of
the symmetries of the scattering data, it is sufficient to consider
the zeros of $s_{11}(k)$ in the upper half-plane. We decompose
\[
    \mathbb{C}_+
    =
    D_{\mathrm{reg}}
    \cup D_{\mathrm{sing}}
    \cup \ri\mathbb{R}_+,
\]
where
\(
    D_{\mathrm{reg}}
    =
    \{k\in\mathbb{C}:\rre  k>0,\ \rim  k>0\},
    \,\,
    D_{\mathrm{sing}}
    =
    \{k\in\mathbb{C}:\rre  k<0,\ \rim  k>0\}.
\)

A simple zero $k_j=\xi_j+\ri\eta_j\in D_{\mathrm{reg}}$ of $s_{11}(k)$ generates a
regular soliton in both components $n$ and $\phi$, whose velocity is
$1-2\xi_j$. Consequently, the soliton is right-moving, stationary, or
left-moving according as $\xi_j<\frac12$, $\xi_j=\frac12$, or
$\xi_j>\frac12$, respectively. A simple zero
$k_j\in\ri\mathbb{R}_+$ generates a unit-speed solitary wave in the
long-wave component $n$, while $\phi$ vanishes. Such a solitary wave
is nonsingular provided that its norming constant satisfies
\(
    \ri C_{k_j}\leq 0.
\)
By contrast, zeros in $D_{\mathrm{sing}}$, as well as zeros on
$\ri\mathbb{R}_+$ with $\ri C_{k_j}>0$, generate singular solitary
waves. We therefore restrict the discrete spectrum to the simple zeros
of $s_{11}(k)$ in
$D_{\mathrm{reg}}\cup\ri\mathbb{R}_+$ that correspond to non-singular solitons. For specific classifications, see Figure \ref{fig_kj}.

  \begin{figure}[htbp]
    \centering

    \begin{subfigure}[b]{0.5\textwidth}
        \centering
        \begin{tikzpicture}[scale=1]
            \draw[very thick, black!20!blue,-latex] (-3,0) -- (5,0);
            \draw[very thick, black!20!blue,-latex] (0,0) -- (0,4);

            \draw[very thick, black!20!blue,dashed] (2.5,0) -- (2.5,4);

            \fill (0,0) circle (1.5pt);
            \fill (2.5,0) circle (1.5pt);
            \node[below] at (0,0) {$\text{0}$};
            \node[below] at (2.5,0) {$\text{1/2}$};

            \node[right] at (5,0) {$\rre k$};
            \node[above] at (0,4) {$\rim k$};

            \fill [green!70!black] (-1.5,2) circle (2pt)
            node[above right] {$k_1$};
            \fill [green!70!black]  (0,3) circle (2pt) node[above right] {$k_2$};
            \fill [red!70!black] (0,1) circle (2pt) node[above right] {$k_3$};
            \fill [red!70!black] (1.2,1.5) circle (2pt) node[above right] {$k_4$};
            \fill [red!70!black] (2.5,2.5) circle (2pt) node[above right] {$k_5$};
            \fill [red!70!black] (4,2) circle (2pt) node[above right] {$k_6$};
        \end{tikzpicture}
    \end{subfigure}
    \hfill
    \begin{subfigure}[b]{0.45\textwidth}
        \centering
        \begin{tikzpicture}[scale=0.9]
         \draw[very thick,white] (-4,0) -- (4,0) -- (4,5) -- (-4,5) -- (-4,0);
         
         \node [right,below] at (-2.35,5) {\textcolor{green!70!black}{$k_1$}: Singular solitary wave};
         
         \node [right,below] at (-1.4,4) {\textcolor{green!70!black}{$k_2$}: Singular solitary wave for $\ri C_{k_2}>0$};
         
         \node [right,below] at (0,3) {\textcolor{red!70!black}{$k_3$}: Right-moving soliton with unit velocity for $\ri C_{k_3}\leq0$};
         
         \node [right,below] at (-2.45,2) {\textcolor{red!70!black}{$k_4$}: Right-moving soliton};
         
         \node [right,below] at (-3.05,1) {\textcolor{red!70!black}{$k_5$}: Static soliton};
         
         \node [right,below] at (-2.55,0) {\textcolor{red!70!black}{$k_6$}: Left-moving soliton};
            
        \end{tikzpicture}
    \end{subfigure}

    \caption{Schematic diagram of different types of discrete spectrum corresponding to a single zero of $s_{11}(k)$ in the case of a reflectionless potential.}
    \label{fig_kj}
\end{figure}

\subsubsection{Norming Constants}
After defining the reflection coefficients $r_1(k)$ and $r_2(k)$  in equation \eqref{r1_r2} using the spectral matrices $s(k)$ and $s^A(k)$, the scattering data also include the set of zeros $\textbf{Z}=\{k_j\in D_{\text{reg}}\cup \ri \mathbb{R}_+\,|\,s_{11}(k_j)=0,s_{11}'(k_j)\neq0\}$ of $s_{11}(k)$, along with the corresponding norming constants $\{C_{k_j}\}_{k_j\in\textbf{Z}}\in\mathbb{C}$. For initial data $n_0(x)$ and $\phi_0(x)$ with compact support, the norming constants are defined as follows:
\begin{equation}\label{ckj_compact}
        C_{k_j}:=\left\lbrace\begin{aligned}
        &-\frac{s_{12}(k_j)}{s_{11}'(k_j)},\quad &&k_j\in \textbf{Z}\setminus\ri \mathbb{R}_+,\\
        &-\frac{s_{13}(k_j)}{s_{11}'(k_j)},\quad &&k_j\in \textbf{Z}\cap\ri \mathbb{R}_+.
    \end{aligned}\right.
\end{equation}

When the initial data $n_0(x)$ and $\phi_0(x)$ do not have compact support, Propositions \ref{prop_s} and \ref{prop_sA} imply that the scattering data $s_{12}(k_j)$ and $s_{13}(k_j)$ are generally not well-defined. The norming constant $C_{k_j}$ requires a more complex definition. First, define the vector-valued function $m(x,k)$:
\begin{equation}\label{mxk}
    m(x,k)=\begin{pmatrix}
        Y^A_{31}(x,k)X^A_{23}(x,k)-Y^A_{21}(x,k)X^A_{33}(x,k)\\
        Y^A_{11}(x,k)X^A_{33}(x,k)-Y^A_{31}(x,k)X^A_{13}(x,k)\\
        Y^A_{21}(x,k)X^A_{13}(x,k)-Y^A_{11}(x,k)X^A_{23}(x,k)\\
    \end{pmatrix}.
\end{equation}
If $k_j\in \textbf{Z}\setminus\ri \mathbb{R}_+$, the following relation uniquely determines the complex constant $C_{k_j}$:
\begin{equation}\label{ckj_non_Cc}
    \frac{m(x,k_j)}{s_{11}'(k_j)}=C_{k_j}\re^{x(l_1(k_j)-l_2(k_j))}[X(x,k_j)]_1,\quad \forall x\in \mathbb{R}.
\end{equation}
If $k_j\in \textbf{Z}\cap\ri \mathbb{R}_+$, the following relation uniquely determines the complex constant $C_{k_j}$:
\begin{equation}\label{ckj_non_Cir}
    \frac{Y_{3}(x,k_j)}{m_{33}'(k_j)}=C_{k_j}\re^{x(l_1(k_j)-l_3(k_j))}[X(x,k_j)]_1,\quad \forall x\in \mathbb{R}.
\end{equation}

\subsubsection{Scattering Data}
Zeros of $s_{11}(k)$ in $D_{\mathrm{sing}}$ lead to singular solitons, whereas zeros on $\ri\mathbb{R}_+$ are admissible only when $\ri C_{k_j}\notin(0,\infty)$. Hence, we consider finitely many simple zeros of $s_{11}(k)$ associated with non-singular solitons and impose the following assumptions.

\begin{assu}\label{assu_k=0}\rm{(Generic behavior at $k = 0$).}
Assume that 
\begin{align*}
    &\lim_{k\to0}k\,s_{11}(k)\neq0, \quad  \lim_{k\to0}k\,s_{12}
    (k)\neq0,\quad \lim_{k\to0}s_{21}(k)\neq0,\\
    &\lim_{k\to0}k\,s^A_{11}(k)\neq0, \quad  \lim_{k\to0}k\,s^A_{21}
    (k)\neq0,\quad \lim_{k\to0}s^A_{12}
    (k)\neq0.
\end{align*}
\end{assu}

\begin{assu}\label{assu_soliton}$\rm{(Solitons)}.$
Assume that $s_{11}(k)$ has a simple zero at each point in $\textbf{Z}$ and $s_{11}(k)\neq0$ for $k\in \overline{\mathbb{C}_+}\setminus\textbf{Z}$. 
 If $k_j \in \textbf{Z} \cap \ri\mathbb{R}_+$, then we also assume that $
      \ri C_{k_j}\notin (0,\infty).$
\end{assu}
\begin{theo}\label{theo-direct}
    Suppose $n_0(x),\phi_0(x)\in\mathcal{S}(\mathbb{R})$ such
that Assumptions \ref{assu_k=0} and \ref{assu_soliton}  hold. Then the associated scattering data
\begin{equation}\label{scattering data}
    \left\lbrace r_1(k),\, r_2(k),\,\textbf{Z},\{C_{k_j}\}_{k_j\in\textbf{Z}}\right\rbrace,
\end{equation}
defined by \eqref{r1_r2}, \eqref{ckj_non_Cc} and \eqref{ckj_non_Cir} are well-defined and satisfy the following properties:
\begin{enumerate}[label=(\roman*)]
    \item $r_1(k)$ and $r_2(k)$ admit extensions belonging to $C^\infty(\mathbb{R})$.
    \item The function $r_2(k)$ and its partial derivatives $\partial_k^jr_2(k)$ have continuous boundary values at $k=0$ for $j=0,1,2,\cdots$, and there exists an expansion:
    \begin{equation*}
        r_2(k)=r_2(0)+r_2'(0)k+\frac{1}{2}r_2''(0)k^2+\cdots,\quad k\to 0,\, k\in \mathbb{R},
    \end{equation*}
    which can be differentiated termwise any number of times. Furthermore, it can be shown that
    \begin{equation*}
        r_2(0)=-1.
    \end{equation*}
    \item $r_1(k)$ and $r_2(k)$ are rapidly decreasing as $|k|\to\infty$, i.e.,
    \begin{equation*}
        \max_{j=0,1,\cdots,N}\sup_{k\in\mathbb{R}}(1+|k|)^N|\partial_k^jr_l(k)|<\infty,
    \end{equation*}
    for each integer $N\geq 0$ and $l=1,2$.
    \item For all $k\in\mathbb{R}$, we have
    \begin{align*}
            &8kr_1(k)r_1^*(-k)-r_2(k)+r_2^*(-k)=0.
        \end{align*}
    \item $\textbf{Z}$ is a finite subset of $D_{\rm{reg}}\cup\ri\mathbb{R}_+$ and $\{C_{k_j}\}_{k_j\in\textbf{Z}}\subset\mathbb{C}$.
    \item For every $k_j\in \textbf{Z}\cap\ri\mathbb{R}_+$, the norming constant $C_{k_j}$ satisfies the condition 
    \begin{equation*}
        \ri C_{k_j}\leq 0.
    \end{equation*}

\end{enumerate}
\end{theo}

The proof of Theorem \ref{theo-direct}  is presented in Subsection \ref{subsec_proof21}.

\subsection{The inverse problem}
We now consider the inverse scattering problem, namely, the
reconstruction of the solution from the scattering data obtained in the
direct scattering transform. The inverse problem is formulated in terms
of a row-vector RH problem with jump contour
$\mathbb{R}$, whose jump matrix is determined by the reflection
coefficients $r_1(k)$ and $r_2(k)$. In the presence of discrete spectral data,
the solution of the RH problem is meromorphic and may have simple poles
at the points in the symmetry set
$$
\hat{\mathbf Z}:=\mathbf Z\cup\mathbf Z^*\cup\mathbf Z_m\cup
\mathbf Z_m^*,
$$
where
$\mathbf Z^*:=\{\overline{k_j}\,|\,k_j\in\mathbf Z\}$,
$\mathbf Z_m:=\{-k_j\,|\,k_j\in\mathbf Z\}$, and
$\mathbf Z_m^*:=\{-\overline{k_j}\,|\,k_j\in\mathbf Z\}$. Introduce
        \begin{equation}\label{sym_AB}
		\mathcal{A}(k)=\begin{pmatrix}
			1 & 0 & 0\\
			0 & 1/\left( 8k\right)  & 0\\
			0 & 0 & -1\\
		\end{pmatrix},\qquad
		\mathcal{B}=\begin{pmatrix}
			0 & 0 & 1\\
			0 & 1 & 0\\
			1 & 0 & 0\\
		\end{pmatrix}.
	\end{equation}

\begin{rhp}\label{rhp_Nj}\rm{(RH problem for $N_j$, $j=1,2$)}.
	Find a $1\times3$ matrix-valued function $N_j(x,t,k)$, $j=1,2,$ with the following properties:
	\begin{enumerate}
		\item The function $N_j(x,t,k)$ is analytic for $k\in\mathbb{C}\setminus\{\mathbb{R}\cup\hat{\textbf{Z}}\}$.
		\item As $k$ approaches $\mathbb{R}$ from the left and right, the boundary values $N_{j,+}(x,t,k)$ and $N_{j,-}(x,t,k)$ of $N_j(x,t,k)$ exist and satisfy the following jump condition:
		\begin{equation*}
			N_{j,+}(x,t,k)=N_{j,-}(x,t,k)v(x,t,k), \quad k\in \mathbb{R},
		\end{equation*}
		 where 
    \begin{equation}\label{jump_0}
				v(x,t,k)=
				\begin{pmatrix}
					1 & \! \! \! -r_1(k)\re^{t\theta_{12}} &\! \! \! r_2(k)\re^{t\theta_{13}}\\
					-8kr_1^*(k)\re^{-t\theta_{12}} & \! \! \!1+8k\left|r_1(k) \right|^2 & \! \! \!-8k\alpha(k)\re^{t\theta_{23}}\\
					-r_2^*(k)\re^{-t\theta_{13}} & \! \! \!\alpha^*(k)\re^{-t\theta_{23}} & \! \! \! 1-8k\left|r_1(-k) \right|^2- \left|r_2(k) \right|^2\\
				\end{pmatrix},
		\end{equation}
		and $\alpha(k)=r_1^*(-k)+ r_1^*(k)r_2(k)$, $\theta_{ij}=(l_i(k)-l_j(k))x/t+(z_i(k)-z_j(k))$, $1\leq i<j\leq3$.
        \item At each point in $\hat{\textbf{Z}}$, two elements of $N_j(x,t,k)$ are analytic with respect to $k$, while the remaining element has at most a simple pole. For every $p\in \hat{\textbf{Z}}$, the following residue conditions hold:
          \begin{equation*}
				\mathop{\text{Res}}\limits_{k=p }N_j(x,t,k)=\lim\limits_{k\to p } N_j(x,t,k)\re^{x\widehat{\mathcal{L}}(k)+t\widehat{\mathcal{Z}}(k)}V(p),
	\end{equation*}
    where for each $k_j\in\textbf{Z}\setminus\ri\mathbb{R}_+$,
    \begin{equation}\label{Res_2}
        \begin{aligned}
           V(k_j)=\begin{pmatrix}
				0 & C_{k_j}  & 0\\
				0 & 0 & 0\\
				0 & 0 & 0\\
			\end{pmatrix},\quad V(-k_j)=\begin{pmatrix}
				0 & 0 & 0\\
				0 & 0 & 0\\
				0 & -C_{k_j}  & 0\\
			\end{pmatrix},\quad V(\overline{k_j})=\begin{pmatrix}
				0 & 0 & 0\\
				-8\overline{k_j} \,\overline{C_{k_j} } & 0 & 0\\
				0 & 0 & 0\\
			\end{pmatrix}, \quad V(-\overline{k_j})=\begin{pmatrix}
				0 & 0 & 0\\
				0 & 0 & 8\overline{k_j}\,\overline{C_{k_j}}\\
				0 & 0 & 0\\
			\end{pmatrix},
        \end{aligned}
    \end{equation}
    and for each $k_j\in\textbf{Z}\cap \ri\mathbb{R_+}$,
     \begin{equation}\label{Res22}
            V(k_j)=\begin{pmatrix}
				0 & 0 & C_{k_j}\\
				0 & 0 & 0\\
				0 & 0 & 0\\
			\end{pmatrix},\quad V(-k_j)=\begin{pmatrix}
				0 & 0 & 0\\
				0 & 0 & 0\\
				-C_{k_j} & 0 & 0\\
			\end{pmatrix}.
    \end{equation}
    
		\item \label{rhp_Nj_kinfty} $N_1(x,t,k)=\begin{pmatrix}
		    1 & 0 & 1
		\end{pmatrix}+\mathcal{O}(k^{-1})$ and $N_2(x,t,k)=\begin{pmatrix}
		    0 & 1 & 0
		\end{pmatrix}+\mathcal{O}(k^{-1})$ as $k\to\infty$.
		\item  $N_j(x,t,k)=\mathcal{O}(1)$ as $k\to0$.

		\item  $N_j(x,t, k)$ satisfies the symmetry for $k \in \mathbb{C} \setminus \mathbb{R}$:
        \begin{align}\label{sym_Nj}
            &N_j(x,t, k)=N_j(x,t,-k) \mathcal{B}.
        \end{align}
      
	\end{enumerate}
	\end{rhp}

\subsubsection{Solvability of the Riemann-Hilbert problem and reconstruction formula}

    To guarantee the unique solvability of the RH problem \ref{rhp_Nj},
we impose the following positivity condition on the reflection
coefficients, which is essential for establishing the vanishing lemma.
    \begin{assu}\label{assu_LT}
     Assume that the reflection coefficients $r_1(k)$ and $r_2(k)$ as defined in equation \eqref{r1_r2} satisfy the relations:
          \begin{equation*}
                \begin{aligned}
                    &{1-8k|r_1(-k)|^2-|r_2(k)|^2}>0,
                \end{aligned}\qquad k\in \mathbb{R}\setminus\{0\}.
                \end{equation*} 
            \end{assu}

    Note that RH problem \ref{rhp_Nj} shares the same conditions for $j = 1, 2$, with the only exception being the condition \ref{rhp_Nj_kinfty}, where the behavior as $k \to \infty$ differs. The following lemma addresses the RH problem for the eigenfunction $N(x,t,k)$, in which $N(x,t,k)=\mathcal{O}(k^{-1},k^{-2},k^{-1}),
    \ k\to\infty $ holds, and all other conditions coincide with those of RH problem \ref{rhp_Nj}.

 \begin{lem}[Vanishing Lemma]
\label{lem_unique_solvability}
Let $(x,t)\in\mathbb{R}\times[0,T]$ and assume that the scattering
data satisfy Assumptions~\ref{assu_k=0}-\ref{assu_LT}. Then
RH problem~\ref{rhp_Nj} admits a unique solution.
Indeed, the associated homogeneous RH problem for $N(x,t,k)$ has only the trivial
solution:
\(
    N(x,t,k)\equiv0,
\)
for any solution satisfying
\(
    N(x,t,k)=\mathcal{O}(k^{-1},k^{-2},k^{-1}),
    \ k\to\infty .
\)
\end{lem}
\begin{proof}
The proof follows from the vanishing lemma established in
Section~\ref{subsec_vanishing}.
\end{proof}

The unique solvability of the RH problem is the key step in the inverse
scattering procedure. The solution of the RH problem \ref{rhp_Nj} can then be used
to recover the solution of the YO equation \eqref{YO} through the large-$k$
asymptotic behavior. We therefore obtain the following result.

\begin{theo}\label{theo_reconstruction_formula}
\rm{(Solution of the YO equation).}
Let
\(
\left\{
r_1(k),\,r_2(k),\,\mathbf{Z},
\{C_{k_j}\}_{k_j\in\mathbf{Z}}
\right\}
\)
be the scattering data satisfying the assumptions stated in
Theorem~\ref{theo-direct} and Assumptions \ref{assu_k=0}-\ref{assu_LT}. Then there exists a time
$T\in(0,\infty]$ such that the RH problems~\ref{rhp_Nj}, for
$j=1,2$, admit unique solutions
$N_1(x,t,k)$ and $N_2(x,t,k)$ for every
$(x,t)\in\mathbb{R}\times[0,T)$. Moreover, the solution of the initial-value problem for the YO equation~\eqref{YOE} is reconstructed from the
large-$k$ behavior of the RH solutions as
\begin{equation}\label{reconstruct}
\left\{
\begin{aligned}
n(x,t)
&=
2\ri\frac{\partial}{\partial x}
\lim_{k\to\infty}
k\left([N_1(x,t,k)]_1-1\right),\\
\phi(x,t)
&=
\ri\lim_{k\to\infty}
k[N_2(x,t,k)]_1 .
\end{aligned}
\right.
\end{equation}
\end{theo}
    \begin{proof}
    The proof of Theorem \ref{theo_reconstruction_formula} is given in Subsection \ref{subsec_reconst}.
    \end{proof}

  \begin{figure}[h]  
    \centering      
    \includegraphics[width=0.9\textwidth]{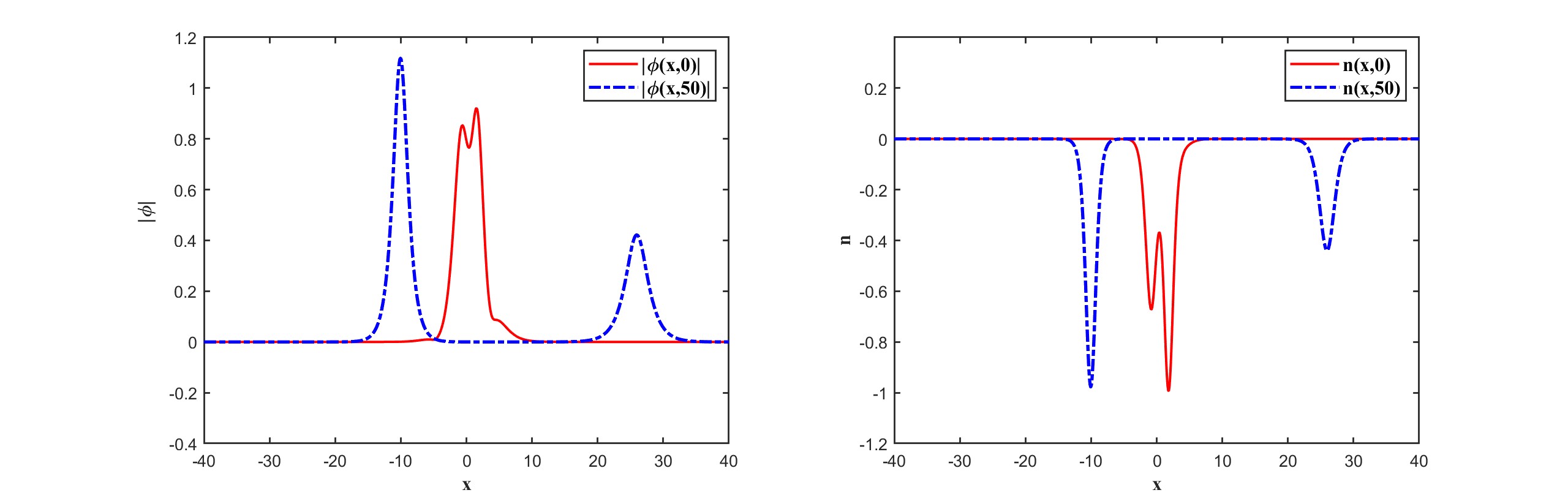}
    \caption{Two-soliton of the YO equation \eqref{YO} at $t = 0$ (red solid line) and $t = 50$ (blue dashed line), with $k_1=\frac{5}{8}+\frac{\ri}{2},\,k_2=\frac{1}{5}+\frac{\ri}{3}$, and $C_{k_1}=C_{k_2}=1$.}
   	\label{figrhp-two_soliton1}
           \end{figure}

      \subsubsection{Pure soliton solution}
      
      In the reflectionless case $r_1(k)=r_2(k)=0$, the RH problem \ref{rhp_Nj} reduces to a finite-dimensional meromorphic problem determined by the discrete spectrum and norming constants. The resulting $N$-soliton solutions are stated below.
      
      \begin{theo}\label{theo-soliton}
\rm{(Soliton solutions).}
Assume that the reflectionless condition $r_1(k)=r_2(k)=0$ for
$k\in\mathbb{R}$ holds, together with
Assumptions~\ref{assu_k=0} and \ref{assu_soliton}. Let the discrete
scattering data
$\{\mathbf{Z},\{C_{k_j}\}_{k_j\in\mathbf{Z}}\}$
be defined by \eqref{ckj_compact}, \eqref{ckj_non_Cc}, and
\eqref{ckj_non_Cir}. Then, under the conditions of
Theorem~\ref{theo_reconstruction_formula}, the reconstruction formula
\eqref{reconstruct} yields the following pure soliton solutions of the
YO equation~\eqref{YO}:
\begin{description}

\item[(i) One-soliton solution for
$k_j\in\mathbf{Z}\setminus\ri\mathbb{R}_+$:]
Suppose that $s_{11}(k)$ has exactly one simple zero
$k_1=\xi_1+\ri\eta_1\in\mathbf{Z}\setminus\ri\mathbb{R}_+$.
Then the corresponding one-soliton solution is given by
\begin{equation}\label{soliton1}
\left\{
\begin{aligned}
n(x,t)
&=-4\eta_1^2\operatorname{sech}^2
\left(2\eta_1\left(x-(1-2\xi_1)t\right)+\rho\right),\\
\phi(x,t)
&=-2\eta_1\sqrt{2\xi_1}\operatorname{sech}
\left(2\eta_1\left(x-(1-2\xi_1)t\right)+\rho\right)
\re^{2\ri(\eta_1^2-\xi_1^2)t
-2\ri\xi_1(x-t)+\ri\theta},
\end{aligned}
\right.
\end{equation}
where
\(
\rho=\log\left(
\frac{\sqrt{2\xi_1}\eta_1}
{2|C_{k_1}k_1|}
\right),
\
\re^{\ri\theta}
=\ri\frac{\overline{C_{k_1}}\,\overline{k_1}}
{|C_{k_1}k_1|}.
\)

\item[(ii) $N$-soliton solution for
$k_j\in\mathbf{Z}\setminus\ri\mathbb{R}_+$:]
Suppose that $s_{11}(k)$ has $N$ simple zeros
$k_j\in\mathbf{Z}\setminus\ri\mathbb{R}_+$,
$j=1,2,\ldots,N$. Then the corresponding $N$-soliton solution is
\begin{equation}\label{N-soliton-exact}
\left\{
\begin{aligned}
n(x,t)
&=-2\ri\frac{\partial}{\partial x}
\left[
\sum_{j=1}^N
8\overline{k_j}\,\overline{C_{k_j}}
\re^{-2\ri\overline{k_j}
((\overline{k_j}-1)t+x)}
\left(
\frac{\det\mathbb{A}^{(1)}_j+\det\mathbb{A}^{(3)}_j}
{\det\mathbb{A}}
\right)
\right],\\
\phi(x,t)
&=-\sum_{j=1}^N
8\ri\overline{k_j}\,\overline{C_{k_j}}
\re^{-2\ri\overline{k_j}
((\overline{k_j}-1)t+x)}
\frac{\det\mathbb{A}^{(2)}_j}{\det\mathbb{A}}.
\end{aligned}
\right.
\end{equation}
The matrices $\mathbb{A}$ and $\mathbb{A}^{(n)}_j$,
$n=1,2,3$, $j=1,2,\ldots,N$, are defined in
Subsection~\ref{subsec_N-soliton1}.

\item[(iii) One-soliton solution for
$k_j\in\mathbf{Z}\cap\ri\mathbb{R}_+$:]
Suppose that $s_{11}(k)$ has exactly one simple zero
$k_1=\ri\eta_1\in\mathbf{Z}\cap\ri\mathbb{R}_+$ and that
$\ri C_{k_1}\leq0$. Then
\begin{equation*}
\left\{
\begin{aligned}
n(x,t)
&=-4\eta_1^2\operatorname{sech}^2
\left(2\eta_1(x-t)+\rho_1\right),\\
\phi(x,t)&=0,
\end{aligned}
\right.
\end{equation*}
where
\(
\rho_1=\frac12\log\left(
-\frac{2\eta_1}{\ri C_{\ri\eta_1}}
\right).
\)

\item[(iv) $N$-soliton solution for
$k_j\in\mathbf{Z}\cap\ri\mathbb{R}_+$:]
Suppose that $s_{11}(k)$ has $N$ simple zeros
$k_j=\ri\eta_j\in\mathbf{Z}\cap\ri\mathbb{R}_+$,
$j=1,2,\ldots,N$. Then
\begin{equation*}\label{N-soliton_2}
\left\{
\begin{aligned}
n(x,t)
&=-2\ri\frac{\partial}{\partial x}
\left[
\sum_{j=1}^N
C_{\ri\eta_j}\re^{-4\eta_j(x-t)}
\left(
\frac{\det\mathbb{B}^{(1)}_j+\det\mathbb{B}^{(3)}_j}
{\det\mathbb{B}}
\right)
\right],\\
\phi(x,t)&=0.
\end{aligned}
\right.
\end{equation*}
The matrices $\mathbb{B}$ and $\mathbb{B}^{(n)}_j$,
$n=1,3$, $j=1,2,\ldots,N$, are defined in
Subsection~\ref{subsec_N-soliton2}.

\end{description}
\end{theo}
\begin{proof}
    The proof of this theorem  is given in Section~ \ref{sec:soliton} and Appendix~\ref{sec_Nsoliton}.
    \end{proof}
For $N=2$ in the $N$-soliton solution (\ref{N-soliton-exact}), Figure~\ref{figrhp-two_soliton1} shows the exact two-soliton solution of the YO equation~\eqref{YO} at different times. As $t$ increases, the two solitons separate, with amplitudes approaching those determined by the corresponding discrete spectral points in \eqref{soliton1}, while their directions of propagation are governed by the sign of $1-2\rre k_j$, $j=1,2$.

      \subsection{Long-time asymptotics} 
     
      Finally, we investigate the long-time asymptotic behavior of the
solution to the YO equation \eqref{YO} in the solitonless case. To exclude
contributions from the discrete spectrum and ensure that the asymptotic
behavior is governed solely by the continuous scattering data, we impose
the following assumption.

\begin{assu}\label{assu_solitonless}
\rm{(Absence of solitons).}
Assume that $s_{11}(k)$ has no zeros in $\mathbb{C}_+$.
\end{assu}

    \begin{figure}[htbp]
		\centering
	\begin{tikzpicture}[scale=1.1]

        \definecolor{mycolor1}{HTML}{D0E8FF} 
   		\definecolor{mycolor2}{HTML}{D6EAF8} 
   		\definecolor{mycolor3}{HTML}{ABC9EB} 
   		\definecolor{mycolor4}{HTML}{98A3CA}

        \fill[pink!85!red!50] (4.5,3.5) -- (0,0) -- (4.5,0.5) -- cycle;
		\fill[black!20!blue!20] (3.5,4.5) -- (0,0) -- (-4.5,0.5) -- (-4.5,4.5) -- cycle;
        \fill[green!20!white!70] (-4.5,0.5) -- (0,0) -- (-4.5,0)-- cycle;
        \fill[yellow!35!white!70] (4.5,0.5) -- (0,0) -- (4.5,0) -- cycle;

        \fill[mycolor1] (4.5,3.5) -- (0,0) -- (3.5,4.5) -- (4.5,4.5) -- cycle;

        \node[black] at (4,4.1) {$x=t$};
        \draw[thick,mycolor1,dashed] (0,0) -- (4.5,4.5) node[midway, sloped,black] {\qquad Transition region};
        \draw[thick,gray,dashed,gray] (0,0) -- (4.5,4.5);

	    \node at (0,-0.3) {0};
        \node at (3,1.3) {Region II};
        \node at (-2,2) {Region III};
        \node[above] at (-3.7,-0.1) {\small Region IV};
        \node[above] at (3.7,-0.1) {\small Region I};

        \draw[thick,gray] (3.5,4.5) -- (0.1,0.1) -- (4.5,3.5);
        \draw[thick,gray] (-4.5,0.5) -- (0,0) -- (4.5,0.5);

        \draw[very thick,-latex] (-4.7,0) -- (4.7,0) node[right] {$x$};
		\draw[very thick,-latex] (0,0) -- (0,4.7) node[above] {$t$};
        
	\end{tikzpicture}
	\caption{Division of asymptotic regions in the upper $(x,t)$-half plane.}
	\label{fig_asy}
	\end{figure}

         \begin{figure}[h]  
    \centering      
    \includegraphics[width=1\textwidth]{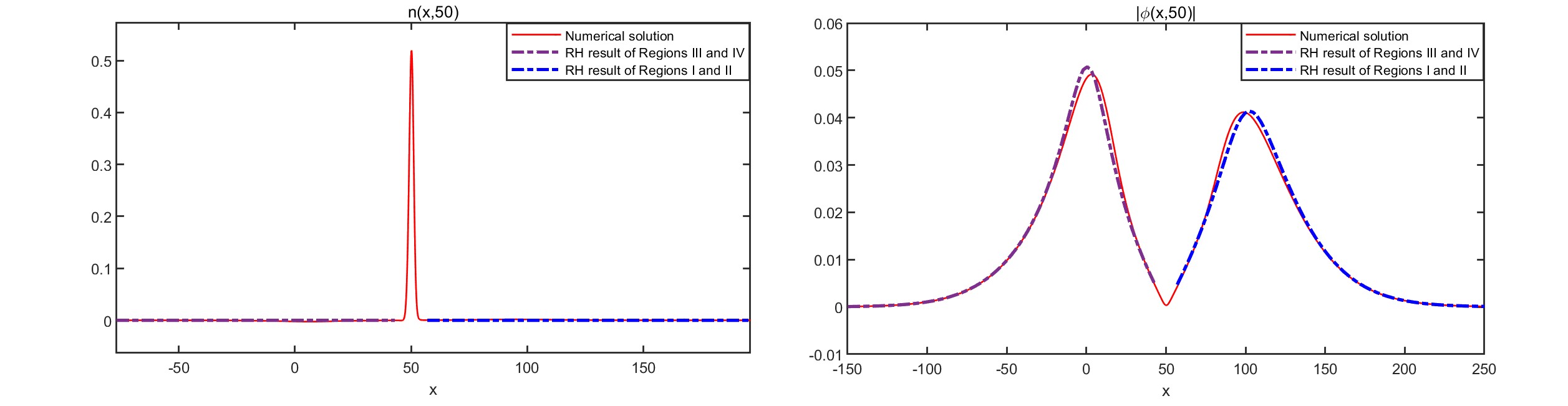}
    \caption{Comparisons of theoretical results given by Theorem \ref{them_region} and the full numerical simulations of the YO equation \eqref{YO} under the initial-value condition \eqref{initial} for $t=50$.}
    \label{figrhp-direct}
           \end{figure}

	\begin{theo}\label{them_region}
Assume that $n_0(x),\phi_0(x)\in\mathcal{S}(\mathbb{R})$ satisfy the
hypotheses of Theorem~\ref{theo-direct} and
Assumptions~\ref{assu_LT} and \ref{assu_solitonless}.
Then the solution $(n(x,t),\phi(x,t))$ of the initial-value problem
\eqref{YOE} with initial data $(n_0(x),\phi_0(x))$ admits the following
asymptotic expansions in Regions~I-IV shown in
Figure~\ref{fig_asy}:
	\begin{description}
	 \item[Region {\rm{I}}:]For $\tau := t/x\in\mathcal{I}_1$ and $k_0=\frac{1-\tau}{2\tau}$, the following asymptotic formula holds uniformly as $x \to +\infty$: 
	 \begin{equation}\label{region_1}
	 	\begin{aligned}
	 		n(x,t)&=\mathcal{O}\left(\frac{1}{x^{N}}+\frac{C_N(\tau)\ln x}{x}\right),\\
	 		\phi(x,t)&=\frac{\sqrt{2\pi}\left(2\sqrt{x\tau}\right)^{-2\ri\nu}{\rm{exp}}\left(\ri\nu \ln(2k_0)+\frac{3\pi \ri}{4}+2\ri tk_0^2-\frac{\pi\nu}{2}+s_1\right)}{2\sqrt{x\tau}r_1(-k_0)\Gamma(-\ri\nu)}+\mathcal{O}\left(\frac{1}{x^{N}}+\frac{C_N(\tau)\ln x}{x}\right).
	 	\end{aligned}
	 \end{equation}

	\item[Region {\rm{II}}:] For $\zeta := x/t\in\mathcal{I}_2$ and $k_0=\frac{x-t}{2t}$, the following asymptotic formula holds uniformly as $t \to \infty$:
            \begin{equation}\label{region_2}
                \begin{aligned}
        n(x,t)
        &=\mathcal{O}\left(\frac{\ln t}{t}\right),\\
    \phi(x,t)&=\frac{\sqrt{2\pi}\left(2\sqrt{t}\right)^{-2\ri\nu}{\rm{exp}}\left(\ri\nu \ln(2k_0)+\frac{3\pi \ri}{4}+2\ri tk_0^2-\frac{\pi\nu}{2}+s_1\right)}{2\sqrt{t}r_1(-k_0)\Gamma(-\ri\nu)}+\mathcal{O}\left(\frac{\ln t}{t}\right).
    \end{aligned}
            \end{equation}

    \item[Region {\rm{III}}:] For $\zeta\in\mathcal{I}_3$ and $k_0=\frac{x-t}{2t}$, the following asymptotic formula holds uniformly as $t \to \infty$:
    \begin{equation}\label{region_3}
          \begin{aligned}
        n(x,t)
        &=\mathcal{O}\left(\frac{\ln t}{t}\right),\\
    \phi(x,t)
&=\frac{\sqrt{2\pi}\left(2\sqrt{t}\right)^{-2\ri\hat\nu}{\rm{exp}}\left(\ri\hat\nu \ln(-2k_0)+\frac{3\pi \ri}{4}+2\ri tk_0^2-\frac{\pi\hat\nu}{2}-\pi\nu_2+ s_2\right)}{2\sqrt{t} {\alpha}^*(k_0)\Gamma(-\ri\hat\nu)}+\mathcal{O}\left(\frac{\ln t}{t}\right).
    \end{aligned}
    \end{equation}

    \item[Region {\rm{IV}}:]For $\tau \in\mathcal{I}_4$ and $k_0=\frac{1-\tau}{2\tau}$, the following asymptotic formula holds uniformly as $x \to-\infty$: 
   \begin{equation}\label{region_4}
   	\begin{aligned}
   		n(x,t)&=\mathcal{O}\left(\frac{1}{|x|^{N}}+\frac{C_N(\tau)\ln |x|}{|x|}\right),\\
   		\phi(x,t)
   		&=\frac{\sqrt{2\pi}\left(2\sqrt{x\tau}\right)^{-2\ri\hat\nu}{\rm{exp}}\left(\ri\hat\nu \ln(-2k_0)+\frac{3\pi \ri}{4}+2\ri x\tau k_0^2-\frac{\pi\hat\nu}{2}-\pi\nu_2+ s_2\right)}{2\sqrt{x\tau} {\alpha}^*(k_0)\Gamma(-\ri\hat\nu)}+\mathcal{O}\left(\frac{1}{|x|^{N}}+\frac{C_N(\tau)\ln |x|}{|x|}\right).
   \end{aligned}\end{equation}
    
		\end{description}
		Here, $\Gamma(\cdot)$ denotes the Gamma function, $C_N(\tau)$ is a smooth function of $\tau$ that vanishes to all orders as $\tau\to\infty$ for each $N$, $\alpha^*(k_0)=r_1(-k_0)+ r_1(k_0)r_2^*(k_0)$, and
        \begin{equation}\label{hatnu}
        \nu=-\frac{1}{2\pi}\ln(1-8k_0\left|r_1(-k_0) \right|^2 ),\quad \nu_2=-\frac{1}{2\pi}\ln\left({1+8k_0|r_1(k_0)|^2-|r_2(-k_0)|^2}\right),\quad \hat\nu=-\frac{1}{2\pi}\ln\left(1-8k_0|\alpha(k_0)|^2\re^{2\pi\nu_2}\right),
        \end{equation}
       moreover, $ s_1=\frac{1}{2\pi\ri}\int_{k_0}^{\infty}\ln\left({(s+k_0)}{(s-k_0)^2}\right){\rm{d}}\ln(1-8s\left|r_1(-s)\right|^2 )$, and
         \begin{align}
    s_2=&\frac{1}{2\pi\ri}\int_{-k_0}^{\infty}\ln\left|{(s+k_0)^2}{(s-k_0)}\right|{\rm{d}}\ln(1-8s \left|r_1(-s)\right|^2 ) +\frac{1}{2\pi\ri}\int^{\infty}_{-k_0}\ln\left|\frac{s-k_0}{s+k_0}\right|{\rm{d}}\ln \left(1-8s|r_1(-s)|^2-|r_2(s)|^2\right)\label{s2}\\
   	&-\frac{1}{2\pi\ri}\int_{k_0}^{-k_0}\ln\left|(s+k_0)^2(s-k_0) \right| {\rm{d}}\ln \left(1+8s|r_1(s)|^2-|r_2(-s)|^2\right).\nonumber
		\end{align}   
	\end{theo}
\begin{proof}
    The asymptotic behaviors of the solution in the four regions are derived and proved in Section \ref{sec_LT}. 
\end{proof}

        To verify the accuracy of the solution in Theorem \ref{them_region}, we consider two initial conditions composed of Gaussian wave packets:
			\begin{equation}\label{initial}
					n(x,0)=0.5\,{\rm{e}}^{-{x^2}/{2}}, \qquad\qquad
					\phi(x,0)=-0.3\,{\rm{e}}^{-{x^2}/{2}}.
			\end{equation}
Figure~\ref{figrhp-direct} illustrates the comparison between the asymptotic solution in Theorem~\ref{them_region} and the numerical simulation results obtained by using the initial condition \eqref{initial} at time $t=50$. Figure~\ref{figrhp-direct} shows good agreement between the leading-order asymptotic formulas and the direct numerical simulation.

        \begin{figure}[h]  
    \centering      
    \includegraphics[width=\textwidth]{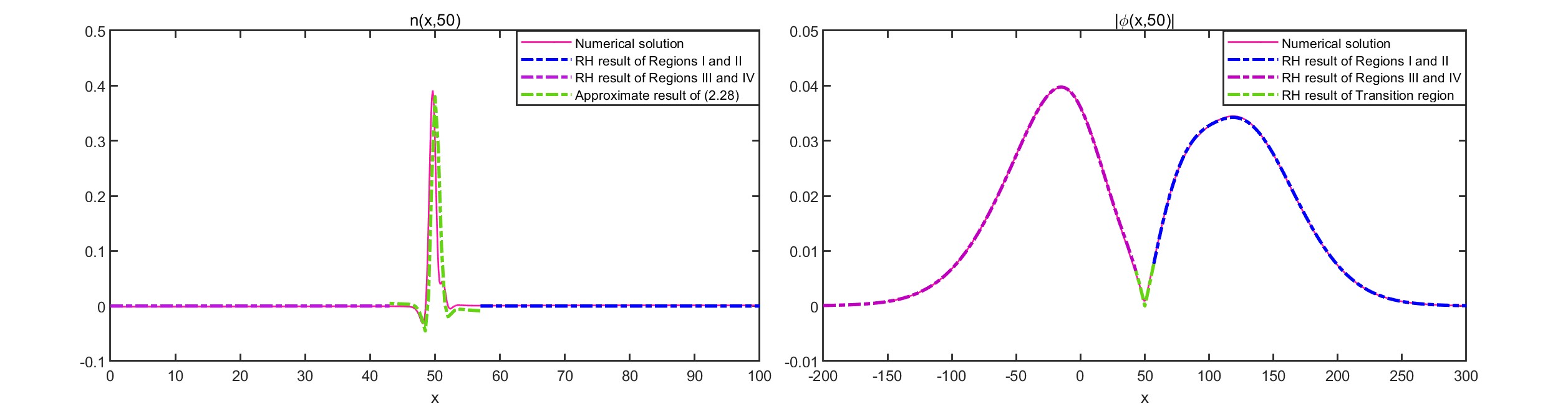}
    \caption{Comparisons of approximate results of \eqref{n_traveling} and the full numerical simulations of the YO equation \eqref{YO} under the initial-value condition \eqref{initial_middle}.}
    \label{fig_middle}
           \end{figure}

\begin{remark}\rm{(Transition region near $x=t$).}
The line $x=t$ separates Regions~II and III in Figure~\ref{fig_asy}. As
$x\to t$, the stationary point $k_0$ tends to zero. By
Theorem~\ref{theo-direct}, Assumption~\ref{assu_LT}, and
Proposition~\ref{prop_hatnu}, the asymptotic formulas
\eqref{region_2} and \eqref{region_3} for the short-wave component
$\phi(x,t)$ match in the limit $k_0\to0$. Hence, these two formulas
provide a consistent description of $\phi(x,t)$ across the transition
region.

For the long-wave component, the second equation of the YO equation \eqref{YO} yields
\begin{equation}\label{n_traveling}
n(x,t)=n_0(x-t)
-\int_0^t\partial_x
\left(|\phi(x-t+y,y)|^2\right)\rd y.
\end{equation}
Thus, the behavior of $n(x,t)$ near $x=t$ depends not only on the
large-time asymptotics of $\phi(x,t)$, but also on its evolution over the
entire interval $0\leq y\leq t$. Since the present RH analysis controls
only the large-time part, it does not provide a complete asymptotic
description of $n(x,t)$ in this region.

A numerical approximation may nevertheless be obtained by combining
the initial-data approximation for small times with the asymptotic
formulas of Theorem~\ref{them_region} for large times. For the initial
data
\begin{equation}\label{initial_middle}
n_0(x)=0.3\cos(x)\re^{-x^2/2},
\qquad
\phi_0(x)=0.6\sin(x)\re^{-x^2/2},
\end{equation}
the resulting approximation captures the principal peak near $x=t$,
although discrepancies remain because the intermediate-time behavior
of $\phi$ is not included; see Figure~\ref{fig_middle}.
\end{remark}

	\section{The direct scattering problem}\label{sec:direct}
   Given initial data $n_0(x)$ and $\phi_0(x)$ for the initial-value problem \eqref{YOE} satisfying Assumptions~\ref{assu_k=0} and~\ref{assu_soliton}, we first construct the associated Jost solutions and scattering matrix, and hence define the corresponding scattering data. Based on the Fredholm integral equations for the Jost solutions, we formulate a $3\times3$ matrix RH problem for $M(x,t,k)$. The RH problems for $N_1(x,t,k)$ and $N_2(x,t,k)$ are then obtained through the relations \eqref{def:N1N2}. As these constructions are standard in the framework of inverse scattering theory, we only state the main results here and refer the reader to Appendix~\ref{appendix:direct} and the relevant literature for the proofs of the auxiliary lemmas and technical properties.

\subsection{The Gauge Transformation}
	\ \ \ \
Before introducing the Jost solutions, we first apply the gauge transformation
\[
\Phi(x,t,k)=P(k)\Psi(x,t,k).
\]
Since $\det P(k)=4k$, the transformation is valid for $k\in\mathbb{C}\setminus\{0\}$. Under this transformation, the Lax pair \eqref{lax_phi} takes the form
	\begin{equation}\label{lax_psi}
		\begin{aligned}
			\Psi_x(x,t,k)=\check{L}(x,t,k)\Psi(x,t,k),\\
			\Psi_t(x,t,k)=\check{Z}(x,t,k)\Psi(x,t,k),\\
		\end{aligned}
	\end{equation}
	where the matrices $\check{L}$ and $\check{Z}$ are given by
		\begin{align*}
			\check{L}=P^{-1} \tilde{L} P&=\begin{pmatrix}
				3\ri k & 0 & 0\\
				0 & \ri k & 0\\
				0 & 0 & -\ri k\\
			\end{pmatrix}+\frac{1}{2k}
			\begin{pmatrix}
				-\ri n & -\overline{\phi}/2 & -\ri n\\
				4k\phi & 0 & 4k\phi\\
				\ri n & \overline{\phi}/2 & \ri n\\
			\end{pmatrix}
			:=\mathcal{L}+L_1,\\
			\check{Z}=P^{-1} \tilde{Z} P&=\begin{pmatrix}
				-2\ri k+2 \ri k^2/3 & 0 & 0\\
				0 & -4\ri k^2/3 & 0\\
				0 & 0 & 2\ri k+2 \ri k^2/3\\
			\end{pmatrix}\\
			&\quad+\frac{1}{4k}\begin{pmatrix}
				\ri\left(\phi\overline{\phi}+2 n\right) & \left(\ri \overline{\phi}_x+2(1-k)\overline{\phi} \right)/2 & \ri(\phi \overline{\phi}+2n)\\
				4k\left( \ri \phi_x+2(k-1)\phi\right)  & 0 & 4k\left( \ri \phi_x-2(k+1)\phi\right) \\
				-\ri\left(\phi\overline{\phi}+2 n\right) & -\left(\ri \overline{\phi}_x+2(1+k)\overline{\phi} \right)/2 & -\ri(\phi \overline{\phi}+2n)\\
			\end{pmatrix}:=\mathcal{Z}+Z_1.
		\end{align*}
	The matrix-valued functions $L_1$ and $Z_1$ satisfy the following symmetry relations:
	\begin{equation}\label{symmetry}
		\begin{aligned}
			&L_1(x,t,k)=-\mathcal{A}(k)^{-1} L_1^\dagger(x,t,\bar k)\mathcal{A}(k),\quad &&L_1(x,t,k)=\mathcal{B} L_1(x,t,-k)\mathcal{B},\\
			&Z_1(x,t,k)=-\mathcal{A}(k)^{-1} Z_1^\dagger(x,t,\bar k)\mathcal{A}(k),\quad &&Z_1(x,t,k)=\mathcal{B} Z_1(x,t,-k)\mathcal{B},
		\end{aligned}
	\end{equation}
    where $\mathcal{A}(k)$ and $\mathcal{B}$ are given by equation \eqref{sym_AB}.	
Since $\phi_0,n_0\in\mathcal{S}(\mathbb{R})$, we have
\(
\lim\limits_{|x|\to\infty}L_1(x,t,k)=0,\
\lim\limits_{|x|\to\infty}Z_1(x,t,k)=0.
\)
Setting
\[
\Psi(x,t,k)=X(x,t,k)e^{\mathcal{L}x+\mathcal{Z}t},
\]
the Lax pair \eqref{lax_psi} becomes
\begin{equation}\label{lax_X}
\begin{cases}
\begin{aligned}
&X_x-[\mathcal{L},X]=L_1X,\\
&X_t-[\mathcal{Z},X]=Z_1X.
\end{aligned}
\end{cases}
\end{equation}

	\subsection{The eigenfunctions $X(x, k)$ and $Y(x, k)$}
    In what follows, we focus on the $x$-part of \eqref{lax_X} by fixing $t=0$, namely
\begin{equation}\label{lax_xpart}
X_x-[\mathcal{L},X]=L_1X.
\end{equation}
The eigenfunctions $X(x,k)$ and $Y(x,k)$ associated with \eqref{lax_xpart} are defined through the following Volterra integral equations:
\begin{equation}\label{def:XY}
\begin{aligned}
&X(x,k)=I-\int_{x}^{\infty}\re^{(x-y)\widehat{\mathcal{L}}(k)}(L_1X)(y,k),\rd y,\
&Y(x,k)=I+\int_{-\infty}^{x}\re^{(x-y)\widehat{\mathcal{L}}(k)}(L_1Y)(y,k),\rd y.
\end{aligned}
\end{equation}
The following proposition collects several basic properties of the eigenfunctions $X(x,k)$ and $Y(x,k)$.
	\begin{prop}
		$($Basic properties of $X$ and $Y$$)$.
Suppose $\phi_0(x),\, n_0(x)\in\mathcal{S}(\mathbb{R})$. The functions 
$X(x,k)$ and $Y(x,k)$ uniquely determined by integral equations \eqref{def:XY} solve equation \eqref{lax_xpart} and possess the following properties:
		\begin{enumerate}
        \item The functions $[X(x,k)]_1$ and $[Y(x,k)]_3$ are defined for $x\in\mathbb{R}$ and $k\in \overline{\mathbb{C}_+}\setminus\{0\}$. $[X(x,\cdot)]_1$ and $[Y(x,\cdot)]_3$ are analytic functions for $k\in\mathbb{C}_+$ and continuous up to $k\in \overline{\mathbb{C}_+}\setminus\{0\}$, where $[A]_j$ denotes the $j$th column of the matrix $A$.
        \item The functions $[X(x,k)]_3$ and $[Y(x,k)]_1$ are defined for $x\in\mathbb{R}$ and $k\in \overline{\mathbb{C}_-}\setminus\{0\}$. $[X(x,\cdot)]_3$ and $[Y(x,\cdot)]_1$ are analytic functions for $k\in\mathbb{C}_-$ and continuous up to $k\in \overline{\mathbb{C}_-}\setminus\{0\}$.
        \item The functions $[X(x,k)]_2$ and $[Y(x,k)]_2$ are defined for $x\in\mathbb{R}$ and $k\in {\mathbb{R}}\setminus\{0\}$. $[X(x,\cdot)]_2$ and $[Y(x,\cdot)]_2$ are continuous for $k\in {\mathbb{R}}\setminus\{0\}$, but admit no analytic continuation away from the real axis.
       
            \item For each $x\in\mathbb{R}$ and $j=1,2,\cdots$, $\partial_k^j X (x,\cdot)$ is well defined for $k\in (\overline{\mathbb{C}_+},\mathbb{R},\overline{\mathbb{C}_-})\setminus\{0\}$, and $\partial_k^j Y(x,\cdot)$ is well defined for $k\in (\overline{\mathbb{C}_-},\mathbb{R},\overline{\mathbb{C}_+})\setminus\{0\}$.
            \item For each $n \geq 1$ and $\epsilon > 0$, there are bounded smooth positive functions $f_+(x)$ and $f_-(x)$ of $x \in \mathbb{R}$ with rapid decay as $x \to +\infty$ and $x \to -\infty$, respectively, such that the following estimates hold for $x \in \mathbb{R}$ and $j = 0, 1, \ldots, n$:
            \begin{equation*}
                \begin{aligned}
                    \left| \frac{\partial^j}{\partial k^j} \bigl( X(x, k) - I \bigr) \right| \leq f_+(x), \quad k\in (\overline{\mathbb{C}_+},\mathbb{R},\overline{\mathbb{C}_-}), \ |k| > \epsilon,\quad
                    \left| \frac{\partial^j}{\partial k^j} \bigl( Y(x, k) - I \bigr) \right| \leq f_-(x), \quad k\in (\overline{\mathbb{C}_-},\mathbb{R},\overline{\mathbb{C}_+}), \ |k| > \epsilon.
                \end{aligned}
            \end{equation*}

			\item  For $k\in\mathbb{R}\setminus\{0\}$, $\det X=\det Y=1$, $X(x, k)$ and $Y(x, k)$ satisfy the following symmetries for $x\in\mathbb{R}$ and $k\in (\overline{\mathbb{C}_\pm},\mathbb{R},\overline{\mathbb{C}_\mp})\setminus\{0\}$:
			\begin{equation}\label{sym_X_A}
				X^\dagger\left(  x,\bar k \right)=\mathcal{A}(k)X^{-1}(x,k)\mathcal{A}^{-1}(k),\qquad  Y^\dagger\left(  x,\bar k \right)=\mathcal{A}(k)Y^{-1}(x,k)\mathcal{A}^{-1}(k),
			\end{equation}
			\begin{equation}\label{sym_X_B}
				X(x,k)=\mathcal{B} X(x,-k)\mathcal{B},\qquad Y(x,k)=\mathcal{B} Y(x,-k)\mathcal{B}.
			\end{equation}
			\item If $n_0(x)$, $\phi_0(x)$ have compact support, then for each $x\in\mathbb{R}$, $X(x,k)$ and $Y(x,k)$  are defined and analytic in $k\in\mathbb{C}\setminus\{0\}$.
		\end{enumerate}
	\end{prop}
	
	\begin{proof}
    The proof is standard and closely follows the arguments in \cite{Lenells-Indiana}. We therefore restrict our attention to the symmetry relations in \eqref{sym_X_A}, as the other statements can be proved similarly. Replacing $k$ by $\bar{k}$ in \eqref{lax_xpart} and taking the conjugate transpose yields
		\begin{equation*}
			X_x^\dagger(x,\bar{k})-\left[\mathcal{L}(k),X^\dagger(x,\bar{k})\right]=X^\dagger(x,\bar{k}) L_1^\dagger\left(  x,\bar k \right) .
		\end{equation*}
		By the conjugation symmetry of $L_1$ in \eqref{symmetry}, it follows that
		\begin{equation*}
			X_x^\dagger(x,\bar{k})-\left[\mathcal{L}(k),X^\dagger(x,\bar{k})\right]=-X^\dagger(x,\bar{k})\mathcal{A}(k) L_1(x,k)\mathcal{A}^{-1}(k).
		\end{equation*}
		Multiplying the above equation by $\mathcal{A}^{-1}$ on the left and by $\mathcal{A}$ on the right yields
		\begin{equation}\label{AfXHA_lax}
			\mathcal{A}^{-1}(k)X_x^\dagger(x,\bar{k})\mathcal{A}(k)-\left[\mathcal{L}(k),\mathcal{A}^{-1}(k)X^\dagger(x,\bar{k})\mathcal{A}(k)\right]=-\mathcal{A}^{-1}(k)X^\dagger(x,\bar{k})\mathcal{A}(k) L_1(x,k).
		\end{equation}
		Noting that
\(
(X^{-1})_x=-X^{-1}X_xX^{-1}
=[\mathcal{L},X^{-1}]-X^{-1}L_1,
\)
we obtain
		\begin{equation}\label{Xf_lax}
			X_x^{-1}(x,k)-\left[\mathcal{L}(k),X^{-1}(x,k)\right]=-X^{-1}(x,k)L_1(x,k).
		\end{equation}
		Since the solutions of \eqref{AfXHA_lax} and \eqref{Xf_lax} satisfy the same normalization condition,
\[
\mathcal{A}^{-1}(k)X^\dagger(x,\bar{k})\mathcal{A}(k); \,\,X^{-1}(x,k)\to I,
\qquad x\to\infty,
\]
the uniqueness of the corresponding Volterra integral equation implies the symmetry relation \eqref{sym_X_A} for $X(x,k)$. The other symmetry relations stated in \eqref{sym_X_A} and \eqref{sym_X_B} follow from analogous considerations.

	\end{proof}

	\subsubsection{Asymptotics of $X(x,k)$ and $Y(x,k)$ as $k\to\infty$}\label{sub_Xk0}
	Introduce the following formal power series solutions
	\begin{equation}\label{X_pm_formal}
    \begin{aligned}
		&X_{formal}(x,k)=X_{0}(x)+\frac{X_{1}(x)}{k}+\frac{X_{2}(x)}{k^2}+\cdots,\quad &&k\to\infty,\\
        &Y_{formal}(x,k)=Y_{0}(x)+\frac{Y_{1}(x)}{k}+\frac{Y_{2}(x)}{k^2}+\cdots,\quad &&k\to\infty,\\
    \end{aligned}
	\end{equation}
	where the coefficients $\left\lbrace X_{ j}\right\rbrace_1^\infty $ and $\left\lbrace Y_{ j}\right\rbrace_1^\infty $ satisfy $\lim\limits_{x\to\infty}X_{ j}(x)=\lim\limits_{x\to-\infty}Y_{ j}(x)=0$ for $j\geq 1$.
    Notice that the matrix $L_1$ in Lax pair \eqref{lax_xpart} can be decomposed into $L_1(x,t,k)=L_{11}(x,t)+L_{12}(x,t)/k$, and we denote $\mathcal{L}(k)=kL_0$, where
    \begin{equation*}
        L_{11}(x,t)=\begin{pmatrix}
				0 & 0 & 0\\
				2\phi & 0 & 2\phi\\
				0 & 0 & 0\\
			\end{pmatrix},\quad L_{12}(x,t)=\begin{pmatrix}
				-\ri n/2 & -\overline{\phi}/4 & -\ri n/2\\
				0 & 0 & 0\\
				\ri n/2 & \overline{\phi}/4 & \ri n/2\\
			\end{pmatrix},\quad L_0=\begin{pmatrix}
			    3\ri & 0 & 0\\
                0 & \ri & 0\\
                0 & 0 & -\ri\\
			\end{pmatrix}.
    \end{equation*}
	By substituting the above expansion into \eqref{lax_xpart} and comparing the coefficients of the same powers of the parameter $k$, we can obtain the following relations:
    \begin{equation*}
        \left\lbrace\begin{aligned}
            &\partial_x X_{ j}^{(o)}=[L_0,X_{ j+1}]+(L_{11}X_{ j})^{(o)}+(L_{12}X_{ j-1})^{(o)},\\
            &\partial_x X_{ j}^{(d)}=(L_{11}X_{ j})^{(d)}+(L_{12}X_{ j-1})^{(d)},
        \end{aligned}\right.\quad j=1,2,\cdots,
    \end{equation*}
    where $A^{(d)}$ and $A^{(o)}$ denote the diagonal and off-diagonal parts of a $3\times3$ matrix $A$. After calculation, the first few coefficients are as
    follows:
	\begin{align*}
		&X_{0}(x)=Y_0(x)=I,\\
		& X_{1}(x)=\begin{pmatrix}
			-\frac{\ri}{2}\int_{\infty}^{x}n_0(x')\rd x' & 0 & 0\\
		-\ri \phi_0 & 0 & \ri \phi_0\\
			0 & 0 & \frac{\ri}{2}\int_{\infty}^{x}n_0(x')\rd x'\\ 
		\end{pmatrix},\\
		& X_{2}(x)=\begin{pmatrix}
			0 & -\frac{\ri\overline{\phi}_0}{8} & \frac{n_0}{8}\\
			0 & 0 & 0\\
			\frac{n_0}{8} & -\frac{\ri\overline{\phi}_0}{8} & 0\\ 
		\end{pmatrix}+\left( -\frac{\phi_0}{2}\int_{\infty}^{x}n_0(x')\rd x'+\frac{\phi_{0x}}{2}\right) \begin{pmatrix}
			0 & 0 & 0\\
			1 & 0 & 1\\
			0 & 0 & 0\\
		\end{pmatrix}-\frac{\ri}{2}\int_{\infty}^{x}\phi_0(x')\overline{\phi}_0(x')\rd x'\begin{pmatrix}
		0 & 0 & 0\\
		0 & 1 & 0\\
		0 & 0 & 0\\
		\end{pmatrix}\\
		&\qquad \quad +\left( 
	\frac{\ri}{4}\int_{\infty}^{x}\phi_0(x')\overline{\phi}_0(x')\rd x'	-\frac{1}{4}\int_{\infty}^{x}n_0(x'')\int_{\infty}^{x''}n_0(x')\rd x'\rd x''\right) 
		\begin{pmatrix}
			1 & 0 & 0\\
			0 & 0 & 0\\
			0 & 0 & 1\\
		\end{pmatrix},\\
		& Y_{1}(x)=\begin{pmatrix}
			-\frac{\ri}{2}\int_{-\infty}^{x}n_0(x')\rd x' & 0 & 0\\
			-\ri \phi_0 & 0 & \ri \phi_0\\
			0 & 0 & \frac{\ri}{2}\int_{-\infty}^{x}n_0(x')\rd x'\\ 
		\end{pmatrix},\\
		& Y_{2}(x)=\begin{pmatrix}
			0 & -\frac{\ri\overline{\phi}_0}{8} & \frac{n_0}{8}\\
			0 & 0 & 0\\
			\frac{n_0}{8} & -\frac{\ri\overline{\phi}_0}{8} & 0\\ 
		\end{pmatrix}+\left( -\frac{\phi_0}{2}\int_{-\infty}^{x}n_0(x')\rd x'+\frac{\phi_{0x}}{2}\right) \begin{pmatrix}
			0 & 0 & 0\\
			1 & 0 & 1\\
			0 & 0 & 0\\
		\end{pmatrix}-\frac{\ri}{2}\int_{-\infty}^{x}\phi_0(x')\overline{\phi}_0(x')\rd x'\begin{pmatrix}
		0 & 0 & 0\\
		0 & 1 & 0\\
		0 & 0 & 0\\
		\end{pmatrix}\\
		&\qquad \quad +\left( 
		\frac{\ri}{4}\int_{-\infty}^{x}\phi_0(x')\bar{\phi}_0(x')\rd x'	-\frac{1}{4}\int_{-\infty}^{x}n_0(x'')\int_{-\infty}^{x''}n_0(x')\rd x'\rd x''\right) 
		\begin{pmatrix}
			1 & 0 & 0\\
			0 & 0 & 0\\
			0 & 0 & 1\\
		\end{pmatrix}.
	\end{align*}
	Although all remaining coefficients can, in principle, be determined explicitly, we do not present them here. We next investigate the behavior of $X(x,k)$ and $Y(x,k)$ in the limit of large $k$.
	
	\begin{prop}\label{prop_Xpm_k_infty}
		$($Asymptotics of $X$ and $Y$ as $k\to \infty$$)$.
		Suppose $\phi_0(x),\, n_0(x)\in\mathcal{S}(\mathbb{R})$. As $k\to\infty$, $X$ and $Y$ coincide to all orders with $X_{formal}$ and $Y_{formal}$, respectively. To be more precise, for any integer $p\geq 0$, there exist two well-defined functions
		\begin{equation}\label{Xp_infty}
        \begin{aligned}
			&X^{[p]}(x,k)=I+\frac{X_{1}(x)}{k}+\frac{X_{2}(x)}{k^2}+\cdots+\frac{X_{p}(x)}{k^p},\\
            &Y^{[p]}(x,k)=I+\frac{Y_{1}(x)}{k}+\frac{Y_{2}(x)}{k^2}+\cdots+\frac{Y_{p}(x)}{k^p},\\
        \end{aligned}
		\end{equation}
		 such that for any integer $j\geq 0$,
		 \begin{equation*}
         \begin{aligned}
           & \left|\frac{\partial^j}{\partial k^j}\left(X-X^{[p]}\right)  \right| \leq \frac{f_+(x)}{\left|k \right|^{p+1} },\quad x\in\mathbb{R},\,k\in (\overline{\mathbb{C}_+},\mathbb{R},\overline{\mathbb{C}_-}), \\
           & \left|\frac{\partial^j}{\partial k^j}\left(Y-Y^{[p]}\right)  \right| \leq \frac{f_-(x)}{\left|k \right|^{p+1} },\quad x\in\mathbb{R},\,k\in (\overline{\mathbb{C}_-},\mathbb{R},\overline{\mathbb{C}_+}), \\
         \end{aligned}
		 \qquad\left|k \right| \geq 2,
		 \end{equation*}
	where $f_\pm(x)$ are bounded smooth positive functions of $x\in\mathbb{R}$ with rapid decay as $x\to \pm \infty$, respectively.
	\end{prop}
	\begin{proof}
		The proof follows by considering the equation satisfied by the quotients $\left(X^{[p]}\right)^{-1}X$ and $\left(Y^{[p]}\right)^{-1}Y$. We refer the reader to the \cite{Lenells-Indiana} for similar arguments.
	\end{proof}
	
	\subsubsection{Asymptotics of $X(x,k)$ and $Y(x,k)$ as $k\to 0$}
    If $n_0(x)$, $\phi_0(x)$ are compactly supported, then $X$ and $Y$ can have at most simple poles at $k = 0$, with the Laurent expansion coefficients are given by \eqref{X-kto0}. When $n_0(x), \phi_0(x) \in \mathcal{S}(\mathbb{R})$ are not compactly supported, the situation is more delicate: the columns of $X$ and $Y$ are generally not defined near $k = 0$, and a more refined description is therefore required.
	\begin{prop}\label{prop_Xkto0}$($Asymptotics of $X$ and $Y$ as $k\to 0$$)$.
			Suppose $\phi_0,\, n_0\in\mathcal{S}(\mathbb{R})$ and $p\geq0$ is an integer. There exist $ 3 \times 3 $ matrix-valued functions $X^{(l)}$ and $Y^{(l)}$, $l=-1,0,\cdots,p$, that satisfy the following properties:
			\begin{enumerate}
				\item For $ x \in \mathbb{R} $ and $j\geq0$ is any integer, the functions $X$ and $Y$ satisfy 
				\begin{equation*}
                \begin{aligned}
                   &\left|\frac{\partial^j}{\partial k^j}\left(X(x,k)-I-\sum_{l=-1}^{p}X^{(l)}(x)k^l\right)  \right| \leq {f_+(x)}{\left|k \right|^{p+1-j} }, \quad &&  \left|k \right| \leq \frac{1}{2}, \quad k\in (\overline{\mathbb{C}_+},\mathbb{R},\overline{\mathbb{C}_-}),\\
                   &\left|\frac{\partial^j}{\partial k^j}\left(Y(x,k)-I-\sum_{l=-1}^{p}Y^{(l)}(x)k^l\right)  \right| \leq {f_-(x)}{\left|k \right|^{p+1-j} }, \quad &&  \left|k \right| \leq \frac{1}{2}, \quad k\in (\overline{\mathbb{C}_-},\mathbb{R},\overline{\mathbb{C}_+}),\\
                \end{aligned}
				\end{equation*}
				where $f_\pm(x)$ are bounded smooth positive functions of $x\in\mathbb{R}$ with rapid decay as $x\to \pm \infty$, respectively.
				\item  For any $l\geq-1$, $X^{(l)}(x)$ and $Y^{(l)}(x)$ are smooth functions of $ x \in \mathbb{R} $ and decay rapidly as $x\to \pm\infty$, respectively.
				\item The specific forms of the coefficients of the first two terms in the expansion are 
				\begin{align}
					X^{(-1)}(x)=&\alpha_{11}(x)\begin{pmatrix}
						-1 & 0 & -1\\
						0 & 0 & 0\\
						1 & 0 & 1\\
					\end{pmatrix}+\alpha_{12}(x)\begin{pmatrix}
						0 & -1 & 0\\
						0 & 0 & 0\\
						0 & 1 & 0\\
					\end{pmatrix},\nonumber\\
					X^{(0)}(x)=&-I+\alpha_{13}(x)\begin{pmatrix}
						-1 & 0 & 1\\
						0 & 0 & 0\\
						1 & 0 & -1\\
					\end{pmatrix}+\alpha_{14}(x)\begin{pmatrix}
						0 & 0 & 1\\
						0 & 0 & 0\\
						1 & 0 & 0\\
					\end{pmatrix}+\alpha_{15}(x)\begin{pmatrix}
						0 & 0 & 0\\
						1 & 0 & 1\\
						0 & 0 & 0\\
					\end{pmatrix}\nonumber\\
					&+\frac{8\alpha_{12}(x)\alpha_{14}(x)-\alpha^*_{15}(x)}{16\alpha_{11}(x)}\begin{pmatrix}
						0 & 1 & 0\\
						0 & 0 & 0\\
						0 & 1 & 0\\
					\end{pmatrix}+\frac{\alpha_{12}(x)\alpha_{15}(x)-\alpha_{11}^*(x)}{\alpha_{11}(x)}\begin{pmatrix}
						0 & 0 & 0\\
						0 & 1 & 0\\
						0 & 0 & 0\\
					\end{pmatrix},
                \label{X-kto0}\\
                    Y^{(-1)}(x)=&\alpha_{21}(x)\begin{pmatrix}
						-1 & 0 & -1\\
						0 & 0 & 0\\
						1 & 0 & 1\\
					\end{pmatrix}+\alpha_{22}(x)\begin{pmatrix}
						0 & -1 & 0\\
						0 & 0 & 0\\
						0 & 1 & 0\\
					\end{pmatrix},\nonumber\\
					Y^{(0)}(x)=&-I+\alpha_{23}(x)\begin{pmatrix}
						-1 & 0 & 1\\
						0 & 0 & 0\\
						1 & 0 & -1\\
					\end{pmatrix}+\alpha_{24}(x)\begin{pmatrix}
						0 & 0 & 1\\
						0 & 0 & 0\\
						1 & 0 & 0\\
					\end{pmatrix}+\alpha_{25}(x)\begin{pmatrix}
						0 & 0 & 0\\
						1 & 0 & 1\\
						0 & 0 & 0\\
					\end{pmatrix}\nonumber\\
					&+\frac{8\alpha_{22}(x)\alpha_{24}(x)-\alpha^*_{25}(x)}{16\alpha_{21}(x)}\begin{pmatrix}
						0 & 1 & 0\\
						0 & 0 & 0\\
						0 & 1 & 0\\
					\end{pmatrix}+\frac{\alpha_{22}(x)\alpha_{25}(x)-\alpha_{21}^*(x)}{\alpha_{21}(x)}\begin{pmatrix}
						0 & 0 & 0\\
						0 & 1 & 0\\
						0 & 0 & 0\\
					\end{pmatrix}.\nonumber
				\end{align}
				 More importantly, for bounded positive functions $f_1(x)$ and $f_2(x)$, that decay rapidly as $x$ approaches both $\infty$ and $-\infty$, the following equations hold for all $x\in\mathbb{R}$ and $i=1,2$:
        \begin{equation}\label{alpha_i}
			\begin{aligned}
				&\left|\alpha_{i1}(x) \right| +\left|\alpha_{i2}(x) \right| \leq f_i(x),\quad &&\left|\alpha_{i5}(x)\right|\leq f_i(x),\\
				&\left|\alpha_{i4}(x)\right|\leq (1+|x|)f_i(x),\quad &&\left|\alpha_{i3}(x)+1\right|\leq (1+|x|)f_i(x).
			\end{aligned}
                 \end{equation}
			\end{enumerate}
	\end{prop}
\begin{proof}
        The proof of this proposition is given in Appendix \ref{Appendix_prop_kto0}.
\end{proof}

	\subsection{The spectral function $s(k)$}
    The spectral matrix function $s(k)$ is defined by equation \eqref{sk}, and its properties are given by the following proposition.
	\begin{prop}\label{prop_s}
		$($Basic properties of $s( k )$$)$.
		Given that $\phi_0(x), n_0(x)\in \mathcal{S}(\mathbb{R})$, the matrix-valued function $s( k )$ exhibits the following  properties:
		\begin{enumerate}
			\item The entries of $s(k)$ are defined and continuous for
			\begin{equation}\label{sk_k_define}
				k\in \begin{pmatrix}
					\overline{\mathbb{C}_+} & \mathbb{R} & \mathbb{R}\\
					\mathbb{R} & \mathbb{R} & \mathbb{R}\\
					\mathbb{R} & \mathbb{R} & \overline{\mathbb{C}_-}\\
				\end{pmatrix}\setminus\{0\}.
			\end{equation}
			
			The function $s_{11}(k)$ is analytic in the upper half-plane $({\rm{Im}}\, k >0)$, and the function $s_{33}(k)$ is analytic in the lower half-plane $({\rm{Im}}\, k <0)$. The remaining elements of $s(k)$ are not analytic in any region.
			\item The elements of the derivative $\partial_k^j s(k)$, $j=1,2,\cdots,$ are well-defined and continuous for $k$ in \eqref{sk_k_define}.
            \item $s(k)$ approaches the identity matrix as $k\to\infty$.  More importantly, there exist diagonal matrices $\{s_j\}_{j=1}^\infty$, such that the following asymptotics holds for $j=1,2,\cdots,N$:
            \begin{equation*}
                \left|\partial_k^j\left(s(k)-I-\sum_{j=1}^N\frac{s_j}{k^j}\right)\right|=\mathcal{O}\left(k^{-N-1}\right),\quad k\to\infty,\,\, k\,\,\rm{in} \,\,\eqref{sk_k_define}.
            \end{equation*}
            Furthermore, the off-diagonal elements of $s(k)$ decay rapidly as $k\to\infty$.
			\item  The function $s( k )$ behaves as
			\begin{equation}\label{s_k0}
				s( k )=\frac{s^{(-1)}}{k}+s^{(0)}+s^{(1)} k +s^{(2)} k ^2+\cdots,\quad  k \rightarrow0,
			\end{equation}
			where
    \begin{equation}\label{s-1}
	\begin{aligned}
			s^{(-1)}=&\int_{\mathbb{R}}\frac{ \alpha_{14}(x) n_0(x)}{2}+\frac{\alpha_{15}(x)\overline{\phi}_0(x)}{4}\rd x\begin{pmatrix}
				1 & 0 & 1\\
				0 & 0 & 0\\
				-1 & 0 & -1\\
			\end{pmatrix}\\
			&-\int_{\mathbb{R}}\frac{\ri \left( 8\alpha_{12}(x)\alpha_{14}(x)-\alpha^*_{15}(x)\right)n_0(x) }{16\alpha_{11}(x)}+\frac{\left( \alpha_{12}(x)\alpha_{15}(x)-\alpha^*_{11}(x)\right) \overline{\phi}_0(x)}{4\alpha_{11}(x) }\rd x\begin{pmatrix}
			0 & 1 & 0\\
			0 & 0 & 0\\
			0 & -1 & 0\\
		\end{pmatrix},
		\end{aligned}
        \end{equation}
		and the expansion can be differentiated termwise any number of times.
			\item  For $k\in\mathbb{R}\setminus\{0\}$, $\det s(k)=1$, and the function $s( k )$ satisfies the symmetries
			\begin{equation}\label{sym_s}
				\begin{aligned}
				&s^{\dagger}\left(  \bar k \right) =\mathcal{A}(k)s^{-1} (k)\mathcal{A}^{-1}(k),\qquad\qquad
                s(-k)=\mathcal{B}\,s(k)\,\mathcal{B}.
 			\end{aligned}
 			\end{equation}
			\item If $\phi_0(x)$, $n_0(x)$ have compact support, then $s(k)$ is defined and analytic for $k\in\mathbb{C}\setminus\{0\}$, $\det s(k)=1$, and
            \begin{equation}\label{X_s}
		X(x, k )=Y(x, k )\re^{x\widehat{\mathcal{L}}( k )}s( k ),\quad  k \in \mathbb{C}\setminus \left\lbrace 0 \right\rbrace.
	\end{equation}
		\end{enumerate}
	\end{prop}

	\begin{proof}
	The details of the proof are given in Appendix~\ref{Appendix_prop_s}.
	\end{proof}

	\subsection{The eigenfunctions $X^A(x,k)$ and $Y^A(x,k)$}
   Notice that only the first column of $X(x,k)$ and the third column of $Y(x,k)$ are analytic in the upper half-plane. However, the $x$-part of the Lax pair \eqref{lax_xpart} is a third-order system and therefore admits three linearly independent eigenfunctions. To obtain a complete set of analytic eigenfunctions, we introduce the eigenfunctions $X^A(x,k)$ and $Y^A(x,k)$.

Let $X^A:=(X^{-1})^T$. More generally, for a $3\times3$ matrix $B$, the superscript $A$ denotes the adjugate matrix of $B$, defined by
   \begin{equation*}
       B^A:=\begin{pmatrix}
           m_{11}(B) & -m_{12}(B) & m_{13}(B)\\
           -m_{21}(B) & m_{22}(B) & -m_{23}(B)\\
           m_{31}(B) & -m_{32}(B) & m_{33}(B)\\
       \end{pmatrix}.
   \end{equation*}
    We can calculate $\left(X^A \right)_x=-X^A\left( X_x\right) ^TX^A $ from the relation $X X^{-1}=I$, and $X^A$ satisfies the following adjugate Lax pair
	\begin{equation}\label{lax_XA}
		\left(X^A \right)_x+\left[ \mathcal{L},X^A\right]=-L_1^TX^A.  
	\end{equation}
	   The above equation corresponds to the solutions $X^A(x,k)$ and $Y^A(x,k)$ of two Volterra integral equations, which are given by equation \eqref{XY_XAYA}.

	\begin{prop}$($Basic properties of $X^A$ and $Y^A$$)$.
		Suppose $\phi_0(x),\, n_0(x)\in\mathcal{S}(\mathbb{R})$, the equations \eqref{XY_XAYA} uniquely define the two solutions $X^A(x,k)$ and $Y^A(x,k)$ in equation \eqref{lax_XA}, which satisfy the following properties:
		\begin{enumerate}
         \item The functions $[X^A(x,k)]_1$ and $[Y^A(x,k)]_3$ are defined for $x\in\mathbb{R}$ and $k\in \overline{\mathbb{C}_-}\setminus\{0\}$. $[X^A(x,\cdot)]_{1}$ and $[Y^A(x,\cdot)]_{3}$ are analytic functions for $k\in\mathbb{C}_-$ and continuous up to $k\in \overline{\mathbb{C}_-}\setminus\{0\}$.
        \item The functions $[X^A(x,k)]_{3}$ and $[Y^A(x,k)]_{1}$ are defined for $x\in\mathbb{R}$ and $k\in \overline{\mathbb{C}_+}\setminus\{0\}$. $[X^A(x,\cdot)]_{3}$ and $[Y^A(x,\cdot)]_1$ are analytic functions for $k\in\mathbb{C}_+$ and continuous up to $k\in \overline{\mathbb{C}_+}\setminus\{0\}$.
        \item The functions $[X^A(x,k)]_{2}$ and $[Y^A(x,k)]_{2}$ are defined for $x\in\mathbb{R}$ and $k\in {\mathbb{R}}\setminus\{0\}$. $[X^A(x,\cdot)]_{2}$ and $[Y^A(x,\cdot)]_{2}$ are continuous for $k\in {\mathbb{R}}\setminus\{0\}$, but not analytic.
       
            \item For each $x\in\mathbb{R}$ and $j=1,2,\cdots$, $\partial_k^jX^A(x,\cdot)$ is well defined for $k\in (\overline{\mathbb{C}_-},\mathbb{R},\overline{\mathbb{C}_+})\setminus\{0\}$, and $\partial_k^jY^A(x,\cdot)$ is well defined for $k\in (\overline{\mathbb{C}_+},\mathbb{R},\overline{\mathbb{C}_-})\setminus\{0\}$.
            \item For each $n \geq 1$ and $\epsilon > 0$, there are bounded smooth positive functions $f_+(x)$ and $f_-(x)$ of $x \in \mathbb{R}$ with rapid decay as $x \to +\infty$ and $x \to -\infty$, respectively, such that the following estimates hold for $x \in \mathbb{R}$ and $j = 0, 1, \ldots, n$:
            \begin{equation*}
                \begin{aligned}
                    &\left| \frac{\partial^j}{\partial k^j} \bigl( X^A(x, k) - I \bigr) \right| \leq f_+(x), \quad k\in (\overline{\mathbb{C}_-},\mathbb{R},\overline{\mathbb{C}_+}), \ |k| > \epsilon,\\
                    &\left| \frac{\partial^j}{\partial k^j} \bigl( Y^A(x, k) - I \bigr) \right| \leq f_-(x), \quad k\in (\overline{\mathbb{C}_+},\mathbb{R},\overline{\mathbb{C}_-}), \ |k| > \epsilon.
                \end{aligned}
            \end{equation*}

			\item For $k\in\mathbb{R}\setminus\{0\}$, $\det X ^A=\det Y^A=1$. The functions $X^A$ and $Y^A$ satisfy the following symmetries for $x\in\mathbb{R}$ and $k\in (\overline{\mathbb{C}_\mp},\mathbb{R},\overline{\mathbb{C}_\pm})\setminus\{0\}$:
			\begin{align*}
				&\left( X^A\right)^\dagger\left(  x,\bar k \right)=\mathcal{A}^{-1}(k) X^T(x,k)\mathcal{A}(k),\quad && \left( Y^A\right)^\dagger\left(  x,\bar k \right)=\mathcal{A}^{-1}(k) Y^T(x,k)\mathcal{A}(k),\\
				&X^A(x,k)=\mathcal{B}X^A(x,-k)\mathcal{B},\quad && Y^A(x,k)=\mathcal{B}Y^A(x,-k)\mathcal{B}.
			\end{align*}
			\item If $n_0(x)$, $\phi_0(x)$ have compact support, then for each $x\in\mathbb{R}$, $X^A(x,k)$ and $Y^A(x,k)$  are defined and analytic in $k\in\mathbb{C}\setminus\{0\}$.
			
		\end{enumerate}
	\end{prop}

	Similar to the discussion in Subsection \ref{sub_Xk0}, the equation \eqref{lax_xpart} has the following formal power series solutions
	\begin{equation*}
    \begin{aligned}
		&X^A_{formal}(x,k)= X_{0}^A(x)+\frac{ X_{1}^A(x)}{k}+\frac{ X_{2}^A(x)}{k^2}+\cdots,\quad &&k\to\infty,\\
        &Y^A_{formal}(x,k)= Y_{0}^A(x)+\frac{ Y_{1}^A(x)}{k}+\frac{ Y_{2}^A(x)}{k^2}+\cdots,\quad &&k\to\infty,
    \end{aligned}
	\end{equation*}
	where the coefficients $\left\lbrace  X_{ j}^A\right\rbrace_{j=1}^\infty $ and $\left\lbrace  Y_{ j}^A\right\rbrace_{j=1}^\infty $ satisfy $\lim\limits_{x\to\pm\infty} X_{ j}^A(x)=\lim\limits_{x\to\pm\infty} Y_{ j}^A(x)=0$ for $j\geq 1$. More importantly, $X^A_{formal}(x,k)$ and $Y^A_{formal}(x,k)$ be the cofactor matrices of the functions $X_{formal}$ and $Y_{formal}$ in \eqref{X_pm_formal}.
	
	\begin{prop}
		$($Asymptotics of $X^A$ and $Y^A$ as $k\to \infty$$)$.
		Suppose $\phi_0(x),\, n_0(x)\in\mathcal{S}(\mathbb{R})$. As $k$ approaches infinity, $X^A$ and $Y^A$ coincide to all orders with $X^A_{formal}$and $Y^A_{formal}$, respectively. To be more precise, for any integer $p\geq 0$, there exist well-defined functions
		\begin{equation*}
        \begin{aligned}
			&\left( X^A\right) ^{[p]}(x,k)=I+\frac{ X_{1}^A(x)}{k}+\frac{ X_{2}^A(x)}{k^2}+\cdots+\frac{ X_{p}^A(x)}{k^p},\\
            &\left( Y^A\right) ^{[p]}(x,k)=I+\frac{ Y_{1}^A(x)}{k}+\frac{ Y_{2}^A(x)}{k^2}+\cdots+\frac{ Y_{p}^A(x)}{k^p},\\
        \end{aligned}
		\end{equation*}
		such that for any integer $j\geq 0$,
		\begin{equation*}
            \left|\frac{\partial^j}{\partial k^j}\left(X^A-\left( X^A\right) ^{[p]}\right)  \right| \leq \frac{f_+(x)}{\left|k \right|^{p+1} },\quad
            \left|\frac{\partial^j}{\partial k^j}\left(Y^A-\left( Y^A\right) ^{[p]}\right)  \right| \leq \frac{f_-(x)}{\left|k \right|^{p+1} },
			\quad x\in\mathbb{R},\quad \left|k \right| \geq 2,
		\end{equation*}
		where $f_\pm(x)$ are bounded smooth positive functions of $x\in\mathbb{R}$ with rapid decay as $x\to \pm \infty$, respectively.
	\end{prop}

	\begin{prop}\label{prop_XAkto0}$($Asymptotics of $X^A$ and $Y^A$ as $k\to 0$$)$.
	Suppose $\phi_0(x),\, n_0(x)\in\mathcal{S}(\mathbb{R})$ and $p\geq0$ is an integer. There exist $ 3 \times 3 $ matrix-valued functions $\left( X^A\right) ^{(l)}$ and $\left( Y^A\right) ^{(l)}$, $l=-1,0,\cdots,p$, that satisfy the following properties:
			\begin{enumerate}
				\item For $ x \in \mathbb{R} $, $\left|k \right| \leq \frac{1}{2}$, and $j\geq0$ is any integer, the functions $X^A$ and $Y^A$ satisfy 
				\begin{equation*}
                \begin{aligned}
                    &\left|\frac{\partial^j}{\partial k^j}\left(X^A(x,k)-I-\sum_{l=-1}^{p}\left( X^A\right) ^{(l)}(x)k^l\right)  \right| \leq {f_+(x)}{\left|k \right|^{p+1-j} },\\
                    &\left|\frac{\partial^j}{\partial k^j}\left(Y^A(x,k)-I-\sum_{l=-1}^{p}\left( Y^A\right) ^{(l)}(x)k^l\right)  \right| \leq {f_-(x)}{\left|k \right|^{p+1-j} },\\
                \end{aligned}
				\end{equation*}
				where $f_\pm(x)$ are bounded smooth positive functions of $x\in\mathbb{R}$ with rapid decay as $x\to \pm \infty$, respectively.
				\item  For any $l\geq-1$, $\left( X^A\right) ^{(l)}(x)$ and $\left( Y^A\right) ^{(l)}(x)$ are smooth functions of $ x \in \mathbb{R} $ and decay rapidly as  $x\to +\infty$ and $x\to -\infty$, respectively.
				\item The specific forms of the coefficients of the first two terms in the expansion are 
				\begin{align*}
				\left(X^A \right)^{(-1)}(x)=&\beta_{11}(x)\begin{pmatrix}
						-1 & 0 & 1\\
						0 & 0 & 0\\
						-1 & 0 & 1\\
					\end{pmatrix}+\beta_{12}(x)\begin{pmatrix}
						0 & 0 & 0\\
						-1 & 0 & 1\\
						0 & 0 & 0\\
					\end{pmatrix},\\
				\left(X^A \right)^{(0)}(x)=&-I+\beta_{13}(x)\begin{pmatrix}
						-1 & 0 & 1\\
						0 & 0 & 0\\
						1 & 0 & -1\\
					\end{pmatrix}+\beta_{14}(x)\begin{pmatrix}
						0 & 0 & 1\\
						0 & 0 & 0\\
						1 & 0 & 0\\
					\end{pmatrix}+\beta_{15}(x)\begin{pmatrix}
						0 & 1 & 0\\
						0 & 0 & 0\\
						0 & 1 & 0\\
					\end{pmatrix}\\
					&+\frac{8\beta_{12}(x)\beta_{14}(x)-\beta^*_{15}(x)}{16\beta_{11}(x)}\begin{pmatrix}
						0 & 0 & 0\\
						1 & 0 & 1\\
						0 & 0 & 0\\
					\end{pmatrix}+\frac{\beta_{12}(x)\beta_{15}(x)-\beta_{11}^*(x)}{\beta_{11}(x)}\begin{pmatrix}
						0 & 0 & 0\\
						0 & 1 & 0\\
						0 & 0 & 0\\
					\end{pmatrix},\\
                    \left(Y^A \right)^{(-1)}(x)=&\beta_{21}(x)\begin{pmatrix}
						-1 & 0 & 1\\
						0 & 0 & 0\\
						-1 & 0 & 1\\
					\end{pmatrix}+\beta_{22}(x)\begin{pmatrix}
						0 & 0 & 0\\
						-1 & 0 & 1\\
						0 & 0 & 0\\
					\end{pmatrix},\\
				\left(Y^A \right)^{(0)}(x)=&-I+\beta_{23}(x)\begin{pmatrix}
						-1 & 0 & 1\\
						0 & 0 & 0\\
						1 & 0 & -1\\
					\end{pmatrix}+\beta_{24}(x)\begin{pmatrix}
						0 & 0 & 1\\
						0 & 0 & 0\\
						1 & 0 & 0\\
					\end{pmatrix}+\beta_{25}(x)\begin{pmatrix}
						0 & 1 & 0\\
						0 & 0 & 0\\
						0 & 1 & 0\\
					\end{pmatrix}\\
					&+\frac{8\beta_{22}(x)\beta_{24}(x)-\beta^*_{25}(x)}{16\beta_{21}(x)}\begin{pmatrix}
						0 & 0 & 0\\
						1 & 0 & 1\\
						0 & 0 & 0\\
					\end{pmatrix}+\frac{\beta_{22}(x)\beta_{25}(x)-\beta_{21}^*(x)}{\beta_{21}(x)}\begin{pmatrix}
						0 & 0 & 0\\
						0 & 1 & 0\\
						0 & 0 & 0\\
					\end{pmatrix}.
				\end{align*}
			 More importantly, for bounded positive functions $f_i(x)$, $i=1,2$, that decay rapidly as $x$ approaches both $\infty$ and $-\infty$, the following equations hold for all $x\in\mathbb{R}$ and $i=1,2$:
				\begin{equation*}
			\begin{aligned}
				&\left|\beta_{i1}(x) \right| +\left|\beta_{i2}(x) \right| \leq f_i(x),\quad |\beta_{i3}(x)+1|\leq (1+|x|)f_i(x),\quad |\beta_{i4}(x)|\leq (1+|x|)f_i(x),\quad |\beta_{i5}(x)|\leq f_i(x).
			\end{aligned}
                 \end{equation*}
				
			\end{enumerate}
		\end{prop}
        \begin{proof}
            The proof is given in Appendix \ref{Appendix_prop_kto02}.
        \end{proof}

	\subsection{The spectral function $s^A(k)$}
   Before presenting the properties of $s^A(k)$, we clarify that, in the general case, $s^A(k)$ is defined by the integral equation \eqref{sk}. Thus, although we keep the notation $s^A(k)$, it should not be interpreted as the adjugate matrix of $s(k)$ unless $n_0(x)$ and $\phi_0(x)$ are compactly supported.

	\begin{prop}\label{prop_sA}
		$($Basic properties of $s^A( k )$$)
		$.
		Given that $\phi_0(x), n_0(x)\in \mathcal{S}(\mathbb{R})$, the matrix-valued function $s^A( k )$ exhibits the following  properties:
		\begin{enumerate}
			\item The entries of $s^A(k)$ are defined and continuous for
			\begin{equation}\label{sk_define}
				k\in \begin{pmatrix}
					\overline{\mathbb{C}_-} & \mathbb{R} & \mathbb{R}\\
					\mathbb{R} & \mathbb{R} & \mathbb{R}\\
					\mathbb{R} & \mathbb{R} & \overline{\mathbb{C}_+}\\
				\end{pmatrix}\setminus\{0\}.
			\end{equation}
			The function $s^A_{11}(k)$ is analytic on the upper half complex plane $({\rm{Im}}\, k <0)$, and the function $s^A_{33}(k)$ is analytic on the lower half complex plane $({\rm{Im}}\, k >0)$. The remaining elements of $s^A(k)$ are not analytic in any region.
			\item The elements of the derivative $\partial_k^j s^A(k)$, $j=1,2,\cdots,$ are well-defined and continuous for $k$ in \eqref{sk_define}.

            \item $s^A(k)$ approaches the identity matrix as $k\to\infty$,  More importantly, there exist diagonal matrices $\{s^A_j\}_{j=1}^\infty$, such that the following equation holds for $j=1,2,\cdots,N$:
            \begin{equation*}
                \left|\partial_k^j\left(s^A(k)-I-\sum_{j=1}^N\frac{s^A_j}{k^j}\right)\right|=\mathcal{O}\left(k^{-N-1}\right),\quad k\to\infty,\,\, k\,\,\rm{in} \,\,\eqref{sk_define}.
            \end{equation*}
            Furthermore, the off-diagonal elements of $s^A(k)$ decay rapidly as $k\to\infty$.
			\item  The function $s^A( k )$ behaves as
			\begin{equation*}
				s^A( k )=\frac{\left( s^A\right) ^{(-1)}}{k}+\left( s^A\right)^{(0)}+\left( s^A\right)^{(1)} k +\left( s^A\right)^{(2)} k ^2+\cdots,\quad  k \rightarrow0,
			\end{equation*}
            where
			\begin{align}\label{sA-1}
			\left( s^A\right) ^{(-1)}=&\int_{\mathbb{R}}\frac{\ri n_0(x)\left(\beta_{13}(x)+\beta_{14}(x)\right)}{2}-2\beta_{12}(x)\phi_0(x)\rd x\begin{pmatrix}
				1 & 0 & -1\\
				0 & 0 & 0\\
				1 & 0 & -1\\
			\end{pmatrix}+\int_{\mathbb{R}} \frac{\left(\beta_{13}(x)+\beta_{14}(x)\right)\overline{\phi}_0(x)}{4}\rd x\begin{pmatrix}
				0 & 0 & 0\\
				1 & 0 & -1\\
				0 & 0 & 0\\
			\end{pmatrix},
		\end{align}
			and the expansion can be differentiated termwise any number of times.
			\item  For $k\in\mathbb{R}\setminus\{0\}$, $\det s(k)=1$, and the function $s^A( k )$ satisfies the symmetries
			\begin{equation}\label{sym_sA}
				\begin{aligned}
				\left(s^A\right)^\dagger\left(  \bar k \right)=\mathcal{A}^{-1}(k)s^{T} (k)\mathcal{A}(k),\qquad
					 s^A(-k)=\mathcal{B}s^A(k)\mathcal{B}.
				\end{aligned}
			\end{equation}

			\item If $\phi_0(x)$, $n_0(x)$ have compact support, then $s^A(k)$ is defined and analytic for $k\in\mathbb{C}\setminus\{0\}$, $\det s^A(k)=1$, and 
        \begin{equation*}
		X^A(x, k )=Y^A(x, k )\re^{-x\widehat{\mathcal{L}}( k )}s^A( k ),\qquad  k \in \mathbb{C}\setminus \left\lbrace 0 \right\rbrace.
	\end{equation*}
		\end{enumerate}
	\end{prop}
	\begin{proof}
	    Most of the proof follows the same lines as that of Proposition~\ref{prop_s}, and we therefore omit the details.
	\end{proof}

	  In Proposition \ref{prop_sA}, we provide the symmetry relation \eqref{sym_sA} satisfied by the scattering matrix $s(k)$, which consequently implies 
    \begin{equation*}
    		\left(s^A\right)^\dagger\left(  -\bar k \right)=\mathcal{B}\mathcal{A}^{-1}(k)s^{T} (k)\mathcal{A}(k)\mathcal{B}.
    \end{equation*}
     This can be expressed in the following matrix form:
     \begin{equation}\label{sym_k-bark}
     	\begin{pmatrix}
     		\overline{m_{11}(s( -\bar k)) } & -\overline{m_{12}(s( -\bar k))} & \overline{m_{13}(s( -\bar k))} \\
     		-\overline{m_{21}(s( -\bar k))} & \overline{m_{22}(s( -\bar k)) } & -\overline{m_{23}(s( -\bar k))} \\
     		\overline{m_{31}(s( -\bar k)) } & -\overline{m_{32}(s( -\bar k))} & \overline{m_{33}(s( -\bar k)) } \\
     	\end{pmatrix}=\begin{pmatrix}
     	s_{33}(k) & -8ks_{32}(k) & -s_{31}(k)\\
     	-\frac{1}{8k}s_{23}(k) & s_{22}(k) & \frac{1}{8k}s_{21}(k)\\
     	-s_{13}(k) & 8ks_{12}(k) & s_{11}(k)\\
     	\end{pmatrix}.
     \end{equation}
     Thus, when a simple zero $k_1\in\textbf{Z}\cap \ri\mathbb{R}_+$ of $s_{11}(k)$, in other words, $k_1=-\overline{k_1}$, the following lemma holds.
     
     \begin{lem}\label{lem_Ck1_im}
     	Suppose $\phi_0(x),\, n_0(x)\in\mathcal{S}(\mathbb{R})$ are compactly supported, and $s_{11}(k)$ has a simple zero at certain $k_1\in\textbf{Z}\cap \ri\mathbb{R}_+$. Then $s_{12}(k_1)=s_{21}(k_1)=0$, and furthermore $\ri {s_{13}(k_1)}/{s'_{11}(k_1)}\in \mathbb{R}$.
     \end{lem}
     
     \begin{proof}
     	Since $n_0(x)$ and $\phi_0(x)$ have compact support, all elements of $s(k)$ are well-defined and analytic in $k\in \mathbb{C}\setminus\{0\}$. Consequently, the symmetry with respect to $k\in \mathbb{C}\setminus\{0\}$ holds everywhere.
     	
     	When $k_1\in\textbf{Z}\cap \ri\mathbb{R}_+$, the symmetry relation given by equation \eqref{sym_k-bark} holds for $k=k_1$, and we can derive from $s_{11}(k_1)=0$ that:
     	\begin{equation*}
     		m_{33}(s(k_1))=s_{11}(k_1)s_{22}(k_1)-s_{12}(k_1)s_{21}(k_1)=-s_{12}(k_1)s_{21}(k_1)=0.
     	\end{equation*}
     	Thus, the following two cases arise:
     	\begin{itemize}
     		\item If $s_{12}(k_1)=0$ holds, then applying the symmetry relation \eqref{sym_k-bark} again yields 
     		\begin{equation*}
     			m_{32}(s(k_1))=s_{12}(k_1)s_{23}(k_1)-s_{13}(k_1)s_{21}(k_1)=-s_{13}(k_1)s_{21}(k_1)=0.
     		\end{equation*}
     		However, when $s_{13}(k_1)$, the first row of the matrix $s(k)$ vanishes entirely at $k_1$, which contradicts $\det s(k) = 1$. Therefore, only the case $s_{21}(k_1)=0$ is possible.
     		\item If $s_{21}(k_1)=0$ holds, then 
     		\begin{equation*}
     			m_{23}(s(k_1))=s_{11}(k_1)s_{32}(k_1)-s_{12}(k_1)s_{31}(k_1)=-s_{12}(k_1)s_{31}(k_1)=0.
     		\end{equation*}
     		For reasons similar to those above, only $s_{12}(k_1)=0$ holds. 
     	\end{itemize}
     	Hence, both conditions $s_{12}(k_1)=0$ and $s_{21}(k_1)=0$ hold simultaneously under the assumptions of this lemma.
     	
     	By applying the symmetry relation \eqref{sym_k-bark} twice and utilizing the previously established condition $s_{12}(k_1)=s_{21}(k_1)=0$, one can derive the real variable relationship.\qedhere
     \end{proof}
	
	\subsection{Construct the function $M(x,k)$}\label{sec_M}
    In this subsection, we construct the sectionally analytic function $M(x,k)$ to formulate the RH problem corresponding to the YO equation \eqref{YO}. We first consider the construction at time $t=0$, and denote $M(x,0,k)$ by $M(x,k)$. For $k \in \mathbb{C}_\pm$, we write $M(x,k) = M_\pm(x,k)$, where $M_+(x,k)$ and $M_-(x,k)$ are matrix-valued solutions of equation \eqref{lax_X}, defined by the following Fredholm integral equations:
	\begin{equation}\label{Mn_fredholm}
			\left( M_\pm\right)_{ij}(x,k) =\delta_{ij}+\int_{\gamma _{ij}^\pm}\left( \re^{(x-y)\widehat{\mathcal{L}}(k)}\left(L_1M_\pm \right) \right)_{ij}\rd y,
		\qquad i,j=1,2,3, 
	\end{equation}
	where
	\begin{equation*}\delta_{ij}=\left\lbrace\begin{aligned}
			&1,\quad i=j,\\
			&0,\quad i\neq j,\\
		\end{aligned} \right. \qquad
		\gamma _{ij}^+=\left\lbrace
		\begin{aligned}
			&\left(-\infty,x \right),\quad i< j, \\
			&\left(+\infty,x \right),\quad i\geq j,  
		\end{aligned} \right. \qquad
		\gamma _{ij}^-=\left\lbrace
		\begin{aligned}
			&\left(-\infty,x \right),\quad i> j, \\
			&\left(+\infty,x \right),\quad i\leq j. 
		\end{aligned} \right. \qquad
	\end{equation*}
	The definition of $\gamma _{ij}^\pm$ is such that the exponents in the off-diagonal elements of the matrix $\re^{(x-y)\widehat{\mathcal{L}}(k)}\left(L_1M\right) $ in the integral equation \eqref{Mn_fredholm} are bounded for $k\in \mathbb{C}_\pm$ and $y\in \gamma _{ij}^\pm$. The definition \eqref{Mn_fredholm} of $M_\pm$ can be extended by continuity to the boundary of $\mathbb{C}_\pm$. The following proposition will show that all the elements of $M_\pm$ are well-defined for $k\in \mathbb{C}_\pm\setminus \mathcal{Q}$, where 
	  \begin{equation}\label{Q_setminus}
	  	\mathcal{Q}=\mathcal{Z}\cup \{0\},
	  \end{equation}
	and $\mathcal{Z}$ denotes the set of zeros of the Fredholm determinants associated with \eqref{Mn_fredholm}.
	
	\begin{prop}\label{prop_Mn_basic}
		$($Basic properties of $M_\pm$$)$.
	Suppose $\phi_0(x),\, n_0(x)\in\mathcal{S}(\mathbb{R})$, the equation \eqref{Mn_fredholm} uniquely define the two solutions $M_\pm(x,k)$ in equation \eqref{lax_X}, which satisfy the following properties for $ x \in \mathbb{R}$:
	\begin{enumerate}
        \item The function $M_\pm(x,k)$ is defined for $x\in \mathbb{R}$ and $k\in \overline{\mathbb{C}_\pm}\setminus \mathcal{Q}$. For each fixed $k\in \overline{\mathbb{C}_\pm}\setminus \mathcal{Q}$, $M_\pm(\cdot, k)$ is smooth and satisfies equation \eqref{lax_X}. For each fixed $x\in\mathbb{R}$, the matrix-valued function $M_\pm(x, \cdot)$ is continuous on $\overline{\mathbb{C}_\pm}\setminus \mathcal{Q}$ and analytic on $\mathbb{C}_\pm \setminus \mathcal{Q}$.
		
		\item For each $\epsilon>0$, there exists $C:=C(\epsilon)$ such that
		\begin{equation*}
			\left|M_\pm(x,k) \right| \leq C,\qquad  k\in \overline{\mathbb{C}_\pm},\,\, {\rm{dist}}(k,\mathcal{Q})\geq \epsilon. 
		\end{equation*}
		\item For each $j\in \mathbb{N}_+$, the partial derivative $\frac{\partial^j M_\pm}{\partial k^j}(x, \cdot)$ has a continuous extension to $ \overline{\mathbb{C}_\pm} \setminus \mathcal{Q}$.
		\item For $k \in \overline{\mathbb{C}_\pm} \setminus \mathcal{Q}$ and $x\in\mathbb{R}$, $\det M_\pm(x, k) = 1$.
		\item The sectionally analytic function $ M(x, k) = M_\pm(x, k) $ for $k \in \mathbb{C}_\pm$ satisfies the symmetries
        \begin{equation}\label{M-sym}
            \begin{aligned}
			&M^{-1}(x, k) = \mathcal{A}^{-1}(k) M^\dagger(x, \bar k) \mathcal{A}(k) ,\\
            &M(x, k)=\mathcal{B} M(x,-k) \mathcal{B},
		\end{aligned}  \qquad k \in \mathbb{C} \setminus \mathcal{Q}.
        \end{equation}
		\end{enumerate}
	\end{prop}
	
	\begin{lem}\label{M_kinfty}
		$($Asymptotics of $M$ as $k\to\infty$$)$.
		Suppose $\phi_0(x),\, n_0(x)\in\mathcal{S}(\mathbb{R})$ and $\phi_0(x),\, n_0(x)\not\equiv0$. For any integer $p>0$, the function $X^{[p]}$ defined in \eqref{Xp_infty} such that
		\begin{equation*}
			\left| M(x,k)-X^{[p]}(x,k)\right| \leq \frac{C}{\left| k\right| ^{p+1}},\qquad x\in\mathbb{R},\,\,k\in \mathbb{C}\setminus\mathbb{R},\,\,\left| k\right| \geq 2.
		\end{equation*}
	\end{lem}

	\begin{lem}\label{lem_M_ST}
		$($Relation between $M\pm$ and $X$ and $Y$$)$.
		Suppose $\phi_0(x),\, n_0(x)\in\mathcal{S}(\mathbb{R})$ have compact support. Then
		\begin{equation}\label{def_ST}
			\begin{aligned}
				M_\pm(x,k)=&Y(x,k)\re^{x\widehat{\mathcal{L}}(k)}S_\pm(k)=X(x,k)\re^{x\widehat{\mathcal{L}}(k)}T_\pm(k),\\
			\end{aligned}
			\qquad  x\in\mathbb{R},\,\,k\in \overline{\mathbb{C}_\pm}\setminus\mathcal{Q},
		\end{equation}
	where $S_\pm(k)$ and $T_\pm(k)$ are defined by the  entries and the $(ij)$-th minor $m_{ij}(s)$ of the matrix $s(k)$, which are expressed by
\begin{equation*}\begin{aligned}
		S_+(k) &=\begin{pmatrix}
			s_{11}	&	0	&0\\
			s_{21}	&\frac{m_{33}(s)}{s_{11}}	&0\\
			s_{31}	&\frac{m_{23}(s)}{s_{11}}&\frac{1}{m_{33}(s)}
		\end{pmatrix},&&
		S_-(k) =\begin{pmatrix}
			\frac{1}{m_{11}(s)}	&\frac{m_{21}(s)}{s_{33}}		&s_{13}\\
			0		&\frac{m_{11}(s)}{s_{33}}	&s_{23}\\
			0	&0&s_{33}
		\end{pmatrix},\\
		T_+(k) &=\begin{pmatrix}
			1	&	-\frac{s_{12}}{s_{11}}	&\frac{m_{31}(s)}{m_{33}(s)}\\
			0	&1	&-\frac{m_{32}(s)}{m_{33}(s)}\\
			0	&0&1
		\end{pmatrix},&&
		T_-(k) =\begin{pmatrix}
			1	&0		&0\\
			-\frac{m_{12}(s)}{m_{11}(s)}		&1	&0\\
			\frac{m_{13}(s)}{m_{11}(s)}	&-\frac{s_{32}}{s_{33}}&1
		\end{pmatrix}.
\end{aligned}\end{equation*}

	\end{lem}
	
	\begin{proof}
		Let $C>0$ such that $\left[-C,C \right]  \subset \mathbb{R}$  and $\phi_0(x)$, $n_0(x)$ have support in $\left[-C,C \right]$. The following definitions are given:
		\begin{equation*}
			\left\lbrace 
			\begin{aligned}
				&S_\pm(k)=\lim_{x\to-\infty}\re^{x\widehat{\mathcal{L}}(k)}M_\pm(x,k),\\
				&T_\pm(k)=\lim_{x\to\infty}\re^{x\widehat{\mathcal{L}}(k)}M_\pm(x,k),
			\end{aligned}\right.  \qquad k\in \bar{\mathbb{C}}_\pm\setminus\mathcal{Q}.
		\end{equation*}
		The above limit exists because for $x \in  \mathbb{R}\setminus \left[-C,C \right]$, $L_1 = 0$, which means that $\re^{x\widehat{\mathcal{L}}(k)}M_\pm(x,k)$ is independent of $x$ in this case. Recalling the definition of $s(k)$, we can obtain that equation \eqref{def_ST} holds and
		\begin{equation}\label{s_ST}
			s(k)=S_\pm(k)T^{-1}_\pm(k), \qquad k\in\bar{\mathbb{C}}_\pm\setminus\mathcal{Q}.
		\end{equation}
		Given $s(k)$, the above expression constitutes a matrix factorization problem, and $S_\pm(k)$ and $T_\pm(k)$ can be uniquely determined. According to the Fredholm integral equation defined at the beginning of this section, we can obtain
		\begin{equation*}
			\left\lbrace \begin{aligned}
				&  \left( S_+(k)\right) _{ij}=0, \quad i<j,     \\
				&  \left( S_-(k)\right) _{ij}=0, \quad  i>j,    
			\end{aligned}\right. \qquad
			\left\lbrace \begin{aligned}
				&  \left( T_+(k)\right) _{ij}=\delta_{ij}, \quad i\geq j,     \\
				&  \left( T_-(k)\right) _{ij}=\delta_{ij}, \quad  i\leq j.
			\end{aligned}\right.
		\end{equation*}
		Thus, the equation \eqref{s_ST} also defines nine equations concerning nine unknowns. The specific forms of solutions can be derived from the above and are demonstrated in the lemma. \qedhere
		
	\end{proof}

    Let $\eta \in C_c^\infty(\mathbb{R})$ be a cutoff function such that $\eta(x) = 1$ for $|x| \leq 1$ and $\eta(x) = 0$ for $|x| \geq 2$. For each integer $j \geq 1,$ define $\eta_j(x) := \eta(x/j) $. If  $f \in \mathcal{S}(\mathbb{R}) $, then  $\{\eta_j f\}_{j=1}^\infty$  forms a sequence of smooth, compactly supported functions that converges to $f$ in the topology of $\mathcal{S}(\mathbb{R})$  as  $j \to \infty$.
	
	\begin{lem}\label{lem_Cauchy sequence}
		 Let $\phi_0(x),\, n_0(x)\in\mathcal{S}(\mathbb{R})$. Let ${s(k), M_\pm(x,k)}$ and ${s^{(i)}(k), M_\pm^{(i)}(x,k)}$ be the spectral functions and eigenfunctions associated with $(\phi_0, n_0)$ and
		\begin{equation}\label{ni_phii}
			\left(\phi_0^{(i)}(x), n_0^{(i)}(x)\right) := (\eta_i \phi_0, \eta_i n_0) \in \mathcal{S}(\mathbb{R}) \times \mathcal{S}(\mathbb{R}), 
		\end{equation}
		respectively. Then
\begin{equation*} \begin{aligned} 
&\lim_{i \to \infty} s^{(i)}(k) = s(k),\ k \,\, {\rm in}\, \eqref{sk_k_define},
\quad \lim_{i \to \infty} \left(s^A\right)^{(i)}(k) = s^A(k),\ k \,\, {\rm in}\,\, \eqref{sk_define},
\quad \lim_{i \to \infty} M_\pm^{(i)}(x,k) = M_\pm(x,k), x \in \mathbb{R}, \,\,\ k \in \overline{\mathbb{C}_\pm}\setminus \mathcal{Q}. \\ 
&\lim_{i \to \infty} X^{(i)}(x,k) = X(x,k), x\in\mathbb{R},\,\,k\in (\overline{\mathbb{C}_+},\mathbb{R},\overline{\mathbb{C}_-})\setminus\{0\},  \quad\lim_{i \to \infty} Y^{(i)}(x,k) = Y(x,k), x\in\mathbb{R},\,\,k\in (\overline{\mathbb{C}_-},\mathbb{R},\overline{\mathbb{C}_+})\setminus\{0\}.
\end{aligned} \end{equation*}

	\end{lem}
	
	\begin{proof}
		The proof hinges on verifying that the solutions to the Volterra equations \eqref{XY_XAYA}, and the Fredholm equation \eqref{Mn_fredholm} depend continuously on the potential functions $\phi_0(x),\, n_0(x)\in\mathcal{S}(\mathbb{R})$. For details, refer to Lemma 4.5 in Ref. \cite{Lenells-Indiana}.
	\end{proof}
	
	\begin{lem}\label{lem_M_jump}
		$($Jump condition for $M$$)$.
		Suppose $\phi_0(x),\, n_0(x)\in\mathcal{S}(\mathbb{R})$. For each $x\in\mathbb{R}$, $M(x,k)$ satisfies the jump condition
		\begin{equation*}
			M_+(x,k)=M_-(x,k)v(x,0,k),\qquad k\in \mathbb{R}\setminus\mathcal{Q},
		\end{equation*}
	where $v(x,0,k)$ is the jump matrix defined in equation \eqref{jump_0} 
	 and $\mathcal{Q}$ is defined by equation \eqref{Q_setminus}.
	\end{lem}
	
	\begin{proof}
		Suppose $C>0$ such that $\left[-C,C \right]  \subset \mathbb{R}$  and $\phi_0(x)$, $n_0(x)$ have support in $\left[-C,C \right]$, then we have
		
		\begin{equation*}
			v(x,0,k)=\re^{x\widehat{\mathcal{L}}(k)}\left(S_-^{-1}(k)S_+(k) \right) :=\re^{x\widehat{\mathcal{L}}(k)}J(k),
		\end{equation*}
		where
		\begin{equation*}
			\begin{aligned}
				J(k)&=\begin{pmatrix}
					1 & 0 & 0\\
					\frac{m_{12}(s)}{m_{11}(s)} & 1 & 0\\
					\frac{s_{31}}{s_{33}} & \frac{s_{32}}{s_{33}} & 1\\
				\end{pmatrix}\begin{pmatrix}
				1 & -\frac{s_{12}}{s_{11}} & \frac{m_{31}(s)}{m_{33}(s)}\\
				0 & 1 & -\frac{m_{32}(s)}{m_{33}(s)}\\
				0 & 0 & 1\\
				\end{pmatrix}=\begin{pmatrix}
				1 & -\frac{s_{12}}{s_{11}} & \frac{m_{31}(s)}{m_{33}(s)}\\
				\frac{m_{12}(s)}{m_{11}(s)} & 1-\frac{s_{12}m_{12}(s)}{s_{11}m_{11}(s)} & -\frac{s_{23}}{m_{11}(s)m_{33}(s)}\\
				\frac{s_{31}}{s_{33}} & \frac{m_{23}(s)}{s_{11}s_{33}} & \frac{1}{s_{33}m_{33}(s)}\\
				\end{pmatrix}.
			\end{aligned}
		\end{equation*}
		According to the symmetry conditions satisfied by $s(k)$ and $s^A(k)$, denoted as equations \eqref{sym_s} and \eqref{sym_sA}, we can derive

	\begin{equation*}
			\begin{aligned}
				J(k)&=\begin{pmatrix}
					1 & -r_1(k) & r_2(k)\\
					-8kr_1^*(k) & 1+8k\left|r_1(k) \right|^2  & -8k\alpha(k) \\
					-r_2^*(k) & \alpha^*(k) & 1-8k\left|r_1(-k) \right|^2- \left|r_2(k) \right|^2\\
				\end{pmatrix},
			\end{aligned}
		\end{equation*}
        where $\alpha(k)=r_1^*(-k)+ r_1^*(k)r_2(k)$, $r_1(k)$ and $r_2(k)$ are defined by equation \eqref{r1_r2}.
        
        If $\phi_0(x),\, n_0(x)\in\mathcal{S}(\mathbb{R})$ are not compactly supported, we take a sequence of compactly supported smooth functions $\left(\phi_0^{(i)}(x), n_0^{(i)}(x)\right)$ that are defined as in equation \eqref{ni_phii} and converge to $(\phi_0,\, n_0)$. According to Lemma \ref{lem_Cauchy sequence}, taking the limit $i \to \infty$  in the corresponding relations for $\left(\phi_0^{(i)}(x), n_0^{(i)}(x)\right)$. \qedhere
		
	\end{proof}
	
	\begin{lem}\label{lem_M_pm}
		Suppose $\phi_0(x),\, n_0(x)\in\mathcal{S}(\mathbb{R})$. The functions $M_\pm$ can be expressed in terms of the entries of $X$, $Y$, $X^A$ and $Y^A$, $s$ and $s^A$ as follows:
		\begin{align*}
			&M_+=\begin{pmatrix}
				X_{11} & \frac{X^A_{23}Y^A_{31}-X^A_{33}Y^A_{21}}{s_{11}} & \frac{Y_{13}}{s^A_{33}}\\
				X_{21} & \frac{X^A_{33}Y^A_{11}-X^A_{13}Y^A_{31}}{s_{11}} & \frac{Y_{23}}{s^A_{33}}\\
				X_{31} & \frac{X^A_{13}Y^A_{21}-X^A_{23}Y^A_{11}}{s_{11}} & \frac{Y_{33}}{s^A_{33}}\\
			\end{pmatrix},\qquad  	M_-=\begin{pmatrix}
			\frac{Y_{11}}{s^A_{11}} & \frac{X^A_{31}Y^A_{23}-X^A_{21}Y^A_{33}}{s_{33}} & X_{13}\\
			\frac{Y_{21}}{s^A_{11}} & \frac{X^A_{11}Y^A_{33}-X^A_{31}Y^A_{13}}{s_{33}} & X_{23}\\
			\frac{Y_{31}}{s^A_{11}} & \frac{X^A_{21}Y^A_{13}-X^A_{11}Y^A_{23}}{s_{33}} & X_{33}\\
			\end{pmatrix},\\
           & M_+^A=\begin{pmatrix}
             \frac{Y_{11}^A}{s_{11}} & \frac{X_{31}Y_{23} - X_{21}Y_{33}}{s_{33}^A} & X_{13}^A \\
             \frac{Y_{21}^A}{s_{11}} & \frac{X_{11}Y_{33} - X_{31}Y_{13}}{s_{33}^A} & X_{23}^A \\
            \frac{Y_{31}^A}{s_{11}} & \frac{X_{21}Y_{13} - X_{11}Y_{23}}{s_{33}^A} & X_{33}^A
         \end{pmatrix},\qquad M_-^A=\begin{pmatrix}
				X^A_{11} & \frac{X_{23}Y_{31}-X_{33}Y_{21}}{s^A_{11}} & \frac{Y^A_{13}}{s_{33}}\\
				X^A_{21} & \frac{X_{33}Y_{11}-X_{13}Y_{31}}{s^A_{11}} & \frac{Y^A_{23}}{s{33}}\\
				X^A_{31} & \frac{X_{13}Y_{21}-X_{23}Y_{11}}{s^A_{11}} & \frac{Y^A_{33}}{s_{33}}\\
			\end{pmatrix},
            \end{align*}
		for  all $x\in\mathbb{R}$ and $k\in\overline{ \mathbb{C}_\pm}\setminus\mathcal{Q}$.
		
	\end{lem}
	
	\begin{proof}
Let $\left(\phi_0^{(i)}(x), n_0^{(i)}(x)\right)$ be the sequence defined in equation \eqref{ni_phii} that converges to $\left(\phi_0^{(i)}(x), n_0^{(i)}(x)\right)$. By Lemma \ref{lem_M_ST}, we have:
		\begin{equation*}
			M^{(i)}_\pm(x,k)=Y^{(i)}(x,k)\re^{x\widehat{\mathcal{L}}(k)}S^{(i)}_\pm(k)=X^{(i)}(x,k)\re^{x\widehat{\mathcal{L}}(k)}T^{(i)}_\pm(k),
			\qquad  x\in\mathbb{R},\,\,k\in \overline{\mathbb{C}_\pm}\setminus\mathcal{Q}.
		\end{equation*}
        Thus, the first and third columns of $M_+$ and $M_-$ can be expressed for $m\in\mathbb{R}$, and $i\geq 1$ as:
        \begin{align*}
            &[M_+^{(i)}(x,k)]_1=[X^{(i)}(x,k)]_1,\quad [M_+^{(i)}(x,k)]_3=\frac{[Y^{(i)}(x,k)]_3}{(s^A_{33})^{(i)}},\quad k\in \overline{\mathbb{C}_+}\setminus\mathcal{Q},\\
            &[M_-^{(i)}(x,k)]_1=\frac{[Y^{(i)}(x,k)]_1}{(s^A_{11})^{(i)}},\quad [M_-^{(i)}(x,k)]_3=[X^{(i)}(x,k)]_3,\quad k\in \overline{\mathbb{C}_-}\setminus\mathcal{Q}.
        \end{align*}
     Then, applying Lemma \ref{lem_Cauchy sequence} and taking the limit as $i \to \infty$, we have
     \begin{align*}
            &[M_+(x,k)]_1=[X(x,k)]_1,\quad [M_+(x,k)]_3=\frac{[Y(x,k)]_3}{s^A_{33}},\quad k\in \overline{\mathbb{C}_+}\setminus\mathcal{Q},\\
            &[M_-(x,k)]_1=\frac{[Y(x,k)]_1}{s^A_{11}},\quad [M_-(x,k)]_3=[X(x,k)]_3,\quad k\in \overline{\mathbb{C}_-}\setminus\mathcal{Q}.
        \end{align*}
    Similarly, we can also obtain the representations for the first and third columns of $M_{\pm}^A(x,k)$.

    On the other hand, the second column of $[M_+(x,k)]_2$ and $[M_-(x,k)]_2$ can be expressed in terms of $[M^A_\pm(x,k)]_1$ and $[M^A_\pm(x,k)]_3$ according to the identity relation $(M^A)^A=M$ as: 
    \begin{align*}
        &M_{+,12}=-m_{12}(M_+^A)=\frac{X^A_{23}Y^A_{31}-X^A_{33}Y^A_{21}}{s_{11}},\quad  M^A_{+,12}=-m_{12}(M_+)=\frac{X_{31}Y_{23} - X_{21}Y_{33}}{s_{33}^A},\\
        &M_{-,12}=-m_{12}(M_-^A)=\frac{X^A_{31}Y^A_{23}-X^A_{21}Y^A_{33}}{s_{33}},\quad  M^A_{-,12}=-m_{12}(M_-)=\frac{X_{23}Y_{31}-X_{33}Y_{21}}{s^A_{11}}.
    \end{align*}
    The remaining entries can be obtained by similar calculations.
	\end{proof}
	
	\begin{lem}
		Suppose $\phi_0(x),\, n_0(x)\in\mathcal{S}(\mathbb{R})$ are such that Assumption \ref{assu_soliton} holds. Then Proposition \ref{prop_Mn_basic} and Lemmas \ref{lem_M_ST}, \ref{lem_Cauchy sequence}, \ref{lem_M_jump} remain valid after replacing $\mathcal{Q}$ with $\hat{\textbf{Z}}\cup\{0\}$.
	\end{lem}
	
	\begin{proof}
		Under the condition that the initial data $\phi_0(x),\, n_0(x)\in\mathcal{S}(\mathbb{R})$ and satisfy Assumption \ref{assu_soliton}, Lemma \ref{lem_M_pm} shows that the function $M$ has no singularities other than $k=0$ and $k\in\hat{\textbf{Z}}$.  
		Using Symmetries \eqref{M-sym}, it follows that $M_{\pm}$ has no singularities in $\overline{\mathbb{C}_\pm}\setminus\left(\hat{\textbf{Z}}\cup\{0\}\right)$.  
		Therefore, $M_\pm(x,k)$ can be continuously extended to any $k\in \left(\overline{\mathbb{C}_\pm}\cap \mathcal{Z}\right)\setminus\left(\hat{\textbf{Z}}\cup\{0\}\right)$.\qedhere

	\end{proof}

	\begin{lem}$($Asymptotics of $M$ as $k\to 0$$)$.
		Under the condition that the initial data $\phi_0(x),\, n_0(x)\in\mathcal{S}(\mathbb{R})$ and satisfy Assumptions \ref{assu_soliton} and \ref{assu_k=0}, there exist $3 \times 3$ matrix-valued functions $\left\lbrace M_\pm^{l}(x),N_\pm^{l}(x)\right\rbrace_{l=1}^p$ for $p\geq1$ be an integer with the following properties:
		\begin{enumerate}
			\item The function $M(x,k)$ satisfies the asymptotic estimate
			\begin{equation*}
				\Bigl| M_\pm(x,k) - \sum_{l=-1}^{p} M_\pm^{(l)}(x) k^l \Bigr| \leq C |k|^{p+1}, \qquad x \in \mathbb{R},\; |k| \leq \tfrac{1}{2},\; k \in \overline{\mathbb{C}_\pm}\setminus\left(\hat{\textbf{Z}}\cup\{0\}\right),
			\end{equation*}
			where $C > 0$ is a constant and $\left\lbrace M_\pm^{l}\right\rbrace _{l=-1}^p$ are smooth functions of $x\in\mathbb{R}$. Furthermore,  some coefficients take the following form:
			\begin{align*}
				&M^{(-1)}_+(x) = \alpha_{11}(x) \begin{pmatrix}
					-1 & 0 & 0\\
					0 & 0 & 0\\
					1 & 0 & 0
				\end{pmatrix}+\delta_{-1}(x)\begin{pmatrix}
				    0 & 1 & 0\\
                    0 & 0 & 0\\
                    0 & -1 & 0\\
				\end{pmatrix}, \quad
				&&M^{(-1)}_-(x) = \alpha_{11}(x) \begin{pmatrix}
					0 & 0 & -1\\
					0 & 0 & 0\\
					0 & 0 & 1
				\end{pmatrix}+\delta_{-1}(x)\begin{pmatrix}
				    0 & -1 & 0\\
                    0 & 0 & 0\\
                    0 & 1 & 0\\
				\end{pmatrix},\\
            &M^{(0)}_+(x) = \begin{pmatrix}
                -\alpha_{13}(x) & * & -\delta_0(x)\\
                \alpha_{15}(x) & * & 0\\ 
                \alpha_{13}(x) & * & \delta_0(x)\\
            \end{pmatrix},\quad && M^{(0)}_-(x) = \begin{pmatrix}
                -\delta_0(x) & * & \alpha_{13}(x)\\
                0 & * & \alpha_{15}(x)\\ 
                \delta_0(x) & * & -\alpha_{13}(x)\\
            \end{pmatrix},
			\end{align*}
            where $*$ denotes an unspecified entry, $\delta_{-1}(x)=\frac{\beta_{11}(x)\beta_{22}(x)-\beta_{12}(x)\beta_{21}(x)}{s_{11}^{(-1)}}$, and $\delta_0(x)=-\frac{\alpha_{21}(x)}{(s^A)^{(-1)}_{11}}$, $s_{11}^{(-1)}$ and $(s^A)^{(-1)}_{11}$ are the elements at the first row and first column in equation \eqref{s-1} and \eqref{sA-1}, respectively.
			\item The function $M^{-1}(x,k)$ satisfies the asymptotic estimate
			\begin{equation*}
				\Bigl| M^{-1}_\pm(x,k) - \sum_{l=-1}^{p} N_\pm^{(l)}(x) k^l \Bigr| \leq C |k|^{p+1}, \qquad x \in \mathbb{R},\; |k| \leq \tfrac{1}{2},\; k \in \overline{\mathbb{C}_\pm}\setminus\left(\hat{\textbf{Z}}\cup\{0\}\right),
			\end{equation*}
			where $C > 0$ is a constant and $\left\lbrace N_\pm^{l}\right\rbrace _{l=-1}^p$ are smooth functions of $x\in\mathbb{R}$. Furthermore,  some coefficients take the following form
            \begin{align*}
				&N^{(-1)}_+(x) = \hat{\alpha}_{11}(x) \begin{pmatrix}
					0 & 0 & 0\\
					0 & 0 & 0\\
					1 & 0 & 1
				\end{pmatrix}+\hat{\alpha}_{15}(x)(x)\begin{pmatrix}
				    0 & 0 & 0\\
                    0 & 0 & 0\\
                    0 & 1 & 0\\
				\end{pmatrix}, \quad
				&&N^{(-1)}_-(x) = \hat{\alpha}_{11}(x) \begin{pmatrix}
					-1 & 0 & -1\\
					0 & 0 & 0\\
					0 & 0 & 0
				\end{pmatrix}+\hat{\alpha}_{15}(x)(x)\begin{pmatrix}
				    0 & -1 & 0\\
                    0 & 0 & 0\\
                    0 & 0 & 0\\
				\end{pmatrix},\\
            &N^{(0)}_+(x) = \begin{pmatrix}
                -\alpha_{13}(x) & * & -\delta_0(x)\\
                \alpha_{15}(x) & * & 0\\ 
                \alpha_{13}(x) & * & \delta_0(x)\\
            \end{pmatrix},\quad && N^{(0)}_-(x) = \begin{pmatrix}
                -\delta_0(x) & * & \alpha_{13}(x)\\
                0 & * & \alpha_{15}(x)\\ 
                \delta_0(x) & * & -\alpha_{13}(x)\\
            \end{pmatrix}.
			\end{align*}
			\item For each $x\in\mathbb{R}$, the functions $N_{1,\pm}(x,t)=\begin{pmatrix}
				1 & 0 & 1
			\end{pmatrix}M_\pm(x,k)$ and $N_{2,\pm}(x,t)=\begin{pmatrix}
			0 & 1 & 0
			\end{pmatrix}M_\pm(x,k)$ are bounded as $k\to0$, $k\in\overline{\mathbb{C}_\pm}\setminus\left(\hat{\textbf{Z}}\cup\{0\}\right)$.
		\end{enumerate}
		
	\end{lem}
	
	\begin{proof}
		Using the properties of functions $X$ and $Y$, $X^A$ and $Y^A$, $s$, and $s^A$ as $k\to0$ from Propositions \ref{prop_Xkto0}, \ref{prop_s}, \ref{prop_XAkto0} and \ref{prop_sA}, the proof can be completed.
	\end{proof}

   \subsection{Proof of Theorem \ref{theo-direct}}\label{subsec_proof21}
    Given that Assumption \ref{assu_soliton} holds, meaning that set $\textbf{Z}\cap \mathbb{R}$ is empty, it follows that $s_{11}(k)$ and $s^A_{33}(k)$ are nonzero for $k\in\mathbb{R}$. Revisiting the definitions of $r_1(k)$ and $r_2(k)$ in the equation \eqref{r1_r2}, and combining with Propositions \ref{prop_s} and \ref{prop_sA}, the functions $r_1(k)$ and $r_2(k)$ are smooth on $\mathbb{R}$. Hence, $r_1(k),\,r_2(k) \in C^\infty(\mathbb{R})$. The second and third properties can also be derived from the two propositions concerning $s(k)$ and $s^A(k)$. For $k\in\mathbb{R}$, using the definitions of reflection coefficients $r_1(k)$ and $r_2(k)$ in equation \eqref{r1_r2}, we can obtain
    \begin{align*}
        8kr_1(k)r_1^*(-k)-r_2(k)+r_2^*(-k)&=8k\frac{s_{12}(k)}{s_{11}(k)}\frac{s^*_{12}(-k)}{s^*_{11}(-k)}-\frac{s^A_{31}(k)}{s^A_{33}(k)}+\frac{(s^A_{31})^*(-k)}{(s^A_{33})^*(-k)}=\frac{s_{12}(k)}{s_{11}(k)}\frac{m_{32}(s(k))}{m_{33}(s(k))}-\frac{m_{31}(s(k))}{m_{33}(s(k))}-\frac{s_{13}(k)}{s_{11}(k)}\\
        &=\frac{s_{12}(k)m_{32}(s(k))-s_{11}(k)m_{31}(s(k))}{s_{11}(k)m_{33}(s(k))}-\frac{s_{13}(k)}{s_{11}(k)}=0.
    \end{align*}
    In the above calculation, we used the symmetric relations \eqref{sym_s} and \eqref{sym_sA}, as well as $s^T(k)s^A(k)=I$. The following will focus on proving the last two properties.

     The zeros of the denominators $s_{11}(k)$ and $s^A_{11}(k)$ in equation \eqref{r1_r2} correspond to solitons. It has been proved in Propositions \ref{prop_s} and \ref{prop_sA} that $s_{11}(k)$ and $s^A_{11}(k)$ have smooth extensions to $\overline{\mathbb{C}_+}\setminus\{0\}$ and $\overline{\mathbb{C}_-}\setminus\{0\}$. For simplicity, we will restrict ourselves to the generic case where the zeros of $s_{11}(k)$ and $s_{11}^A(k)$ do not lie on the contour $\mathbb{R}$.

    Let us prove \textbf{(v)}. First, according to Assumption \ref{assu_soliton}, $\textbf{Z}$ is a finite subset of $D_{\mathrm{reg}} \cup \ri\mathbb{R}_+$. We need to determine that the complex constants $\{C_{k_j}\}_{k_j\in\textbf{Z}}$, under general Schwartz initial conditions, are well-defined by equations \eqref{ckj_non_Cc} and \eqref{ckj_non_Cir}. Using Lemmas \ref{lem_M_ST} and \ref{lem_Cauchy sequence}, the following holds for $k\in\mathbb{C}_+$:
    \begin{equation}\label{M+i}
    	M_+^{(i)}(x,k)=Y^{(i)}(x,k)\re^{x\widehat{\mathcal{L}(k)}}S_+^{(i)}(k)=X^{(i)}(x,k)\re^{x\widehat{\mathcal{L}(k)}}T_+^{(i)}(k),
    \end{equation}
    where
    \begin{equation*}
    		S^{(i)}_+(k) =\begin{pmatrix}
    			s^{(i)}_{11}	&	0	&0\\
    			s^{(i)}_{21}	&\frac{m^{(i)}_{33}(s)}{s^{(i)}_{11}}	&0\\
    			s^{(i)}_{31}	&\frac{m^{(i)}_{23}(s)}{s^{(i)}_{11}}&\frac{1}{m^{(i)}_{33}(s)}
    		\end{pmatrix},\qquad
    		T^{(i)}_+(k) =\begin{pmatrix}
    			1	&	-\frac{s^{(i)}_{12}}{s^{(i)}_{11}}	&\frac{m^{(i)}_{31}(s)}{m^{(i)}_{33}(s)}\\
    			0	&1	&-\frac{m^{(i)}_{32}(s)}{m^{(i)}_{33}(s)}\\
    			0	& 0 & 1
    		\end{pmatrix}.
    	\end{equation*}
    	
    	\begin{itemize}
    		\item For the case  $k_j\in\textbf{Z}\setminus\ri\mathbb{R}_+$. Considering the second column in equation \eqref{M+i}, we have
    		\begin{equation*}
    			m_{33}^{(i)}(s)[Y^{(i)}]_2+\re^{x(l_3-l_2)}m_{23}^{(i)}(s)[Y^{(i)}]_3=-\re^{x(l_1-l_2)}s_{12}^{(i)}[X^{(i)}]_1+s_{11}^{(i)}[X^{(i)}]_2.
    		\end{equation*}
    		Applying operator $\det\left( \cdot, [X^{(i)}]_1,[Y^{(i)}]_3\right) $ to both sides of the above equation yields
    		\begin{equation*}
    			m_{33}^{(i)}(s)\det\left( [Y^{(i)}]_2, [X^{(i)}]_1,[Y^{(i)}]_3\right)=s_{11}^{(i)}\det\left([X^{(i)}]_2, [X^{(i)}]_1,[Y^{(i)}]_3\right).
    		\end{equation*}
    		Using the definition of the inner product, the above equation can be written as 
    		\begin{equation}\label{m33i_inner}
    			m_{33}^{(i)} (s)[X^{(i)}]_1\cdot \left[\left( Y^A\right) ^{(i)}\right]_1=s_{11}^{(i)} [Y^{(i)}]_3\cdot \left[\left( X^A\right) ^{(i)}\right]_3,
    		\end{equation}
    		 where the inner product notation $\cdot$ is defined for the column vectors 
    		$u=(u_1,u_2,u_3)$ and $v=(v_1,v_2,v_3)$ as $u\cdot v=u_1v_1+u_2v_2+u_3v_3$. For $ k \in\mathbb{C}_+ $, all quantities in the equation have well-defined limits as $ i \to \infty $, so letting  
    		$ i \to \infty $ yields
    		\begin{equation}\label{s11_sa33}
    			s_{33}^A (k)[X(x,k)]_1\cdot [Y^A(x,k)]_1=s_{11}(k)[Y(x,k)]_3[X^A(x,k)]_{3},\quad x\in\mathbb{R},\,\, k\in\mathbb{C_+}.
    		\end{equation}
    		\quad According to symmetry relation \eqref{sym_k-bark}, we have relation $s_{11}(k)=\overline{s_{33}^A( -\bar k) }$. Thus, $s^A_{33}(k_j)=\overline{s_{11}(- \overline{k_j})}\neq 0$ for $k_j\in \textbf{Z}\setminus\ri\mathbb{R}_+$ and $- \overline{k_j}\in \textbf{Z}_m^*\setminus\ri\mathbb{R}_+$ under Assumption \ref{assu_soliton}. Taking $k=k_j$ in equation \eqref{s11_sa33} yields 
    	\begin{equation*}
    		[X(x,k_j)]_1\cdot [Y^A(x,k_j)]_1=0.
    	\end{equation*}
    	On the other hand, since $\det\left( [X]_{1}, [X]_{1},[X]_{2}\right)(x,k_j)=0$, it follows that 
    	\begin{equation*}
    		[X(x,k_j)]_1\cdot [X^A(x,k_j)]_{3}=0.
    	\end{equation*}
    	Hence, $[X(x,k_j)]_1$ belongs to the kernel of the linear map $\mathbb{C}^3 \to \mathbb{C}^2$ defined by:
    	\begin{equation}\label{wxkj}
    		w(x,k_j)=\begin{pmatrix}
    			Y_{11}^A(x,k_j) & Y_{21}^A(x,k_j) & Y_{31}^A(x,k_j)\\
    			X_{13}^A(x,k_j) & X_{23}^A(x,k_j) & X_{33}^A(x,k_j)\\
    	\end{pmatrix}.
    	\end{equation}
    	\quad The vector $m(x, k_j)$ defined in equation \eqref{mxk} is obtained as the cross product of the two row vectors of $w(x,k_j)$, so it also lies in the kernel of $w(x,k_j)$. As $x \to -\infty$, using the asymptotic properties of $[Y^A]_1$ and $[X^A]_3$, it follows that $[Y^A(x,k_j)]_1$ and $[X^A(x,k_j)]_{3}$ are linearly independent for sufficiently large negative $x$. Consequently, for any $x$, $[Y^A(x,k_j)]_1$ and $[X^A(x,k_j)]_{3}$ are linearly independent. Thus, $\text{rank} \, w(x,k_j) = 2 $, and by the rank-nullity theorem, we have $\dim \ker w(x,k_j) = 1$. It follows that for any $x$, $[X(x,k_j)]_1$ and $m(x, k_j)$ are linearly dependent. Therefore, there exists a function $C_{k_j}(x)$ such that
    	\begin{equation*}
    		\frac{m(x, k_j)}{s'_{11}(k_j)} = C_{k_j}(x)e^{x(l_1(k_j) - l_2(k_j))}[X(x, k_j)]_{1}, \quad \forall x\in\mathbb{R}.
            \end{equation*}
    	Since $e^{xl_1(k_j)}[X(x, k_j)]_{1}$ and $e^{xl_2(k_j)}m(x, k_j)$ satisfy the same linear ordinary differential equation in $x$, $C_{k_j}$ is in fact independent of $x$. We have completed the proof of equation \eqref{ckj_non_Cc} for $k\in\textbf{Z}\setminus\ri\mathbb{R}_+$.

    	\item For the case  $k_j\in\textbf{Z}\cap\ri\mathbb{R}_+$. It follows that $s_{11}(k_j)=\overline{m_{33}(s(k_j))}$, that is, $m_{33}(s(k))$ also has a simple zero at $k_j$ and $m_{33}(s(k_j))'\neq 0$. In particular, since $s_{11}(k)$ has finitely many simple zeros in $\textbf{Z}$, there exists $\epsilon> 0$ such that $\overline{D_\epsilon(k_j)}\subset \mathbb{C}_+$, and $s_{11}(k) \neq 0$ for all $k\in\overline{D_\epsilon(k_j)}\setminus\{k_j\}$. Thus the functions $s_{11}^{(i)}$ converge uniformly to $s_{11}$ on $\partial D_\epsilon(k_j)$, and $s_{11}^{(i)}$ are nonzero on $\partial D_\epsilon(k_j)$ for all sufficiently large $i$, satisfying:
        \begin{equation*}
        	\frac{1}{2\pi\ri} \int_{\partial D_\epsilon(k_j)} \frac{\left( s_{11}^{(i)}\right)'(k) }{s_{11}^{(i)}(k)}\rd k\to 1,\quad \frac{1}{2\pi\ri} \int_{\partial D_\epsilon(k_j)} \frac{k\left( s_{11}^{(i)}\right)'(k) }{s_{11}^{(i)}(k)}\rd k\to k_j,\quad \text{as}\,\, i\to\infty.
        \end{equation*}
    	The reason lies in the fact that for $s_{11}(k)$, the following two equalities hold:
    	\begin{equation*}
    		\frac{1}{2\pi\ri} \int_{\partial D_\epsilon(k_j)} \frac{ s_{11}'(k) }{s_{11}(k)}\rd k=1,\quad \frac{1}{2\pi\ri} \int_{\partial D_\epsilon(k_j)} \frac{k s_{11}'(k) }{s_{11}(k)}\rd k= k_j.
    	\end{equation*}
    	For sufficiently large index $i$, the following holds:
    	\begin{equation*}
    		\frac{1}{2\pi\ri} \int_{\partial D_\epsilon(k_j)} \frac{\left( s_{11}^{(i)}\right)'(k) }{s_{11}^{(i)}(k)}\rd k= 1.
    	\end{equation*}
    	The function $s_{11}^{(i)}(k)$ has exactly one zero within $D_\epsilon(k_j)$, denoted as $k_j^{(i)}$. Moreover, from the second integral limit above, it follows that $k_j^{(i)}\to k_j$ as $i\to\infty$. The difference between the regions where $k_j^{(i)}$ lie will lead to two distinct proof processes. By selecting subsequences, it is sufficient to consider only \textbf{Case 1}: $k_j^{(i)}\in \ri\mathbb{R}_+$ for all $i$ and \textbf{Case 2}: $k_j^{(i)}\notin \ri\mathbb{R}_+$ for all $i$.
    	
    	\quad Firstly, consider \textbf{Case 1}. Using Lemma \ref{lem_Ck1_im}, the following holds:
    	\begin{equation*}
    		s_{12}^{(i)}\left( k_j^{(i)}\right)=s_{21}^{(i)}\left( k_j^{(i)}\right)=m_{33}^{(i)}\left(s( k_j^{(i)})\right)=m_{23}^{(i)}\left( s(k_j^{(i)})\right)=m_{32}^{(i)}\left( s(k_j^{(i)})\right)=0,\quad \forall \,i.
    	\end{equation*}
    	The second column in the equation \eqref{M+i} is represented as; $$[M^{(i)}_{+}(x,k)]_2=\frac{m^{(i)}_{33}(s(k))[Y^{(i)}(x,k)]_2+\re^{x(l_3(k)-l_2(k))}m^{(i)}_{23}(s(k))[Y^{(i)}(x,k)]_3}{s^{(i)}_{11}(k)}=-\frac{\re^{x(l_1(k)-l_2(k))}s^{(i)}_{1}(k)[X^{(i)}(x,k)]_1}{s^{(i)}_{11}(k)}.$$
    	 Clearly, $k_j^{(i)}$ is a simple zero for both the numerator and the denominator of this expression for $k\in D_{\epsilon}( k_j) $. In other words, 
    	\begin{equation*}
    		\mathop{\text{Res}}\limits_{k=k_j^{(i)}}[M^{(i)}_{+}(x,k)]_2=0.
    	\end{equation*}
    Now consider the transformed expression for the third column of equation \eqref{M+i}:
    	\begin{equation}\label{M+i3}
    		\begin{aligned}
    			\left[Y^{(i)}\left( x,k_j^{(i)}\right)\right]_3 &=-\re^{x(l_2(k_j^{(i)})-l_3(k_j^{(i)}))}m_{32}^{(i)}\left( s(k_j^{(i)})\right) \left[X^{(i)}\left( x,k_j^{(i)}\right)\right]_2+m_{33}^{(i)}\left( s(k_j^{(i)})\right) \left[X^{(i)}\left( x,k_j^{(i)}\right)\right]_3\\
                &\quad +\re^{x(l_1(k_j^{(i)})-l_3(k_j^{(i)}))}m_{31}^{(i)}\left( s(k_j^{(i)})\right)\left[X^{(i)}\left( x,k_j^{(i)}\right)\right]_1\\
    			&=\re^{x(l_1(k_j^{(i)})-l_3(k_j^{(i)}))}m_{31}^{(i)}\left( s(k_j^{(i)})\right)\left[X^{(i)}\left( x,k_j^{(i)}\right)\right]_1.
    		\end{aligned}
    	\end{equation}
    	Dividing both sides by $m_{33}^{(i)}\left(s( k_j^{(i)})\right) ' $, we obtain:
    	\begin{equation*}
    			\frac{\left[Y^{(i)}\left( x,k_j^{(i)}\right)\right]_3}{m_{33}^{(i)}\left(s( k_j^{(i)})\right) '} =C_{k_j^{(i)}}\re^{x(l_1(k_j^{(i)})-l_3(k_j^{(i)}))}\left[X^{(i)}\left( x,k_j^{(i)}\right)\right]_1,
    	\end{equation*}
    	where $C_{k_j^{(i)}}=\frac{m_{31}^{(i)}\left(s( k_j^{(i)})\right)
    	}{m_{33}^{(i)}\left(s( k_j^{(i)})\right) '}=-\frac{s_{13}^{(i)}\left( k_j^{(i)}\right)
    	}{\left( s_{11}^{(i)}\right)'\left( k_j^{(i)}\right) }$. Since $X_{11}(x,k_j) \to 1$ as $x \to +\infty$, there exists $x_0$ such that $X_{11}(x_0,k_j) \neq 0$. Therefore, the following limit exists and is a finite number:
    	\begin{equation*}
    		\lim_{i\to\infty}C_{k_j^{(i)}}=\lim_{i\to\infty}\frac{Y^{(i)}_{13}\left( x,k_j^{(i)}\right)\re^{-x(l_1(k_j^{(i)})-l_3(k_j^{(i)}))}}{m_{33}^{(i)}\left(s( k_j^{(i)})\right) 'X^{(i)}_{11}\left( x,k_j^{(i)}\right) } =\frac{Y_{13}\left( x,k_j\right)\re^{-x(l_1(k_j)-l_3(k_j))}}{ m_{33}(s( k_j))'X_{11}\left( x,k_j\right) }.
    	\end{equation*}
    	 Taking the limit in equation \eqref{M+i3} thus establishes formula \eqref{ckj_non_Cir} in \textbf{Case 1}.
    	 
    	 \quad We next consider \textbf{Case 2}. It is known that $k_j^{(i)}\notin \ri\mathbb{R}_+$ is the unique zero of $s_{11}^{(i)}(k)$ in $D_\epsilon(k_j)$ and that $s_{11}^{(i)}(k)=\overline{m_{33}^{(i)}(s(-\bar k))}$, we obtain that $-\overline{k_j^{(i)}}$ is also the unique zero of $m_{33}^{(i)}(s(k))$ in $D_\epsilon(k_j)$. Moreover, it holds that $s_{11}^{(i)}\left(-\overline{k_j^{(i)}} \right)\neq 0 $ and $m_{33}^{(i)}\left(s({k_j^{(i)}} )\right)\neq 0 $. Using the result from equation \eqref{m33i_inner} we have proved for $k_j^{(i)} \notin \ri\mathbb{R}_+ $, and taking values in $k_j^{(i)}$ and $-\overline{k_j^{(i)}}$, it follows that
    	 \begin{equation*}
    	 	\left[X^{(i)}\left( x,k_j^{(i)}\right)\right]_1 \cdot \left[\left( Y^A\right) ^{(i)}\left(x, k_j^{(i)}\right)\right]_1 =0,\qquad \left[Y^{(i)}\left( x,-\overline{k_j^{(i)}}\right)\right]_3 \cdot \left[\left( X^A\right) ^{(i)}\left(x,-\overline{k_j^{(i)}}\right)\right]_3 =0.
    	 \end{equation*}
    	 Thus, it follows that
    	\begin{equation*}
    		\left[X\left( x,k_j\right)\right]_1 \cdot  \left[Y^A\left( x,k_j\right)\right]_1 =0,\qquad \left[Y\left( x,k_j\right)\right]_3 \cdot  \left[X^A\left( x,k_j\right)\right]_3 =0.
    	\end{equation*}
    	On the other hand,
    	\begin{equation*}
    		\left[Y\left( x,k_j\right)\right]_3 \cdot  \left[Y^A\left( x,k_j\right)\right]_1 =0,\qquad \left[X\left( x,k_j\right)\right]_1 \cdot  \left[X^A\left( x,k_j\right)\right]_3 =0.
    	\end{equation*}
    	The above equation also indicates that both vectors $\left[X\left( x,k_j\right)\right]_1$ and $\left[Y\left( x,k_j\right)\right]_3$ belong to the kernel of matrix $w(x,k_j)$ defined by \eqref{wxkj}. A similar argument shows that $\left[X\left( x_0,k_j\right)\right]_1$ and $\left[Y\left( x_0,k_j\right)\right]_3$ are linearly dependent at some $x_0\in\mathbb{R}$. Thus it follows that they are linearly dependent for all $x$, and there exists a constant $C_{k_j} \in \mathbb{C}$ independent of $x$ such that \eqref{ckj_non_Cir} holds.

    	\end{itemize}

          In Remark \ref{remark_singular_soliton}, we will show that when zeros of $s_{11}(k)$ appear in $D_{\text{sing}}$, the solitons of YO equation \eqref{YO} are singular. Here, singularity means that the solution is not smooth as a function of $x \in \mathbb{R}$ but has poles on the real axis when viewed as a function of $x$ for any fixed $t$. Additionally, some of the solitons generated by zeros of $s_{11}(k)$ appear in $\ri \mathbb{R}_+$ are also singular. In the subsequent proof, we will derive conditions to characterize the zeros that correspond to non-singular solitons.
    	
    	The proof of \textbf{(vi)} also proceeds in two cases: when the initial data $n_0$ and $\phi_0$ have compact support, and when they do not have compact support. Using the conclusion of Lemma \ref{lem_Ck1_im}, it suffices to show that the quantity $C_{k_j}\in\ri\mathbb{R}$ defined by formulas \eqref{ckj_compact}, \eqref{ckj_non_Cc}, and \eqref{ckj_non_Cir}. This follows from a direct computation, which we omit here.  \hfill $\square$

   \subsection{Construction of $M(x,t,k)$ for $t>0$} 
   	Suppose $\{n(x,t), \phi(x,t)\}$ is a Schwartz class solution of equation \eqref{YOE} with existence time $T \in (0, \infty]$ and initial data $n_0(x), \phi_0(x) \in \mathcal{S}(\mathbb{R})$. Supposing that the Assumptions \ref{assu_soliton} and \ref{assu_k=0} hold, the reflection coefficients $r_1(k)$ and $r_2(k)$ are defined from $n_0, \phi_0$ via \eqref{r1_r2}. Then we consider the time dependence of $M$. By replacing $L_1(x, k)$ in the integral equation \eqref{Mn_fredholm} with the time-dependent matrix $L_1(x, t, k)$, we define two time-dependent eigenfunctions $M_\pm(x, t, k)$. Then, it will be shown that $M(x,t,k)$ satisfies the following RH problem.
   \begin{rhp}\label{rhp_M}\rm{(RH problem for $M(x,t,k)$)}.
	Find a $3\times3$ matrix-valued function $M(x,t,k)$ with the following properties:
	\begin{enumerate}
		\item The function $M(x,t,k)$ is analytic for $k\in\mathbb{C}\setminus\{\mathbb{R}\cup\hat{\textbf{Z}}\}$.
		\item As $k$ approaches $\mathbb{R}$ from the left and right, the boundary values $M_+(x,t,k)$ and $M_-(x,t,k)$ of $M(x,t,k)$ exist and satisfy the following relationship:
		\begin{equation*}
			M_+(x,t,k)=M_-(x,t,k)v(x,t,k), \quad k\in \mathbb{R},
		\end{equation*}
		 where $v(x,t,k)$ is given by equation \eqref{jump_0}.	
	
        \item At each point in $\hat{\textbf{Z}}$, two columns of $M$ are analytic, while the remaining column has at most a simple pole. For every $p\in \hat{\textbf{Z}}$, the following residue conditions hold:
          \begin{equation}\label{residue_M}
				\mathop{\text{Res}}\limits_{k=p }M(x,t,k)=\lim\limits_{k\to p } M(x,t,k)\re^{x\widehat{\mathcal{L}}(k)+t\widehat{\mathcal{Z}}(k)}V(p),
	\end{equation}
    where $V(P)$ is determined by \eqref{Res_2} and \eqref{Res22}.
		\item  As $k\rightarrow\infty$, $k\in\mathbb{C}\setminus\mathbb{R}$, we have
        \begin{equation*}
        M(x,t,k)=I+\frac{M_1(x,t)}{k}+\frac{M_2(x,t)}{k^2}+\mathcal{O}\left(k^{-3}\right),
        \end{equation*}
        where the matrices $M_1$ and $M_2$ are related only to the variables $x$ and $t$, independent of the spectral parameter $k$, and satisfy $M_{1,12}(x,t)=M_{1,13}(x,t)=0$.
		\item  As $k\rightarrow0$, there exist matrices $\{M^{(\pm)}_l(x,t)\}_{l=-1}^{+\infty}$ such that for any $N\geq-1$,
        \begin{equation*}
            M_\pm(x,t,k)=\sum\limits_{l=-1}^{N}M^{(l)}_\pm(x,t)k^l+\mathcal{O}\left(k^{N+1}\right),\quad {\rm{as}}\,\,k\to 0,\,k\in \mathbb{C}_\pm.
        \end{equation*}
        Furthermore, there exist scalar coefficients $\alpha_{11}$, $\delta_{-1}$, $\delta_0$ depending on $x$ and $t$, but not on $k$, such that
        \begin{equation*}
            M^{(-1)}_+(x,t) = \alpha_{11}(x,t) \begin{pmatrix}
					-1 & 0 & 0\\
					0 & 0 & 0\\
					1 & 0 & 0
				\end{pmatrix}+\delta_{-1}(x,t)\begin{pmatrix}
				    0 & 1 & 0\\
                    0 & 0 & 0\\
                    0 & -1 & 0\\
				\end{pmatrix}, \,\,
				M^{(-1)}_-(x,t) = \alpha_{11}(x,t) \begin{pmatrix}
					0 & 0 & -1\\
					0 & 0 & 0\\
					0 & 0 & 1
				\end{pmatrix}+\delta_{-1}(x,t)\begin{pmatrix}
				    0 & -1 & 0\\
                    0 & 0 & 0\\
                    0 & 1 & 0\\
				\end{pmatrix},
        \end{equation*}
        and the third column of $M^{(0)}_+(x,t)$ and the first column of $M^{(0)}_-(x,t)$ are as follows
			\begin{equation*}
			 [M^{(0)}_+(x,t)]_3 = [M^{(0)}_-(x,t)]_1 = \delta_0(x,t) (-1 \, 0 \, 1)^T.
			\end{equation*}

		\item  $M$ satisfies the symmetries for $k \in \mathbb{C} \setminus \mathbb{R}$:
        \begin{align*}
            &M^{-1}(x,t, k) = \mathcal{A}^{-1}(k) M^\dagger(x, t,\bar k) \mathcal{A}(k) ,\\
            &M(x,t, k)=\mathcal{B} M(x,t,-k) \mathcal{B},
        \end{align*}
        where $\mathcal{A}(k),\,\mathcal{B}$ defined in \eqref{sym_AB}.
	\end{enumerate}
	\end{rhp}
   
\begin{lem}\label{lem:M_lax}
For each $(x,t)\in\mathbb{R}\times[0,T)$, the functions
$M_\pm(x,t,k)$ are smooth in $(x,t)$ for
$k\in\overline{\mathbb{C}_\pm}\setminus(\hat{\textbf{Z}}\cup\{0\})$
and satisfy the Lax pair equations \eqref{lax_X}. Moreover, the
matrix-valued function $M(x,t,k)$ is analytic in
$\mathbb{C}\setminus(\mathbb{R}\cup\hat{\textbf{Z}})$ and admits
continuous boundary values on $\mathbb{R}\setminus\{0\}$. These
boundary values satisfy the jump condition \eqref{jump_0} for
$k\in\mathbb{R}\setminus\{0\}$, while $M(x,t,k)$ satisfies the residue
conditions \eqref{residue_M} at each point $p\in\hat{\textbf{Z}}$.
\end{lem}
\begin{proof}
    The proofs that $M(x,t,k)$ satisfies the Lax pair equations and the jump condition on the real axis are standard and follow the arguments in \cite{Lenells-Indiana}; we therefore omit the details. 
    
    In what follows, we focus exclusively on the residue conditions. 
    Following the approach of \cite{CL-JMPA_2023}, we establish the residue conditions associated with the discrete spectral points and the norming constants in the scattering data \eqref{scattering data}. This will complete the verification that the function $M(x,t,k)$ satisfies the $3\times 3$ matrix RH problem \ref{rhp_M}.

    First, we consider the time evolution of the scattering matrix $s(k;t)$.
Let $(\phi_0^{(i)}(\cdot,t),n_0^{(i)}(\cdot,t))$ be two sequences in
$C_c^\infty(\mathbb{R})$ which converge to
$(\phi(\cdot,t),n(\cdot,t))$, and let
$X^{(i)}(x,t,k)$, $Y^{(i)}(x,t,k)$, and $s^{(i)}(k,t)$ denote the
corresponding Jost solutions and scattering matrix, as defined above
with $t$ fixed.

Since $(\phi(\cdot,t),n(\cdot,t))$ is a solution of the YO equation
\eqref{YO}, the matrices $\check{L}(x,t,k)$ and $\check{Z}(x,t,k)$ in
\eqref{lax_psi} satisfy the compatibility condition
\[
\check{L}_t-\check{Z}_x+[\check{L},\check{Z}]=0.
\]
Therefore, one verifies that
\begin{equation}\label{times-1}
X_t^{(i)}-[\mathcal{Z},X^{(i)}]-Z_1X^{(i)},\qquad
Y_t^{(i)}-[\mathcal{Z},Y^{(i)}]-Z_1Y^{(i)},
\end{equation}
satisfy the $x$-part of the Lax pair \eqref{lax_xpart}. Since
$X^{(i)}$ and $Y^{(i)}$ tend to $I$ as $x\to+\infty$ and
$x\to-\infty$, respectively, and since $Z_1\to0$ as
$x\to\pm\infty$, both quantities in \eqref{times-1} vanish
identically. Hence $X^{(i)}$ and $Y^{(i)}$ also satisfy the $t$-part
of the Lax pair \eqref{lax_X}.

It follows that
\[
X^{(i)}(x,t,k)e^{x\mathcal{L}+t\mathcal{Z}}
=
Y^{(i)}(x,t,k)e^{x\mathcal{L}+t\mathcal{Z}}
\bigl(Y^{(i)}(0,0,k)\bigr)^{-1}X^{(i)}(0,0,k)
=
Y^{(i)}(x,t,k)e^{x\mathcal{L}+t\mathcal{Z}}s^{(i)}(k).
\]
Combining this identity with \eqref{X_s}, we obtain
\(
s^{(i)}(k,t)=e^{t\widehat{\mathcal{Z}}}s^{(i)}(k).
\)
Letting $i\to\infty$, we arrive at
\[
s(k,t)=e^{t\widehat{\mathcal{Z}}}s(k),
\]
for $k$ in the domain specified in \eqref{sk_k_define}. The same
argument gives
\(
s^A(k,t)=e^{t\widehat{\mathcal{Z}}}s^A(k).
\)
Consequently, the zeros of $s_{11}(k)$ are independent of the time
evolution. We now turn to the time evolution of the norming constants
$C_{k_j}(t)$.

Notice that, by Lemma \ref{lem_M_pm}, the time-dependent function
$M(x,t,k)$ is defined in exactly the same way as $M(x,k)$, with
$X(x,k)$, $Y(x,k)$, $X^A(x,k)$, and $Y^A(x,k)$ replaced by their
time-dependent counterparts
$X(x,t,k)$, $Y(x,t,k)$, $X^A(x,t,k)$, and $Y^A(x,t,k)$,
respectively.

Suppose that $k_j\in \textbf{Z}\setminus\ri\mathbb{R}_+$. Then
$e^{xl_1(k_j)+tz_1(k_j)}[X(x,t,k_j)]_1$ and $e^{xl_2(k_j)+tz_2(k_j)}m(x,t,k_j)$ (the numerator of the second
column of $M(x,t,k)$) both satisfy the Lax pair \eqref{lax_X}.
Repeating the argument leading from \eqref{M+i} to \eqref{wxkj}, we
find that
\[
\frac{m(x,t,k_j)}{s'_{11}(k_j)}
=C_{k_j}
e^{x(l_1(k_j)-l_2(k_j))
+t(z_1(k_j)-z_2(k_j))}
[X(x,t,k_j)]_1=
C_{k_j}(t)
e^{x(l_1(k_j)-l_2(k_j))}
[X(x,t,k_j)]_1,
\qquad x\in\mathbb{R}.
\]
Since both sides satisfy the same Lax pair, comparing their
$t$-dependence yields
\(
C_{k_j}(t)
=
C_{k_j}
e^{t\bigl(z_1(k_j)-z_2(k_j)\bigr)}.
\)

Suppose that $k_j\in\ri\mathbb{R}$. Since $[M(x,t,k)]_2$ is analytic at
$k=k_j$, and both
$e^{xl_1(k_j)+tz_1(k_j)}[X(x,t,k_j)]_1$ and
$e^{xl_3(k_j)+tz_3(k_j)}[Y(x,t,k_j)]_3$
satisfy the Lax pair \eqref{lax_X}, repeating the argument used in the
derivation of \eqref{wxkj} yields
\[
\frac{\bigl[Y(x,t,k_j)\bigr]_3}{(s^A)'(k_j)}
=
C_{k_j}
e^{x(l_1(k_j)-l_3(k_j))
+t(z_1(k_j)-z_3(k_j))}
\bigl[X(x,t,k_j)\bigr]_1.
\]
Hence, 
\(
C_{k_j}(t)
=
C_{k_j}
e^{t\bigl(z_1(k_j)-z_3(k_j)\bigr)}.
\) This completes the proof of \eqref{residue_M}.
\end{proof}
   
\begin{prop}\label{prop:M-recover} ${\rm (Reconstruction~formula)}$.
Suppose that $(\phi(x,t),n(x,t))$ is a Schwartz-class solution of
\eqref{YOE} on $\mathbb{R}\times[0,T)$ with initial data
$(\phi_0(x),n_0(x))\in\mathcal{S}(\mathbb{R})\times\mathcal{S}(\mathbb{R})$,
where $T\in(0,\infty]$, and assume that Assumptions
\ref{assu_soliton} and \ref{assu_k=0} hold.
Let $M(x,t,k)$ be the solution of the RH problem \ref{rhp_M},
with reflection coefficients $r_1(k)$ and $r_2(k)$ defined by
\eqref{r1_r2}, and norming constants $C_{k_j}$ defined by
\eqref{ckj_non_Cc} and \eqref{ckj_non_Cir} for each
$k_j\in\mathbf Z$.
Then $(\phi(x,t),n(x,t))$ can be recovered from $M(x,t,k)$ via
\begin{equation}\label{recover:M}
\begin{cases}
\begin{aligned}
\phi(x,t)&=2\ri\lim_{k\to\infty}k\,\partial_x\bigl(M_{11}(x,t,k)+M_{31}(x,t,k)-1\bigr),\\
n(x,t)&=\ri\lim_{k\to\infty} k\, M_{21}(x,t,k).
\end{aligned}
\end{cases}
\end{equation}
\end{prop}
\begin{proof}
By Lemma~\ref{lem:M_lax}, the function $M(x,t,k)$ satisfies the Lax pair
\eqref{lax_X}. The reconstruction formula \eqref{recover:M} then follows
immediately from the large-$k$ expansion of $M(x,t,k)$ given in
Lemma~\ref{M_kinfty}.
\end{proof}

\subsection{Construction of $N_1(x,t,k)$ and $N_2(x,t,k)$} 

Define the vector-valued functions $N_1(x,t,k)$ and $N_2(x,t,k)$ by
\begin{equation}\label{def:N1N2}
\begin{aligned}
&N_1(x,t,k)=
\begin{pmatrix}
1 & 0 & 1
\end{pmatrix}
M(x,t,k),\\
&N_2(x,t,k)=
\begin{pmatrix}
0 & 1 & 0
\end{pmatrix}
M(x,t,k).
\end{aligned}
\end{equation}

\begin{prop}\label{prop:N1N2_recover}
Let $T\in(0,\infty]$, and let $n(x,t)$ and $\phi(x,t)$ be Schwartz-class solutions of the YO equation \eqref{YOE} with initial data $n_0(x)$ and $\phi_0(x)$ satisfying Assumptions~\ref{assu_soliton} and \ref{assu_k=0}. Let $M(x,t,k)$ be the solution of the RH problem~\ref{rhp_M}. Then the $1\times 3$ vector-valued meromorphic functions $N_1(x,t,k)$ and $N_2(x,t,k)$, defined by \eqref{def:N1N2}, satisfy the RH problem~\ref{rhp_Nj}. In particular,
\begin{equation*}
\left\{
\begin{aligned}
    n(x,t)
    &=2\ri\frac{\partial}{\partial x}
    \lim_{k\to\infty}
    k\bigl([N_1(x,t,k)]_1-1\bigr),\\
    \phi(x,t)
    &=\ri\lim_{k\to\infty}
    k[N_2(x,t,k)]_1,
\end{aligned}
\right.
\end{equation*}
where $[\cdot]_j$ denotes the $j$-th component of a vector.
\end{prop}
\begin{proof}
The fact that $N_1(x,t,k)$ and $N_2(x,t,k)$ satisfy the RH problem~\ref{rhp_Nj} follows directly from their definitions in \eqref{def:N1N2} and the RH problem~\ref{rhp_M} for $M(x,t,k)$. The reconstruction formula for $n(x,t)$ and $\phi(x,t)$ are immediate consequences of Proposition~\ref{prop:M-recover}.
\end{proof}

\section{The inverse scattering problem}\label{sec:inverse}

In this section, we complete the proof of Theorem~\ref{theo_reconstruction_formula} and derive the pure soliton solutions of the YO equation~\eqref{YO}. We begin by proving a vanishing lemma for the RH problem~\ref{rhp_Nj}, which ensures the existence of unique solution. Subsequently, we investigate the reflectionless case $r_1(k)=r_2(k)=0$ and obtain the corresponding $N$-soliton solutions via RH problem~\ref{rhp_M}.

\subsection{Vanishing lemma}

\subsubsection{The Proof of the Lemma~\ref{lem_unique_solvability}}
    \begin{proof}\label{subsec_vanishing}
We first establish the vanishing lemma in the solitonless case and then extend the analysis to the general case.\
\par
{\bf Vanishing lemma for the solitonless case.}

Define the matrix-valued function $D(k)$ by
\begin{equation}\label{def:D}
    D(k)=\operatorname{diag}(1,\pm 8k,1),
\end{equation}
where the sign is chosen according to the rule that "$+$" corresponds to $\rre  k>0$, while "$-$" corresponds to $\rre  k<0$. 

Introduce the meromorphic function
\begin{equation*}
    H(k)=N(k)D(k)N^\dagger(\bar{k}).
\end{equation*}
Using the jump relations for $N(x,t,k)$ in \eqref{jump_0} and the absence of solitons, one verifies directly that $H(k)$ is analytic in $\mathbb{C}\setminus(\mathbb{R}\cup \ri\mathbb{R})$, satisfies the symmetry relation $H(\bar{k})^\dagger=H(k)$, and obeys the decay condition $H(k)=\mathcal{O}(k^{-2})$ as $k\to\infty$.

Let $\Gamma$ denote the oriented ray from $0$ to $\ri\infty$. The Cauchy-Goursat theorem then yields
\begin{equation}\label{eq:vanish-id-1}
    \int_{\mathbb{R}}H_+(k)\,\rd k+\int_{\Gamma}\bigl(H_+(k)-H_-(k)\bigr)\,\rd k=0,
\end{equation}
where $H_+(k)$ and $H_-(k)$ denote the boundary values of $H(k)$ on the left and right sides of $\Gamma$ and $\mathbb{R}$, respectively.
For $k\in\Gamma$, a direct computation shows that
\begin{equation*}
    H_+(k)-H_-(k)=-16k\,n_2(k)n_2^*(k)=-16k|n_2(k)|^2,
\end{equation*}
where the second equality follows from the symmetry relation \eqref{sym_Nj}. Parametrizing $\Gamma$ by $k=\ri s$, $s\in(0,\infty)$, we obtain
\begin{equation}\label{ineq:vanish-1}
    \int_{\Gamma}\bigl(H_+(k)-H_-(k)\bigr)\,\rd k
    =\int_{0}^{\infty}16s\,|n_2(\ri s)|^2\,\rd s
    \geq 0.
\end{equation}

On the other hand, using the symmetry relation \eqref{sym_Nj} together with the identity $D(k)=\mathcal{B}D(-k)\mathcal{B}$, we obtain
\begin{equation}\label{eq:H-sym-1}
 \begin{aligned}
    \int_{\mathbb{R}} H_+(k)\,\rd k
    &= \int_{0}^{\infty} N_{+}(k)D(k)N_{+}^{\dagger}(\bar{k})\,\rd k
     + \int_{-\infty}^{0} N_{+}(k)D(k)N_{+}^{\dagger}(\bar{k})\,\rd k \\
    &= \int_{0}^{\infty} N_{+}(k)D(k)N_{+}^{\dagger}(\bar{k})\,\rd k
     + \int_{0}^{\infty} N_{+}(k)\mathcal{B}D(-k)\mathcal{B}
       N_{+}^{\dagger}(\bar{k})\,\rd k \\
    &= 2\int_{0}^{\infty} N_{+}(k)D(k)N_{+}^{\dagger}(\bar{k})\,\rd k,
\end{aligned}   
\end{equation}
where $N_{+}(k)$ denotes the boundary value of $N(k)$ from the upper half-plane along $\mathbb{R}$. Moreover, it follows from the jump relation \eqref{jump_0} that
\begin{align*}
    \int_{0}^{\infty}\bigl(H_+(k)+H_+^\dagger(\bar{k})\bigr)\,\rd k
    = \int_{0}^{\infty}
    N_{-}(k)\bigl(v(k)D(k)+D(k)v^\dagger(\bar{k})\bigr)
    N_{-}^{\dagger}(\bar{k})\,\rd k .
\end{align*}
Define
\begin{equation*}
    \mathcal{V}(k):=v(k)D(k)+D(k)v^\dagger(\bar{k})
    =
    \begin{pmatrix}
        2 & -16k\,r_1(k)\re^{t\theta_{12}} & 0\\
        -16k\,r_1^*(k)\re^{-t\theta_{12}}
        & 16k\bigl(1+8k|r_1(k)|^2\bigr) & 0\\
        0 & 0 &
        2\bigl(1-8k|r_1(-k)|^2-|r_2(k)|^2\bigr)
    \end{pmatrix},
    \qquad k\in(0,\infty).
\end{equation*}
The leading principal minors $\mathcal{V}_j(k)$, $j=1,2,3$, of $\mathcal{V}(k)$ are given by
\begin{equation*}
    \mathcal{V}_1(k)=2,\qquad
    \mathcal{V}_2(k)=32k,\qquad
    \mathcal{V}_3(k)=64k\bigl(1-8k|r_1(-k)|^2-|r_2(k)|^2\bigr).
\end{equation*}
By Assumption~\ref{assu_LT}, all leading principal minors of $\mathcal{V}(k)$ are positive for $k\in(0,\infty)$. Hence, $\mathcal{V}(k)$ is positive definite for $k\in(0,\infty)$, which implies that
\begin{equation}\label{ineq:vanish-2}
     \int_{0}^{\infty}\bigl(H(k)+H^\dagger(\bar{k})\bigr)\,\rd k\geq0.
\end{equation}

Combining \eqref{eq:H-sym-1} with the symmetry relation $H(\bar{k})^\dagger=H(k)$, we can rewrite \eqref{eq:vanish-id-1} as
\begin{equation*}
    \int_{0}^{\infty}\bigl(H(k)+H^\dagger(\bar{k})\bigr)\,\rd k
    +\int_{\Gamma}\bigl(H_+(k)-H_-(k)\bigr)\,\rd k
    =0.
\end{equation*}
Since both terms are nonnegative by \eqref{ineq:vanish-1} and \eqref{ineq:vanish-2}, they must both vanish. In particular,
\[
\int_{0}^{\infty}
N_{-}(k)\mathcal{V}(k)N_{-}^{\dagger}(\bar{k})\,\rd k=0.
\]
As $\mathcal{V}(k)$ is positive definite for $k\in(0,\infty)$, we deduce that $N_{-}(k)\equiv0$ on $(0,\infty)$. By analyticity of $N(k)$ in $\mathbb{C}\setminus\mathbb{R}$, the identity theorem then yields $N(k)\equiv0$.

\begin{figure}[H]
    \centering

    \includegraphics[width=0.5\textwidth]{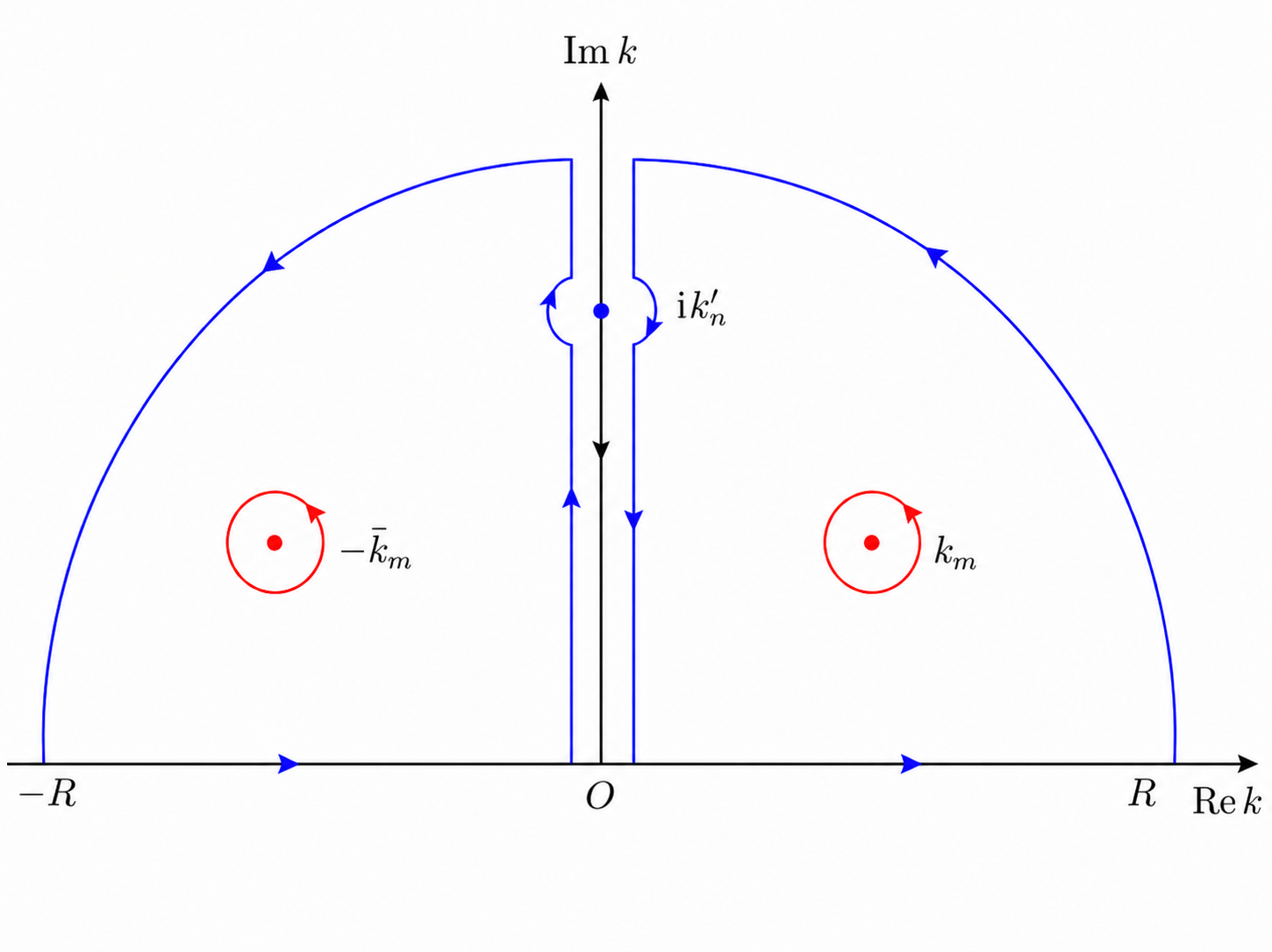}
    \caption{\footnotesize The contour deformation used in the proof of the vanishing lemma for the soliton case. The boundaries of the disks centered at the discrete spectral points off the imaginary axis are oriented counterclockwise, whereas the two semicircular boundaries surrounding each $k_n'\in \ri\mathbb{R}_{+}$ are oriented clockwise. }
    \label{fig:vanish}
\end{figure}\
\par
{\bf Vanishing lemma for the soliton case.}

Before establishing the vanishing lemma for RH problems~\ref{rhp_Nj} in the presence of solitons, we introduce a transformed function $\tilde{N}(k)$. Let $\mathbb{D}_{\epsilon}(k_0)$ denote the open disk centered at $k_0$ with radius $\epsilon$, namely, $\mathbb{D}_{\epsilon}(k_0):=\{k\in\mathbb{C}: |k-k_0|<\epsilon\}$. Furthermore, let $k_m\in\mathbf{Z}\setminus \ri\mathbb{R}_{+}$ denote the simple zeros of $s_{11}(k)$ lying in the first quadrant of the complex plane, and let $k_n'\in\ri\mathbb{R}_{+}$ denote the simple zeros of $s_{11}(k)$ located on the positive imaginary axis. To eliminate the pole conditions from the RH problem, we introduce the transformed meromorphic function $\tilde{N}(x,t,k)$ defined by
\begin{equation*}
    \tilde{N}(x,t,k):=\begin{cases}
        \begin{aligned}
            &N(x,t,k)Q_m^{(1)}(x,t,k),&&k\in\mathbb{D}_{\epsilon}(k_m),\\
            &N(x,t,k)Q_m^{(2)}(x,t,k),&&k\in\mathbb{D}_{\epsilon}(-\bar{k}_m),\\
            &N(x,t,k)Q_m^{(3)}(x,t,k),&&k\in\mathbb{D}_{\epsilon}(-k_m),\\
            &N(x,t,k)Q_m^{(4)}(x,t,k),&&k\in\mathbb{D}_{\epsilon}(\bar{k}_m),\\
            &N(x,t,k)P_n^{(1)}(x,t,k),&&k\in\mathbb{D}_{\epsilon}(k_n'),\\
            &N(x,t,k)P_n^{(2)}(x,t,k),&&k\in\mathbb{D}_{\epsilon}(\bar{k}_n'),\\
            &N(x,t,k),&&\text{otherwise},
        \end{aligned}
    \end{cases}
\end{equation*}
where
$$
Q_m^{(1)}(x,t,k):=\begin{pmatrix}
    1&-\frac{C_{k_m}\re^{t\theta_{12}}}{k-k_m}&0\\
    0&1&0\\
    0&0&1
\end{pmatrix},\quad
Q_m^{(2)}(x,t,k):=\begin{pmatrix}
    1&0&0\\
    0&1&-\frac{8k\overline{C_{k_m}}\re^{t\theta_{23}}}{k+\overline{k_m}}\\
    0&0&1
\end{pmatrix},\quad
Q_m^{(3)}(x,t,k):=\begin{pmatrix}
    1&0&0\\
    0&1&0\\
    0&\frac{C_{k_m}\re^{t\theta_{32}}}{k+k_m}&1
\end{pmatrix},
$$
and
$$
Q_m^{(4)}(x,t,k):=\begin{pmatrix}
    1&0&0\\
    \frac{8k\overline{C_{k_m}}\re^{t\theta_{21}}}{k-\overline{k_m}}&1&0\\
    0&0&1
\end{pmatrix},\quad
P_n^{(1)}(x,t,k):=\begin{pmatrix}
    1&0&-\frac{C_{k_n'}\re^{t\theta_{13}}}{k-k_n'}\\
    0&1&0\\
    0&0&1
\end{pmatrix},\quad
P_n^{(2)}(x,t,k):=\begin{pmatrix}
    1&0&0\\
    0&1&0\\
    \frac{C_{k_n'}
    \re^{t\theta_{31}}}{k-\overline{k_n'}}&0&1
\end{pmatrix}. 
$$
Thus, $\tilde{N}(k)$, which henceforth denotes $\tilde{N}(x,t,k)$, satisfies a RH problem in which the pole conditions are removed and replaced by additional jump conditions on the boundaries of the disks; see Figure~\ref{fig:vanish}. For convenience, the boundaries of the disks off the imaginary axis are oriented counterclockwise, whereas the boundaries of the disks intersecting the imaginary axis are oriented clockwise. Furthermore, the symmetry relation
\begin{equation}\label{symmetry:tildeN}
    \tilde{N}(x,t,k)=\tilde{N}(x,t,-k)\mathcal{B},
\end{equation}
follows directly from the corresponding symmetry of $N(x,t,k)$ and the definitions of the transformation matrices.

Following the same argument as in the solitonless case, we introduce the meromorphic function
\[
\tilde{H}(k)=\tilde{N}(k)D(k)\tilde{N}^{\dagger}(\bar{k}),
\]
where $D(k)$ is defined in \eqref{def:D}. It follows immediately that $\tilde{H}(k)$ is analytic in
\(
\mathbb{C}\setminus\Bigl(\mathbb{R}\bigcup\ri\mathbb{R}\bigcup\Bigl(\underset{m}{\cup}\partial\mathbb{D}_{\epsilon}(k_m)\Bigr)\bigcup\Bigl(\underset{n}{\cup}\partial\mathbb{D}_{\epsilon}(k_n')\Bigr)\Bigr).
\)

We next examine the jump conditions satisfied by $\tilde{H}(k)$ on the boundaries of the disks off the imaginary axis.
\begin{itemize}
    \item For $k\in\partial\mathbb{D}_{\epsilon}(k_m)$, a direct computation shows that
\begin{equation*}
\begin{aligned}
    \tilde{H}_+(k)-\tilde{H}_-(k)
    &=\tilde{N}_+(k)D(k)\tilde{N}_+^{\dagger}(\bar{k})
      -\tilde{N}_-(k)D(k)\tilde{N}_-^{\dagger}(\bar{k}) 
    =\tilde{N}_-(k)
      \Bigl[
      Q_m^{(1)}(k)D(k)\bigl(Q_m^{(4)}(\bar{k})\bigr)^{\dagger}
      -D(k)
      \Bigr]
      \tilde{N}_-^{\dagger}(\bar{k}) 
    =0.
\end{aligned}
\end{equation*}

    \item For $k\in\partial\mathbb{D}_{\epsilon}(-\bar{k}_m)$, a direct computation shows that
\begin{equation*}
\begin{aligned}
    \tilde{H}_+(k)-\tilde{H}_-(k)
    &=\tilde{N}_+(k)D(k)\tilde{N}_+^{\dagger}(\bar{k})
      -\tilde{N}_-(k)D(k)\tilde{N}_-^{\dagger}(\bar{k}) 
    =\tilde{N}_-(k)
      \Bigl[
      Q_m^{(2)}(k)D(k)\bigl(Q_m^{(3)}(\bar{k})\bigr)^{\dagger}
      -D(k)
      \Bigr]
      \tilde{N}_-^{\dagger}(\bar{k}) 
    =0.
\end{aligned}
\end{equation*}
\end{itemize}

By direct computation, or equivalently by the symmetry relations, one also verifies that $\tilde{H}_+(k)-\tilde{H}_-(k)=0$ for $k\in\partial\mathbb{D}_{\epsilon}(-k_m)\cup\partial\mathbb{D}_{\epsilon}(\bar{k}_m)$. Consequently, $\tilde{H}(k)$ extends analytically across the boundaries of all disks off the imaginary axis.

Regarding the disks intersecting the positive imaginary axis, we obtain the following identities.
\begin{itemize}
    \item For $k\in\mathbb{D}_{\epsilon}(k'_n)$ with $\rre  k>0$, a direct computation shows that
    \begin{equation}\label{eq:vanish-soliton-1}
    \begin{aligned}
    \tilde{H}_+(k)-\tilde{H}_-(k)&=\tilde{N}_+(k)D(k)\tilde{N}_+^{\dagger}(\bar{k})
      -\tilde{N}_-(k)D(k)\tilde{N}_-^{\dagger}(\bar{k})=\tilde{N}_-(k)
      \Bigl[
      P_n^{(1)}(k)D(k)\bigl(P_n^{(2)}(\bar{k})\bigr)^{\dagger}
      -D(k)
      \Bigr]
      \tilde{N}_-^{\dagger}(\bar{k}) \\
      &=\frac{(\overline{C_{k_n'}}-{C_{k_n'}})\re^{t\theta_{13}}}{k-k_n'}n_1(k)n_3^*(k)
      =-\frac{2C_{k_n'}\re^{t\theta_{13}}}{k-k_n'}|n_1(k)|^2.
    \end{aligned}
    \end{equation}

    \item For $k\in\mathbb{D}_{\epsilon}(k'_n)$ with $\rre  k<0$, a direct computation shows that
    \begin{equation}\label{eq:vanish-soliton-2}
    \begin{aligned}
    \tilde{H}_+(k)-\tilde{H}_-(k)&=\tilde{N}_+(k)D(k)\tilde{N}_+^{\dagger}(\bar{k})
      -\tilde{N}_-(k)D(k)\tilde{N}_-^{\dagger}(\bar{k})=\tilde{N}_-(k)
      \Bigl[
      P_n^{(1)}(k)D(k)\bigl(P_n^{(2)}(\bar{k})\bigr)^{\dagger}
      -D(k)
      \Bigr]
      \tilde{N}_-^{\dagger}(\bar{k}) \\
      &=\frac{(\overline{C_{k_n'}}-{C_{k_n'}})\re^{t\theta_{13}}}{k-k_n'}n_1(k)n_3^*(k)
      =-\frac{2C_{k_n'}\re^{t\theta_{13}}}{k-k_n'}|n_1(k)|^2,
    \end{aligned}
    \end{equation}
\end{itemize}
where the last equality follows from the fact that $C_{k_n'}$ is purely imaginary (see Lemma~\ref{lem_Ck1_im}) and the symmetry relation \eqref{symmetry:tildeN} satisfied by $\tilde{N}(k)$.

To account for the disks centered at the poles on the positive imaginary axis, we deform the contour $\Gamma$ by removing the segments $(k_n'-\ri\epsilon,k_n'+\ri\epsilon)$ and replacing them with the corresponding semicircular boundaries. Applying the Cauchy-Goursat theorem to the resulting contour and using the analyticity of $\tilde{H}(k)$ in $\mathbb{C}_{+}\setminus\Bigl(\ri\mathbb{R}_{+}\bigcup\Bigl(\underset{n}{\cup}\partial\mathbb{D}_{\epsilon}(k_n')\Bigr)\Bigr)$, we obtain
\begin{equation}\label{eq:H-upper}
\begin{aligned}
\int_{\mathbb{R}}\tilde{H}_+(k)\,\rd k
+\int_{\Gamma\setminus\underset{n}{\cup}(k_n'-\ri\epsilon,k_n'+\ri\epsilon)}
\bigl(\tilde{H}_+(k)-\tilde{H}_-(k)\bigr)\,\rd k 
+\int_{\underset{n}{\cup}\partial\mathbb{D}_{\epsilon}(k_n')\cap\{\rre  k>0\}}
\tilde{H}_+(k)\,\rd k
+\int_{\underset{n}{\cup}\partial\mathbb{D}_{\epsilon}(k_n')\cap\{\rre  k<0\}}
\tilde{H}_+(k)\,\rd k=0,
\end{aligned}
\end{equation}
and
\begin{equation}\label{eq:H-disks}
\begin{aligned}
\int_{\underset{n}{\cup}(k_n'-\ri\epsilon,k_n'+\ri\epsilon)}
\bigl(\tilde{H}_+(k)-\tilde{H}_-(k)\bigr)\,\rd k 
-\int_{\underset{n}{\cup}\partial\mathbb{D}_{\epsilon}(k_n')\cap\{\rre  k>0\}}
\tilde{H}_-(k)\,\rd k
-\int_{\underset{n}{\cup}\partial\mathbb{D}_{\epsilon}(k_n')\cap\{\rre  k<0\}}
\tilde{H}_-(k)\,\rd k=0.
\end{aligned}
\end{equation}

Adding \eqref{eq:H-upper} and \eqref{eq:H-disks}, and using \eqref{eq:vanish-soliton-1} and \eqref{eq:vanish-soliton-2}, we obtain
\begin{equation*}
    \int_{\mathbb{R}}\tilde{H}_+(k)\,\rd k
    +\int_{\Gamma}
    \bigl(\tilde{H}_+(k)-\tilde{H}_-(k)\bigr)\,\rd k
    -\sum_n 4\pi \ri C_{k_n'}\re^{t\theta_{13}(k_n')}
    |n_1(k_n')|^2
    =0.
\end{equation*}
Since $\re^{t\theta_{13}(k_n')}>0$ for $k_n'\in\ri\mathbb{R}_{+}$ and Assumption~\ref{assu_soliton} implies that $ \ri C_{k_n'}\le 0$, it follows that
\[
\sum_n 4\pi  \ri C_{k_n'}\re^{t\theta_{13}(k_n')}
|n_1(k_n')|^2\le0.
\]
Therefore, by the same argument as in the solitonless case, we conclude that $\tilde{N}(k)$ vanishes identically whenever Assumption~\ref{assu_soliton} holds. The unique solvability of RH problems~\ref{rhp_Nj} then follows from the standard Fredholm theory; see Zhou~\cite{Zhou_SIAM_1989_vanishing}.
    \end{proof}

    \subsubsection{Proof of Theorem \ref{theo_reconstruction_formula}}\label{subsec_reconst}
	\begin{proof}
Suppose that the initial data $n_0(x)$ and $\phi_0(x)$ belong to the Schwartz class, and that the associated reflection coefficients $r_1(k)$ and $r_2(k)$ satisfy Assumptions~\ref{assu_LT} and \ref{assu_soliton}. By the definition of $N_1(x,t,k)$ and $N_2(x,t,k)$ in \eqref{def:N1N2}, together with the vanishing lemma established in Lemma~\ref{lem_unique_solvability}, RH problem~\ref{rhp_Nj} admits a unique solution for all $(x,t)$. Consequently, the functions $N_1(x,t,k)$ and $N_2(x,t,k)$ are uniquely determined. It then follows from Proposition~\ref{prop:N1N2_recover} that the potentials $n(x,t)$ and $\phi(x,t)$ can be uniquely reconstructed. Therefore, the solution of the YO equation exists and is unique, which completes the proof of Theorem~\ref{theo_reconstruction_formula}.
\end{proof}

	\subsection{Pure soliton solution}\label{sec:soliton}

	This subsection considers the reflectionless case, i.e., where \( r_1(k) = r_2(k) = 0 \) holds identically for \( k \in \mathbb{R} \), and the scattering data \(\bigl\{ \mathbf{Z}, \, \{C_{k_j}\}_{k_j \in \mathbf{Z}} \bigr\}\) satisfies Assumption \ref{assu_soliton}. This setting yields the pure soliton solutions of the YO equation \eqref{YO}. Let \( k_j = \xi_j + \ri \eta_j \in \mathbf{Z} \) (\( j = 1, 2, \dots, N \)) be the \(N\) simple zeros of the spectral function \( s_{11}(k) \). We then treat the following two cases separately and present the explicit expression of the one-soliton solution for the YO equation \eqref{YO}, and the result for the $N$-soliton solution is given in Appendix \ref{sec_Nsoliton}.
	
	First, note that the function $M(x,t,k)$ in the RH problem \ref{rhp_M} exhibits a singularity at $k = 0$. We therefore introduce the following transformation:
	\begin{equation}\label{tildeM_def}
		M(x,t,k)=\left[ I+\frac{y_1(x,t)}{k}\begin{pmatrix}
			-1 & 0 & -1\\
			0 & 0 & 0\\
			1 & 0 & 1\\
		\end{pmatrix}+\frac{y_2(x,t)}{k}\begin{pmatrix}
			0 & -1 & 0\\
			0 & 0 & 0\\
			0 & 1 & 0\\
		\end{pmatrix}\right] \tilde{M}(x,t,k),
	\end{equation}
	where $y_1(x,t)=\lim\limits_{k\to \infty}k\tilde{M}_{13}$ and  $y_2(x,t)=\lim\limits_{k\to \infty}k\tilde{M}_{12}$. Thus, it can be proven that $\tilde{M}(x,t,k)$ has no singularity at $k = 0$.

     \subsubsection{The Case $k_j \in \textbf{\textbf{Z}} \setminus \ri\mathbb{R}_+$}
     In this case, we consider the spectral point $k_j \in \textbf{\textbf{Z}} \setminus \ri\mathbb{R}_+$ in RH problem \ref{rhp_M}. Using the residue conditions \eqref{Res_2} associated with the eigenfunction $M(x,t,k)$. The function $\tilde{M}(x,t,k)$ can be expressed in the following form:
	\begin{equation}\label{tildeMk}
		\begin{aligned}
			\tilde{M}(x,t,k)=& I-\frac{8\overline{k_1}\,\,\overline{C_{k_1} }\re^{-2\ri \overline{k_1}((\overline{k_1}-1)t+x)}}{k-\overline{k_1}}\begin{pmatrix}
				 \tilde{M}_{12}\left(x,t,\overline{k_1}\right) & 0 & 0\\
				 \tilde{M}_{22}\left(x,t,\overline{k_1}\right) & 0 & 0\\
				 \tilde{M}_{32}\left(x,t,\overline{k_1}\right) & 0 & 0\\
			\end{pmatrix}+\frac{C_{k_1} \re^{2\ri k_1  \left( \left(k_1 -1 \right)t+x\right)  }}{k-k_1 } \begin{pmatrix}
				0 & \tilde{M}_{11}(x,t,k_1 ) & 0\\
				0 & \tilde{M}_{21}(x,t,k_1 ) & 0\\
				0 & \tilde{M}_{31}(x,t,k_1 ) & 0\\
			\end{pmatrix}\\
            &-\frac{C_{k_1} \re^{2\ri k_1 \left( \left(k_1 -1 \right)t+x\right)  }}{k+k_1 } \begin{pmatrix}
			0 & \tilde{M}_{13}(x,t,-k_1 ) & 0\\
			0 & \tilde{M}_{23}(x,t,-k_1 ) & 0\\
			0 & \tilde{M}_{33}(x,t,-k_1 ) & 0\\
			\end{pmatrix}+\frac{8\overline{k_1}\,\,\overline{C_{k_1} }\re^{-2\ri \overline{k_1}((\overline{k_1}-1)t+x)}}{k+\overline{k_1}}\begin{pmatrix}
				0 & 0 & \tilde{M}_{12}\left(x,t,-\overline{k_1}\right)\\
				0 & 0 & \tilde{M}_{22}\left(x,t,-\overline{k_1}\right)\\
				0 & 0 & \tilde{M}_{32}\left(x,t,-\overline{k_1}\right)\\
			\end{pmatrix}.
		\end{aligned}
	\end{equation}
	Exploiting the symmetry properties, the functions $y_j(x,t)$, $j=1,2$, can be expressed as follows:
	\begin{equation*}
		\begin{aligned}
				&y_1(x,t)=8\overline{k_1}\,\,\overline{C_{k_1} }\re^{-2\ri \overline{k_1}((\overline{k_1}-1)t+x)}\tilde{M}_{32}\left(x,t,\overline{k_1}\right),\\ &y_2(x,t)=C_{k_1} \re^{2\ri k_1  \left( \left(k_1 -1 \right)t+x\right)  }\left(\tilde{M}_{11}(x,t,k_1 )- \tilde{M}_{31}(x,t,k_1 )\right) .
		\end{aligned}
	\end{equation*}
	The reconstructed formula \eqref{reconstruct} can be rewritten as
	\begin{equation*}
		\left\lbrace 
		\begin{aligned}
			&n(x,t)=-2\ri\frac{\partial}{\partial x}\left[8\overline{k_1}\,\,\overline{C_{k_1} }\re^{-2\ri \overline{k_1}((\overline{k_1}-1)t+x)} \left( \tilde{M}_{12}\left(x,t,\overline{k_1}\right)+\tilde{M}_{32}\left(x,t,\overline{k_1}\right)\right)\right]  ,\\
			&\phi(x,t)=-8\ri \overline{k_1}\,\,\overline{C_{k_1} }\re^{-2\ri \overline{k_1}((\overline{k_1}-1)t+x)}\tilde{M}_{22}\left(x,t,\overline{k_1}\right) . 
		\end{aligned}\right.
	\end{equation*}
    Furthermore, by taking different values of $k$ in relation \eqref{tildeMk}, we can compute the following:
	\begin{equation*}
		\left\lbrace 
		\begin{aligned}
			n(x,t)=&-\frac{\partial}{\partial x}\left(  \frac{32\ri\left| k_1 C_{k_1} \right| ^2\left( \overline{k_1}-k_1 \right) \re^{2\ri ( k_1 -\overline{k_1}) ( (k_1 +\overline{k_1}-1 )t+x)} }{\left(\overline{k_1}-k_1  \right)^2\left( \overline{k_1}+k_1 \right)-16 \left|k_1 C_{k_1}   \right| ^2\re^{2\ri ( k_1 -\overline{k_1}) ( (k_1 +\overline{k_1}-1 )t+x)}  }\right)   \\
            =&-4\eta_1^2 \text{sech}^2\left(2\eta_1 \left(x-\left(1-2\xi_1\right)t\right)+\rho_1\right),\\
			\phi(x,t)=&-\frac{8\ri \overline{k_1}\,\,\overline{C_{k_1} }{\left(\overline{k_1}-k_1  \right)^2 \left(\overline{k_1}+k_1  \right) }\re^{-2\ri \overline{k_1}((\overline{k_1}-1)t+x)}}{{\left(\overline{k_1}-k_1  \right)^2 \left(\overline{k_1}+k_1  \right) }-{16\left|   k_1 C_{k_1}\right| ^2\re^{2\ri \left( k_1 -\overline{k_1}\right) \left( \left(k_1 +\overline{k_1}-1 \right)t+x\right)} }}\\
            =&-2\eta_1\sqrt{2\xi_1}\text{sech}\left(2\eta_1 \left(x-\left(1-2\xi_1\right)t\right)+\rho_1\right)\re^{2\ri \left(\eta_1^2-\xi_1^2\right)t-2\ri\xi_1\left(x-t\right)+\ri \theta},
		\end{aligned}\right.
	\end{equation*}
	where $
        \rho_1=\log \frac{\sqrt{2\xi_1}\eta_1}{2\left|k_1C_{k_1} \right|}$, $\re^{\ri \theta}=\ri\,\,\frac{\overline{k_1 }\,\,\overline{C_{k_1} }}{|k_1C_{k_1}   |}.$ It can be verified that the form of the one-soliton solution given here is equivalent to that presented in Ref. \cite{YO_1976}.
        
    \begin{remark}\label{remark_singular_soliton}
    	The calculation above shows that for $\xi_1<0$, the solution is given by the \text{cosech} function; specifically, when $k_1\in D_{\text{sing}}$, the corresponding exact solution is singular.
    \end{remark}

	\subsubsection{The Case $k_j \in \textbf{Z} \cap\ri\mathbb{R}_+$}
     When $k_j \in \textbf{Z} \cap\ri\mathbb{R}_+$, $j=1,2,\cdots,N$, the relation $k_j=-\overline{k_j}=\ri\eta_j$ holds. Therefore, in the RH problem \ref{rhp_M}, the residues of $M(x,t,k)$ at $\pm k_j$ are governed by specific residue formulas. We will also show that, in this case, the two variables associated with the YO equation \eqref{YO} are such that the real variable $n(x,t)$ corresponds to the soliton defined by a $\text{sech}^2$ function, while the complex variable $\phi(x,t)$ remains identically zero. We continue to employ transformation \eqref{tildeM_def} to define the eigenfunction $\tilde{M}$ that is nonsingular at $k=0$.
    For $N=1$, we have
    \begin{equation*}
    	\begin{aligned}
    		\tilde{M}(x,t,k)&=I+\frac{C_{\ri\eta_1} \re^{-4\eta_1(x-t)}}{k-\ri\eta_1 } \begin{pmatrix}
    			0 & 0 & \tilde{M}_{11}(x,t,\ri\eta_1 )\\
    			0 & 0 & \tilde{M}_{21}(x,t,\ri\eta_1 )\\
    			0 & 0 & \tilde{M}_{31}(x,t,\ri\eta_1 )\\
    		\end{pmatrix}-\frac{C_{\ri\eta_1} \re^{-4\eta_1(x-t)}}{k+\ri\eta_1 } \begin{pmatrix}
    			\tilde{M}_{31}(x,t,\ri\eta_1 ) & 0 & 0\\
    			\tilde{M}_{21}(x,t,\ri\eta_1 ) & 0 & 0\\
    			\tilde{M}_{11}(x,t,\ri\eta_1 ) & 0 & 0\\
    		\end{pmatrix}.
    	\end{aligned}
    \end{equation*}
    A similar calculation yields:
    \begin{equation*}\label{reconstruct_soliton2}
    	\left\lbrace 
    	\begin{aligned}
    		n(x,t)&=-2\ri\frac{\partial}{\partial x}\left[C_{\ri\eta_1} \re^{-4\eta_1(x-t)} \left( \tilde{M}_{31}(x,t,\ri\eta_1 )+\tilde{M}_{11}(x,t,\ri\eta_1 )\right)\right] =-4\eta_1^2\text{sech}^2(2\eta_1(x-t)+\rho_1),\\
    		\phi(x,t)&=-\ri C_{\ri\eta_1} \re^{-4\eta_1(x-t)} \tilde{M}_{21}(x,t,\ri\eta_1 )=0, 
    	\end{aligned}\right.
    \end{equation*}
    where $\rho_1=\frac{1}{2}\log\left(-\frac{2 \eta_1}{\ri C_{\ri\eta_1}} \right) $.

    \begin{remark}
    	 Note that in Assumption \ref{assu_soliton}, we have already stipulated the condition $\ri C_{k_j}\notin (0,\infty)$ for $k_j\in \textbf{Z}\cap\ri\mathbb{R}_+$. In fact, we have previously mentioned that when $k_1\in \textbf{Z}\cap\ri\mathbb{R}_+$, the soliton solution of the YO equation \eqref{YO} is non-singular if and only if the norming constant $C_{k_1}$ satisfies condition $\ri C_{k_1}<0$.
    \end{remark}

	\section{The long-time asymptotic analysis}\label{sec_LT}
	Based on the RH problem \ref{rhp_M} constructed in Subsection \ref{sec_M}, this section analyzes the long-time asymptotic behavior of the solutions to the YO problem \eqref{YOE} in different regions of the $(x,t)$-space (see Figure \ref{fig_asy}), and provides the leading-order asymptotic terms and error terms of the solutions. This section is developed under the solitonless Assumption \ref{assu_solitonless}. Specifically, the condition \textbf{3} in the RH problem \ref{rhp_M} is not considered here. The long-time asymptotic analysis with solitons included will be presented in our follow-up work. 
	 
	In the following analysis, we will first focus on the results in Regions \text{II} and \text{III}, and then use the analytical procedures from these two regions to briefly establish the conclusions for Regions \text{I} and \text{IV}. Define two parameters $\xi:=x/t$ and $\tau:=t/x$. The four regions shown in Figure \ref{fig_asy} can be determined by the ranges of these parameters as follows: Region \text{I} corresponds to $\tau\in\mathcal{I}_1$, Region \text{II} corresponds to $\xi\in\mathcal{I}_2$, Region \text{III} corresponds to $\xi\in\mathcal{I}_3$, and Region \text{IV} corresponds to $\tau\in\mathcal{I}_4$, where $\mathcal{I}_j~(j=1,2,3,4)$ have been defined in Subsection \ref{Notations-1.1}.

    \subsection{Long-time asymptotics in Region {\rm{II}}}\label{subsec_region1}
        In Subsection \ref{sec_M}, we construct the RH problem for the initial problem \eqref{YOE} of the YO equation and provide the jump relations satisfied by the eigenfunction $M(x,t,k)$. To analyze the long-time asymptotic behavior of the solutions of \eqref{YOE}, we first consider the dispersion relations:
      \begin{align}\label{theta}
            &\theta_{12}(\zeta,k):=2\ri k\left(k+\zeta-1\right),\quad\theta_{13}(\zeta,k):=4\ri k\left(\zeta-1\right),\quad \theta_{23}(\zeta,k):=-2\ri k\left(k+1-\zeta\right),
        \end{align}
        where $\zeta\in\mathcal{I}_2$ and $t>0$. 
    After some calculations, $\theta_{23}(\zeta,k)$ has a single zero $k_0=(\zeta-1)/2$ and $\theta_{12}(\zeta,k)$ has a single zero $-k_0=(1-\zeta)/2$. The signature tables for $\theta_{12}$, $\theta_{13}$ and $\theta_{23}$ are shown in Figure \ref{fig_theta_sign1} for $(x,t)$ in Region \rm{II}.

    \begin{figure}[htbp]
    \centering
    \begin{subfigure}[t]{0.28\textwidth}
        \centering
     	\begin{tikzpicture} [scale=0.7]
        
         \fill[black!20!blue!20] (-3,0) -- (-1.5,0) -- (-1.5,3) -- (-3,3) -- cycle;
        \fill[black!20!blue!20] (-1.5,0) -- (3,0) -- (3,-3) -- (-1.5,-3) -- cycle;
	 		
	  \draw [very thick,black!20!blue](-3,0) -- (3,0);
        \draw [very thick,black!20!blue](-1.5,-3) -- (-1.5,3);

        \fill (-1.5,0) circle (1.5pt);
        \fill (0,0) circle (1.5pt);
	  \fill (1.5,0) circle (1.5pt);

        \node[below] at (-2,0) {$-k_0$};
	  \node[below] at (0,0) {$\text{0}$};
	  \node[below] at (1.5,0) {$k_0$};

      \node at (-2.4,2) {$U_2^{(12)}$};
      \node at (1,2) {$U_1^{(12)}$};
      \node at (1,-2) {$U_4^{(12)}$};
      \node at (-2.4,-2) {$U_3^{(12)}$};

        \node[right] at (3,0) {$\mathbb{R}$};
     	\end{tikzpicture}
     		\caption{The signature of $\theta_{12}$.}
    \end{subfigure}
    \quad
    \begin{subfigure}[t]{0.28\textwidth}
        \centering
     	\begin{tikzpicture} [scale=0.7]
        
         \fill[black!20!blue!20] (-3,0) -- (3,0) -- (3,-3) -- (-3,-3) -- cycle;
	 		
	  \draw [very thick,black!20!blue](-3,0) -- (3,0);

        \fill (-1.5,0) circle (1.5pt);
        \fill (0,0) circle (1.5pt);
	  \fill (1.5,0) circle (1.5pt);

        \node[above] at (-1.5,0) {$-k_0$};
	  \node[above] at (0,0) {$\text{0}$};
	  \node[above] at (1.5,0) {$k_0$};
      
      \node at (0,2) {$U_1^{(13)}$};
      \node at (0,-2) {$U_2^{(13)}$};

        \node[right] at (3,0) {$\mathbb{R}$};
     	\end{tikzpicture}
     		\caption{The signature of $\theta_{13}$.\label{subfig_theta13_1}}
    \end{subfigure}
    \quad
    \begin{subfigure}[t]{0.28\textwidth}
        \centering
     	\begin{tikzpicture} [scale=0.7]
        
         \fill[black!20!blue!20] (3,0) -- (1.5,0) -- (1.5,3) -- (3,3) -- cycle;
        \fill[black!20!blue!20] (1.5,0) -- (-3,0) -- (-3,-3) -- (1.5,-3) -- cycle;
	 		
	  \draw [very thick,black!20!blue](-3,0) -- (3,0);
        \draw [very thick,black!20!blue](1.5,-3) -- (1.5,3);

        \fill (-1.5,0) circle (1.5pt);
        \fill (0,0) circle (1.5pt);
	  \fill (1.5,0) circle (1.5pt);

        \node[below] at (-1.5,0) {$-k_0$};
	  \node[below] at (0,0) {$\text{0}$};
	  \node[below] at (1.8,0) {$k_0$};

      \node at (2.4,2) {$U_1^{(23)}$};
      \node at (-1,2) {$U_2^{(23)}$};
      \node at (-1,-2) {$U_3^{(23)}$};
      \node at (2.4,-2) {$U_4^{(23)}$};

        \node[right] at (3,0) {$\mathbb{R}$};
     	\end{tikzpicture}
     		\caption{The signature of $\theta_{23}$.}
             \end{subfigure}
             \caption{ Open sets in the complex $k$-plane for Region \rm{II}: $\rre \theta_{ij}>0$ (shaded) and $\rre \theta_{ij}<0$ (white).}
             \label{fig_theta_sign1}
\end{figure}

\subsubsection{First transformation}\label{subsec_611}
In Figure \ref{fig_theta_sign1}, we introduce several open sets $U_k^{(ij)}$, $k=1,2,3,4$ to define the regions where $\rre \theta_{ij}<0$, $1\leq i<j\leq3$. To eliminate the jump on $\mathbb{R}$, we need to perform the first transformation, and prior to that, first need to perform analytical approximations for the reflection coefficients $r_1( k)$, $r_2(k)$, and $\hat{r}_1(k)$, where $\hat{r}_1(k)$ is defined as
\begin{equation*}
    \hat{r}_1( k)=\frac{r_1(k)}{1+8k|r_1(k)|^2}.
\end{equation*}

\begin{lem}\label{lem_r_decomposition}
For each $\zeta\in\mathcal{I}_2$ and $t>0$, the following decompositions hold:
    \begin{align*}
        &r_1(k)=r_{1,a}(x,t,k)+r_{1,r}(x,t,k),\quad && k\in \left[-k_0,\infty\right),\\
        &\hat{r}_1(k)=\hat{r}_{1,a}(x,t,k)+\hat{r}_{1,r}(x,t,k),\quad && k\in \left(-\infty,-k_0\right],\\
        &r_2(k)=r_{2,a}(x,t,k)+r_{2,r}(x,t,k),\quad && k\in \mathbb{R},
    \end{align*}
    where the above functions satisfy the following properties:
    \begin{itemize}
     \item $r_{1,a}(x,t,k)$ is defined and continuous for $k\in \overline{U_1^{(12)}}$, and is analytic for $k\in U_1^{(12)}$. $\hat{r}_{1,a}(x,t,k)$ is defined and continuous for $k\in \overline{U_3^{(12)}}$, and is analytic for $k\in U_3^{(12)}$. $r_{2,a}(x,t,k)$ is defined and continuous for $k\in \overline{\mathbb{C}_+}$, and is analytic for $k\in \mathbb{C}_+$.
    \item The functions $r_{1,a}$, $\hat{r}_{1,a}$ and $r_{2,a}$ satisfy
        \begin{align*}
            &\left|\partial_x^l\left(r_{1,a}(x,t,k)-\sum_{j=0}^{N}\frac{r^{(j)}_{1,a}(-k_0)}{j!}(k+k_0)^j\right)\right|\leq C \left|k+k_0\right|^{N+1}\re^{\frac{t}{4}\left|\rre\theta_{12}(\zeta,k)\right|},\quad &&k\in \overline{U_1^{(12)}},\\
            &\left|\partial_x^l \left(\hat{r}_{1,a}(x,t,k)-\sum_{j=0}^{N}\frac{\hat{r}^{(j)}_{1,a}(-k_0)}{j!}(k+k_0)^j \right) \right|\leq C \left|k+k_0\right|^{N+1}\re^{\frac{t}{4}\left|\rre\theta_{12}(\zeta,k)\right|},\quad  &&k\in \overline{U_3^{(12)}},\\
            &\left|\partial_x^l\left(r_{2,a}(x,t,k)-\sum_{j=0}^{N}\frac{r^{(j)}_{2,a}(0)}{j!}k^j\right)\right|\leq C \left|k\right|^{N+1}\re^{\frac{t}{4}\left|\rre\theta_{13}(\zeta,k)\right|},\quad  &&k\in \overline{\mathbb{C}_+},
            \end{align*}
            and
        \begin{align*}    
            &\left|\partial_x^l\left(r_{1,a}(x,t,k)\right)\right|\leq  \frac{C}{1+\left|k\right|}\re^{\frac{t}{4}\left|\rre\theta_{12}(\zeta,k)\right|},\quad  &&k\in \overline{U_1^{(12)}},\\
            &\left|\partial_x^l\left(\hat{r}_{1,a}(x,t,k)\right)\right|\leq  \frac{C}{1+\left|k\right|}\re^{\frac{t}{4}\left|\rre\theta_{12}(\zeta,k)\right|},\quad  &&k\in \overline{U_3^{(12)}},\\
            &\left|\partial_x^l\left(r_{2,a}(x,t,k)\right)\right|\leq  \frac{C}{1+\left|k\right|}\re^{\frac{t}{4}\left|\rre\theta_{13}(\zeta,k)\right|},\quad  &&k\in \overline{\mathbb{C}_+},
        \end{align*}
        where $l=0,1$, and the constant $C$ is independent of $\zeta,t,k$.
    \item For each $1\leq p\leq \infty$ and $l=0,1$, the following formulas hold
    \begin{align*}
        &\left\|(1+|\cdot|)\partial_x^lr_{1,r}(x,t,\cdot)\right\|_{L^p\left(-k_0,\infty\right)}=\left\|\frac{r_{1,r}(x,t,\cdot)}{\cdot+k_0}\right\|_{L^p\left(-k_0,\infty\right)}=\mathcal{O}\left(t^{-N}\right),\\
        &\left\|(1+|\cdot|)\partial_x^l\hat{r}_{1,r}(x,t,\cdot)\right\|_{L^p\left(-\infty,-k_0\right)}=\left\|\frac{\hat{r}_{1,r}(x,t,\cdot)}{\cdot+k_0}\right\|_{L^p\left(-\infty,-k_0\right)}=\mathcal{O}\left(t^{-N}\right),\\
        &\left\|(1+|\cdot|)\partial_x^l{r}_{2,r}(x,t,\cdot)\right\|_{L^p(\mathbb{R})}=\mathcal{O}\left(t^{-N}\right),
    \end{align*}
    uniformly for $\zeta\in\mathcal{I}_2$ as $t\to\infty$.
        \end{itemize}
\end{lem}

\begin{proof}
    The proof can be found in Ref. \cite{DZ_1993} and \cite{Lenells_2017}.
\end{proof}

In what follows, for notational convenience, we denote $r_{j,a}(x,t,k)$ and $r_{j,r}(x,t,k)$ by $r_{j,a}(k)$ and $r_{j,r}(k)$, $j=1,2$, respectively.

Define $\Gamma^{(1)}$ as shown in Figure \ref{fig_Gamma1}, and construct a function $\delta(\zeta,k)$ that is analytic everywhere except for jump across $\Gamma^{(1)}_1$:
     \begin{equation}\label{delta}
    	\left\{
    	\begin{aligned}
    		&\begin{aligned}\delta_{+}(k)
    			=&\delta_{-}(k)(1+8k\left|r_1(k)\right|^2),\quad  &&k\in \Gamma^{(1)}_1,\\
    			=&\delta_{-}(k), &&k\in\mathbb{C}\setminus\Gamma^{(1)}_1,
    		\end{aligned}\\
            &\delta(k)\rightarrow1,\qquad\qquad\qquad\qquad\quad\quad\!\!\!\!  k\rightarrow\infty.
    	\end{aligned}
    	\right.
    \end{equation}
    The function $\delta(k)$ defined in equation \eqref{delta} can be expressed as the following integral equation by Plemelj formulas:
    \begin{equation}\label{vol_delta}
        \delta(\zeta,k)={\rm{exp}}\left\{\frac{1}{2 \pi\ri}\int_{-\infty}^{-k_0}\frac{\ln \left(1+8s\,\left|r_1(s)\right|^2\right)}{s-k}{\rm{d}}s\right\},\quad k\in \mathbb{C}\backslash \Gamma^{(1)}_1.
    \end{equation}
    Then we can naturally obtain
    \begin{align*}
        \delta(\zeta,-k)&={\rm{exp}}\left\{\frac{1}{2 \pi\ri}\int_{-\infty}^{-k_0}\frac{\ln \left(1+8s\,\left|r_1(s)\right|^2\right)}{s+k}{\rm{d}}s\right\}={\rm{exp}}\left\{-\frac{1}{2 \pi\ri}\int^{\infty}_{k_0}\frac{\ln \left(1-8s\,\left|r_1(-s)\right|^2\right)}{s-k}{\rm{d}}s\right\},\quad k\in \mathbb{C}\backslash \Gamma^{(1)}_3.
    \end{align*}

    \begin{figure}[htbp]
    	\centering
    	\begin{tikzpicture}[scale=1]
        
        \draw[very thick, black!20!blue] (-4,0) -- (4,0);
    		
    		\draw[very thick, black!20!blue, -latex] (-4,0) -- (-2.5,0);
            \draw[very thick, black!20!blue, -latex] (-4,0) -- (0.2,0);
            \draw[very thick, black!20!blue, -latex] (-4,0) -- (3,0);

    		\fill (-1.5,0) circle (1.5pt);
            \fill (0,0) circle (1.5pt);
	      \fill (1.5,0) circle (1.5pt);

           \node[below] at (-1.5,0) {$-k_0$};
	     \node[below] at (0,-0.2) {$\text{0}$};
	     \node[below] at (1.5,0) {$k_0$};
         
           \node[red!70!black,above] at (-3,0.2) {$\Gamma^{(1)}_1$};
	     \node[red!70!black,above] at (0,0.2) {$\Gamma^{(1)}_2$};
          \node[red!70!black,above] at (3,0.2) {$\Gamma^{(1)}_3$};

          \node[right] at (4,0) {$\rre k$};
    		
    	\end{tikzpicture}
    	\caption{The jump contour $\Gamma^{(1)}$ in the complex $k$-plane.}
    	\label{fig_Gamma1}
    \end{figure}
    
Let $\ln_0(k)=\ln|k|+\ri \arg_0k$ and  $\ln_\pi(k)=\ln|k|+\ri \arg_\pi k$ with $\arg_0 k\in (0,2\pi)$ and $\arg_\pi k\in (-\pi,\pi)$. Present the following properties of the function \eqref{delta} mentioned above.
    \begin{lem}\label{lem_delta}
    	The function $\delta(\zeta,k)$ satisfies the following properties for $\zeta\in\mathcal{I}_2$:
    	\begin{enumerate}
     		\item  The functions 
            $\delta(\zeta,k)$ and $\delta^{-1}(\zeta,k)$ are analytic on $\mathbb{C}\setminus\left(-\infty,-k_0\right] $, and $\delta$ can be written as
			\begin{equation*}
				\delta(\zeta,k)={\rm{e}}^{\ri\nu\ln_\pi(k+k_0)}{\rm{e}}^{-\chi(\zeta,k)},\quad \zeta\in\mathcal{I}_2,
			\end{equation*}
			where 
			\begin{equation*}
				\begin{aligned}
					&\nu=-\frac{1}{2\pi}\ln(1-8k_0\left|r_1(-k_0) \right|^2 ),\qquad  \chi(\zeta,k)=\frac{1}{2\pi\ri}\int_{-\infty}^{-k_0}\ln_\pi(k-s){\rm{d}}\ln(1+8s\left|r_1(s)\right|^2 ).
				\end{aligned}
			\end{equation*}
    	\item For each $\zeta\in\mathcal{I}_2$, $\delta(k)$ is bounded in $\mathbb{C}\setminus\left(-\infty,-k_0\right] $ and satisfies the following symmetry property:
    	\begin{equation*}
    		\delta^{-1}(k)= \overline{\delta(\overline{k})},\qquad k\in \mathbb{C}\backslash \Gamma^{(1)}_1.
    	\end{equation*}
        \item As $k\to -k_0$ along the non-tangential direction of $(-\infty,-k_0)$, we have
        \begin{align*}
            &\left|\chi(\zeta,k)-\chi({\zeta,-k_0})\right|\leq C|k+k_0|\left(1+|\ln|k+k_0||\right),\\
            &\left|\partial_x\left(\chi(\zeta,k)-\chi({\zeta,-k_0})\right)\right|\leq \frac{C}{t}\left(1+|\ln|k+k_0||\right),
             \end{align*}
             and
     \begin{align*}
            &\left|\partial_x\chi(\zeta,-k_0)\right|=\frac{1}{2t} \left|\partial_\zeta\chi(\zeta,-k_0)+\partial_{-k_0}\chi(\zeta,-k_0)\right|\leq\frac{C}{t},\\
 &\partial_x\left(\delta(\zeta,k)^{\pm1}\right)=\frac{\pm\ri \nu}{2t(k+k_0)}\delta(\zeta,k)^{\pm1},
        \end{align*}
        where $C$ is independent of $\zeta$.
            
    	\end{enumerate}
    \end{lem}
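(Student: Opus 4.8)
The plan is to read off all three parts of Lemma~\ref{lem_delta} from the single Cauchy-type representation \eqref{vol_delta}, extracting the boundary behaviour at $k=-k_1$ by one integration by parts. Throughout, write $g(s):=\ln(1+8s|r_1(s)|^2)$; by the Assumption $g$ is real-valued on $(-\infty,-k_1]$, and since $r_1\in\mathcal S(\mathbb R)$ it is smooth with $g(s)\to0$ faster than any power as $s\to-\infty$.

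\emph{Part 1 (factorisation).} Starting from \eqref{vol_delta}, I would integrate by parts with $\frac{\rd s}{s-k}=\rd_s\ln_\pi(k-s)$. The boundary term at $-\infty$ vanishes because $g$ decays rapidly while $\ln_\pi(k-s)$ grows only logarithmically; the boundary term at $s=-k_1$ equals $\frac{1}{2\pi\ri}g(-k_1)\ln_\pi(k+k_1)=\ri\nu\ln_\pi(k+k_1)$, using $g(-k_1)=\ln(1-8k_1|r_1(-k_1)|^2)=-2\pi\nu$ and $1/\ri=-\ri$. The leftover integral is exactly $-\chi(\zeta,k)$, giving $\delta=\re^{\ri\nu\ln_\pi(k+k_1)}\re^{-\chi}$. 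Analyticity on $\mathbb C\setminus(-\infty,-k_1]$ is inherited from the Cauchy integral, equivalently from the analyticity of $\ln_\pi(k+k_1)$ and of $\chi$ off the ray.

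\emph{Part 2 (symmetry and boundedness).} The symmetry is immediate from \eqref{vol_delta}: conjugating the exponent and using that $g$ is real on the contour, that $\overline{1/(2\pi\ri)}=-1/(2\pi\ri)$, and $\overline{s-\bar k}=s-k$ turns the exponent into its negative, so $\overline{\delta(\bar k)}=\delta(\zeta,k)^{-1}$. For boundedness I would use the factored form: since $\nu\in\mathbb R$ (the Assumption forces $1-8k_1|r_1(-k_1)|^2>0$, making the logarithm real), one has $|\re^{\ri\nu\ln_\pi(k+k_1)}|=\re^{-\nu\arg_\pi(k+k_1)}\le\re^{\pi|\nu|}$, while the continuity estimate from Part 3 shows $\chi$ extends continuously to $-k_1$ and stays bounded near the ray, hence $|\re^{-\chi}|$ is bounded.

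\emph{Part 3 (estimates near $-k_1$).} For the modulus estimate I would write $\chi(\zeta,k)-\chi(\zeta,-k_1)=\frac{1}{2\pi\ri}\int_{-\infty}^{-k_1}[\ln_\pi(k-s)-\ln_\pi(-k_1-s)]\,\rd g(s)$ and split the contour at $s=-k_1-|k+k_1|$. On the near piece both logarithms are integrable against the bounded density $g'$, contributing $O(|k+k_1|(1+|\ln|k+k_1||))$; on the far piece the bracket is $O(|k+k_1|/|s+k_1|)$ — here the non-tangential approach is essential, since it guarantees $|k-s|\gtrsim|s+k_1|$ — and integrating against $|g'|$ yields the same order. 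I expect this splitting, together with its uniformity in $\zeta$, to be the main obstacle; the $\partial_x$ version of the difference estimate follows the same lines after differentiating through the endpoint. The clean derivative identity $\partial_x(\delta^{\pm1})=\frac{\pm\ri\nu}{2t(k+k_1)}\delta^{\pm1}$ is best obtained directly from $\log\delta=\frac{1}{2\pi\ri}\int_{-\infty}^{-k_1}\frac{g(s)}{s-k}\,\rd s$: with $k$ fixed the integrand carries no explicit $x$-dependence, so $\partial_x$ acts only through the endpoint $-k_1$, and with $\partial_x(-k_1)=-\tfrac1{2t}$ the resulting boundary term is precisely $\frac{\ri\nu}{2t(k+k_1)}$. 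Finally, $|\partial_x\chi(\zeta,-k_1)|\le C/t$ I would prove by the substitution $\sigma=s+k_1$, which fixes the logarithmic singularity at $\sigma=0$ and moves all $x$-dependence into $g'(\sigma-k_1)$; differentiating under the integral against the integrable weight $\ln(-\sigma)$ then produces a bounded factor multiplied by $\partial_x k_1=\tfrac1{2t}$.
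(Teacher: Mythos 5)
Your proposal is correct and follows essentially the same route as the paper: the paper's proof of this lemma is the single remark that everything "is derived from equation \eqref{vol_delta} and can be proved by direct estimation," and your integration by parts, conjugation argument, contour splitting near $-k_1$, and endpoint differentiation (with $\partial_x(-k_1)=-\tfrac{1}{2t}$) are exactly the direct estimations being alluded to. All the individual computations check out, including the consistency of the exact identity $\partial_x\delta^{\pm1}=\tfrac{\pm\ri\nu}{2t(k+k_1)}\delta^{\pm1}$ with the factored form.
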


    \begin{proof}
        The lemma is derived from equation \eqref{vol_delta}, and can be proved by direct estimation.
    \end{proof}

    \begin{remark}\label{remark_delta_-k}
        $\delta(-k)$ is analytic on $\mathbb{C}\setminus\left[k_0,\infty\right) $ and can be written as
			\begin{equation*}
				\delta(\zeta,-k)={\rm{e}}^{\ri\nu\ln_0(k-k_0)}{\rm{e}}^{-\chi(\zeta,-k)},
			\end{equation*}
            and $\chi(\zeta,-k)=-\frac{1}{2\pi\ri}\int^{\infty}_{k_0}\ln_0(k-s){\rm{d}}\ln(1-8s\left|r_1(-s)\right|^2 ).$
    \end{remark}

    Based on the function \eqref{delta}, we can construct the following transformation for the eigenfunction $M(x,t,k)$:
    \begin{equation}\label{trans_Delta}
        M^{(1)}(x,t,k)=M(x,t,k)\Delta(k),
    \end{equation}
    where the $3\times3$ matrix-valued function $\Delta(k)$ is defined by 
    \begin{equation}\label{Delta}
        \Delta(k)=\begin{pmatrix}
            \delta(k) & 0 & 0\\
            0 & \frac{1}{{\delta(k)}{\delta(-k)}} & 0\\
            0 & 0 & {\delta(-k)}\\
        \end{pmatrix}.
    \end{equation}
    According to the definition in equation \eqref{delta} and the boundedness indicated in Lemma \ref{lem_delta} of $\delta(\zeta,\pm k)$, we conclude that $\Delta^{\pm1}(\zeta,k)$ are uniformly bounded with respect to $\zeta\in\mathcal{I}_2$ and $k\in \mathbb{C}\setminus\left(\Gamma^{(1)}_1\cup \Gamma^{(1)}_3\right)$, and
    \begin{equation*}\label{Delta_bound}
        \Delta(\zeta,k)=I+\mathcal{O}\left(k^{-1}\right),\qquad k\to\infty.
    \end{equation*}
    Direct computation can also verify that $\Delta(\zeta,k)$ satisfies the symmetries 
    \begin{equation*}
        \Delta^\dagger(\zeta,\bar k)=\mathcal{A}(k)\Delta^{-1}(\zeta,k)\mathcal{A}^{-1}(k),\qquad \Delta(\zeta,k)=\mathcal{B}\Delta(\zeta,-k)\mathcal{B}.
    \end{equation*}

    Under above transformation \eqref{trans_Delta}, the jump matrix  
$v^{(1)}(x,t,k)$ corresponding to $M^{(1)}(x,t,k)$ can be divided into three parts (see Figure \ref{fig_Gamma1}), where the expressions for 
 $v^{(1)}(k)=\Delta_-^{-1}(k)v(x,t,k)\Delta_{+}(k)=v_j^{(1)}(k)$, $k\in \Gamma_j^{(1)}$, $j=1,2,3$, are as follows:
        \begin{align}
           & v^{(1)}_1(x,t,k)
            =\begin{pmatrix}
					1 & -\frac{\hat{r}_1(k)\re^{t\theta_{12}(\zeta,k)}}{\delta_{-}(k)^2\delta(-k)} & 0\\
					0 & 1 & 0\\
					-\frac{\delta_{-}(k)}{\delta(-k)}r_2(-k)\re^{-t\theta_{13}(\zeta,k)} & \frac{1}{\delta_{-}(k)\delta(-k)^2}\frac{\alpha^*(k)\re^{-t\theta_{23}(\zeta,k)} }{1+8k\left|r_1(k) \right|^2} & 1\\
				\end{pmatrix}\begin{pmatrix}
					1 & 0 & \frac{\delta(-k)}{\delta_{+}(k)}r_2^*(-k)\re^{t\theta_{13}(\zeta,k)}\\
					-\delta_{+}(k)^2{\delta(-k)}8k\hat{r}_1^*(k)\re^{-t\theta_{12}(\zeta,k)} & 1 & -\frac{\delta_{+}(k){\delta(-k)^2}8k\alpha(k)\re^{t\theta_{23}(\zeta,k)}}{1+8k\left|r_1(k) \right|^2
					}\\
					0 & 0 & 1\\
				\end{pmatrix},\label{v_1_1}\\  
                &v^{(1)}_2(x,t,k)
                =\begin{pmatrix}
					1 & 0 & 0\\
					-{\delta(k)^2\delta(-k)}8kr_1^*(k)\re^{-t\theta_{12}(\zeta,k)} & 1 & 0\\
					-\frac{\delta(k)r_2^*(k)\re^{-t\theta_{13}(\zeta,k)}}{\delta(-k)} & \frac{r_1(-k)\re^{-t\theta_{23}(\zeta,k)}}{\delta(k)\delta(-k)^2} & 1\\
				\end{pmatrix}\begin{pmatrix}
					1 & -\frac{r_1(k)\re^{t\theta_{12}(\zeta,k)}}{\delta(k)^2\delta(-k)} & \frac{\delta(-k)r_2(k)\re^{t\theta_{13}(\zeta,k)}}{\delta(k)}\\
					0 & 1 & -{\delta(k)}{\delta(-k)^2}8kr_1^*(-k)\re^{t\theta_{23}(\zeta,k)}\\
					0 & 0 & 1\\
				\end{pmatrix}, \label{v_1_2}\\
               & v^{(1)}_3(x,t,k)
                =\begin{pmatrix}
					1 & 0 & 0\\
					-{\delta(k)^2}{\delta_{-}(-k)}8kr_1^*(k)\re^{-t\theta_{12}(\zeta,k)} & 1 & -{\delta(k)}{\delta_{-}(-k)^2}8k\hat{r}_1^*(-k)\re^{t\theta_{23}(\zeta,k)}\\
					-\frac{\delta(k)r_2^*(k)\re^{-t\theta_{13}(\zeta,k)}}{\delta_{-}(-k)} & 0 & 1\\
				\end{pmatrix}\begin{pmatrix}
					1 & -\frac{r_1(k)\re^{t\theta_{12}(\zeta,k)}}{\delta(k)^2\delta_{+}(-k)} & \frac{\delta_{+}(-k)r_2(k)\re^{t\theta_{13}(\zeta,k)}}{\delta(k)}\\
					0 & 1 & 0\\
					0 & \frac{\hat{r}_1(-k)\re^{-t\theta_{23}(\zeta,k)}}{\delta(k)\delta_{+}(-k)^2} & 1\\
				\end{pmatrix}.\label{v_1_3}
            \end{align}
  
 \subsubsection{Second transformation}\label{subsec_right_trans2}
  This subsection employs transformations based on the dispersion relations $\theta_{ij}$ ($1\leq i<j\leq3$) displayed in Figure \ref{fig_theta_sign1}. Using the product structure of the jump matrices $v^{(1)}_j$ ($j=1,2,3$) from equations \eqref{v_1_1}, \eqref{v_1_2} and \eqref{v_1_3}, we introduce the function $T^{(1)}$ associated with $\theta_{13}$. Specifically, for each region $D_j^{(1)}$ ($j=1,2,3,4$) shown in Figure \ref{fig_RegionD_1}, we define $T^{(1)}(\zeta,k)=T^{(1)}_j(\zeta,k)$ as follows:

   \begin{figure}[htbp]
    	\centering
    	\begin{tikzpicture}[scale=0.6]

        \fill[fill=white!90!blue] (0,-5) -- (-5,-5) -- (-5,5) -- (0,5) -- cycle;
        \fill[fill=white!90!purple] (0,-5) -- (5,-5) -- (5,5) -- (0,5) -- cycle;

        \draw[very thick, black!20!blue] (-5,0) -- (5,0);
        \draw[very thick, black!20!blue] (0,-5) -- (0,5);

    		\fill (-2,0) circle (1.5pt);
            \fill (0,0) circle (1.5pt);
	      \fill (2,0) circle (1.5pt);

           \node[below] at (-2,-0.1) {$-k_0$};
	     \node[below] at (0.2,0) {$\text{0}$};
	     \node[below] at (2,-0.1) {$k_0$};

         \node at (2,1.5) {$D^{(1)}_1$};
         \node at (-2,1.5) {$D^{(1)}_2$};
         \node at (-2,-1.8) {$D^{(1)}_3$};
         \node at (2,-1.8) {$D^{(1)}_4$};

          \node[right] at (5,0) {$\rre k$};

    	\end{tikzpicture}
    	\caption{The definition of region $D^{(1)}$ involved in transformation $T^{(1)}$ in the complex $k$-plane.}
    	\label{fig_RegionD_1}
    \end{figure}

    \begin{align}\label{trans_T1}
        &T^{(1)}_1(\zeta,k)=\setlength{\arraycolsep}{3pt}\begin{pmatrix}
            1 & 0 & -\frac{\delta(-k)}{\delta(k)}r_{2,a}(k)\re^{t\theta_{13}(\zeta,k)}\\
            0 & 1 & 0\\
            0 & 0 & 1\\
        \end{pmatrix},\qquad
        T^{(1)}_2(\zeta,k)=\setlength{\arraycolsep}{3pt}\begin{pmatrix}
            1 & 0 & -\frac{\delta(-k)}{\delta(k)}{\small \left(r_{2,a}(k)-8kr_{1,a}(k)r_{1,a}^*(-k)\right)}\re^{t\theta_{13}(\zeta,k)}\\
            0 & 1 & 0\\
            0 & 0 & 1\\
        \end{pmatrix},\\
        &T^{(1)}_3(\zeta,k)=\setlength{\arraycolsep}{3pt}\begin{pmatrix}
            1 & 0 & 0\\
            0 & 1 & 0\\
            -\frac{\delta(k)}{\delta(-k)}{\small \left(r_{2,a}^*(k)-8kr_{1,a}^*(k)r_{1,a}(-k)\right)}\re^{-t\theta_{13}(\zeta,k)} & 0 & 1\\
        \end{pmatrix},\qquad
        T^{(1)}_4(\zeta,k)=\setlength{\arraycolsep}{3pt}\begin{pmatrix}
            1 & 0 & 0\\
            0 & 1 & 0\\
            -\frac{\delta(k)}{\delta(-k)}r_{2,a}^*(k)\re^{-t\theta_{13}(\zeta,k)} & 0 & 1\\
        \end{pmatrix}.\nonumber
    \end{align}
    
    For the other two functions $\theta_{12}$ and $\theta_{23}$, the product of the transformation matrices corresponding to $\rre k>0$ and $\rre k<0$ cannot be interchanged when dealing with them. Therefore, we can only break down the treatment of these two functions into two parts, first considering the region definitions in Figure \ref{fig_RegionD_1} and the function $T^{(2)}(\zeta,k)=T^{(2)}_j(\zeta,k)$ for $k\in D^{(1)}_j $, $j=1,2,\cdots,4$, are as follows:

     \begin{equation}\label{trans_T2}
     	\begin{aligned}
     		&T^{(2)}_1(\zeta,k)=\begin{pmatrix}
     			1 & \frac{1}{\delta(k)^2\delta(-k)}r_{1,a}(k)\re^{t\theta_{12}(\zeta,k)} & 0\\
     			0 & 1 & 0\\
     			0 & 0 & 1\\
     		\end{pmatrix},\quad &&T^{(2)}_2(\zeta,k)=\begin{pmatrix}
     			1 & 0 & 0\\
     			0 & 1 & {\delta(k)}{\delta(-k)^2}8kr_{1,a}^*(-k)\re^{t\theta_{23}(\zeta,k)}\\
     			0 & 0 & 1\\
     		\end{pmatrix},\nonumber\\
     		&T^{(2)}_3(\zeta,k)=\begin{pmatrix}
     			1 & 0 & 0\\
     			0 & 1 & 0\\
     			0 & \frac{1}{\delta(k)\delta(-k)^2}r_{1,a}(-k)\re^{-t\theta_{23}(\zeta,k)}  & 1\\
     		\end{pmatrix},\quad &&T^{(2)}_4(\zeta,k)=\begin{pmatrix}
     			1 & 0 & 0\\
     			-{\delta(k)^2}{\delta(-k)}8kr_{1,a}^*(k)\re^{-t\theta_{12}(\zeta,k)} & 1 & 0\\
     			0 & 0 & 1\\
     		\end{pmatrix}.
     	\end{aligned}
     \end{equation}

     The final transformation for the jump on $\mathbb{R}$ is defined by $T^{(3)}(\zeta,k)=T^{(3)}_j(\zeta,k)$ for $k\in D_j^{(2)}$, $j=1,2,\cdots,8,$ which is applied in several regions of Figure \ref{fig_RegionD_3}:

    \begin{equation}\label{trans_T3}
    	\begin{aligned}
    		&T^{(3)}_1(\zeta,k)=\begin{pmatrix}
    			1 & 0 & 0\\
    			0 & 1 & 0\\
    			0 & -\frac{1}{\delta(k)\delta(-k)^2}\hat{r}_{1,a}(-k)\re^{-t\theta_{23}(\zeta,k)} & 1\\
    		\end{pmatrix},\quad &&T^{(3)}_2(\zeta,k)=\begin{pmatrix}
    			1 & 0 & 0\\
    			0 & 1 & {\delta(k)}{\delta(-k)^2}8kr_{1,a}^*(-k)\re^{t\theta_{23}(\zeta,k)}\\
    			0 & 0 & 1\\
    		\end{pmatrix},\\
    		&T^{(3)}_3(\zeta,k)=\begin{pmatrix}
    			1 & 0 & 0\\
    			0 & 1 & 0\\
    			0 & \frac{1}{\delta(k)\delta(-k)^2}r_{1,a}(-k)\re^{-t\theta_{23}(\zeta,k)}  & 1\\
    		\end{pmatrix},\quad &&T^{(3)}_4(\zeta,k)=\begin{pmatrix}
    			1 & 0 & 0\\
    			0 & 1 & -{\delta(k)}{\delta(-k)^2}8k\hat{r}_{1,a}^*(-k)\re^{t\theta_{23}(\zeta,k)}\\
    			0 & 0 & 1\\
    		\end{pmatrix},\\
    		&T^{(3)}_5(\zeta,k)=\begin{pmatrix}
    			1 & \frac{1}{\delta(k)^2\delta(-k)}r_{1,a}(k)\re^{t\theta_{12}(\zeta,k)} & 0\\
    			0 & 1 & 0\\
    			0 & 0 & 1\\
    		\end{pmatrix},\quad &&T^{(3)}_6(\zeta,k)=\begin{pmatrix}
    			1 & 0 & 0\\
    			{\delta(k)^2}{\delta(-k)}8k\hat{r}_{1,a}^*(k)\re^{-t\theta_{12}(\zeta,k)} & 1 & 0\\
    			0 & 0 & 1\\
    		\end{pmatrix},\\
    		&T^{(3)}_7(\zeta,k)=\begin{pmatrix}
    			1 & -\frac{1}{\delta(k)^2\delta(-k)}\hat{r}_{1,a}(k)\re^{t\theta_{12}(\zeta,k)} & 0\\
    			0 & 1 & 0\\
    			0 & 0  & 1\\
    		\end{pmatrix},\quad &&T^{(3)}_8(\zeta,k)=\begin{pmatrix}
    			1 & 0 & 0\\
    			-{\delta(k)^2}{\delta(-k)}8kr_{1,a}^*(k)\re^{-t\theta_{12}(\zeta,k)} & 1 & 0\\
    			0 & 0 & 1\\
    		\end{pmatrix},\\
    		&T^{(3)}_j(\zeta,k)=I, \qquad  k\in \mathbb{C}\setminus\bigcup_{j=1}^8D^{(2)}_j.\nonumber
    	\end{aligned}
    \end{equation}

     \begin{figure}[htbp]
     	\centering
     	\begin{tikzpicture}[scale=1]
     		
     		\fill[fill=white!90!blue] (0,2) -- (2,0) -- (0,-2) -- cycle;
     		\fill[fill=white!90!blue] (2,0) -- (2+2/1.414,2/1.414) -- (4,0) --  (2+2/1.414,-2/1.414)  -- cycle;
     		\fill[fill=white!90!purple] (0,2) -- (-2,0) -- (0,-2) -- cycle;
     		\fill[fill=white!90!purple] (-2,0) -- (-2-2/1.414,2/1.414) -- (-4,0) --  (-2-2/1.414,-2/1.414)  -- cycle;

     		\draw[very thick, black!20!blue] (-4,0) -- (4,0);
     		\draw[very thick, black!20!blue] (0,-2) -- (0,2);

     		\draw[very thick, black!20!blue] (2+2/1.414,2/1.414) -- (-4/1.414+2,-4/1.414);
     		\draw[very thick, black!20!blue] (2+2/1.414,-2/1.414) -- (-4/1.414+2,4/1.414);
     		
     		\draw[very thick, black!20!blue] (-2-2/1.414,2/1.414) -- (4/1.414-2,-4/1.414);
     		\draw[very thick, black!20!blue] (-2-2/1.414,-2/1.414) -- (4/1.414-2,4/1.414);

     		\fill (-2,0) circle (1.5pt);
     		\fill (0,0) circle (1.5pt);
     		\fill (2,0) circle (1.5pt);
     		
     		\node[below] at (-2,-0.1) {$-k_0$};
     		\node[below] at (0.15,0) {$\text{0}$};
     		\node[below] at (2,-0.1) {$k_0$};
     		
     		\node at (3.2,0.5) {$D^{(2)}_1$};
     		\node at (0.7,0.5) {$D^{(2)}_2$};
     		\node at (3.2,-0.5) {$D^{(2)}_4$};
     		\node at (0.7,-0.5) {$D^{(2)}_3$};
     		
     		\node at (-3.2,0.5) {$D^{(2)}_6$};
     		\node at (-0.7,0.5) {$D^{(2)}_5$};
     		\node at (-3.2,-0.5) {$D^{(2)}_7$};
     		\node at (-0.7,-0.5) {$D^{(2)}_8$};

     		\node[right] at (4,0) {$\rre k$};

     	\end{tikzpicture}
     	\caption{The definition of region $D^{(2)}$ involved in transformation $T^{(3)}$ in the complex $k$-plane.}
     	\label{fig_RegionD_3}
     \end{figure}

    According to three transformations \eqref{trans_T1},  \eqref{trans_T2} and \eqref{trans_T3} defined above, we construct 
    \begin{equation*}\label{trans_T}
        M^{(2)}(x,t,k)=M^{(1)}(x,t,k)T(x,t,k),
    \end{equation*}
    to resolve the jump on the real axis $\mathbb{R}$, 
     where 
     \begin{equation}\label{T}
         T(x,t,k)=T^{(1)}(x,t,k)T^{(2)}(x,t,k)T^{(3)}(x,t,k).
     \end{equation}
     And $M_+^{(2)}(\zeta,k)=M_-^{(2)}(\zeta,k)v^{(2)}(\zeta,k)$ for $k\in \Gamma^{(2)}$,  as given in Figure \ref{fig_Gamma2}.
     \begin{lem}\label{lem_T_bound}
        $T(x,t,k)$ is uniformly bounded for $k\in \mathbb{C}\setminus\Gamma^{(2)}$, $\zeta\in\mathcal{I}_2$, and $t>0$. More importantly,
        \begin{equation*}\label{T_k_infty}
            T(k)=I+\mathcal{O}(k^{-1}),\quad k\to\infty.
        \end{equation*}
     \end{lem}
     \begin{proof}
         For the three sub-transformations $T^{(j)}(k)$, $j=1,2,3$ belonging to $T(k)$, we only prove one of the regions, and the proofs for the other regions are similar. For the transformation $T^{(1)}(\zeta,k)$, where $k\in D_2^{(1)}$, we have 
         \begin{equation*}
             \rre \theta_{13}(\zeta,k)<0,
         \end{equation*}
         and according to Lemmas \ref{lem_r_decomposition}, \ref{lem_delta} and the relationship $\theta_{12}(k)+\theta_{23}(k)=\theta_{13}(k)$, the following holds
         \begin{equation*}
             \left|-\left(T^{(1)}_2(\zeta,k)\right)_{13}\right|=\left|\frac{\delta(-k)}{\delta(k)}\left(r_{2,a}(k)-8kr_{1,a}(k)r_{1,a}^*(-k)\right)\re^{t\theta_{13}(\zeta,k)}\right|\leq \left(\frac{C}{1+|k|}+\frac{C|k|}{1+|k|^2}\right)\re^{-\frac{3}{4}t|\rre\theta_{13}|}.
         \end{equation*}
         This proves the claim for $k\in D_2^{(1)}$. 
    For the function $T^{(2)}_1(k)$, $k\in D_1^{(2)}$, calculate similarly 
     \begin{equation*}
             \left|\left(T^{(2)}_1(\zeta,k)\right)_{12}\right|=\left|\frac{1}{\delta(k)^2\delta(-k)}r_{1,a}(k)\re^{t\theta_{12}(\zeta,k)}\right|\leq \frac{C}{1+|k|}\re^{-\frac{3}{4}t|\rre\theta_{12}|}.
         \end{equation*}
         Finally, for $k\in D_1^{(2)}(k)$,
         \begin{equation*}
             \left|\left(T^{(3)}_1(\zeta,k)\right)_{12}\right|=\left|-\frac{1}{\delta(k)\delta(-k)^2}\hat{r}_{1,a}(-k)\re^{-t\theta_{23}(\zeta,k)}\right|\leq \frac{C}{1+|k|}\re^{-\frac{3}{4}t|\rre\theta_{23}|}.
         \end{equation*}
     \end{proof}
     
      \begin{figure}[htbp]
     	\centering
     	\begin{tikzpicture}[scale=1]
     		
     		\draw[very thick, black!40!green] (-4,0) -- (4,0);
     		\draw[very thick, black!40!green] (0,2) -- (0,3.5);
     		\draw[very thick, black!40!green] (0,-2) -- (0,-3.5);
     		
     		\draw[very thick, dashed, black!40!green] (0,-2) -- (0,2);
     		
     		\draw[very thick, black!40!green, -latex] (-4,0) -- (-2.8,0);
     		\draw[very thick, black!40!green, -latex] (-4,0) -- (-0.8,0);
     		\draw[very thick, black!40!green, -latex] (-4,0) -- (1.2,0);
     		\draw[very thick, black!40!green, -latex] (-4,0) -- (3.2,0);
     		
     		\draw[very thick, black!40!green, -latex] (0,-2) -- (0,-3);
     		\draw[very thick, black!40!green, -latex] (0,2) -- (0,3);
     		
     		\draw[very thick, black!20!blue] (2+2/1.414,2/1.414) -- (0,-2);
     		\draw[very thick, black!20!blue] (2+2/1.414,-2/1.414) -- (0,2);
     		
     		\draw[very thick, blue!20!purple] (-2-2/1.414,2/1.414) -- (0,-2);
     		\draw[very thick, blue!20!purple] (-2-2/1.414,-2/1.414) -- (0,2);
     		
     		\draw[very thick, black!20!blue,-latex] (2,0) --(2+1/1.2,1/1.2);
     		\draw[very thick, black!20!blue,-latex] (2,0) -- (-1/1.2+2,-1/1.2);
     		\draw[very thick, black!20!blue,-latex] (2,0) -- (-1/1.2+2,1/1.2);
     		\draw[very thick, black!20!blue,-latex] (2,0) -- (2+1/1.2,-1/1.2);
     		
     		\draw[very thick, blue!20!purple,-latex] (-2,0) --(-2+1/1.2,1/1.2);
     		\draw[very thick, blue!20!purple,-latex] (-2,0) -- (-1/1.2-2,-1/1.2);
     		\draw[very thick, blue!20!purple,-latex] (-2,0) -- (-1/1.2-2,1/1.2);
     		\draw[very thick, blue!20!purple,-latex] (-2,0) -- (-2+1/1.2,-1/1.2);
     		
     		\fill (-2,0) circle (1.5pt);
     		\fill (0,0) circle (1.5pt);
     		\fill (2,0) circle (1.5pt);
     		
     		\node[below] at (-2,-0.1) {$-k_0$};
     		\node[below] at (0.3,0) {$\text{0}$};
     		\node[below] at (2,-0.1) {$k_0$};
     		
     		\node[red!70!black,left] at (1.9+1/1.2,1/1.2) {$1$};
     		\node[red!70!black,right] at (2.1-1/1.2,1/1.2) {$2$};
     		\node[red!70!black,right] at (2.1-1/1.2,-1/1.2) {$3$};
     		\node[red!70!black,left] at (1.9+1/1.2,-1/1.2) {$4$};
     		
     		\node[red!70!black,left] at (-2.1+1/1.2,1/1.2) {$5$};
     		\node[red!70!black,right] at (-1.9-1/1.2,1/1.2) {$6$};
     		\node[red!70!black,right] at (-1.9-1/1.2,-1/1.2) {$7$};
     		\node[red!70!black,left] at (-2.1+1/1.2,-1/1.2) {$8$};
     		
     		\node[red!70!black,above] at (-3,0.1) {$9$};
     		\node[red!70!black,above] at (-1,0.1) {$10$};
     		\node[red!70!black,above] at (1,0.1) {$11$};
     		\node[red!70!black,above] at (3,0.1) {$12$};
     		
     		\node[red!70!black,right] at (0,2.8) {$13$};
     		\node[red!70!black,right] at (0,-2.8) {$14$};
     		\node[red!70!black,right] at (0,0.7) {$15$};

     		\node[right] at (4,0) {$\rre k$};
     		
     	\end{tikzpicture}
     	\caption{The jump contour $\Gamma^{(2)}$ in the complex $k$-plane.}
     	\label{fig_Gamma2}
     \end{figure}

    The eigenfunction $ M^{(2)}(x,t,k)$ is analytic on $\mathbb{C}\setminus\Gamma^{(2)}$ (see Figure \ref{fig_Gamma2}), and the jump matrix $v^{(2)}_j(x,t,k)$ for $k\in \Gamma_j^{(2)}$, $j=1,2,\cdots,8,$ are

     \begin{align*}
         &v^{(2)}_1(x,t,k)=\begin{pmatrix}
            1 & 0 & 0\\
            0 & 1 & 0\\
            0 & \frac{1}{\delta(k)\delta(-k)^2}\hat{r}_{1,a}(-k)\re^{-t\theta_{23}(\zeta,k)} & 1\\
        \end{pmatrix},\quad &&v^{(2)}_2(x,t,k)=\begin{pmatrix}
            1 & 0 & 0\\
            0 & 1 & {\delta(k)}{\delta(-k)^2}8kr_{1,a}^*(-k)\re^{t\theta_{23}(\zeta,k)}\\
            0 & 0 & 1\\
        \end{pmatrix},\\
        &v^{(2)}_3(x,t,k)=\begin{pmatrix}
            1 & 0 & 0\\
            0 & 1 & 0\\
            0 & -\frac{1}{\delta(k)\delta(-k)^2}r_{1,a}(-k)\re^{-t\theta_{23}(\zeta,k)}  & 1\\
        \end{pmatrix},\quad &&v^{(2)}_4(x,t,k)=\begin{pmatrix}
            1 & 0 & 0\\
            0 & 1 & -{\delta(k)}{\delta(-k)^2}8k\hat{r}_{1,a}^*(-k)\re^{t\theta_{23}(\zeta,k)}\\
            0 & 0 & 1\\
        \end{pmatrix},\\
        &v^{(2)}_5(x,t,k)=\begin{pmatrix}
            1 & -\frac{1}{\delta(k)^2\delta(-k)}r_{1,a}(k)\re^{t\theta_{12}(\zeta,k)} & 0\\
            0 & 1 & 0\\
            0 & 0 & 1\\
        \end{pmatrix},\quad &&v^{(2)}_6(x,t,k)=\begin{pmatrix}
            1 & 0 & 0\\
            {\delta(k)^2}{\delta(-k)}8k\hat{r}_{1,a}^*(k)\re^{-t\theta_{12}(\zeta,k)} & 1 & 0\\
            0 & 0 & 1\\
        \end{pmatrix},\\
        &v^{(2)}_7(x,t,k)=\begin{pmatrix}
            1 & \frac{1}{\delta(k)^2\delta(-k)}\hat{r}_{1,a}(k)\re^{t\theta_{12}(\zeta,k)} & 0\\
            0 & 1 & 0\\
            0 & 0  & 1\\
        \end{pmatrix},\quad &&v^{(2)}_8(x,t,k)=\begin{pmatrix}
            1 & 0 & 0\\
            -{\delta(k)^2}{\delta(-k)}8kr_{1,a}^*(k)\re^{-t\theta_{12}(\zeta,k)} & 1 & 0\\
            0 & 0 & 1\\
        \end{pmatrix}.
     \end{align*}

     \begin{remark}\label{remark_v2_9-15}
         In the jump contour depicted in Figure \ref{fig_Gamma2}, it can be observed that we have employed three different colors. The red lines intersect at the critical point $-k_0$, while the blue lines intersect at the critical point $k_0$. The corresponding jump matrices for these jump lines have already been provided above. 
         
         Additionally, due to the transformation $T^{(1)}(\zeta,k)$ in \eqref{trans_T1}, the eigenfunction $M^{(2)}(x,t,k)$ also experiences jumps along the purely imaginary axis. After calculation, the corresponding jump matrix $v^{(15)}(k)=I$ for $k\in \Gamma^{(2)}_{15}$, which corresponds to the part of the green dashed line in Figure \ref{fig_Gamma2}. The two lines above and below it correspond to $\Gamma^{(2)}_{13}$ and $\Gamma^{(2)}_{14}$, respectively, and their jump matrices are given as follows:
          \begin{align*}
         &v^{(2)}_{13}(x,t,k)=\begin{pmatrix}
            1 & -\frac{1}{\delta(k)^2\delta(-k)}r_{1,a}(k)\re^{t\theta_{12}(\zeta,k)} & 0\\
            0 & 1 & {\delta(k)}{\delta(-k)^2}8kr_{1,a}^*(-k)\re^{t\theta_{23}(\zeta,k)}\\
            0 & 0 & 1\\
        \end{pmatrix},\quad &&k\in \Gamma^{(2)}_{13},\\
        &v^{(2)}_{14}(x,t,k)=\begin{pmatrix}
            1 & 0 & 0\\
            -{\delta(k)^2}{\delta(-k)}8kr_{1,a}^*(k)\re^{-t\theta_{12}(\zeta,k)} & 1 & 0\\
            0 & -\frac{1}{\delta(k)\delta(-k)^2}r_{1,a}(-k)\re^{-t\theta_{23}(\zeta,k)} & 1\\
        \end{pmatrix},\quad &&k\in \Gamma^{(2)}_{14}.
     \end{align*}
         It is evident that $\rre \theta_{12}(k)<0$, $\rre \theta_{23}(k)<0$ for $k\in \Gamma^{(2)}_{13}$, and $\rre \theta_{12}(k)>0$, $\rre \theta_{23}(k)>0$ for $k\in \Gamma^{(2)}_{14}$, which means that the two jump matrices decay exponentially to the identity matrix.

          Finally, the initial jump contour $\mathbb{R}$ has been resolved after this transformation $T(\zeta,k)$. Since the forms of $v^{(2)}_j(k)$, $k\in \Gamma^{(2)}_j$, $j=9,10,11,12$, are too complex to be presented here, they can be obtained through direct calculation:
          \begin{align*}
         &v^{(2)}_9(x,t,k)=\left(T_3^{(1)}(x,t,k)T_3^{(2)}(x,t,k)T_3^{(3)}(x,t,k)\right)^{-1}v^{(1)}_1(x,t,k)T_2^{(1)}(x,t,k)T_2^{(2)}(x,t,k)T_2^{(3)}(x,t,k),\\
        &v^{(2)}_{10}(x,t,k)=\left(T_3^{(1)}(x,t,k)T_3^{(2)}(x,t,k)T_4^{(3)}(x,t,k)\right)^{-1}v^{(1)}_2(x,t,k)T_2^{(1)}(x,t,k)T_2^{(2)}(x,t,k)T_4^{(3)}(x,t,k),\\
        &v^{(2)}_{11}(x,t,k)=\left(T_4^{(1)}(x,t,k)T_3^{(2)}(x,t,k)T_4^{(3)}(x,t,k)\right)^{-1}v^{(1)}_2(x,t,k)T_1^{(1)}(x,t,k)T_2^{(2)}(x,t,k)T_1^{(3)}(x,t,k),\\
        &v^{(2)}_{12}(x,t,k)=\left(T_4^{(1)}(x,t,k)T_4^{(2)}(x,t,k)T_4^{(3)}(x,t,k)\right)^{-1}v^{(1)}_3(x,t,k)T_1^{(1)}(x,t,k)T_1^{(2)}(x,t,k)T_1^{(3)}(x,t,k).
     \end{align*}
          It can be shown that every non-zero term in $v^{(2)}_j-I$ can be fully controlled by $r_{1,r}(k)$, $r_{2,r}(k)$ and $\hat r_{1,r}(k)$. 
     \end{remark}

    \begin{lem}\label{lem_v2_error}
    As $t\to\infty$, the jump matrix $v^{(2)}(x,t,k)$ converges to the identity matrix $I$  and $\partial_xv^{(2)}(k)$ converges to the zero matrix $O$ uniformly for $\zeta\in\mathcal{I}_2$ and $k\in \Gamma^{(2)}$ except near the two critical points $\pm k_0$. More importantly, the jump matrices $v^{(2)}_j(k)$, $j=9,10,\cdots,14$, satisfy:
       \begin{align}
            &\left\|(1+|\cdot|)\partial_x^l\left(v^{(2)}_j(k)-I\right)\right\|_{(L^1\cap L^\infty)\left(\Gamma^{(2)}_j\right)}\leq Ct^{-N}, \quad  &&j=9,10,11,12,\label{Error_9-12}\\
            &\left\|(1+|\cdot|)\partial_x^l\left(v^{(2)}_j(k)-I\right)\right\|_{(L^1\cap L^\infty)\left(\Gamma^{(2)}_j\right)}\leq C\re^{-ct}, \quad &&j=13,14.\label{Error_1314}
        \end{align}
    \end{lem}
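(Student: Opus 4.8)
The plan is to partition $\Gamma^{(2)}$ into the three families of contour pieces that appear in Figure \ref{fig_Gamma2} and to estimate each separately: the eight rays $\Gamma^{(2)}_1,\dots,\Gamma^{(2)}_8$ issuing from the critical points $\pm k_1$ (which carry the analytic reflection data $r_{1,a},\hat r_{1,a},r_{2,a}$), the four real segments $\Gamma^{(2)}_9,\dots,\Gamma^{(2)}_{12}\subset\mathbb{R}$ (which, after cancellation, carry only the remainders $r_{1,r},\hat r_{1,r},r_{2,r}$), and the two imaginary-axis rays $\Gamma^{(2)}_{13},\Gamma^{(2)}_{14}$. Throughout I would use that the entries of $\Delta(k)$ are uniformly bounded with $\delta^{-1}(k)=\overline{\delta(\bar k)}$ (Lemma \ref{lem_delta}), so that every $\delta$-factor appearing in the jump matrices is harmless, and I would repeatedly invoke the identity $\theta_{12}+\theta_{23}=\theta_{13}$ together with the signature tables of Figure \ref{fig_theta_sign1}.

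For the rays $\Gamma^{(2)}_1,\dots,\Gamma^{(2)}_8$ (the assertion of uniform convergence away from $\pm k_1$), each jump matrix differs from $I$ in a single off-diagonal entry that is a product of a bounded $\delta$-factor, an analytic reflection coefficient, and $\re^{\pm t\theta_{ij}}$. By the orientation of the ray and Figure \ref{fig_theta_sign1} the relevant exponent satisfies $\rre(\pm t\theta_{ij})=-t|\rre\theta_{ij}|$, while Lemma \ref{lem_r_decomposition} bounds the analytic coefficient by $\frac{C}{1+|k|}\re^{\frac{t}{4}|\rre\theta_{ij}|}$. Multiplying these yields the net factor $\frac{C}{1+|k|}\re^{-\frac{3t}{4}|\rre\theta_{ij}|}$, exactly the mechanism already used in the proof of Lemma \ref{lem_T_bound}. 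Since $|\rre\theta_{ij}|$ is bounded below by a positive constant on any part of the ray bounded away from $\pm k_1$, this gives $\mathcal{O}(\re^{-ct})$ convergence there; the same computation with the $l=1$ estimates of Lemma \ref{lem_r_decomposition} controls $\partial_x v^{(2)}\to O$, and uniformity in $\zeta$ is automatic because all constants in Lemmas \ref{lem_r_decomposition} and \ref{lem_delta} are $\zeta$-independent.

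For the real segments $\Gamma^{(2)}_9,\dots,\Gamma^{(2)}_{12}$ I would establish \eqref{Error_9-12}. Here $k\in\mathbb{R}$, so every $\theta_{ij}$ is purely imaginary, $|\re^{\pm t\theta_{ij}}|=1$, and the $\delta$-factors are bounded; consequently the only source of smallness is the reflection data itself. As recorded in the remark preceding Figure \ref{fig_Gamma2}, the products of $T^{(1)},T^{(2)},T^{(3)}$ and the original $v^{(1)}_j$ that define $v^{(2)}_9,\dots,v^{(2)}_{12}$ arrange so that the analytic contributions cancel and each nonzero entry of $v^{(2)}_j-I$ is a bounded combination of $r_{1,r},\hat r_{1,r},r_{2,r}$. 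Feeding the $\mathcal{O}(t^{-3/2})$ estimates on $\|(1+|\cdot|)\partial_x^l r_{j,r}\|_{(L^1\cap L^\infty)}$ from Lemma \ref{lem_r_decomposition} into these expressions then produces \eqref{Error_9-12}; the bulk of the work is the algebraic verification that the surviving terms are exactly the remainders.

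Finally, for $\Gamma^{(2)}_{13},\Gamma^{(2)}_{14}$ I would prove \eqref{Error_1314}. On the imaginary axis $\theta_{13}=4\ri k(\zeta-1)$ is real, negative on $\Gamma^{(2)}_{13}$ and positive on $\Gamma^{(2)}_{14}$ (cf.\ the remark), so the factor $\re^{t\theta_{13}}$ (respectively $\re^{-t\theta_{13}}$) in \eqref{v_1314} decays; combining with the analytic bound on $r_{1,a}(k)r_{1,a}^*(-k)$ absorbs the $\re^{\frac{t}{4}|\rre\theta_{13}|}$ growth and leaves $\re^{-\frac{3t}{4}|\rre\theta_{13}|}\le C\re^{-ct}$, since $|\rre\theta_{13}|$ is bounded below on rays that stay away from both the origin and $\pm k_1$. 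I expect the main obstacle to be securing the uniformity in $\zeta$ as $\zeta\to1^+$: there the two critical points $\pm k_1=\pm(\zeta-1)/2$ coalesce at the origin and the coercivity of the $\theta_{ij}$ degenerates, so verifying that the constants $c,C$, and in particular the rate $t^{-3/2}$ of \eqref{Error_9-12}, survive the merging requires a careful, $\zeta$-uniform choice of the opening angles of the contour and of the decomposition in Lemma \ref{lem_r_decomposition}.
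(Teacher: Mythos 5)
Your proposal follows essentially the same route as the paper: split $\Gamma^{(2)}$ into the rays $\Gamma^{(2)}_1,\dots,\Gamma^{(2)}_8$ (handled by the $\re^{-\frac{3t}{4}|\rre\theta_{ij}|}$ mechanism of Lemma \ref{lem_T_bound}), the real segments $\Gamma^{(2)}_9,\dots,\Gamma^{(2)}_{12}$ (where the analytic parts cancel and only the $\mathcal{O}(t^{-3/2})$ remainders $r_{1,r},\hat r_{1,r},r_{2,r}$ survive), and the imaginary rays $\Gamma^{(2)}_{13},\Gamma^{(2)}_{14}$ (where $\rre\theta_{13}$ has a definite sign), which is exactly the paper's argument. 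Your closing worry about uniformity of the constants as $\zeta\to1^+$ (where $\pm k_1$ coalesce and $|\rre\theta_{13}|$ degenerates) is a legitimate point that the paper's own proof does not address, so it is not a defect of your proposal relative to the paper.
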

    \begin{proof}
        First, consider the jump matrix $v^{(2)}_1(x,t,k)$. Since $\rre\theta_{23}(k)\geq0$ for $k\in\Gamma^{(2)}_1$, $v^{(2)}_1(x,t,k)$ (resp. $\partial_xv^{(2)}_1(x,t,k)$) converges to $I$ (resp. $O$) as $t\to\infty$. Moreover, its (2,1)-element does not converge uniformly to $0$ as $k$ near $k_0$, because $\rre\theta_{23}(k_0)=0$. A similar proof applies to the cases of $v^{(2)}_j(x,t,k)$, $j=2,3,\cdots,8$.

        For $k\in \mathbb{R}$, we have $\rre\theta_{12}(k)=\rre\theta_{13}(k)=\rre\theta_{23}(k)=0$. Moreover, it is calculated that each non-zero element in $v^{(2)}_j-I$ involves both $r_{1,r}(k)$, $r_{2,r}(k)$ and $\hat r_{1,r}(k)$. Using Lemma \ref{lem_delta}, we find that
       \begin{equation*}
           \left|v^{(2)}_j-I\right|\leq C\max\left\{\left\|r_{1,r}(k)\right\|_{(-k_0,\infty)},\left\|r_{2,r}(k)\right\|_{\mathbb{R}},\left\|\hat r_{1,r}(k)\right\|_{(-\infty ,-k_0)}\right\}\leq Ct^{-N},\quad j=9,10,11,12.
       \end{equation*}
       For $\partial_x \left(v^{(2)}_j-I\right)$,  the result follows similarly from the third statements in Lemma \ref{lem_r_decomposition} and Lemma \ref{lem_delta}. We complete the proof of equation \eqref{Error_9-12}.
        
   For $k\in \Gamma^{(2)}_{13}$, we have $\rre\theta_{13}(k)<0$, and equation \eqref{Error_1314} holds, according to the following formula:
   \begin{equation*}
       \left|\left(v^{(2)}_{13}-I\right)_{12}\right|=\left|-\frac{1}{\delta(k)^2\delta(-k)}r_{1,a}(k)\re^{t\theta_{12}(\zeta,k)}\right|\leq\frac{C|k|}{1+|k|}\re^{-\frac{3}{4}t\rre\theta_{12}(\zeta,k)}\leq C \re^{-ct}.
   \end{equation*}
   The remaining cases are proved similarly.
    \end{proof}

    \subsubsection{Local parametrix at $\pm k_0$}
    The RH problem for $M^{(2)}(x,t,k)$ satisfies the property that matrix $v^{(2)}-I$ decays to the zero matrix everywhere as $t\to\infty$, except near two critical points $\pm k_0$. This implies that we only need to consider the neighborhoods of these two points when calculating the long-time asymptotics of $M^{(2)}$. The basic idea for solving the problem next is to reduce the RH problem for $M^{(2)}(x,t,k)$ near $k_0$ and $-k_0$ to the RH problems on the cross, and then use parabolic cylinder functions to obtain the exact solution as $t\to\infty$.

        Let $\epsilon=k_0/2$, $B_\epsilon(\pm k_0)=\{ k|\left|k\mp k_0\right|<\epsilon \}$, $B_\epsilon=B_\epsilon(k_0)\cup B_\epsilon(-k_0)$, $X^{ k_0}_j=\left( k_0+X_j\right)\cap B_\epsilon( k_0)$ and $X^{- k_0}_{j+4}=\left(-k_0+X_j\right)\cap B_\epsilon(-k_0)$, $j=1,2,3,4$, where $X=\cup_{j=1}^4X_j$ is the contour defined in the Appendix (see Figure \ref{fig_modelX}) and $X^\epsilon=X^{k_0}\cup X^{-k_0} $. To obtain the relationship between $M^{(2)}$ and $M^{X_j}$, $j=1,2$ for the RH problem \ref{pcmodel_k1} and \ref{pcmodel_-k1} in the Appendix \ref{AppendixA}, we perform the following scaling transformations on the variable $k$:
    \begin{align}
        & z_1=2\sqrt{t}(k-k_0),\qquad  k\in B_\epsilon(k_0),\label{z_1}\\
        & z_2=2\sqrt{t}(k+k_0),\qquad  k\in B_\epsilon(-k_0).\label{z_2}
    \end{align}
    Based on the aforementioned scaling transformation, the functions can be written as
    \begin{align*}
    &t\theta_{23}(\zeta,k)=t\theta_{23}(\zeta,k_0)-2\ri t (k-k_0)^2=t\theta_{23}(\zeta,k_0)-\frac{\ri z_1^2}{2},\\
        &t\theta_{12}(\zeta,k)=t\theta_{12}(\zeta,-k_0)+2\ri t (k+k_0)^2=t\theta_{12}(\zeta,-k_0)+\frac{\ri z_2^2}{2}.
    \end{align*}

    First, consider the case where $k$ is located near $k_0$. For $k\in B_\epsilon(k_0)\setminus\left[k_0,3k_0/2\right) $,
    \begin{align*}
        {\delta(k)}{\delta(-k)^2}
        &=\delta(k)(2\sqrt{t})^{-2\ri \nu}\re ^{2\ri\nu\ln_0(z_1)}\re^{-2\chi(\zeta,-k)}:=\re ^{2\ri\nu\ln_0(z_1)}d^{(k_0)}_{0}(\zeta)d^{(k_0)}_{1}(\zeta,k),
    \end{align*}
    where the functions $d^{(k_0)}_{0}$ and $d^{(k_0)}_{1}$ are defined for $k\in B_\epsilon(k_0)\setminus\left[k_0,3k_0/2\right)$ by
    \begin{align*}
        &d^{(k_0)}_{0}(\zeta)=(2\sqrt{t})^{-2\ri \nu}\re^{-2\chi(\zeta,-k_0)}\delta(k_0),\quad d^{(k_0)}_{1}(\zeta,k)=\re^{-2\chi(\zeta,-k)+2\chi(\zeta,-k_0)}\frac{\delta(k)}{\delta(k_0)}.
    \end{align*}
    Define $\tilde M^{(k_0)}$ for $k$ near $k_0$ by
    \begin{equation*}\label{trans_M_k1}
        \tilde{M}^{(k_0)}(x,t,k)=M^{(2)}(x,t,k)Y_1(\zeta), \quad k\in B_\epsilon(k_0),
    \end{equation*}
    where
    \begin{equation*}
        Y_1(\zeta)=\begin{pmatrix}
            1 & 0 & 0\\
            0 & \left(d^{(k_0)}_{0}(\zeta)\right)^{1/2}\re^{\frac{t}{2}\theta_{23}(\zeta,k_0)} & 0\\
            0 & 0 & \left(d^{(k_0)}_{0}(\zeta)\right)^{-1/2}\re^{-\frac{t}{2}\theta_{23}(\zeta,k_0)}
        \end{pmatrix}.
    \end{equation*}
    Then the jump matrices $\tilde{v}^{(k_0)}_j(k)$, $k\in X^{k_0}_j$, $j=1,2,3,4$, of $\tilde{M}^{(k_0)}$ across $X^{k_0}$ are
    \begin{align*}
         &\tilde{v}^{(k_0)}_1(\zeta,z_1(k))=\begin{pmatrix}
            1 & 0 & 0\\
            0 & 1 & 0\\
            0 & \re ^{-2\ri\nu\ln_0(z_1)}\left(d^{(k_0)}_{1}\right)^{-1}\hat{r}_{1,a}(-k)\re^{\frac{\ri z_1^2}{2}} & 1\\
        \end{pmatrix},\quad&& \tilde{v}^{(k_0)}_2(\zeta,z_1(k))=\begin{pmatrix}
            1 & 0 & 0\\
            0 & 1 & \re ^{2\ri\nu\ln_0(z_1)}d^{(k_0)}_{1}8kr_{1,a}^*(-k)\re^{-\frac{\ri z_1^2}{2}}\\
            0 & 0 & 1\\
        \end{pmatrix},\\
        &\tilde{v}^{(k_0)}_3(\zeta,z_1(k))=\begin{pmatrix}
            1 & 0 & 0\\
            0 & 1 & 0\\
            0 & -\re ^{-2\ri\nu\ln_0(z_1)}\left(d^{(k_0)}_{1}\right)^{-1}r_{1,a}(-k)\re^{\frac{\ri z_1^2}{2}}  & 1\\
        \end{pmatrix},\quad&&\tilde{v}^{(k_0)}_4(\zeta,z_1(k))=\begin{pmatrix}
            1 & 0 & 0\\
            0 & 1 & -\re ^{2\ri\nu\ln_0(z_1)}d^{(k_0)}_{1}8k\hat{r}^*_{1,a}(-k)\re^{-\frac{\ri z_1^2}{2}}\\
            0 & 0 & 1\\
        \end{pmatrix}.
    \end{align*}

    For the case where $k$ is close to $-k_0$, we define
    \begin{align*}
        \frac{1}{\delta(k)^2\delta(-k)}&=\frac{1}{\delta(-k)}(2\sqrt{t})^{2\ri \nu}\re ^{-2\ri\nu\ln_\pi(z_2)}\re^{2\chi(\zeta,k)}:=\re ^{-2\ri\nu\ln_\pi(z_2)}d_{0}^{(-k_0)}(\zeta)d_{1}^{(-k_0)}(\zeta,k), \quad k\in B_\epsilon(-k_0)\setminus\left(-3k_0/2,-k_0\right],
    \end{align*}
    where the functions $d_{0}^{(-k_0)}$ and $d_{1}^{(-k_0)}$ are defined by
    \begin{align*}
        &d_{0}^{(-k_0)}(\zeta)=(2\sqrt{t})^{2\ri \nu}\re^{2\chi(\zeta,-k_0)}\frac{1}{\delta(k_0)},\quad d_{1}^{(-k_0)}(\zeta,k)=\re^{2\chi(\zeta,k)-2\chi(\zeta,-k_0)}\frac{\delta(k_0)}{\delta(-k)}.
    \end{align*}
    Define $\tilde{M}^{(-k_0)}$ for $k$ near $-k_0$ by
    \begin{equation*}\label{trans_M_-k1}
        \tilde{M}^{(-k_0)}(x,t,k)=M^{(2)}(x,t,k)Y_2(\zeta), \quad k\in B_\epsilon(-k_0),
    \end{equation*}
    where
    \begin{equation*}
        Y_2(\zeta)=\begin{pmatrix}
            \left(d_{0}^{(-k_0)}(\zeta)\right)^{1/2}\re^{\frac{t}{2}\theta_{12}(\zeta,-k_0)}  & 0 & 0\\
            0 & \left(d_{0}^{(-k_0)}(\zeta)\right)^{-1/2}\re^{-\frac{t}{2}\theta_{12}(\zeta,-k_0)} & 0\\
            0 & 0 & 1
        \end{pmatrix}.
    \end{equation*}
    Then the jump matrices $\tilde{v}^{(-k_0)}_j(k)$, $k\in X^{-k_0}_j$, $j=5,6,7,8$, of $\tilde{M}^{(k_0)}$ across $X^{-k_0}$ are
    \begin{align*}
         &\tilde{v}^{(-k_0)}_5(\zeta,z_2(k))=\begin{pmatrix}
            1 & -\re ^{-2\ri\nu\ln_\pi(z_2)}d_{1}^{(-k_0)}r_{1,a}(k)\re^{\frac{\ri z_2^2}{2}} & 0\\
            0 & 1 & 0\\
            0 & 0 & 1\\
        \end{pmatrix},\quad&&\tilde{v}^{(-k_0)}_6(\zeta,z_2(k))=\begin{pmatrix}
            1 & 0 & 0\\
            \re ^{2\ri\nu\ln_\pi(z_2)}\left(d_{1}^{(-k_0)}\right)^{-1}8k\hat{r}^*_{1,a}(k)\re^{-\frac{\ri z_2^2}{2}} & 1 & 0\\
            0 & 0 & 1\\
        \end{pmatrix},\\
        &\tilde{v}^{(-k_0)}_7(\zeta,z_2(k))=\begin{pmatrix}
            1 & \re ^{-2\ri\nu\ln_\pi(z_2)}d_{1}^{(-k_0)}\hat{r}_{1,a}(k)\re^{\frac{\ri z_2^2}{2}} & 0\\
            0 & 1 & 0\\
            0 & 0 & 1\\
        \end{pmatrix},\quad&&\tilde{v}^{(-k_0)}_8(\zeta,z_2(k))=\begin{pmatrix}
            1 & 0 & 0\\
            -\re ^{2\ri\nu\ln_\pi(z_2)}\left(d_{1}^{(-k_0)}\right)^{-1}8kr^*_{1,a}(k)\re^{-\frac{\ri z_2^2}{2}} & 1 & 0\\
            0 & 0 & 1\\
        \end{pmatrix}.
    \end{align*}

   Denote $q=r_1(-k_0)$ and $\hat{q}=q/(1-8k_0|q|^2)$. For any fixed $z_1$, $r_{1,a}(-k)\to q$, $\hat{r}_{1,a}(-k)\to\hat{q}$ and $d_1^{(k_0)}\to 1$ as $t\to\infty$. And for any fixed $z_2$, $r_{1,a}(k)\to q$, $\hat{r}_{1,a}(k)\to\hat{q}$ and $d_1^{(-k_0)}\to 1$ as $t\to\infty$. Consequently, we obtain that $\tilde{v}^{(k_0)}_j(x,t,k)$ tends to the jump matrix $v^{X_1}_j$, $j=1,2,3,4,$ defined in the RH problem \ref{pcmodel_k1} in the Appendix \ref{AppendixA}, and $\tilde{v}^{(-k_0)}_j(x,t,k)$ tends to the jump matrix $v^{X_2}_j$, $j=5,6,7,8,$ defined in the RH problem \ref{pcmodel_-k1}. This also indicates that as $t\to\infty$, the jumps of $M^{(2)}$ approach the function $M^{X_1}Y_1^{-1}$ as $k$ tends to $k_0$, and approach the function $M^{X_2}Y_2^{-1}$ as $k$ tends to $-k_0$. In other words, we can use the function $M^{(\pm k_0)}$, defined as follows, to approximate the eigenfunction $M^{(2)}(x,t,k)$ in the neighborhood $B_\epsilon(\pm k_0)$ of $\pm k_0$:
  \begin{align}
      &M^{(k_0)}(x,t,k)=Y_1(\zeta)M^{X_1}(q(\zeta),z_1(\zeta,k))Y^{-1}_1(\zeta), \quad &&k\in B_\epsilon(k_0),\label{Mk_0}\\
      &M^{(-k_0)}(x,t,k)=Y_2(\zeta)M^{X_2}(q(\zeta),z_2(\zeta,k))Y^{-1}_2(\zeta), \quad &&k\in B_\epsilon(-k_0).\label{M-k_0}
  \end{align}
  And $M^{(\pm k_0)}(k)\to I$ on $\partial B_\epsilon(\pm k_0)$ as $t\to \infty$, this ensures $M^{(\pm k_0)}(\zeta,k)$ is a good approximation of $M^{(2)}(\zeta,k)$ in $B_\epsilon(\pm k_0)$ for large $t$.

    \begin{lem}\label{lem_Y12_bound}
        For $\zeta\in\mathcal{I}_2$ and $t\geq2$, the functions $Y_1(\zeta)$ and $Y_2(\zeta)$ are uniformly bounded, i.e.,
        \begin{equation*}
            \sup_{\zeta\in\mathcal{I}_2}\sup_{t\geq2}\left|\partial_x^lY_j^{\pm 1}(\zeta)\right|<C,\quad l=0,1,\,\,j=1,2.
        \end{equation*}
        The functions $d_0^{(\pm k_0)}(\zeta)$ and $d_1^{(\pm k_0)}(\zeta,k)$ satisfy
        \begin{align*}\label{Error_d0}
            &\left|d_0^{(k_0)}(\zeta)\right|=\re^{2\pi\nu},  \quad&&\left|\partial_xd_0^{(k_0)}(\zeta)\right|\leq\frac{C\ln t}{t},\\
            &\left|d_0^{(-k_0)}(\zeta)\right|=1, \quad&&\left|\partial_xd_0^{(-k_0)}(\zeta)\right|\leq\frac{C\ln t}{t},
        \end{align*}
        and 
        \begin{equation}\label{Error_d1-1}
             \begin{aligned}
            &\left|d_1^{(k_0)}(\zeta,k)-1\right|\leq C|k-k_0|(1+|\ln|k-k_0||), \quad\,\,\,\left|\partial_xd_1^{(k_0)}(\zeta,k)\right|\leq\frac{C|\ln |k-k_0||}{t},\\
            &\left|d_1^{(-k_0)}(\zeta,k)-1\right|\leq C|k+k_0|(1+|\ln|k+k_0||),\quad\left|\partial_xd_1^{(-k_0)}(\zeta,k)\right|\leq\frac{C|\ln |k+k_0||}{t}.
        \end{aligned}
        \end{equation}
    
    \end{lem}
    \begin{proof}
        Direct calculation yields
        \begin{equation*}
        	\begin{aligned}
        	 &\rre\chi(\zeta,-k)|_{k=k_0}=-\frac{1}{2\pi}\int_{k_0}^{\infty}\pi \rd\ln (1-8s|r_1(-s)|^2)=\frac{1}{2} \ln (1-8k_0|r_1(-k_0)|^2)=-\pi \nu,\\
        	&\rre\chi(\zeta,k)|_{k=-k_0}=-\frac{1}{2\pi}\int_{k_0}^{\infty}0 \rd\ln (1-8s|r_1(-s)|^2)=0,
        	\end{aligned}
        \end{equation*}
        and consequently $\left|d_0^{(k_0)}(\zeta)\right|=\re^{2\pi\nu}$ and $\left|d_0^{(-k_0)}(\zeta)\right|=1$. According to the conclusion in Lemma \ref{lem_delta},
        \begin{align*}
            \left|\partial_x d_0^{(k_0)}(\zeta)\right|&=\left|d_0^{(k_0)}(\zeta)\partial_x\ln d_0^{(k_0)}(\zeta)\right|=\re^{2\pi\nu}\left|\partial_x\ln d_0^{(k_0)}(\zeta)\right|\leq C\left(|\ln t||\partial_x\nu|+|\partial_x\chi(\zeta,-k_0)|+|\partial_x\ln \delta(k_0)|\right)\\
            &\leq \frac{C}{t}\left(|\ln t||\partial_\zeta\nu|+|\partial_\zeta\chi(\zeta,-k_0)|+|\partial_\zeta\ln \delta(k_0)|\right)\leq \frac{C\ln t}{t}.
        \end{align*}
        The case for $d_0^{(-k_0)}$ is proved similarly. The two expressions on the left-hand side of equation \eqref{Error_d1-1} can also be directly obtained from the conclusions in Lemma \ref{lem_delta} that the function $\delta(\zeta,k)$ satisfies. Finally, $\partial_\zeta \log \frac{\delta(k)}{\delta(k_0)}$ is analytic, and $\left|d_1^{(k_0)}(\zeta,k)\right|\leq C$ for $k\in X^{k_0}$, subsequently
        \begin{align*}
            \left|\partial_x d_1^{(k_0)}(\zeta,k)\right|&=\left|d_1^{(k_0)}(\zeta,k)\partial_x\log d_1^{(k_0)}(\zeta,k)\right|\leq C\left(|\partial_x\left(\chi(\zeta,-k)-\chi(\zeta,-k_0)\right)|+\frac{1}{t}\left|\partial_\zeta\log \frac{\delta(k)}{\delta(k_0)}\right|\right)\leq \frac{C|\ln |k-k_0||}{t}.
        \end{align*}
    \end{proof}

    \begin{lem}\label{lem_Error_v2-vk1}
    For each $\zeta\in\mathcal{I}_2$ and $t\geq 2$, the eigenfunction $M^{(k_0)}$ defined in equation \eqref{Mk_0} is analytic and bounded function for $k\in B_\epsilon(k_0)\setminus X^{ k_0}$, and satisfies the jump relation $M^{(k_0)}_+(x,t,k)=M^{(k_0)}_-(x,t,k)v_j^{(k_0)}(x,t,k)$ across $k\in X_j^{k_0}$, $j=1,2,3,4$. The function $M^{(-k_0)}$ defined in equation \eqref{M-k_0} is analytic and bounded function for $k\in B_\epsilon(-k_0)\setminus X^{- k_0}$, and satisfies the jump relation $M^{(-k_0)}_+(x,t,k)=M^{(-k_0)}_-(x,t,k)v_j^{(-k_0)}(x,t,k)$ across $k\in X_j^{-k_0}$, $j=5,6,7,8$. Moreover, the jump matrices $v^{(\pm k_0)}$ satisfy the following estimates:
        \begin{align}
            &\left\|\partial_x^l\left(v^{(2)}(x,t,\cdot)-v^{(\pm k_0)}(x,t,\cdot)\right)\right\|_{L^1(X^{\pm k_0})}\leq \frac{C \ln t}{t},\label{Error_v2-vk1}\\
            &\left\|\partial_x^l\left(v^{(2)}(x,t,\cdot)-v^{(\pm k_0)}(x,t,\cdot)\right)\right\|_{L^\infty(X^{\pm k_0})}\leq \frac{C \ln t}{\sqrt{t}}.\label{Error_v2-v-k1}
        \end{align}
        In addition,
    \begin{equation}\label{Mk1-I}
        \left\|\partial_x^l\left(\left(M^{(\pm k_0)}(x,t,\cdot)\right)^{-1}-I\right)\right\|_{L^\infty(\partial B_\epsilon(\pm k_0))}=\mathcal{O}\left(t^{-1/2}\right),
    \end{equation}
    \begin{equation}\label{intM-I}
    \begin{aligned}
        &\frac{1}{2\pi\ri} \int_{\partial B_\epsilon(k_0)} \left(\left(M^{(k_0)}(x,t,k)\right)^{-1}-I\right)\rd k=-\frac{Y_1(\zeta)M^{X_1}_1(q(\zeta))Y_1^{-1}(\zeta)}{2\sqrt{t}}+\mathcal{O}\left(t^{-1}\right),\\
        &\frac{1}{2\pi\ri} \int_{\partial B_\epsilon(-k_0)} \left(\left(M^{(-k_0)}(x,t,k)\right)^{-1}-I\right)\rd k=-\frac{Y_2(\zeta)M^{X_2}_1(q(\zeta))Y_2^{-1}(\zeta)}{2\sqrt{t}}+\mathcal{O}\left(t^{-1}\right),
    \end{aligned}
    \end{equation}
     uniformly for $\zeta\in\mathcal{I}_2$ and $l=0,1$. Moreover, \eqref{intM-I} can be differentiated with respect to $x$ without increasing the error term.
    \end{lem}
    \begin{proof}
        According to the definitions of $M^{(k_0)}$ and $M^{(-k_0)}$, we have
        \begin{align*}
            &v^{(2)}(k)-v^{(k_0)}(k)=Y_1(\zeta)\left(\tilde{v}^{(k_0)}(k)-v^{X_1}(z_1(k))\right)Y^{-1}_1(\zeta),\quad &&k\in X^{k_0},\\
            &v^{(2)}(k)-v^{(-k_0)}(k)=Y_2(\zeta)\left(\tilde{v}^{(-k_0)}(k)-v^{X_2}(z_2(k))\right)Y^{-1}_2(\zeta),\quad &&k\in X^{-k_0}.
        \end{align*}
        According to the boundedness of $Y_1(\zeta)$ and $Y_2(\zeta)$ given in Lemma \ref{lem_Y12_bound}, proving that expressions \eqref{Error_v2-vk1} and \eqref{Error_v2-v-k1} are equivalent to proving the following equations hold
        \begin{align*}
            &\left\|\partial_x^l\left(\tilde{v}^{(k_0)}(x,t,\cdot)-v^{X_1}(x,t,\cdot)\right)\right\|_{L^1(X^{k_0}_i)}\leq \frac{C \ln t}{t},\, && \left\|\partial_x^l\left(\tilde{v}^{(k_0)}(x,t,\cdot)-v^{X_1}(x,t,\cdot)\right)\right\|_{L^\infty(X^{k_0}_i)}\leq \frac{C \ln t}{\sqrt{t}},\\
            &\left\|\partial_x^l\left(\tilde{v}^{(-k_0)}(x,t,\cdot)-v^{X_2}(x,t,\cdot)\right)\right\|_{L^1(X^{ -k_0}_j)}\leq \frac{C \ln t}{t},\, && \left\|\partial_x^l\left(\tilde{v}^{(-k_0)}(x,t,\cdot)-v^{X_2}(x,t,\cdot)\right)\right\|_{L^\infty(X^{-k_0}_j)}\leq \frac{C \ln t}{\sqrt{t}},
        \end{align*}
        for $i=1,2,3,4$, $j=5,6,7,8$, and $l=0,1$. We only provide the proof for the case where $i=1$, and the proofs for the other cases are similar.

        For $k\in X^{k_0}_1$, we can obtain
        \begin{equation*}
            \re^{\frac{t}{4}\left|\rre\theta_{23}(\zeta,k)\right|}=\re^{\frac{t}{4}\left|\rre\theta_{23}(\zeta,k)-\rre\theta_{23}(\zeta,k_0)\right|}=\re^{\frac{1}{4}\left|\rre\left(\ri z_1^2/2\right)\right|}\leq \re^{\frac{|z_1|^2}{8}}.
        \end{equation*}
        By using Lemma \ref{lem_r_decomposition} and equation \eqref{Error_d1-1} in the Lemma \ref{lem_Y12_bound}, we can obtain for $k\in X^{k_0}_1$,
        \begin{align*}
            \left|\left(\tilde{v}^{(k_0)}-v^{X_1}\right)_{32}\right|&=\left|\re ^{-2\ri\nu\ln_0(z_1)}\left(d^{(k_0)}_{1}\right)^{-1}\hat{r}_{1,a}(-k)\re^{\frac{\ri z_1^2}{2}}-\hat{r}_{1,a}(-k_0)\re^{-2\ri \nu \ln_0(z_1)}\re^{\frac{\ri z_1^2}{2}}\right|\\
            &=\left|\re ^{-2\ri\nu\ln_0(z_1)}\right|\left|\re^{\frac{\ri z_1^2}{2}}\right|\left|\left(\left(d^{(k_0)}_{1}\right)^{-1}-1\right)\hat{r}_{1,a}(-k)+\left(\hat{r}_{1,a}(-k)-\hat{r}_{1,a}(-k_0)\right)\right|\\
            &\leq C \left(\left|\left(d^{(k_0)}_{1}\right)^{-1}-1\right|\left|\hat{r}_{1,a}(-k)\right|+\left|\hat{r}_{1,a}(-k)-\hat{r}_{1,a}(-k_0)\right|\right)\re^{-|z_1|^2/2}\\
            &\leq C|k-k_0|(1+|\ln|k-k_0||)\re^{-ct|k-k_0|^2}.
        \end{align*}
        Thus, it follows that
        \begin{equation*}
            \left\|\left(\tilde{v}^{(k_0)}-v^{X_1}\right)_{32}\right\|_{L^1\left(X_1^{k_0}\right)}\leq C\int_{0}^{\infty} u (1+|\ln u|)\re^{-ctu^2}\rd u \leq \frac{C \ln t}{t},
        \end{equation*}
        \begin{equation*}
            \left\|\left(\tilde{v}^{(k_0)}-v^{X_1}\right)_{32}\right\|_{L^\infty\left(X_1^{k_0}\right)}\leq C\sup_{u\geq 0} u (1+|\ln u|)\re^{-ctu^2}\leq \frac{C \ln t}{\sqrt{t}}.
        \end{equation*}

        Next, we prove the estimate for the partial derivative with respect to $x$, dividing it into the following three parts:
        \begin{align*}
            \partial_x\left(\tilde{v}^{(k_0)}-v^{X_1}\right)_{32}&=\partial_x\left(\re ^{-2\ri\nu\ln_0(z_1)}\right)\re^{\frac{\ri z_1^2}{2}}\left(\left(d^{(k_0)}_{1}\right)^{-1}-\hat{r}_{1,a}(-k_0)\right)+\re ^{-2\ri\nu\ln_0(z_1)}\partial_x\Bigl(\re^{\frac{\ri z_1^2}{2}}\Bigr)\left(\left(d^{(k_0)}_{1}\right)^{-1}-\hat{r}_{1,a}(-k_0)\right)\\
            &\quad+\re ^{-2\ri\nu\ln_0(z_1)}\re^{\frac{\ri z_1^2}{2}}\partial_x\left(\left(d^{(k_0)}_{1}\right)^{-1}-\hat{r}_{1,a}(-k_0)\right).
        \end{align*}
        According to $\left|\partial_x\left(\re ^{-2\ri\nu\ln_0(z_1)}\right) \right|\leq \frac{C}{t|k-k_0|}$, we have
        \begin{equation*}
            \left\|\partial_x\left(\re ^{-2\ri\nu\ln_0(z_1)}\right)\re^{\frac{\ri z_1^2}{2}}\left(\left(d^{(k_0)}_{1}\right)^{-1}-\hat{r}_{1,a}(-k_0)\right)\right\|_{L^1\left(X_1^{k_0}\right)}\leq Ct^{-1} \int_{0}^{\infty} (1+\ln u)\re^{-ctu^2}\rd u\leq \frac{C\ln t}{t^{3/2}},
        \end{equation*}
        \begin{equation*}
            \left\|\partial_x\left(\re ^{-2\ri\nu\ln_0(z_1)}\right)\re^{\frac{\ri z_1^2}{2}}\left(\left(d^{(k_0)}_{1}\right)^{-1}-\hat{r}_{1,a}(-k_0)\right)\right\|_{L^\infty\left(X_1^{k_0}\right)}\leq Ct^{-1} \sup_{u\geq0} (1+\ln u)\re^{-ctu^2}\leq \frac{C\ln t}{t}.
        \end{equation*}
    The remaining two expressions can also be proved similarly to complete the proof of equations \eqref{Error_v2-vk1} and \eqref{Error_v2-v-k1}.

    According to the definitions of $z_1$ and $z_2$ in equations \eqref{z_1} and \eqref{z_2}, it is known that as $t\to\infty$, the variable $z_1\to\infty$ if $k\in \partial B_\epsilon(k_0)$. Similarly, the variable $z_2\to\infty$ if $k\in \partial B_\epsilon(-k_0)$. From the expansion equations satisfied by \eqref{Mx1_expand} and \eqref{Mx2_expand} of the model RH problems \ref{pcmodel_k1} and \ref{pcmodel_-k1} given in the Appendix \ref{AppendixA}, we can obtain
   \begin{align*}
        &M^{X_1}(q(\zeta),z_1(\zeta,k))=I+\frac{M_1^{X_1}(q(\zeta))}{2\sqrt{t}(k-k_0)}+\mathcal{O}\left(t^{-1}\right), \quad &&k\in \partial
         B_\epsilon(k_0),\,\,t\to\infty,\\
        &M^{X_2}(q(\zeta),z_2(\zeta,k))=I+\frac{M_1^{X_2}(q(\zeta))}{2\sqrt{t}(k+k_0)}+\mathcal{O}\left(t^{-1}\right), \quad &&k\in \partial
         B_\epsilon(-k_0),\,\,t\to\infty,
    \end{align*}
  uniformly for $\zeta\in\mathcal{I}_2$, and the above equations can be differentiated with respect to $x$ without increasing the error term. Combining equations \eqref{Mk_0} and \eqref{M-k_0}, we obtain as $t\to\infty$,
  \begin{align*}
        &\left(M^{(k_0)}\right)^{-1}-I=-\frac{Y_1(\zeta)M_1^{X_1}(q(\zeta))Y_1^{-1}(\zeta)}{2\sqrt{t}(k-k_0)}+\mathcal{O}\left(t^{-1}\right), \\
        &\left(M^{(-k_0)}\right)^{-1}-I=-\frac{Y_2(\zeta)M_1^{X_2}(q(\zeta))Y_2^{-1}(\zeta)}{2\sqrt{t}(k+k_0)}+\mathcal{O}\left(t^{-1}\right), 
    \end{align*}
    uniformly for $k\in \partial B_\epsilon(\pm k_0)$. By using the conclusion in Lemma \ref{lem_Y12_bound} and the Cauchy formula, we can complete the proof of equations \eqref{Mk1-I} and \eqref{intM-I} from the above equation.\qedhere

    \end{proof}

     \subsubsection{A small-norm RH problem}
      According to the definitions of functions $M^{(\pm k_0)}$ given in the previous subsection, we construct 
      \begin{equation}\label{hatM}
        \hat{M}=\left\{
        \begin{aligned}
            &M^{(2)}\left(M^{(k_0)}\right)^{-1},\quad &&k\in B_\epsilon(k_0),\\
            &M^{(2)}\left(M^{(-k_0)}\right)^{-1},\quad &&k\in B_\epsilon(-k_0),\\
            &M^{(2)},\quad &&\text{elsewhere}.
        \end{aligned}
        \right.
    \end{equation}
    The contour corresponding to the function $\hat M$ are shown in Figure \ref{fig_hatGamma}, and the jump matrix $\hat{v}$ are
    \begin{equation}\label{hatv}
        \hat{v}=\left\{
        \begin{aligned}
            &M^{(\pm k_0)}_-v^{(2)}\left(M^{(\pm k_0)}_+\right)^{-1},\quad &&k\in \hat{\Gamma} \cap B_\epsilon(\pm k_0),\\
            & \left(M^{(\pm k_0)}\right)^{-1}, && k\in \partial B_\epsilon(\pm k_0),\\
            &v^{(2)},\quad &&k\in \hat{\Gamma} \setminus B_\epsilon.
        \end{aligned}
        \right.
    \end{equation}

    \begin{figure}[htbp]
    	\centering
    	\begin{tikzpicture}[scale=1]
        
        \draw[very thick, black!40!green] (-4,0) -- (4,0);
        \draw[very thick, black!40!green] (0,2) -- (0,3.5);
        \draw[very thick, black!40!green] (0,-2) -- (0,-3.5);

        \draw[very thick, dashed, black!40!green] (0,-2) -- (0,2);
    		
    		\draw[very thick, black!40!green, -latex] (-4,0) -- (-3.2,0);
            \draw[very thick, black!40!green, -latex] (-4,0) -- (-0.5,0);
            \draw[very thick, black!40!green, -latex] (-4,0) -- (0.8,0);
            \draw[very thick, black!40!green, -latex] (-4,0) -- (3.5,0);

            \draw[very thick, black!40!green, -latex] (0,-2) -- (0,-3);
            \draw[very thick, black!40!green, -latex] (0,2) -- (0,3);

            \draw[very thick, black!40!green] (2+2/1.414,2/1.414) -- (0,-2);
            \draw[very thick, black!40!green] (2+2/1.414,-2/1.414) -- (-0,2);

            \draw[very thick, black!40!green] (-2-2/1.414,2/1.414) -- (0,-2);
            \draw[very thick, black!40!green] (-2-2/1.414,-2/1.414) -- (0,2);

            \draw[very thick, black!40!green,-latex] (2,0) --(2+1.3/1.2,1.3/1.2);
            \draw[very thick, black!40!green,-latex] (2,0) -- (-1.3/1.2+2,-1.3/1.2);
            \draw[very thick, black!40!green,-latex] (2,0) -- (-1.3/1.2+2,1.3/1.2);
            \draw[very thick, black!40!green,-latex] (2,0) -- (2+1.3/1.2,-1.3/1.2);

            \draw[very thick,black!40!green,-latex] (-2,0) --(-2+1.3/1.2,1.3/1.2);
            \draw[very thick, black!40!green,-latex] (-2,0) -- (-1.3/1.2-2,-1.3/1.2);
            \draw[very thick, black!40!green,-latex] (-2,0) -- (-1.3/1.2-2,1.3/1.2);
            \draw[very thick, black!40!green,-latex] (-2,0) -- (-2+1.3/1.2,-1.3/1.2);

    		\fill (-2,0) circle (1.5pt);
            \fill (0,0) circle (1.5pt);
	      \fill (2,0) circle (1.5pt);

           \node[below] at (-2,-0.1) {$-k_0$};
	     \node[below] at (0.3,0) {$\text{0}$};
	     \node[below] at (2,-0.1) {$k_0$};

         \draw[very thick,black!20!blue] (-2,0) circle (1cm);
         \draw[very thick,black!20!blue] (2,0) circle (1cm);

         \draw[very thick, black!20!blue,-latex] (-1.9,1) -- (-2.1,1);
         \draw[very thick, black!20!blue,-latex] (2.1,1) -- (1.9,1);
         \node [above] at (-2,1) {$\partial B_\epsilon(-k_0)$};
         \node [above] at (2,1) {$\partial B_\epsilon(k_0)$};

          \node[right] at (4,0) {$\rre k$};
    		
    	\end{tikzpicture}
    	\caption{The jump contour $\hat \Gamma$ in the complex $k$-plane.}
    	\label{fig_hatGamma}
    \end{figure}

    Furthermore, we  refer to the definition of the space $\dot{E}^3(\mathbb{C} \setminus \hat{\Gamma})$ to include analytic functions $f : \mathbb{C} \setminus\hat{\Gamma} \rightarrow \mathbb{C}$ with the property that for each component $D$ of $\mathbb{C} \setminus \hat{\Gamma}$ there exist curves $\{C_n\}_1^\infty$ in $D$ such that the $C_n$ eventually surround each compact subset of $D$ and
   \begin{equation*}
       \sup_{n \geq 1} \int_{C_n} (1 + |k|) |f(k)|^3 |\rd k| < \infty.
   \end{equation*}
    For specific details, refer to Ref. \cite{Lenells_2018}.

    \begin{rhp}\label{rhp_hatM}
        Find a $3\times3$ matrix-valued function $\hat{M}(x,t,\cdot)\in I+\dot E^3(\mathbb{C}\setminus \hat{\Gamma})$ such that $\hat{M}_+(x,t,k)=\hat{M}_-(x,t,k)\hat{v}(x,t,k)$ for $k\in \hat{\Gamma}$, where $\hat{\Gamma}$ is shown in Figure \ref{fig_hatGamma}.
    \end{rhp}
    Define
    \begin{equation*}
        \tilde\Gamma=\hat{\Gamma}\setminus\left(\mathbb{R}\cup X^{\pm k_0}\cup \partial B_\epsilon(\pm k_0)\right).
    \end{equation*}

    \begin{lem}\label{lem_hatw}
    Let $\hat{w}=\hat{v}-I$, the following estimates hold uniformly for $\zeta\in\mathcal{I}_2$, $t\geq2$ and $l=0,1$:
     \begin{align}
            &\left\|(1+|\cdot|)\partial_x^l \hat{w}\right\|_{\left(L^1\cap L^\infty\right)(\mathbb{R})}\leq \frac{C}{t^{N}},\label{Error_hatw_1}\\
            &\left\|(1+|\cdot|)\partial_x^l \hat{w}\right\|_{\left(L^1\cap L^\infty\right)(\tilde{\Gamma})}\leq C\re^{-ct},\label{Error_hatw_2}\\
            &\left\|\partial_x^l \hat{w}\right\|_{\left(L^1\cap L^\infty\right)(\partial B_\epsilon(\pm k_0))}\leq \frac{C}{\sqrt{t}},\label{Error_hatw_3}\\
            &\left\|\partial_x^l \hat{w}\right\|_{L^1(X^{\pm k_0})}\leq \frac{C\ln t}{t},\label{Error_hatw_4}\\
            &\left\|\partial_x^l \hat{w}\right\|_{L^\infty(X^{\pm k_0})}\leq \frac{C\ln t}{\sqrt{t}}.\label{Error_hatw_5}
        \end{align}
    \end{lem}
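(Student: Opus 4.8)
The plan is to prove each of the five estimates by partitioning the contour $\hat{\Gamma}$ into its constituent pieces---the real axis $\mathbb{R}$, the crosses $X^{\pm k_1}$ inside the balls, the circles $\partial B_\epsilon(\pm k_1)$, and the remaining rays $\tilde{\Gamma}$---and on each piece reducing $\hat{w}=\hat{v}-I$ to a quantity already controlled by the earlier lemmas. The one genuinely new ingredient is the algebraic reduction on the crosses; everything else is a repackaging of Lemmas \ref{lem_r_decomposition}, \ref{lem_v2_error} and \ref{lem_Error_v2-vk1}, together with the uniform boundedness of $Y_j^{\pm1}$ and of $M^{(\pm k_1)}$.

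First I would handle \eqref{Error_hatw_1}. On $\mathbb{R}\setminus B_\epsilon$ one has $\hat{w}=v^{(2)}-I=v^{(2)}_j-I$ with $j\in\{9,10,11,12\}$, and \eqref{Error_9-12} already supplies the weighted $(L^1\cap L^\infty)$ bound $Ct^{-3/2}$ together with its $\partial_x$-analog. On the portion of $\mathbb{R}$ interior to a ball, the model function $M^{(\pm k_1)}$ has no jump across $\mathbb{R}$ (its jumps lie on the cross $X^{\pm k_1}$), so $M^{(\pm k_1)}_+=M^{(\pm k_1)}_-$ and $\hat{w}=M^{(\pm k_1)}(v^{(2)}-I)(M^{(\pm k_1)})^{-1}$; since $M^{(\pm k_1)}=Y_jM^{X_j}Y_j^{-1}$ and its $x$-derivative are uniformly bounded by Lemma \ref{lem_Y12_bound} and the boundedness of the model solution, conjugation preserves the $Ct^{-3/2}$ bound. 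This yields \eqref{Error_hatw_1}.

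Next, for \eqref{Error_hatw_2}, on $\tilde{\Gamma}$---the rays of the cross lines lying outside the balls together with the segments $\Gamma^{(2)}_{13},\Gamma^{(2)}_{14}$---one again has $\hat{w}=v^{(2)}-I$, and here $\rre\theta_{ij}$ is bounded away from zero because we are off the real axis and outside $B_\epsilon(\pm k_1)$. The factors $\re^{\pm t\theta_{ij}}$ therefore decay like $\re^{-ct}$, giving \eqref{Error_hatw_2} (with \eqref{Error_1314} covering $j=13,14$ and $v^{(2)}\equiv I$ on $\Gamma^{(2)}_{15}$). For \eqref{Error_hatw_3}, on $\partial B_\epsilon(\pm k_1)$ the definition \eqref{hatv} gives $\hat{w}=(M^{(\pm k_1)})^{-1}-I$, which is $\mathcal{O}(t^{-1/2})$ in $L^\infty$ by \eqref{Mk1-I}; integrating over the circle and using the $\partial_x$-part of \eqref{Mk1-I} produces the $L^1$ bound as well.

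The crux is \eqref{Error_hatw_4}--\eqref{Error_hatw_5} on $X^{\pm k_1}$. Using the jump relation $M^{(\pm k_1)}_+=M^{(\pm k_1)}_-v^{(\pm k_1)}$ I would rewrite
\begin{equation*}
    \hat{w}=M^{(\pm k_1)}_-v^{(2)}\bigl(M^{(\pm k_1)}_+\bigr)^{-1}-I=M^{(\pm k_1)}_-\bigl(v^{(2)}-v^{(\pm k_1)}\bigr)\bigl(v^{(\pm k_1)}\bigr)^{-1}\bigl(M^{(\pm k_1)}_-\bigr)^{-1},
\end{equation*}
so that, $M^{(\pm k_1)}_-$ and $(v^{(\pm k_1)})^{-1}$ being uniformly bounded on the cross, $\hat{w}$ and $\partial_x\hat{w}$ are controlled by $v^{(2)}-v^{(\pm k_1)}$ and its $x$-derivative. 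Invoking \eqref{Error_v2-vk1} and \eqref{Error_v2-v-k1} then gives the $L^1$ bound $C\ln t/t$ and the $L^\infty$ bound $C\ln t/\sqrt{t}$ directly. I expect this reduction---and in particular keeping track of the $\partial_x$-derivative through the product rule while conjugating by $M^{(\pm k_1)}_-$---to be the main technical obstacle, since it is where the boundedness of $\partial_x(M^{(\pm k_1)})^{\pm1}$ must be invoked and where the $\ln t$ loss enters; the remaining pieces are routine consequences of the decay already established on $\mathbb{R}$ and off the real axis.
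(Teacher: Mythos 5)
Your proposal is correct and follows essentially the same route as the paper: \eqref{Error_hatw_1} from \eqref{Error_9-12}, \eqref{Error_hatw_2} from the exponential decay of $\re^{\pm t\theta_{ij}}$ off the critical points, \eqref{Error_hatw_3} from \eqref{Mk1-I}, and \eqref{Error_hatw_4}--\eqref{Error_hatw_5} by rewriting $\hat{w}$ on the crosses as $v^{(2)}-v^{(\pm k_1)}$ sandwiched between uniformly bounded factors (the paper uses the equivalent identity $\hat{w}=M^{(\pm k_1)}_-(v^{(2)}-v^{(\pm k_1)})(M^{(\pm k_1)}_+)^{-1}$) and invoking \eqref{Error_v2-vk1}--\eqref{Error_v2-v-k1}. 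Your explicit treatment of $\mathbb{R}\cap B_\epsilon(\pm k_1)$, where $\hat{w}$ is the conjugate of $v^{(2)}-I$ by the jump-free $M^{(\pm k_1)}$, is a small point the paper leaves implicit but is handled correctly.
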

    \begin{proof}
        In fact, the result of equation \eqref{Error_hatw_1} has already been proven in Lemma \ref{lem_v2_error}. For equation \eqref{Error_hatw_2}, we consider the case where $k\in \Gamma^{(2)}_1\setminus X^{k_0}_1$. Given that $k=k_0+u\re^{\frac{\pi \ri }{4}}$, $u>\frac{k_0}{2}$ and $\hat{w}$ only the (32) element is non-zero. Thus we have
        \begin{equation*}
            \left|\hat{w}_{32}\left( k_0+u\re^{\frac{\pi \ri }{4}}\right) \right|\leq C \left| \hat{r}_{1,a}\left( -k_0-u\re^{\frac{\pi \ri }{4}}\right) \right| \re^{-t\theta_{23}}\leq C \re^{-ct(k_0+u)^2}.
        \end{equation*}
        The proof for $\partial_x\hat{w}$ can also be obtained using Lemmas \ref{lem_r_decomposition} and \ref{lem_delta}, and the proofs for the other cases for $k\in \tilde\Gamma\setminus\left(\Gamma^{(2)}_1\setminus X^{k_0}_1\right)$ are similar. The proof for equation \eqref{Error_hatw_3} can be directly referred to from  equation \eqref{Mk1-I}. 

        For $k\in X^{\pm k_0}$, $\hat{w}=M_-^{(\pm k_0)}\left(v^{(2)}-v^{(\pm k_0)}\right)\left(M_+^{(\pm k_0)}\right)^{-1}$, so equations \eqref{Error_hatw_4} and \eqref{Error_hatw_5} follow from \eqref{Error_v2-vk1} and \eqref{Error_v2-v-k1} combining $\partial _\zeta^lM^{(\pm k_0)}_\pm $, $l=0,1$, and its inverse function, which are uniformly bounded with respect to $k\in X^{\pm k_0}$.
    \end{proof}

    The Cauchy transform is defined in the following form 
    \begin{equation*}
        \left(\mathcal{C}f\right)(z)=\frac{1}{2\pi\ri} \int_{\hat{\Gamma}}\frac{f(z')}{z'-z}\rd z',\quad z\in \mathbb{C}\setminus\hat{\Gamma},
    \end{equation*}
   where the function $f$ is defined on $\hat{\Gamma}$. The Cauchy transform $\left(\mathcal{C}f\right)(z)$ is analytic from $\mathbb{C} \setminus \hat{\Gamma}$ to $\mathbb{C} $, 
  provided that $(1 + |z|)^{1/3} f(z) \in L^3(\hat{\Gamma})$ (i.e., $f\in \dot{L}^3(\hat{\Gamma})$). Moreover, $\mathcal{C}_{\pm} f $ exist almost everywhere for $z \in \hat{\Gamma} $, $\mathcal{C}_+-\mathcal{C}_-=I$ and $\mathcal{C}_{\pm} \in \mathcal{B}(\dot L^3(\hat{\Gamma})) $, where $\mathcal{B}(\dot L^3(\hat{\Gamma})) $ denotes the space of bounded linear operators on $\dot L^3(\hat{\Gamma})$.

  Lemma \ref{lem_hatw} shows that
  \begin{equation*}
  \left\{
   \begin{aligned}
            &\left\|(1+|\cdot|)\partial_x^l \hat{w}\right\|_{L^1(\hat{\Gamma})}\leq \frac{C}{\sqrt{t}},\\
            &\left\|(1+|\cdot|)\partial_x^l \hat{w}\right\|_{L^\infty(\hat{\Gamma})}\leq \frac{C\ln t}{\sqrt{t}},
        \end{aligned}
      \right.\qquad \zeta\in\mathcal{I}_2,\,\,t\geq2,\,\,l=0,1,
  \end{equation*}
  and the calculation yields
  \begin{equation}\label{hatw_Lp}
      \left\|(1+|\cdot|)\partial_x^l \hat{w}\right\|_{L^p(\hat{\Gamma})}\leq \frac{C\ln t^{\frac{p-1}{p}}}{\sqrt{t}},\quad 1\leq p\leq \infty.
  \end{equation}
  The above estimates show that $\hat{w}\in (\dot{L}^3\cap L^\infty)(\hat{\Gamma})$, and thus we can define:
  \begin{equation*}
      \mathcal{C}_{\hat{w}(x,t,\cdot)}f=\mathcal{C}_-(f\hat
      w),\qquad \mathcal{C}_{\hat{w}}: (\dot{L}^3\cap L^\infty)(\hat{\Gamma})\to\dot{L}^3(\hat{\Gamma}).
  \end{equation*}
  \begin{lem}\label{lem_I-C_inverse}
      For any sufficiently large $t$ and $\zeta\in\mathcal{I}_2$, $I-\mathcal{C}_{\hat{w}(x,t,\cdot)}\in \mathcal{B}(\dot L^3(\hat{\Gamma})) $ is invertible.
  \end{lem}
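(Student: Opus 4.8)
The plan is to treat Lemma~\ref{lem_I-C_inverse} as a standard \emph{small-norm} Riemann--Hilbert problem and to obtain invertibility of $I-\mathcal{C}_{\hat{w}}$ through a Neumann series. The essential point is that the operator norm of $\mathcal{C}_{\hat{w}}$ on $\dot L^3(\hat\Gamma)$ becomes strictly less than one once $t$ is taken sufficiently large; the geometric series $\sum_{n\ge 0}\mathcal{C}_{\hat{w}}^{\,n}$ then converges in $\mathcal{B}(\dot L^3(\hat\Gamma))$ and produces the bounded inverse, with a norm bound uniform in $1<\zeta<\infty$.

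First I would recall, from the mapping properties stated just before the lemma, that the Cauchy projection $\mathcal{C}_-$ is a bounded operator on $\dot L^3(\hat\Gamma)$, say $\|\mathcal{C}_-\|_{\mathcal{B}(\dot L^3)}\le C_0$; this is the harmonic-analysis input taken from Ref.~\cite{Lenells_2018}, valid because $\hat\Gamma$ is a Carleson contour. Next I would estimate the norm of $\mathcal{C}_{\hat{w}}f=\mathcal{C}_-(f\hat{w})$. Writing the weighted norm as $\|g\|_{\dot L^3}=\|(1+|\cdot|)^{1/3}g\|_{L^3}$ and pulling the factor $\hat{w}$ out in $L^\infty$,
\begin{equation*}
\|\mathcal{C}_{\hat{w}}f\|_{\dot L^3(\hat\Gamma)}\le C_0\,\|f\hat{w}\|_{\dot L^3(\hat\Gamma)}=C_0\,\|(1+|\cdot|)^{1/3}f\,\hat{w}\|_{L^3(\hat\Gamma)}\le C_0\,\|\hat{w}\|_{L^\infty(\hat\Gamma)}\,\|f\|_{\dot L^3(\hat\Gamma)}.
\end{equation*}
In particular, since $\hat{w}\in L^\infty(\hat\Gamma)$, multiplication by $\hat{w}$ commutes with the weight $(1+|\cdot|)^{1/3}$ and $\mathcal{C}_{\hat{w}}$ extends from $(\dot L^3\cap L^\infty)(\hat\Gamma)$ to a bounded operator on all of $\dot L^3(\hat\Gamma)$ with $\|\mathcal{C}_{\hat{w}}\|_{\mathcal{B}(\dot L^3)}\le C_0\|\hat{w}\|_{L^\infty(\hat\Gamma)}$.

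The decay is then supplied by the estimate \eqref{hatw_Lp} with $p=\infty$ (equivalently by the $L^\infty$ bounds in \eqref{Error_hatw_3} and \eqref{Error_hatw_5} of Lemma~\ref{lem_hatw}), which gives $\|\hat{w}\|_{L^\infty(\hat\Gamma)}\le C\ln t/\sqrt t$ uniformly for $1<\zeta<\infty$. Combining the two bounds yields
\begin{equation*}
\|\mathcal{C}_{\hat{w}}\|_{\mathcal{B}(\dot L^3(\hat\Gamma))}\le \frac{C\ln t}{\sqrt t}\xrightarrow[t\to\infty]{}0,
\end{equation*}
uniformly in $\zeta$. Hence there is $t_0>0$ with $\|\mathcal{C}_{\hat{w}}\|_{\mathcal{B}(\dot L^3)}\le\tfrac12$ for all $t\ge t_0$ and all $1<\zeta<\infty$, so $(I-\mathcal{C}_{\hat{w}})^{-1}=\sum_{n=0}^{\infty}\mathcal{C}_{\hat{w}}^{\,n}$ converges in operator norm with $\|(I-\mathcal{C}_{\hat{w}})^{-1}\|_{\mathcal{B}(\dot L^3)}\le 2$, establishing the claim.

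I expect the only genuine obstacle to be the boundedness of $\mathcal{C}_-$ on the weighted space $\dot L^3(\hat\Gamma)$ over the piecewise-smooth, self-intersecting contour $\hat\Gamma$, which is a nontrivial statement; however, it is precisely the content of the Carleson-contour theory of Ref.~\cite{Lenells_2018} and may be quoted here. The remaining argument is routine small-norm bookkeeping, the single point worth verifying being that the weight $(1+|\cdot|)^{1/3}$ passes through the multiplication operator so that the $\dot L^3$-estimate reduces to the $L^\infty$-decay of $\hat{w}$ already furnished by Lemma~\ref{lem_hatw}.
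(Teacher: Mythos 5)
Your proposal is correct and follows essentially the same route as the paper: bound $\|\mathcal{C}_{\hat w}\|_{\mathcal{B}(\dot L^3(\hat\Gamma))}$ by $\|\mathcal{C}_-\|_{\mathcal{B}(\dot L^3(\hat\Gamma))}\|\hat w\|_{L^\infty(\hat\Gamma)}$, invoke the $L^\infty$ decay $\|\hat w\|_{L^\infty(\hat\Gamma)}\leq C\ln t/\sqrt{t}$ from Lemma \ref{lem_hatw}, and conclude invertibility for large $t$ via the Neumann series. Your additional remarks on the weight $(1+|\cdot|)^{1/3}$ commuting with multiplication by $\hat w$ and on the Carleson-contour input from Ref.~\cite{Lenells_2018} only make explicit what the paper leaves implicit.
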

  \begin{proof}
      For each $f\in \dot L^3(\hat{\Gamma})$, we have
      \begin{equation*}
          \left\|\mathcal{C}_{\hat w}f\right\|_{\dot L^3(\hat{\Gamma})}\leq \left\|\mathcal{C}_-\right\|_{\mathcal{B}(\dot L^3(\hat{\Gamma}))}\left\|\hat{w}\right\|_{L^\infty(\hat{\Gamma})}\left\|f\right\|_{\dot L^3(\hat{\Gamma})}.
      \end{equation*}
      Thus $\left\|\mathcal{C}_{\hat w}\right\|_{\mathcal{B}(\dot L^3(\hat{\Gamma}))}\leq \left\|\mathcal{C}_-\right\|_{\mathcal{B}(\dot L^3(\hat{\Gamma}))}\left\|\hat{w}\right\|_{L^\infty(\hat{\Gamma})}\leq \frac{C\ln t}{\sqrt{t}}\left\|\mathcal{C}_-\right\|_{\mathcal{B}(\dot L^3(\hat{\Gamma}))}$. That is to say, we can find a sufficiently large $t$ such that $\left\|\hat{w}\right\|_{L^\infty(\hat{\Gamma})}\leq \left(\left\|\mathcal{C}_-\right\|_{\mathcal{B}(\dot L^3(\hat{\Gamma}))}\right)^{-1}$, the operator $I-\mathcal{C}_{\hat{w}(x,t,\cdot)}\in \mathcal{B}(\dot L^3(\hat{\Gamma}))$  becomes invertible.
  \end{proof}

  According to Lemma \ref{lem_I-C_inverse} mentioned above, we can define the function $\hat{\mu}(x,t,k)$ for sufficiently large $t$, $\zeta\in\mathcal{I}_2$ and $k\in \hat{\Gamma}$ using the operator $\left(I-\mathcal{C}_{\hat{w}}\right)^{-1}$:
  \begin{equation*}
      \hat{\mu}=I+\left(I-\mathcal{C}_{\hat{w}}\right)^{-1}\mathcal{C}_{\hat{w}}I\in I+\dot L^3(\hat{\Gamma}).
  \end{equation*}

  \begin{lem}\label{lem_hatM_solution}
      For sufficiently large time $t$, the RH problem \ref{rhp_hatM} has a unique solution $\hat{M}\in I+\dot{E}^3(\mathbb{C} \setminus \hat{\Gamma})$, which can be written in the following form
      \begin{equation*}
          \hat{M}(x,t,k)=I+\mathcal{C}(\hat\mu \hat w)=I+\frac{1}{2\pi\ri}\int_{\hat{\Gamma}}\frac{\hat{\mu}(x,t,k')\hat{w}(x,t,k')}{k'-k}\rd k'.
      \end{equation*}
  \end{lem}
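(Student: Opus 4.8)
The plan is to invoke the standard small-norm (Beals--Coifman) machinery, with the invertibility of $I-\mathcal{C}_{\hat w}$ from Lemma \ref{lem_I-C_inverse} as the decisive input. First I would verify that the $\hat\mu$ introduced above is exactly the solution of the associated integral equation. From $\hat\mu=I+(I-\mathcal{C}_{\hat w})^{-1}\mathcal{C}_{\hat w}I$ one computes $(I-\mathcal{C}_{\hat w})\hat\mu=(I-\mathcal{C}_{\hat w})I+\mathcal{C}_{\hat w}I=I$, so that
\begin{equation*}
\hat\mu=I+\mathcal{C}_{\hat w}\hat\mu=I+\mathcal{C}_-(\hat\mu\hat w).
\end{equation*}

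Next I would check the function-space memberships needed to make $\mathcal{C}(\hat\mu\hat w)$ a legitimate analytic function. By Lemma \ref{lem_hatw} together with the $L^p$ bound \eqref{hatw_Lp}, one has $\hat w\in(\dot L^3\cap L^\infty)(\hat{\Gamma})$ and $\hat\mu\in I+\dot L^3(\hat{\Gamma})$; writing $\hat\mu\hat w=\hat w+(\hat\mu-I)\hat w$ and using $\hat w\in L^\infty$ shows $\hat\mu\hat w\in\dot L^3(\hat{\Gamma})$ with the weight $(1+|k|)^{1/3}$. Consequently $\mathcal{C}(\hat\mu\hat w)$ is analytic on $\mathbb{C}\setminus\hat{\Gamma}$ and lies in $\dot{E}^3(\mathbb{C}\setminus\hat{\Gamma})$, so that $\hat M:=I+\mathcal{C}(\hat\mu\hat w)\in I+\dot{E}^3(\mathbb{C}\setminus\hat{\Gamma})$, securing the required normalization and analyticity.

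I would then verify the jump relation by evaluating boundary values and using $\mathcal{C}_+-\mathcal{C}_-=I$. Since $\mathcal{C}_-(\hat\mu\hat w)=\hat\mu-I$ by the integral equation, one obtains $\hat M_-=I+\mathcal{C}_-(\hat\mu\hat w)=\hat\mu$, while
\begin{equation*}
\hat M_+=I+\mathcal{C}_+(\hat\mu\hat w)=I+\mathcal{C}_-(\hat\mu\hat w)+\hat\mu\hat w=\hat\mu+\hat\mu\hat w=\hat\mu\hat v.
\end{equation*}
Hence $\hat M_+=\hat M_-\hat v$, i.e.\ $\hat M$ solves RH problem \ref{rhp_hatM} in the correct class.

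Finally, for uniqueness I would argue that any solution $\hat M\in I+\dot{E}^3(\mathbb{C}\setminus\hat{\Gamma})$ is necessarily reproduced by the Cauchy integral of its own jump, namely $\hat M=I+\mathcal{C}(\hat M_-\hat w)$; taking the minus boundary value gives $(I-\mathcal{C}_{\hat w})\hat M_-=I$, whence $\hat M_-=\hat\mu$ by Lemma \ref{lem_I-C_inverse}, so $\hat M$ coincides with the function built above. I expect the main obstacle to lie precisely in this last step: the reproduction formula rests on the Smirnov-type structure of $\dot{E}^3(\mathbb{C}\setminus\hat{\Gamma})$ (the defining condition involving the exhausting curves $\{C_n\}$), and some care is required because $\hat{\Gamma}$ is unbounded and has self-intersection points at $\pm k_1$; the $L^3$—rather than $L^2$—setting also forces reliance on the boundedness of $\mathcal{C}_\pm$ on $\dot L^3(\hat{\Gamma})$ instead of on Hilbert-space orthogonality. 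The relevant properties are available in Ref.\ \cite{Lenells_2018}.
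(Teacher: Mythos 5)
Your proposal is correct and follows exactly the route the paper intends: the paper states this lemma without proof, relying on the machinery it has just assembled (the invertibility of $I-\mathcal{C}_{\hat w}$ from Lemma \ref{lem_I-C_inverse}, the definition of $\hat\mu$, and the $\dot E^3$/$\dot L^3$ Cauchy-operator framework of Ref.~\cite{Lenells_2018}), and your argument --- the integral equation for $\hat\mu$, the jump verification via $\mathcal{C}_+-\mathcal{C}_-=I$, and uniqueness via the reproduction formula $\hat M=I+\mathcal{C}(\hat M_-\hat w)$ --- is precisely that standard Beals--Coifman small-norm argument. Your closing remark correctly locates the only genuinely technical ingredient (the Smirnov-class reproduction formula on the unbounded self-intersecting Carleson contour $\hat\Gamma$), which is indeed what \cite{Lenells_2018} supplies.
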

  \begin{lem}\label{Error_mu-I}
      For sufficiently large $t$ and $\zeta\in\mathcal{I}_2$, we have
      \begin{equation*}
          \left\|\partial_x^l(\hat{\mu}-I)\right\|_{L^p(\hat{\Gamma})}\leq \frac{C\ln t^{\frac{p-1}{p}}}{\sqrt{t}}, \quad 1<p<\infty, \,\, l=0,1.
      \end{equation*}
  \end{lem}
  \begin{proof}
      Let $\|\mathcal{C}_-\|_p:=\|\mathcal{C}_-\|_{\mathcal{B}(L^p(\hat{\Gamma}))}<\infty$ and assume the time $t$ is sufficiently large such that $\|\hat{w}\|_{L^\infty(\hat{\Gamma})}<\|\mathcal{C}_-\|_p^{-1}$. Using the definition of $\hat{\mu}$ and Neumann series:
      \begin{equation*}
          \|\hat{\mu}-I\|_{L^p(\hat{\Gamma})}\leq \sum_{j=0}^{\infty}\|\mathcal{C}_{\hat{w}}\|_{\mathcal{B}(L^p(\hat{\Gamma}))}^{j}\|\mathcal{C}_{\hat{w}}I\|_{L^p(\hat{\Gamma})}\leq \sum_{j=0}^{\infty}\|\mathcal{C}_-\|_p^{j+1}\|\hat{w}\|_{L^\infty(\hat{\Gamma})}^{j}\|\hat{w}\|_{L^p(\hat{\Gamma})}=\frac{\|\mathcal{C}_-\|_p\|\hat{w}\|_{L^p(\hat{\Gamma})}}{1-\|\mathcal{C}_-\|_p\|\hat{w}\|_{L^\infty(\hat{\Gamma})}}.
      \end{equation*}
      Thus, based on the estimates in equation \eqref{hatw_Lp}, we can complete the proof for the case $l=0$ in the lemma. For $l=1$,
      \begin{align*}
          \|\partial_x(\hat{\mu}-I)\|_{L^p(\hat{\Gamma})}&\leq \sum_{j=1}^{\infty}j\|\mathcal{C}_{\hat{w}}\|_{\mathcal{B}(L^p(\hat{\Gamma}))}^{j-1}\|\partial_x\mathcal{C}_{\hat{w}}\|_{\mathcal{B}(L^p(\hat{\Gamma}))}\|\mathcal{C}_{\hat{w}}I\|_{L^p(\hat{\Gamma})}+\sum_{j=0}^{\infty}\|\mathcal{C}_{\hat{w}}\|_{\mathcal{B}(L^p(\hat{\Gamma}))}^{j}\|\partial_x\mathcal{C}_{\hat{w}}I\|_{L^p(\hat{\Gamma})}\\
          &\leq C\frac{\|\partial_x\hat{w}\|_{L^\infty(\hat{\Gamma})}\|\hat{w}\|_{L^p(\hat{\Gamma})}+\|\partial_x\hat{w}\|_{L^p(\hat{\Gamma})}}{1-\|\mathcal{C}_-\|_p\|\hat{w}\|_{L^\infty(\hat{\Gamma})}}.
      \end{align*}
      By the same reasoning, the lemma is proven.
  \end{proof}

   \subsubsection{Asymptotics of $\phi(x,t)$ and $n(x,t)$}
   In the previous subsection, we used the Cauchy operator to present the form of the solution to the RH problem \ref{rhp_hatM}. This subsection will utilize the aforementioned results to provide the asymptotic forms of the solutions $\phi(x,t)$ and $n(x,t)$ of the YO equation \eqref{YOE} in region $\zeta\in\mathcal{I}_2$ as $t\to\infty$. First, define the nontangential limits
   \begin{equation*}
       Q(x,t)=\lim_{k\to\infty}k\left(\hat{M}(x,t,k)-I\right)=-\frac{1}{2\pi\ri}\int_{\hat{\Gamma}}\hat{\mu}(x,t,k)\hat{w}(x,t,k)\rd k.
   \end{equation*}
   \begin{lem}\label{lem_Q}
       As $t\to\infty$, we have
       \begin{equation*}\label{Q_leading}
           Q(x,t)=-\frac{1}{2\pi\ri}\int_{\partial B_\epsilon }\hat{w}(x,t,k)\rd k+\mathcal{O}\left(\frac{\ln t}{t}\right).
       \end{equation*}
       Furthermore, the above equation can be differentiated term by term with respect to $x$ without increasing its error term.
   \end{lem}
   \begin{proof}
       To complete the proof, we decompose $Q(x,t)$ into the following three parts
       \begin{align*}
           Q(x,t)=-\frac{1}{2\pi\ri}\int_{\partial B_\epsilon}\hat{w}(x,t,k)\rd k-\frac{1}{2\pi\ri}\int_{\hat{\Gamma}\setminus\partial B_\epsilon}\hat{w}(x,t,k)\rd k-\frac{1}{2\pi\ri}\int_{\hat{\Gamma}}(\hat{\mu}(x,t,k)-I)\hat{w}(x,t,k)\rd k,
       \end{align*}
       The remaining results can be directly obtained from Lemmas \ref{lem_hatw} and \ref{Error_mu-I}.
   \end{proof}

   Combining the definition \eqref{hatM} of function $\hat{M}$ and its associated jump matrix $\hat{v}$ in \eqref{hatv}, the following relationship holds for $t\to\infty$:
   \begin{align*}
       Q(x,t)&=-\frac{1}{2\pi\ri}\int_{\partial B_\epsilon }\left(\left(M^{(\pm k_0)}\right)^{-1}-I\right)\rd k+\mathcal{O}\left(\frac{\ln t}{t}\right)=\frac{Y_1(\zeta)M^{X_1}_1(q(\zeta))Y_1^{-1}(\zeta)}{2\sqrt{t}}+\frac{Y_2(\zeta)M^{X_2}_1(q(\zeta))Y_2^{-1}(\zeta)}{2\sqrt{t}}+\mathcal{O}\left(\frac{\ln t}{t}\right)\\
       &=\frac{1}{2\sqrt{t}}\begin{pmatrix}
            0 & d_{0}^{(-k_0)}\re^{t\theta_{12}(-k_0)}\alpha_{12} & 0\\
             \left(d_{0}^{(-k_0)}\right)^{-1}\re^{-t\theta_{12}(-k_0)}\alpha_{21} & 0 & d_{0}^{(k_0)}\re^{t\theta_{23}(k_0)}\alpha_{23}\\
             0 & \left(d_{0}^{(k_0)}\right)^{-1}\re^{-t\theta_{23}(k_0)}\alpha_{32} & 0\\
        \end{pmatrix}+\mathcal{O}\left(\frac{\ln t}{t}\right),
   \end{align*}
   where $\alpha_{ij}$ are given in the Appendix \ref{AppendixA}.

   By tallying all the transformations made to $M(x,t,k)$ in this section, we obtain
   \begin{equation*}
       M(x,t,k)=\hat{M}(x,t,k)T^{-1}(x,t,k) \Delta^{-1}(k), \quad k\in \mathbb{C}\setminus\overline{B_\epsilon},
   \end{equation*}
   where $\Delta$ and $T$ are defined in equations \eqref{Delta} and \eqref{T}. Using the reconstruction formula \eqref{reconstruct}, we obtain
    \begin{align*}
        n(x,t)&=2\ri \frac{\partial}{\partial x}\lim\limits_{k\to\infty}k\left[\begin{pmatrix}
            1 & 0 & 1\\
        \end{pmatrix}\left(\hat{M}T^{-1} \Delta^{-1}-I\right)\right]_{1}=\mathcal{O}\left(\frac{\ln t}{t}\right),\\
    \phi(x,t)&=\ri \lim\limits_{k\to\infty}\left[\begin{pmatrix}
            0 & 1 & 0\\
        \end{pmatrix}\left(\hat{M}T^{-1} \Delta^{-1}-I\right)\right]_{1}\\
    &=\frac{\sqrt{2\pi}\left(2\sqrt{t}\right)^{-2\ri\nu}{\rm{exp}}\left(\ri\nu \ln(2k_0)+\frac{3\pi \ri}{4}+2\ri tk_0^2-\frac{\pi\nu}{2}+s_1\right)}{2\sqrt{t}r_1(-k_0)\Gamma(-\ri\nu)}+\mathcal{O}\left(\frac{\ln t}{t}\right),
    \end{align*}
    where
    \begin{align*}
    &s_1=\frac{1}{2\pi\ri}\int_{k_0}^{\infty}\ln\left({(s+k_0)}{(s-k_0)^2}\right){\rm{d}}\ln(1-8s\left|r_1(-s)\right|^2 ).
    \end{align*}
    Obviously, as a complex component in the YO equation \eqref{YO}, we provide the modulus of the variable $\phi(x,t)$:
    \begin{align*}
        |\phi(x,t)|\simeq\frac{\sqrt{2k_0\nu}}{\sqrt{t}}.
    \end{align*}

    \subsection{Long-time asymptotics in Region {\rm{I}}} 
    When $(x,t)$ in Region \text{I}, i.e., the parameter $M\leq\zeta\leq\infty$ for sufficiently large positive constant $M$. In this region, we provide different parameter $\tau$ and $0\leq\tau\leq M^{-1}$. Consequently, the dispersion relations satisfy:
    \begin{equation*}
    	t\theta_{ij}(\zeta,k)=x\tilde\theta_{ij}(\tau,k),\quad  1\leq i<j\leq3,
    \end{equation*} 
     where $\tilde\theta_{ij}(\tau,k)$ takes the following form:
    \begin{equation}\label{theta_tilde}
    	\begin{aligned}
    		&\tilde \theta_{12}(\tau,k):=2\ri k\left((k-1)\tau +1\right),\quad\tilde \theta_{13}(\tau,k):=4\ri k\left(1-\tau\right),\quad\tilde \theta_{23}(\tau,k):=-2\ri k\left((k+1)\tau-1\right).
    	\end{aligned}
    \end{equation}
    And it can be denoted as $k_0=\frac{1-\tau}{2\tau}$. The signatures of functions $\tilde \theta_{12}(\tau,k)$, $\tilde \theta_{13}(\tau,k)$, and $\tilde \theta_{23}(\tau,k)$ for $\tau\in\mathcal{I}_1$ are identical to those in Figure \ref{fig_theta_sign1}, and are therefore not repeated here. The following two decompositions hold:
    \begin{equation*}
    	\begin{aligned}
    	&x\tilde\theta_{12}(\tau,k)=x\tilde\theta_{12}(\tau,-k_0)+\frac{\ri}{2}\left( 2\sqrt{x\tau}(k+k_0)\right) ^2,\\
    	&x\tilde\theta_{23}(\tau,k)=x\tilde\theta_{12}(\tau,k_0)-\frac{\ri}{2}\left( 2\sqrt{x\tau}(k-k_0)\right) ^2.
    	\end{aligned}
    \end{equation*}
    In fact, we find that the analysis of the asymptotic behavior for the solution to the initial-value problem \eqref{YOE} of the YO equation in Region \text{I} is very similar to the analysis for Region \text{II} in the previous subsection. However, in this region, we utilize the variable $\tau$ and consider the spatial variable $x\to+\infty$; the error estimates for the asymptotics are also in terms of $x$. Here, we will not repeat the same operations, and can directly obtain that the asymptotic behavior of the solution to initial-value problem \eqref{YOE} in Region \text{I} can be described by \eqref{region_1}.

    \begin{remark}
    	Since as $\tau \to 0$, $k_0\to+\infty$, and according to Lemma \ref{lem_r_decomposition}, $r_1(k)$ vanishes to all orders at $k=k_0$, it follows that the functions $\nu$ and $s_1$ vanish to all orders as $\tau \to 0$. Consequently, equation \eqref{region_1} implies that as $x\to+\infty$:
    	\begin{equation*}
    		\begin{aligned}
    			n(x,t)&=\mathcal{O}\left(\frac{1}{x^{N}}+\frac{C_N(\tau)}{x}\right),\qquad
    			\phi(x,t)=\mathcal{O}\left(\frac{1}{x^{N}}+\frac{C_N(\tau)}{x}\right),
    		\end{aligned}
    	\end{equation*}
    	uniformly for $\tau\in\mathcal{I}_1$. In particular, for any fixed $t\geq 0$, the above expression can be reduced to $\mathcal{O}\left({x^{-N}}\right) $ as $x \to +\infty$.
    \end{remark}

\subsection{Long-time asymptotics in Region {\rm{III}}}\label{subsec_region2}
Unlike in region $\zeta\in\mathcal{I}_2$ discussed in the previous section, when $\zeta\in\mathcal{I}_3$ the signature of the dispersion relations \eqref{theta} take the opposite forms shown in Figure \ref{fig_theta_sign2}. It should be noted that in this region $k_0<0$. Moreover, the proofs of most lemmas in this subsection are analogous to those in Subsection \ref{subsec_region1}; for details, refer to the preceding subsection and hence are omitted here.

      \begin{figure}[htbp]
    \centering
    \begin{subfigure}[t]{0.28\textwidth}
        \centering
     	\begin{tikzpicture} [scale=0.7]
        
         \fill[black!20!blue!20] (-3,0) -- (1.5,0) -- (1.5,3) -- (-3,3) -- cycle;
        \fill[black!20!blue!20] (1.5,0) -- (3,0) -- (3,-3) -- (1.5,-3) -- cycle;
	 		
	  \draw [very thick,black!20!blue](-3,0) -- (3,0);
        \draw [very thick,black!20!blue](1.5,-3) -- (1.5,3);

        \fill (-1.5,0) circle (1.5pt);
        \fill (0,0) circle (1.5pt);
	  \fill (1.5,0) circle (1.5pt);

        \node[below] at (1.8,0) {$-k_0$};
	  \node[below] at (0,0) {$\text{0}$};
	  \node[below] at (-1.5,0) {$k_0$};

       \node at (2.4,2) {$V_1^{(12)}$};
      \node at (-1,2) {$V_2^{(12)}$};
      \node at (-1,-2) {$V_3^{(12)}$};
      \node at (2.4,-2) {$V_4^{(12)}$};

        \node[right] at (3,0) {$\mathbb{R}$};
     	\end{tikzpicture}
     		\caption{The signature of $\theta_{12}$.}
    \end{subfigure}
    \quad
    \begin{subfigure}[t]{0.28\textwidth}
        \centering
     	\begin{tikzpicture} [scale=0.7]
        
         \fill[black!20!blue!20] (-3,0) -- (3,0) -- (3,3) -- (-3,3) -- cycle;
         \fill[white] (-3,0) -- (3,0) -- (3,-3) -- (-3,-3) -- cycle;
	 		
	  \draw [very thick,black!20!blue](-3,0) -- (3,0);

        \fill (-1.5,0) circle (1.5pt);
        \fill (0,0) circle (1.5pt);
	  \fill (1.5,0) circle (1.5pt);

        \node[below] at (-1.5,0) {$k_0$};
	  \node[below] at (0,0) {$\text{0}$};
	  \node[below] at (1.5,0) {$-k_0$};

       \node at (0,2) {$V_1^{(13)}$};
      \node at (0,-2) {$V_2^{(13)}$};

        \node[right] at (3,0) {$\mathbb{R}$};
     	\end{tikzpicture}
     		\caption{The signature of $\theta_{13}$.\label{subfig_theta13_2}}
    \end{subfigure}
    \quad
    \begin{subfigure}[t]{0.28\textwidth}
        \centering
     	\begin{tikzpicture} [scale=0.7]
        
         \fill[black!20!blue!20] (3,0) -- (-1.5,0) -- (-1.5,3) -- (3,3) -- cycle;
        \fill[black!20!blue!20] (-1.5,0) -- (-3,0) -- (-3,-3) -- (-1.5,-3) -- cycle;
	 		
	  \draw [very thick,black!20!blue](-3,0) -- (3,0);
        \draw [very thick,black!20!blue](-1.5,-3) -- (-1.5,3);

        \fill (-1.5,0) circle (1.5pt);
        \fill (0,0) circle (1.5pt);
	  \fill (1.5,0) circle (1.5pt);

        \node[below] at (-1.2,0) {$k_0$};
	  \node[below] at (0,0) {$\text{0}$};
	  \node[below] at (1.5,0) {$-k_0$};

      \node at (-2.4,2) {$V_2^{(23)}$};
      \node at (1,2) {$V_1^{(23)}$};
      \node at (1,-2) {$V_4^{(23)}$};
      \node at (-2.4,-2) {$V_3^{(23)}$};

        \node[right] at (3,0) {$\mathbb{R}$};
     	\end{tikzpicture}
     		\caption{The signature of $\theta_{23}$.}
             \end{subfigure}
             \caption{ Open sets in the complex $k$-plane for Region \rm{III}: $\rre \theta_{ij}>0$ (shaded) and $\rre \theta_{ij}<0$ (white).}
             \label{fig_theta_sign2}
\end{figure}

\subsubsection{First transformation}
  As $\zeta\in\mathcal{I}_3$ and $t>0$, the analytic approximations for the reflection coefficients $r_1(k)$, $r_2(k)$ and $\hat{r}_1(k)$ are analogous to Lemma \ref{lem_r_decomposition} in Subsection \ref{subsec_611}, except that the region $U_n^{(ij)}$ in Lemma \ref{lem_r_decomposition} must be replaced by region $V_n^{(ij)}$, $n=1,2,3,4$, $1\leq i<j\leq 3$, in Figure \ref{fig_theta_sign2}. Next, we introduce the following notations and provide the analytic approximations of the reflection coefficients as follows:
\begin{align*}
    &\tilde{r}_2(k):=\frac{r_2(k)}{1-8k|r_1(-k)|^2-|r_2(k)|^2},\quad\tilde{\alpha}(k):=\frac{\alpha(k)}{1-8k|r_1(-k)|^2-|r_2(k)|^2},\quad\hat{\alpha}(k):=\frac{\alpha(k)}{1+8k|r_1(k)|^2}.
\end{align*}

\begin{lem}\label{alpha_decomposition}
For each $\zeta\in\mathcal{I}_3$ and $t>0$, there are several decompositions
    \begin{align*}
        &\tilde{r}_2(k)=\tilde{r}_{2,a}(x,t,k)+\tilde{r}_{2,r}(x,t,k),\quad && k\in\mathbb{R}_+,\\
        &\tilde{\alpha}(k)=\tilde{\alpha}_a(x,t,k)+\tilde{\alpha}_r(x,t,k), &&k\in \left[k_0,\infty\right),\\
        &\hat{\alpha}(k)=\hat{\alpha}_a(x,t,k)+\hat{\alpha}_r(x,t,k), && k\in\left(-\infty,k_0\right],
    \end{align*}
    where the above functions satisfy the following properties:
    \begin{itemize}
     \item $\tilde{r}_{2,a}(x,t,k)$ is defined and continuous for $k\in \{k\,|\,\rre k\geq0,\rim k\leq0\}$, and is analytic for $k\in \{k\,|\,\rre k>0,\rim k<0\}$. $\tilde{\alpha}_a(k)$ is defined and continuous for $k\in \overline{V_4^{(23)}}$, and is analytic for $k\in V_4^{(23)}$. $\hat{\alpha}_a(k)$ is defined and continuous for $k\in \overline{V_2^{(23)}}$, and is analytic for $k\in V_2^{(23)}$.
    \item The functions $\tilde{r}_{2,a}(k)$, $\tilde{\alpha}_a(k)$ and $\hat{\alpha}_a(k)$ satisfy
        \begin{align*}
            &\left|\partial_x^l\left(\tilde{r}_{2,a}(x,t,k)-\sum_{j=0}^{N}\frac{\tilde{r}^{(j)}_{2,a}(0)}{j!}k^j\right)\right|\leq C \left|k\right|^{N+1}\re^{\frac{t}{4}\left|\rre\theta_{13}(\zeta,k)\right|},\quad &&k\in  \{k\,|\,\rre k\geq0,\rim k\leq0\},\\
            &\left|\partial_x^l\left(\tilde{\alpha}_a(x,t,k)-\sum_{j=0}^{N}\frac{\tilde{\alpha}^{(j)}_a(k_0)}{j!}(k-k_0)^{j}\right)\right|\leq C \left|k-k_0\right|^{N+1}\re^{\frac{t}{4}\left|\rre\theta_{23}(\zeta,k)\right|},\quad  &&k\in \overline{V_4^{(23)}},\\
            &\left|\partial_x^l\left(\hat{\alpha}_a(x,t,k)-\sum_{j=0}^{N}\frac{\hat{\alpha}^{(j)}_a(k_0)}{j!}(k-k_0)^{j}\right)\right|\leq C \left|k-k_0\right|^{N+1}\re^{\frac{t}{4}\left|\rre\theta_{23}(\zeta,k)\right|},\quad  &&k\in \overline{V_2^{(23)}},
            \end{align*}
            and
        \begin{align*}    
            &\left|\partial_x^l\left(\tilde{r}_{2,a}(x,t,k)\right)\right|\leq  \frac{C}{1+\left|k\right|}\re^{\frac{t}{4}\left|\rre\theta_{13}(\zeta,k)\right|},\quad  &&k\in \{k\,|\,\rre k\geq0,\rim k\leq0\},\\
            &\left|\partial_x^l\left(\tilde{\alpha}_a(x,t,k)\right)\right|\leq  \frac{C}{1+\left|k\right|}\re^{\frac{t}{4}\left|\rre\theta_{23}(\zeta,k)\right|},\quad  &&k\in \overline{V_4^{(23)}},\\
            &\left|\partial_x^l\left(\hat{\alpha}_a(x,t,k)\right)\right|\leq  \frac{C}{1+\left|k\right|}\re^{\frac{t}{4}\left|\rre\theta_{23}(\zeta,k)\right|},\quad  &&k\in \overline{V_2^{(23)}},
        \end{align*}
        where $l=0,1$, and the constant $C$ is independent of $\zeta,t,k$.
    \item For each $1\leq p\leq \infty$ and $l=0,1$,
    \begin{align*}
        &\left\|(1+|\cdot|)\partial_x^l\tilde{r}_{2,a}(x,t,\cdot)\right\|_{L^p(\mathbb{R}_+)}=\mathcal{O}\left(t^{-N}\right),\\
        &\left\|(1+|\cdot|)\partial_x^l\tilde{\alpha}_a(x,t,\cdot)\right\|_{L^p\left(k_0,\infty\right)}=\left\|\frac{\tilde{\alpha}_a(x,t,\cdot)}{\cdot-k_0}\right\|_{L^p\left(k_0,\infty\right)}=\mathcal{O}\left(t^{-N}\right),\\
         &\left\|(1+|\cdot|)\partial_x^l\hat{\alpha}_a(x,t,\cdot)\right\|_{L^p\left(-\infty,k_0\right)}=\left\|\frac{\hat{\alpha}_a(x,t,\cdot)}{\cdot-k_0}\right\|_{L^p\left(-\infty,k_0\right)}=\mathcal{O}\left(t^{-N}\right),
    \end{align*}
    uniformly for $\zeta\in\mathcal{I}_3$ as $t\to\infty$.
        \end{itemize}
\end{lem}

It is precisely because the signature of the dispersion relation $\theta_{13}(k)$ in Figure~\ref{subfig_theta13_2} is opposite to that in Figure~\ref{subfig_theta13_1} that the treatment in Region III becomes much more complicated than that in Region II. To address the jump problem of $\theta_{13}(k)$ along the real axis $\mathbb{R}$, the following three scalar RH problems need to be introduced:
     \begin{equation}\label{delta1}
    	\left\{
    	\begin{aligned}
    		&\begin{aligned}\delta_{1+}(k)
    			=&\delta_{1-}(k)(1+8k|r_1(k)|^2),\quad  &&k\in \left(-\infty,k_0\right],\\
    			=&\delta_{1-}(k), 
                &&k\in\mathbb{C}\setminus\left(-\infty,k_0\right],
    		\end{aligned}\\
            &\delta_1(k)\rightarrow1,\qquad\qquad\quad\qquad\qquad\,\,\,
              k\rightarrow\infty,
    	\end{aligned}
    	\right.
    \end{equation}
     \begin{equation}\label{delta2}
    	\left\{
    	\begin{aligned}
    		&\begin{aligned}\delta_{2+}(k)
    			=&\delta_{2-}(k)(1+8k|r_1(k)|^2-|r_2(-k)|^2),\quad  &&k\in \left(-\infty,k_0\right],\\
    			=&\delta_{2-}(k), 
                &&k\in\mathbb{C}\setminus\left(-\infty,k_0\right],
    		\end{aligned}\\
            &\delta_2(k)\rightarrow1,\qquad\qquad\qquad\qquad\quad\qquad\qquad\,\,\,\,\,\,
              k\rightarrow\infty,
    	\end{aligned}
    	\right.
    \end{equation}
     \begin{equation}\label{delta3}
    	\left\{
    	\begin{aligned}
    		&\begin{aligned}\delta_{3+}(k)
    			=&\delta_{3-}(k)(1-8k|r_1(-k)|^2-|r_2(k)|^2),\quad  &&k\in \left[k_0,-k_0\right],\\
    			=&\delta_{3-}(k), 
                &&k\in\mathbb{C}\setminus\left[k_0,-k_0\right],
    		\end{aligned}\\
            &\delta_3(k)\rightarrow1,\qquad\qquad\quad\qquad\qquad\qquad\qquad\,\,\,\,\,\,
              k\rightarrow\infty.
    	\end{aligned}
    	\right.
    \end{equation}
    The functions $\delta_j(k)$, $j=1,2,3,$ defined in equations \eqref{delta1}-\eqref{delta3} can be written in the form of the following integral equations:
   \begin{equation}\label{vol_delta1}
   \begin{aligned}
        &\delta_1(\zeta,k)={\rm{exp}}\left\{\frac{1}{2 \pi\ri}\int_{-\infty}^{k_0}\frac{\ln \left(1+8s|r_1(s)|^2\right)}{s-k}{\rm{d}}s\right\},\quad &&k\in \mathbb{C}\backslash \left(-\infty,k_0\right],\\
        &\delta_2(\zeta,k)={\rm{exp}}\left\{\frac{1}{2 \pi\ri}\int_{-\infty}^{k_0}\frac{\ln \left({1+8s|r_1(s)|^2-|r_2(-s)|^2}\right)}{s-k}{\rm{d}}s\right\},\quad &&k\in \mathbb{C}\backslash \left(-\infty,k_0\right],\\
        &\delta_3(\zeta,k)={\rm{exp}}\left\{\frac{1}{2 \pi\ri}\int_{k_0}^{-k_0}\frac{\ln \left({1-8s|r_1(-s)|^2-|r_2(s)|^2}\right)}{s-k}{\rm{d}}s\right\},\quad &&k\in \mathbb{C}\backslash [k_0,-k_0].
   \end{aligned}
    \end{equation}

 \begin{lem}\label{lem_delta1}
    	The functions $\delta_j(\zeta,k)$, $j=1,2,3,$ possess the following properties:
    	\begin{enumerate}
     		\item $\delta_1^{\pm1}(k)$ and $\delta_2^{\pm1}(k)$ are analytic on $\mathbb{C}\backslash \left(-\infty,k_0\right]$, $\delta_3^{\pm1}(k)$ are analytic on $\mathbb{C}\backslash [k_0,-k_0]$, and can be written respectively as
			\begin{equation*}\left\lbrace
            \begin{aligned}
                &\delta_1(\zeta,k)=\re ^{\ri \nu_1\ln_\pi (k-k_0)}{\rm{e}}^{-\chi_1(\zeta,k)},\quad &&k\in\mathbb{C}\backslash \left(-\infty,k_0\right], \\
                &\delta_2(\zeta,k)=\re ^{\ri \nu_2\ln_\pi (k-k_0)}{\rm{e}}^{-\chi_2(\zeta,k)},\quad &&k\in\mathbb{C}\backslash \left(-\infty,k_0\right],\\
                &\delta_3(\zeta,k)=\re ^{\ri \nu_2\ln_\theta (k+k_0)-\ri\nu_3\ln_\theta (k-k_0)}{\rm{e}}^{-\chi_3(\zeta,k)},\quad &&k\in\mathbb{C}\backslash \left[k_0,-k_0\right],
            \end{aligned}\right.
				\qquad \zeta\in\mathcal{I}_3,
			\end{equation*}
			where $\theta=\pi$ as $k$ approaches $-k_0$; and $\theta=0$ as $k$ approaches $k_0$
             for the function $\delta_3$,
		\begin{align*}
			&\nu_1=-\frac{1}{2\pi} \ln \left({1+8k_0|r_1(k_0)|^2}\right),\quad &&\chi_1(\zeta,k)=\frac{1}{2\pi\ri}\int_{-\infty}^{k_0}\ln_\pi(k-s){\rm{d}}\ln \left(1+8s|r_1(s)|^2\right),\\
            &\nu_2=-\frac{1}{2\pi} \ln \left(1+8k_0|r_1(k_0)|^2-|r_2(-k_0)|^2\right),\quad &&\chi_2(\zeta,k)=\frac{1}{2\pi\ri}\int_{-\infty}^{k_0}\ln_\pi(k-s){\rm{d}}\ln \left({1+8s|r_1(s)|^2-|r_2(-s)|^2}\right),\\
            &\nu_3=-\frac{1}{2\pi} \ln \left(1-8k_0|r_1(-k_0)|^2-|r_2(k_0)|^2\right),\quad &&\chi_3(\zeta,k)=\frac{1}{2\pi\ri}\int_{k_0}^{-k_0}\ln_\theta(k-s){\rm{d}}\ln \left({1-8s|r_1(-s)|^2-|r_2(s)|^2}\right).
				\end{align*}
	
    	\item For each $\zeta\in\mathcal{I}_3$, $\delta_1(k)$ and $\delta_2(k)$ are bounded in $\mathbb{C}\backslash \left(-\infty,k_0\right]$, $\delta_3(k)$ is bounded in $\mathbb{C}\backslash \left[k_0,-k_0\right]$, and the symmetry relations hold:
    	\begin{equation*}
    		\delta_j^{-1}(k)= \overline{\delta_j(\overline{k})},\qquad j=1,2,3.
    	\end{equation*}
        \item As $k\to 0$ along the non-tangential direction of $(-\infty,k_0)$, we obtain the following formulas hold for $j = 1, 2$:
        \begin{align*}
            &\left|\chi_j(\zeta,k)-\chi_j({\zeta,k_0})\right|\leq C|k-k_0|\left(1+|\ln|k-k_0||\right),\quad
            &&\left|\partial_x\chi_j(\zeta,k_0)\right|\leq\frac{C}{t},\\
            &\left|\partial_x\left(\chi_j(\zeta,k)-\chi_j({\zeta,k_0})\right)\right|\leq \frac{C}{t}\left(1+|\ln|k-k_0||\right),\quad &&\partial_x\left(\delta_j(\zeta,k)^{\pm1}\right)=\frac{\pm\ri \nu_j}{2t(k-k_0)}\delta_j(\zeta,k)^{\pm1},
             \end{align*}
             and as $k\to 0$ along the non-tangential direction of $(k_0,-k_0)$, we have
        \begin{align*}
            &\left|\chi_3(\zeta,k)-\chi_3({\zeta,\pm k_0})\right|\leq C|k\mp k_0|\left(1+|\ln|k\mp k_0||\right),\quad
            &&\left|\partial_x\chi_3(\zeta,\pm k_0)\right|\leq\frac{C}{t},\\
            &\left|\partial_x\left(\chi_3(\zeta,k)-\chi_3({\zeta,\pm k_0})\right)\right|\leq \frac{C}{t}\left(1+|\ln|k\mp k_0||\right),\quad &&\partial_x\left(\delta_3(\zeta,k)^{\pm1}\right)=\left(\pm\frac{\ri \nu_2}{2t(k+ k_0)}\mp\frac{\ri \nu_3}{2t(k-k_0)}\right)\delta_3(\zeta,k)^{\pm1},
             \end{align*}    
        where $C$ is independent of $\zeta$.
            
    	\end{enumerate}
    \end{lem}

    \begin{proof}
        The lemma is derived from equation \eqref{vol_delta1}, and can be proved by direct estimation.
    \end{proof}
\begin{remark}
 Naturally, the following equations can be derived from equations \eqref{delta1}, \eqref{delta2} and \eqref{delta3}:
    \begin{align*}
        &\delta_1(\zeta,-k)
        ={\rm{exp}}\left\{-\frac{1}{2 \pi\ri}\int_{-k_0}^{+\infty}\frac{\ln \left(1-8s|r_1(-s)|^2\right)}{s-k}{\rm{d}}s\right\},\quad &&k\in \mathbb{C}\backslash \left[-k_0,+\infty\right),\\
        &\delta_2(\zeta,-k)
        ={\rm{exp}}\left\{-\frac{1}{2 \pi\ri}\int_{-k_0}^{+\infty}\frac{\ln \left({1-8s|r_1(-s)|^2-|r_2(s)|^2}\right)}{s-k}{\rm{d}}s\right\},\quad &&k\in \mathbb{C}\backslash \left[-k_0,+\infty\right),\\
        &\delta_3(\zeta,-k)={\rm{exp}}\left\{-\frac{1}{2 \pi\ri}\int_{k_0}^{-k_0}\frac{\ln \left({1+8s|r_1(s)|^2-|r_2(-s)|^2}\right)}{s-k}{\rm{d}}s\right\},\quad &&k\in \mathbb{C}\backslash [k_0,-k_0],
        \end{align*}
and  they can be written as:
    \begin{equation*}
        \delta_1(\zeta,-k)=\re ^{\ri \nu_1\ln_0 (k+k_0)}{\rm{e}}^{-\chi_1(\zeta,-k)}, \quad \delta_2(\zeta,-k)=\re ^{\ri \nu_2\ln_0 (k+k_0)}{\rm{e}}^{-\chi_2(\zeta,-k)},\quad \delta_3(\zeta,-k)=\re ^{\ri \nu_2\ln_{\theta} (k-k_0)-\ri \nu_3\ln_{\theta} (k+k_0)}{\rm{e}}^{-\chi_3(\zeta,-k)},
    \end{equation*}
    where
   \begin{align*}
			&\chi_1(\zeta,-k)=-\frac{1}{2\pi\ri}\int^{+\infty}_{-k_0}\ln_0(k-s){\rm{d}}\ln \left(1-8s|r_1(-s)|^2\right),\\
            &\chi_2(\zeta,-k)=-\frac{1}{2\pi\ri}\int^{+\infty}_{-k_0}\ln_0(k-s){\rm{d}}\ln \left({1-8s|r_1(-s)|^2-|r_2(s)|^2}\right),\\
            &\chi_3(\zeta,-k)=-\frac{1}{2\pi\ri}\int_{k_0}^{-k_0}\ln_\theta(k-s){\rm{d}}\ln \left({1+8s|r_1(s)|^2-|r_2(-s)|^2}\right).
				\end{align*}
\end{remark}

     Based on the functions defined in \eqref{vol_delta1}, we can construct the following transformation for the eigenfunction $M(x,t,k)$ in the RH problem \ref{rhp_M}:
    \begin{equation*}\label{trans_Delta1}
        M^{(1)}(x,t,k)=M(x,t,k)\Delta_1(k),
    \end{equation*}
    where the $3\times3$ matrix-valued function $\Delta_1$ is defined by 
    \begin{equation}\label{Delta1}
          \Delta_1(k)=\begin{pmatrix}
        	\frac{\delta_1(-k)\delta_2(k)}{\delta_2(-k)\delta_3(-k)} & 0 & 0\\
        	0 & \frac{\delta_3(k)\delta_3(-k)}{{\delta_1(k)}{\delta_1(-k)}} & 0\\
        	0 & 0 & \frac{\delta_1(k)\delta_2(-k)}{\delta_2(k)\delta_3(k)}\\
        \end{pmatrix},
    \end{equation}
    which satisfies the symmetry conditions: 
    \begin{equation*}
        \Delta_1(\zeta,\bar k)=\mathcal{A}(k)\Delta_1^{-1}(\zeta,k)\mathcal{A}^{-1}(k),\qquad \Delta_1(\zeta,k)=\mathcal{B}\Delta_1(\zeta,-k)\mathcal{B}.
    \end{equation*}
    In addition, $\Delta_1(\zeta,k)$  is uniformly bounded with respect to $\zeta\in\mathcal{I}_3$ and $k\in \mathbb{C}\setminus\mathbb{R}$, and
    \begin{equation*}\label{Delta1_bound}
        \Delta_1(\zeta,k)=I+\mathcal{O}\left(k^{-1}\right),\quad k\to\infty.
    \end{equation*}

    There exists the jump relation $M^{(1)}_+(\zeta,k)=M^{(1)}_-(\zeta,k)V^{(1)}(\zeta,k)$ for $k\in \Sigma^{(1)}$ (see Figure \ref{fig_Sigma1}), where $V^{(1)}(k)$, $k\in \Sigma^{(1)}_j$, $j=1,2,3$, take the following forms:

     \begin{figure}[htbp]
    	\centering
    	\begin{tikzpicture}[scale=1]
        
        \draw[very thick, black!20!blue] (-4,0) -- (4,0);
    		
    		\draw[very thick, black!20!blue, -latex] (-4,0) -- (-2.5,0);
            \draw[very thick, black!20!blue, -latex] (-4,0) -- (0.2,0);
            \draw[very thick, black!20!blue, -latex] (-4,0) -- (3,0);

    		\fill (-1.5,0) circle (1.5pt);
            \fill (0,0) circle (1.5pt);
	      \fill (1.5,0) circle (1.5pt);

           \node[below] at (-1.5,0) {$k_0$};
	     \node[below] at (0,-0.2) {$\text{0}$};
	     \node[below] at (1.5,0) {$-k_0$};
         
           \node[red!70!black,above] at (-2.7,0.2) {$1$};
	     \node[red!70!black,above] at (0,0.2) {$2$};
          \node[red!70!black,above] at (2.8,0.2) {$3$};

          \node[right] at (4,0) {$\rre k$};
    		
    	\end{tikzpicture}
    	\caption{The jump contour $\Sigma^{(1)}$ in the complex $k$-plane.}
    	\label{fig_Sigma1}
    \end{figure}

     \begin{align*}
            &V^{(1)}_1(x,t,k)
            =\begin{pmatrix}
					1 &-\hat{r}_1(k)\Delta_{12-}(k)\re^{t\theta_{12}(\zeta,k)} & \tilde{r}_2^*(-k)\Delta_{13-}(k)\re^{t\theta_{13}(\zeta,k)}\\
					0 & 1 & 0\\
					0 & \hat{\alpha}^*(k)\Delta^{-1}_{23-}(k)\re^{-t\theta_{23}(\zeta,k)} & 1\\
				\end{pmatrix}\begin{pmatrix}
					1 & 0 & 0\\
					-8k\hat{r}_1^*(k)\Delta^{-1}_{12+}(k)\re^{-t\theta_{12}(\zeta,k)} & 1  & -8k\hat{\alpha}(k)\Delta_{23+}(k)\re^{t\theta_{23}(\zeta,k)}\\
					-\tilde{r}_2(-k)\Delta_{13+}^{-1}(k)\re^{-t\theta_{13}(\zeta,k)} & 0 & 1\\
				\end{pmatrix}, \\
                &V^{(1)}_2(x,t,k)
                =\begin{pmatrix}
					1 & -\tilde{\alpha}^*(-k)\Delta_{12-}(k)\re^{t\theta_{12}(\zeta,k)}    & \tilde{r}_2(k)\Delta_{13-}(k)\re^{t\theta_{13}(\zeta,k)} \\
					0 & 1 & -8k\tilde{\alpha}(k)\Delta_{23-}(k)\re^{t\theta_{23}(\zeta,k)} \\
					0 & 0 & 1\\
				\end{pmatrix} \begin{pmatrix}
					1 & 0  & 0\\
					-8k\tilde{\alpha}(-k)\Delta_{12+}^{-1}(k)\re^{-t\theta_{12}(\zeta,k)}   & 1  & 0\\
					-\tilde{r}_2^*(k)\Delta_{13+}^{-1}(k)\re^{-t\theta_{13}(\zeta,k)}   & \tilde{\alpha}^*(k)\Delta_{23+}^{-1}(k)\re^{-t\theta_{23}(\zeta,k)}   & 1\\
				\end{pmatrix},\\
               & V^{(1)}_3(x,t,k)
            =\begin{pmatrix}
					1 & 0  & \tilde{r}_2(k)\Delta_{13-}(k)\re^{t\theta_{13}(\zeta,k)} \\
					-8k\hat{\alpha}(-k)\Delta_{12-}^{-1}(k)\re^{-t\theta_{12}(\zeta,k)}   & 1  & -8k\tilde{\alpha}(k)\Delta_{23-}(k)\re^{t\theta_{23}(\zeta,k)} \\
					0  & 0 & 1\\
				\end{pmatrix}\begin{pmatrix}
					1 & -\hat{\alpha}^*(-k)\Delta_{12+}(k)\re^{t\theta_{12}(\zeta,k)}  & 0\\
					0  & 1  & 0\\
					-\tilde{r}_2^*(k)\Delta_{13+}^{-1}(k)\re^{-t\theta_{13}(\zeta,k)}  & \tilde{\alpha}^*(k)\Delta_{23+}^{-1}(k)\re^{-t\theta_{23}(\zeta,k)}  & 1\\
				\end{pmatrix},
            \end{align*}
            where $\Delta_{12}(k)=\frac{\delta_2(-k)\delta_3(k)\delta_3^2(-k)}{\delta_1(k)\delta_1^2(-k)\delta_2(k)}$, $\Delta_{23}(k)=\frac{\delta_1^2(k)\delta_1(-k)\delta_2(-k)}{\delta_2(k)\delta_3^2(k)\delta_3(-k)}$, and $\Delta_{13}(k)=\frac{\delta_1(k)\delta_2^2(-k)\delta_3(-k)}{\delta_1(-k)\delta_2^2(k)\delta_3(k)}.$

   \subsubsection{Second transformation}
   In this subsection, the domains of definition for the transformations are identical to those of Figures \ref{fig_RegionD_1} and \ref{fig_RegionD_3} in Subsection \ref{subsec_right_trans2}, except that the roles of $k_0$ and $-k_0$ are interchanged. We continue to use the notation introduced in Subsection \ref{subsec_right_trans2}. First introduce the piecewise analytic function $R^{(1)}(\zeta,k)=R^{(1)}_j(\zeta,k)$ for $\zeta\in\mathcal{I}_3$, $k\in D^{(1)}_j$, $j=1,2,3,4,$ defined by:
   \begin{equation}\label{trans_R1}
    \begin{aligned}
        &R^{(1)}_1(\zeta,k)=\begin{pmatrix}
					1 & 0 & 0\\
					0  & 1  & 0\\
					\tilde{r}_{2,a}^*(k)\Delta_{13}^{-1}(k)\re^{-t\theta_{13}(\zeta,k)}  & 0 & 1\\
				\end{pmatrix},\quad &&R^{(1)}_2(\zeta,k)=\begin{pmatrix}
					1 & 0 & 0\\
					0 & 1  & 0\\
					\tilde{r}_{2,a}(-k)\Delta_{13}^{-1}(k)\re^{-t\theta_{13}(\zeta,k)}  & 0   & 1\\
				\end{pmatrix},\\
        &R^{(1)}_3(\zeta,k)=\begin{pmatrix}
					1 & 0 & \tilde{r}_{2,a}^*(-k)\Delta_{13}(k)\re^{t\theta_{13}(\zeta,k)}\\
					0 & 1  & 0\\
					0  & 0 & 1\\
				\end{pmatrix},\quad &&R^{(1)}_4(\zeta,k)=\begin{pmatrix}
					1 & 0 & \tilde{r}_{2,a}(k)\Delta_{13}(k)\re^{t\theta_{13}(\zeta,k)}\\
					0 & 1 & 0\\
					0  & 0 & 1\\
				\end{pmatrix}.
    \end{aligned}\end{equation}
    The function $R^{(2)}(\zeta,k)=R^{(2)}_j(\zeta,k)$ for $\zeta\in\mathcal{I}_3$, $k\in D^{(1)}_j$, $j=1,2,3,4,$ defined by:
    \begin{equation}\label{trans_R2}
     \begin{aligned}
         &R^{(2)}_1(\zeta,k)=\begin{pmatrix}
					1 & 0 & 0\\
					0 & 1 & 0\\
					0 & -\tilde{\alpha}_a^*(k)\Delta_{23}^{-1}(k)\re^{-t\theta_{23}(\zeta,k)} & 1\\
				\end{pmatrix},\quad && R^{(2)}_2(\zeta,k)=\begin{pmatrix}
					1 & 0 & 0\\
					8k\tilde{\alpha}_a(-k)\Delta_{12}^{-1}(k)\re^{-t\theta_{12}(\zeta,k)} & 1 & 0\\
					0 & 0 & 1\\
				\end{pmatrix},\\
        &R^{(2)}_3(\zeta,k)=\begin{pmatrix}
					1 & -\tilde{\alpha}_a^*(-k)\Delta_{12}(k)\re^{t\theta_{12}(\zeta,k)} & 0\\
					0 & 1 & 0\\
					0 & 0 & 1\\
				\end{pmatrix},\quad &&R^{(2)}_4(\zeta,k)=\begin{pmatrix}
					1 & 0 & 0\\
					0 & 1 & -8k\tilde{\alpha}_a(k)\Delta_{23}(k)\re^{t\theta_{23}(\zeta,k)}\\
					0  & 0  & 1\\
				\end{pmatrix}.
     \end{aligned}\end{equation}
    The final function $R^{(3)}(\zeta,k)$ is defined as $R^{(3)}_j(\zeta,k)$ for $\zeta\in\mathcal{I}_3$, $k\in D^{(2)}_j$, $j=1,2,\cdots,8,$ in this subsection to handle the jump of the eigenfunction $M^{(1)}(k)$ across $\mathbb{R}$:
       \begin{align}
     &R^{(3)}_1(\zeta,k)=\begin{pmatrix}
					1 & \hat{\alpha}_a^*(-k)\Delta_{12}(k)\re^{t\theta_{12}(\zeta,k)}  & 0\\
					0  & 1 & 0\\
					0 & 0 & 1\\
				\end{pmatrix},\quad && R^{(3)}_2(\zeta,k)=\begin{pmatrix}
        	1 & 0 & 0\\
        	8k\tilde{\alpha}_a(-k)\Delta_{12}^{-1}(k)\re^{-t\theta_{12}(\zeta,k)} & 1 & 0\\
        	0 & 0 & 1\\
        \end{pmatrix}, \nonumber\\
        &R^{(3)}_3(\zeta,k)=\begin{pmatrix}
        	1 & -\tilde{\alpha}_a^*(-k)\Delta_{12}(k)\re^{t\theta_{12}(\zeta,k)} & 0\\
        	0 & 1 & 0\\
        	0 & 0 & 1\\
        \end{pmatrix},\quad && R^{(3)}_4(\zeta,k)=\begin{pmatrix}
					1 & 0  & 0\\
					-8k\hat{\alpha}_a(-k)\Delta_{12}^{-1}(k)\re^{-t\theta_{12}(\zeta,k)}  & 1  & 0\\
					0  & 0  & 1\\
				\end{pmatrix},\nonumber \\
         &R^{(3)}_5(\zeta,k)=\begin{pmatrix}
         	1 & 0 & 0\\
         	0 & 1 & 0\\
         	0 & -\tilde{\alpha}_a^*(k)\Delta_{23}^{-1}(k)\re^{-t\theta_{23}(\zeta,k)} & 1\\
         \end{pmatrix},\quad && R^{(3)}_6(\zeta,k)=\begin{pmatrix}
					1 & 0  & 0\\
					0 & 1 & 8k\hat{\alpha}_a(k)\Delta_{23}(k)\re^{t\theta_{23}(\zeta,k)}\\
					0 & 0  & 1\\
				\end{pmatrix},\label{trans_R3} \\
        &R^{(3)}_7(\zeta,k)=\begin{pmatrix}
					1 & 0 & 0\\
					0 & 1 & 0\\
					0 & \hat{\alpha}_a^*(k)\Delta_{23}^{-1}(k)\re^{-t\theta_{23}(\zeta,k)} & 1\\
				\end{pmatrix},\quad && R^{(3)}_8(\zeta,k)=\begin{pmatrix}
        	1 & 0 & 0\\
        	0 & 1 & -8k\tilde{\alpha}_a(k)\Delta_{23}(k)\re^{t\theta_{23}(\zeta,k)}\\
        	0  & 0  & 1\\
        \end{pmatrix},\nonumber \\
        &R^{(3)}(\zeta,k)=I, \qquad k\in \mathbb{C}\setminus\bigcup_{j=1}^8D^{(2)}_j.\nonumber
     \end{align}

    After constructing the three sub-transformations \eqref{trans_R1}, \eqref{trans_R2}, and \eqref{trans_R3} above, we set $M^{(2)}(x,t,k)=M^{(1)}(x,t,k)R(x,t,k)$, where 
    \begin{equation}\label{trans_R}
        R(x,t,k)=R^{(1)}(x,t,k)R^{(2)}(x,t,k)R^{(3)}(x,t,k).
    \end{equation}
    
    \begin{lem}\label{lem_R_bound}
        $R(x,t,k)$ is uniformly bounded for $k\in \mathbb{C}\setminus\Sigma^{(2)}$, $\zeta\in\mathcal{I}_3$, and $t>0$. More importantly,
        \begin{equation*}\label{R_k_infty}
            R(k)=I+\mathcal{O}(k^{-1}),\quad k\to\infty.
        \end{equation*}
     \end{lem}

     \begin{figure}[htbp]
    	\centering
    	\begin{tikzpicture}[scale=1]
        
        \draw[very thick, black!40!green] (-4,0) -- (4,0);
        \draw[very thick, black!40!green] (0,2) -- (0,3.5);
        \draw[very thick, black!40!green] (0,-2) -- (0,-3.5);

        \draw[very thick, dashed, black!40!green] (0,-2) -- (0,2);
    		
    		\draw[very thick, black!40!green, -latex] (-4,0) -- (-2.8,0);
            \draw[very thick, black!40!green, -latex] (-4,0) -- (-0.8,0);
            \draw[very thick, black!40!green, -latex] (-4,0) -- (1.2,0);
            \draw[very thick, black!40!green, -latex] (-4,0) -- (3.2,0);

            \draw[very thick, black!40!green, -latex] (0,-2) -- (0,-3);
            \draw[very thick, black!40!green, -latex] (0,2) -- (0,3);

            \draw[very thick, blue!20!purple] (2+2/1.414,2/1.414) -- (0,-2);
            \draw[very thick, blue!20!purple] (2+2/1.414,-2/1.414) -- (0,2);

            \draw[very thick, black!20!blue] (-2-2/1.414,2/1.414) -- (0,-2);
            \draw[very thick, black!20!blue] (-2-2/1.414,-2/1.414) -- (0,2);

            \draw[very thick, blue!20!purple,-latex] (2,0) --(2+1/1.2,1/1.2);
            \draw[very thick, blue!20!purple,-latex] (2,0) -- (-1/1.2+2,-1/1.2);
            \draw[very thick, blue!20!purple,-latex] (2,0) -- (-1/1.2+2,1/1.2);
            \draw[very thick, blue!20!purple,-latex] (2,0) -- (2+1/1.2,-1/1.2);

            \draw[very thick, black!20!blue,-latex] (-2,0) --(-2+1/1.2,1/1.2);
            \draw[very thick, black!20!blue,-latex] (-2,0) -- (-1/1.2-2,-1/1.2);
            \draw[very thick, black!20!blue,-latex] (-2,0) -- (-1/1.2-2,1/1.2);
            \draw[very thick, black!20!blue,-latex] (-2,0) -- (-2+1/1.2,-1/1.2);

    		\fill (-2,0) circle (1.5pt);
            \fill (0,0) circle (1.5pt);
	      \fill (2,0) circle (1.5pt);

           \node[below] at (-2,-0.1) {$k_0$};
	     \node[below] at (0.3,0) {$\text{0}$};
	     \node[below] at (2,-0.1) {$-k_0$};

          \node[red!70!black,left] at (1.9+1/1.2,1/1.2) {$1$};
          \node[red!70!black,right] at (2.1-1/1.2,1/1.2) {$2$};
          \node[red!70!black,right] at (2.1-1/1.2,-1/1.2) {$3$};
          \node[red!70!black,left] at (1.9+1/1.2,-1/1.2) {$4$};

          \node[red!70!black,left] at (-2.1+1/1.2,1/1.2) {$5$};
          \node[red!70!black,right] at (-1.9-1/1.2,1/1.2) {$6$};
          \node[red!70!black,right] at (-1.9-1/1.2,-1/1.2) {$7$};
          \node[red!70!black,left] at (-2.1+1/1.2,-1/1.2) {$8$};
         
           \node[red!70!black,above] at (-3,0.1) {$9$};
           \node[red!70!black,above] at (-1,0.1) {$10$};
           \node[red!70!black,above] at (1,0.1) {$11$};
           \node[red!70!black,above] at (3,0.1) {$12$};

            \node[red!70!black,right] at (0,2.8) {$13$};
           \node[red!70!black,right] at (0,-2.8) {$14$};
           \node[red!70!black,right] at (0,0.7) {$15$};

          \node[right] at (4,0) {$\rre k$};
    		
    	\end{tikzpicture}
    	\caption{The jump contour $\Sigma^{(2)}$ in the complex $k$-plane.}
    	\label{fig_Sigma2}
    \end{figure}
    $M^{(2)}(x,t,k)$ is analytic on $\mathbb{C}\setminus\Sigma^{(2)}$, where $\Sigma^{(2)}$ is shown in Figure \ref{fig_Sigma2}, and satisfies the jump condition $M^{(2)}_+(x,t,k)=M^{(2)}_-(x,t,k)V^{(2)}(x,t,k)$ on $\Sigma^{(2)}$. Similarly to the discussion of jumps on the contours $\Gamma^{(2)}_j$, $j=9,10,\cdots, 15$ within region $\zeta\in\mathcal{I}_2$, for the contours $\Sigma^{(2)}$ in Figure \ref{fig_Sigma2}, we need only consider the jumps on $\Sigma^{(2)}_j$, $j=1,2,\cdots,8$, the jumps on all remaining contours will be absorbed into an error term with respect to $t$ when computing the long-time asymptotics.  See Remark \ref{remark_v2_9-15} for the detailed analysis. The forms of $V^{(2)}_j(x,t,k)$ for $k\in\Sigma^{(2)}_j$, $j=1,2,\cdots,8$, are as follows:
     \begin{align*}
         &V^{(2)}_1(x,t,k)=\begin{pmatrix}
					1 & -\hat{\alpha}_a^*(-k)\Delta_{12}(k)\re^{t\theta_{12}(\zeta,k)}  & 0\\
					0  & 1 & 0\\
					0 & 0 & 1\\
				\end{pmatrix},\quad &&V^{(2)}_2(x,t,k)=\begin{pmatrix}
					1 & 0 & 0\\
                    8k\tilde{\alpha}_a(-k)\Delta_{12}^{-1}(k)\re^{-t\theta_{12}(\zeta,k)} & 1 & 0\\
					0 & 0 & 1\\
				\end{pmatrix},\\
        &V^{(2)}_3(x,t,k)=\begin{pmatrix}
					1 & \tilde{\alpha}_a^*(-k)\Delta_{12}(k)\re^{t\theta_{12}(\zeta,k)} & 0\\
					0 & 1 & 0\\
					0 & 0 & 1\\
				\end{pmatrix},\quad &&V^{(2)}_4(x,t,k)=\begin{pmatrix}
					1 & 0  & 0\\
					-8k\hat{\alpha}_a(-k)\Delta_{12}^{-1}(k)\re^{-t\theta_{12}(\zeta,k)}  & 1  & 0\\
					0  & 0  & 1\\
				\end{pmatrix},\\
        &V^{(2)}_5(x,t,k)=\begin{pmatrix}
					1 & 0 & 0\\
					0 & 1 & 0\\
					0 & \tilde{\alpha}_a^*(k)\Delta_{23}^{-1}(k)\re^{-t\theta_{23}(\zeta,k)} & 1\\
				\end{pmatrix},\quad &&V^{(2)}_6(x,t,k)=\begin{pmatrix}
					1 & 0  & 0\\
					0 & 1 & 8k\hat{\alpha}_a(k)\Delta_{23}(k)\re^{t\theta_{23}(\zeta,k)}\\
					0 & 0  & 1\\
				\end{pmatrix},\\
        &V^{(2)}_7(x,t,k)=\begin{pmatrix}
					1 & 0 & 0\\
					0 & 1 & 0\\
					0 & -\hat{\alpha}_a^*(k)\Delta_{23}^{-1}(k)\re^{-t\theta_{23}(\zeta,k)} & 1\\
				\end{pmatrix},\quad &&V^{(2)}_8(x,t,k)=\begin{pmatrix}
					1 & 0 & 0\\
					0 & 1 & -8k\tilde{\alpha}_a(k)\Delta_{23}(k)\re^{t\theta_{23}(\zeta,k)}\\
					0 \!\!\!\!& 0\!\!\!\! & 1\\
				\end{pmatrix}.
     \end{align*}

     \begin{lem}\label{lem_V2_error}
    As $t\to\infty$, the jump matrix $V^{(2)}(\zeta,k)$ converges to the identity matrix $I$  and $\partial_xV^{(2)}(\zeta,k)$ converges to the zero matrix $O$ uniformly for $\zeta\in\mathcal{I}_3$ and $k\in \Sigma^{(2)}$ except near the two critical points $\pm k_0$. More importantly, the jump matrices $V^{(2)}_j(\zeta,k)$, $j=9,10,\cdots,14$, satisfy:
       \begin{align*}
            &\left\|(1+|\cdot|)\partial_x^l\left(V^{(2)}_j(x,t,\cdot)-I\right)\right\|_{(L^1\cap L^\infty)\left(\Sigma^{(2)}_j\right)}\leq Ct^{-N}, \quad  &&j=9,10,11,12,\\
            &\left\|(1+|\cdot|)\partial_x^l\left(V^{(2)}_j(x,t,\cdot)-I\right)\right\|_{(L^1\cap L^\infty)\left(\Sigma^{(2)}_j\right)}\leq C\re^{-ct}, \quad &&j=13,14.
        \end{align*}
    \end{lem}
   The proof follows a similar argument as in Lemma \ref{lem_v2_error}, and is therefore omitted.

  \subsubsection{Local parametrix at $\pm k_0$}
   When examining the long-time asymptotic solution of the eigenfunction $M^{(2)}(\zeta,k)$ as $\zeta\in\mathcal{I}_3$, we only need to consider the neighborhoods of two critical points $\pm k_0$. Since in this region $k_0<0$, we now set $X_j^{-k_0}=(-k_0+X_j)\cap B_\epsilon(-k_0)$, $X_{j+4}^{k_0}=(k_0+X_j)\cap B_\epsilon(k_0)$, and $X_\epsilon=X^{k_0}\cup X^{-k_0}$, $j=1,2,3,4$.

   This subsection continues to use the same scaling transformations \eqref{z_1} and \eqref{z_2} for the variable $k$ as previously defined. The functions $\delta_j(\zeta,k)$, $j=1,2,3,$ appear in the off-diagonal entries of the jump matrices $V^{(2)}_j(\zeta,k)$, $j=1,2,\cdots,8$. Below, for $k$ lying in the $\epsilon$ neighborhoods of $k_0$ and $-k_0$, we expand the combinations of the functions $\delta_j(k)$, $j=1,2,3,$ that occupy these off-diagonal positions. For $k\in B_\epsilon(-k_0)\setminus(-k_0/2,-3k_0/2)$, we have
    \begin{align}
  	\Delta_{12}(k)&=(2\sqrt{t})^{2\ri\hat\nu} \re^{-\ri(2\nu_1-\nu_2)\ln_0(z_2)}\re^{-\ri(2\nu_3-\nu_2)\ln_\pi(z_2)}\re^{-\ri(\nu_3-2\nu_2)\ln_\pi(-2k_0)}\re^{2\chi_1(k_0)-\chi_2(k_0)-2\chi_3(k_0)-\chi_3(-k_0)}\nonumber\\
  	&\quad\re^{-\ri(\nu_3-2\nu_2)\ln_\pi((k-k_0)/(-2k_0))}\re^{2(\chi_1(-k)-\chi_1(k_0))-(\chi_2(-k)-\chi_2(k_0))-2(\chi_3(-k)-\chi_3(k_0))-(\chi_3(k)-\chi_3(-k_0))}\delta_1^{-1}(k)\delta_2^{-1}(k)\label{Delta12_decomposition}\\
  	&:=\re^{\ri(\nu_2-2\nu_1)\ln_0(z_2)}\re^{\ri(\nu_2-2\nu_3)\ln_\pi(z_2)}b^{(-k_0)}_0(\zeta)b^{(-k_0)}_1(\zeta,k),\nonumber
  \end{align}
  where $\hat{\nu}=\nu_1-\nu_2+\nu_3$, and
  \begin{align*}
  	&b^{(-k_0)}_0(\zeta)=(2\sqrt{t})^{2\ri\hat\nu} \re^{\ri(2\nu_2-\nu_3)\ln_\pi(-2k_0)}\re^{2\chi_1(k_0)-\chi_2(k_0)-2\chi_3(k_0)-\chi_3(-k_0)}\delta_1^{-1}(-k_0)\delta_2^{-1}(-k_0),\\
  	&b^{(-k_0)}_1(\zeta,k)=\re^{\ri(2\nu_2-\nu_3)\ln_\pi((k-k_0)/(-2k_0))}\re^{2(\chi_1(-k)-\chi_1(k_0))-(\chi_2(-k)-\chi_2(k_0))-2(\chi_3(-k)-\chi_3(k_0))-(\chi_3(k)-\chi_3(-k_0))}\delta_1(-k_0)\delta_2(-k_0)\delta_1^{-1}(k)\delta_2^{-1}(k).
  \end{align*} 
    
     For $k\in B_\epsilon(k_0)\setminus(3k_0/2,k_0/2)$, the following decomposition holds:
     \begin{align}
 	\Delta_{23}(k)&=(2\sqrt{t})^{-2\ri\hat\nu} \re^{\ri(2\nu_1-\nu_2)\ln_\pi(z_1)}\re^{\ri(2\nu_3-\nu_2)\ln_0(z_1)}\re^{\ri(\nu_3-2\nu_2)\ln_0(2k_0)}\re^{-2\chi_1(k_0)+\chi_2(k_0)+2\chi_3(k_0)-\chi_3(-k_0)}\nonumber\\
 	&\quad\re^{\ri(\nu_3-2\nu_2)\ln_0((k+k_0)/(2k_0))}\re^{-2(\chi_1(k)-\chi_1(k_0))+(\chi_2(k)-\chi_2(k_0))+2(\chi_3(k)-\chi_3(k_0))+(\chi_3(-k)-\chi_3(-k_0))}\delta_1(-k)\delta_2(-k)\label{Delta23_decomposition}\\
 	&:=\re^{\ri(2\nu_1-\nu_2)\ln_\pi(z_1)}\re^{\ri(2\nu_3-\nu_2)\ln_0(z_1)}b^{(k_0)}_0(\zeta)b^{(k_0)}_1(\zeta,k),\nonumber
 \end{align}
 where 
 \begin{align*}
 	&b^{(k_0)}_0(\zeta)=(2\sqrt{t})^{-2\ri\hat\nu}\re^{\ri(\nu_3-2\nu_2)\ln_0(2k_0)}\re^{-2\chi_1(k_0)+\chi_2(k_0)+2\chi_3(k_0)+\chi_3(-k_0)}\delta_1(-k_0)\delta_2(-k_0),\\
 	&b^{(k_0)}_1(\zeta,k)=\re^{\ri(\nu_3-2\nu_2)\ln_0((k+k_0)/(2k_0))}\re^{-2(\chi_1(k)-\chi_1(k_0))+(\chi_2(k)-\chi_2(k_0))+2(\chi_3(k)-\chi_3(k_0))+(\chi_3(-k)-\chi_3(-k_0))}\delta_1^{-1}(-k_0)\delta^{-1}_2(-k_0)\delta_1(-k)\delta_2(-k).
 \end{align*}

    \begin{prop}\label{prop_hatnu}
      The following relation holds:
     \begin{equation}\label{nu=hatnu}
           \re^{-2\pi\hat\nu}=1-8k_0|\alpha(k_0)|^2\re^{2\pi\nu_2}=1-\frac{1}{8k_0}\left|\frac{s_{21}(-k_0)}{s_{11}(-k_0)}\right|^2.
       \end{equation}

     \end{prop}

     \begin{proof}
     Formula \eqref{nu=hatnu} is equivalent to proving that the following formula holds:
      \begin{equation*}
          \frac{(1+8k_0|r_1(k_0)|^2)(1-8k_0|r_1(-k_0)|^2-|r_2(k_0)|^2)}{1+8k_0|r_1(k_0)|^2-|r_2(-k_0)|^2}=1-\frac{8k_0|\alpha(k_0)|^2}{1+8k_0|r_1(k_0)|^2-|r_2(-k_0)|^2}=1-\frac{1}{8k_0}\left|\frac{s_{21}(-k_0)}{s_{11}(-k_0)}\right|^2,
      \end{equation*}
       which can be proved using the symmetry relations \eqref{sym_s} and \eqref{sym_sA} together with the fact that $s^T s^A = I$. Starting from the definition of $\hat\nu$, we compute using the symmetries \eqref{sym_s} and \eqref{sym_sA} of the spectral matrices $s(k)$ and $s^A(k)$:
         \begin{align*}
         &\frac{(1+8k_0|r_1(k_0)|^2)(1-8k_0|r_1(-k_0)|^2-|r_2(k_0)|^2)}{1+8k_0|r_1(k_0)|^2-|r_2(-k_0)|^2}=\frac{\left(1-\frac{s_{12}(k_0)}{s_{11}(k_0)}\frac{m_{12}(k_0)}{m_{11}(k_0)}\right)\left(1-\frac{s_{32}(k_0)}{s_{33}(k_0)}\frac{m_{32}(k_0)}{m_{33}(k_0)}+\frac{s_{31}(k_0)}{s_{33}(k_0)}\frac{m_{31}(k_0)}{m_{33}(k_0)}\right)}{1-\frac{s_{12}(k_0)}{s_{11}(k_0)}\frac{m_{12}(k_0)}{m_{11}(k_0)}+\frac{s_{13}(k_0)}{s_{11}(k_0)}\frac{m_{13}(k_0)}{m_{11}(k_0)}}\\
           &\qquad=1-\frac{s_{23}(k_0)}{s_{33}(k_0)}\frac{m_{23}(k_0)}{m_{33}(k_0)}=1-\frac{1}{8k_0}\frac{s_{21}(-k_0)}{s_{11}(-k_0)}\frac{s_{21}^*(-k_0)}{s^*_{11}(-k_0)}.
         \end{align*}
         On the other hand, 
        the relation concerning $\alpha(k_0)$ can also be obtained by direct calculation.
     \end{proof}
     
 In fact, according to equation \eqref{nu=hatnu} in Proposition \ref{prop_hatnu}, we can replace $\hat{\nu}$ with $\nu$. However, note that in the decompositions \eqref{Delta12_decomposition} and \eqref{Delta23_decomposition}, branch cuts are involved that take values $0$ and $\pi$. Therefore, when it comes to the choice of branch cuts, we still use the notation $\hat{\nu}$; in other cases, we use the notation $\nu$. Moreover, we stipulate 
 \begin{equation*}
     z_1^{2\ri \hat{\nu}}:=\re^{\ri(2\nu_1-\nu_2)\ln_\pi(z_1)}\re^{\ri(2\nu_3-\nu_2)\ln_0(z_1)},\qquad z_2^{2\ri \hat{\nu}}:=\re^{\ri(2\nu_1-\nu_2)\ln_0(z_2)}\re^{\ri(2\nu_3-\nu_2)\ln_\pi(z_2)}.
 \end{equation*}
 
 Based on the expansions above near $k_0$ and $-k_0$, construct the following two matrices for $\zeta\in\mathcal{I}_3$: 
     \begin{align*}
        X_1(\zeta)&=\begin{pmatrix}
            \left({b_{0}^{(-k_0)}(\zeta)}\right)^{1/2}\re^{\frac{t}{2}\theta_{12}(\zeta,-k_0)}  & 0 & 0\\
            0 & \left({b_{0}^{(-k_0)}(\zeta)}\right)^{-1/2}\re^{-\frac{t}{2}\theta_{12}(\zeta,-k_0)} & 0\\
            0 & 0 & 1
        \end{pmatrix}, \\
       X_2(\zeta)&=\begin{pmatrix}
            1 & 0 & 0\\
            0 & \left(b^{(k_0)}_{0}(\zeta)\right)^{1/2}\re^{\frac{t}{2}\theta_{23}(\zeta,k_0)} & 0\\
            0 & 0 & \left(b^{(k_0)}_{0}(\zeta)\right)^{-1/2}\re^{-\frac{t}{2}\theta_{23}(\zeta,k_0)}
        \end{pmatrix}.
    \end{align*}
    Together with the matrix transformations 
    \begin{equation*}\label{trans_X12}
         \begin{aligned}
        &\tilde{M}^{(-k_0)}(\zeta,k)=M^{(2)}(\zeta,k)X_1(\zeta), \quad &&k\in B_\epsilon(-k_0),\\
        &\tilde{M}^{(k_0)}(\zeta,k)=M^{(2)}(\zeta,k)X_2(\zeta), \quad &&k\in B_\epsilon(k_0).
    \end{aligned}
    \end{equation*}
   Hence, the jump matrices $\tilde V^{(-k_0)}_j$ of the function $\tilde{M}^{(-k_0)}(\zeta,k)$ for $k\in X^{-k_0}_j$, $j=1,2,3,4$, are as follows:
    \begin{align*}
         &\tilde V^{(-k_0)}_1(\zeta,z_2(k))=\begin{pmatrix}
					1 & -z_2^{-2\ri \hat{\nu}}b^{(-k_0)}_1\hat{\alpha}_a^*(-k)\re^{\frac{\ri z_2^2}{2}}  & 0\\
					0  & 1 & 0\\
					0 & 0 & 1\\
				\end{pmatrix},\quad
        &&\tilde V^{(-k_0)}_2(\zeta,z_2(k))=\begin{pmatrix}
					1 & 0 & 0\\
					z_2^{2\ri \hat{\nu}}\left(b^{(-k_0)}_1\right)^{-1}8k\tilde{\alpha}_a(-k)\re^{-\frac{\ri z_2^2}{2}} & 1 & 0\\
					0 & 0 & 1\\
				\end{pmatrix},\\
        &\tilde V^{(-k_0)}_3(\zeta,z_2(k))=\begin{pmatrix}
					1 & z_2^{-2\ri \hat{\nu}}b^{(-k_0)}_1\tilde{\alpha}_a^*(-k)\re^{\frac{\ri z_2^2}{2}} & 0\\
					0 & 1 & 0\\
					0 & 0 & 1\\
				\end{pmatrix},\quad
        &&\tilde V^{(-k_0)}_4(\zeta,z_2(k))=\begin{pmatrix}
					1 & 0  & 0\\
					-z_2^{2\ri \hat{\nu}}\left(b^{(-k_0)}_1\right)^{-1}8k\hat{\alpha}_a(-k)\re^{-\frac{\ri z_2^2}{2}}  & 1  & 0\\
					0  & 0  & 1\\
				\end{pmatrix},
     \end{align*}
   and the jump matrices $\tilde V^{(k_0)}_j$ of the function $\tilde{M}^{(k_0)}(\zeta,k)$ for $k\in X^{k_0}_j$, $j=5,6,7,8$, are

      \begin{align*}
        &\tilde V^{(k_0)}_5(\zeta,z_1(k))=\begin{pmatrix}
					1 & 0 & 0\\
					0 & 1 & 0\\
					0 & z_1^{-2\ri \hat{\nu}}\left({b^{(k_0)}_1}\right)^{-1}\tilde{\alpha}_a^*(k)\re^{\frac{\ri z_1^2}{2}} & 1\\
				\end{pmatrix},\quad
        &&\tilde V^{(k_0)}_6(\zeta,z_1(k))=\begin{pmatrix}
					1 & 0  & 0\\
					0 & 1 & z_1^{2\ri \hat{\nu}}{b^{(k_0)}_1}8k\hat{\alpha}_a(k)\re^{-\frac{\ri z_1^2}{2}}\\
					0 & 0  & 1\\
				\end{pmatrix},\\
        &\tilde V^{(k_0)}_7(\zeta,z_1(k))=\begin{pmatrix}
					1 & 0 & 0\\
					0 & 1 & 0\\
					0 & -z_1^{-2\ri \hat{\nu}}\left({b^{(k_0)}_1}\right)^{-1}\hat{\alpha}_a^*(k)\re^{\frac{\ri z_1^2}{2}} & 1\\
				\end{pmatrix},\quad
        &&\tilde V^{(k_0)}_8(\zeta,z_1(k))=\begin{pmatrix}
					1 & 0 & 0\\
					0 & 1 & -z_1^{2\ri \hat{\nu}}{b^{(k_0)}_1}8k\tilde{\alpha}_a(k)\re^{-\frac{\ri z_1^2}{2}}\\
					0 \!\!\!\!& 0\!\!\!\! & 1\\
				\end{pmatrix}.
     \end{align*}

     For any fixed $z_1$ and $z_2$, $b^{(-k_0)}_1\to1$ and ${b^{(k_0)}_1}\to1$ as $t\to\infty$. We can obtain that $\tilde{V}^{(-k_0)}_j(x,t,k)$ tends to the jump matrix $V^{X_3}_j$, $j=1,2,3,4,$ defined in the RH problem \ref{pcmodel1_-k1} in the Appendix \ref{AppendixA}, and $\tilde{V}^{(k_0)}_j(x,t,k)$ tends to the jump matrix $V^{X_4}_j$, $j=5,6,7,8,$ defined in the RH problem \ref{pcmodel1_k1}. This also indicates that as $t\to\infty$, the jumps of $M^{(2)}$ approach the function $M^{X_3}X_1^{-1}$ as $k$ tends to $-k_0$, and approach the function $M^{X_4}X_2^{-1}$ as $k$ tends to $k_0$. Thus, we can use the function $M^{(\pm k_0)}$, defined as follows, to approximate the eigenfunction $M^{(2)}(x,t,k)$ in $B_\epsilon(\pm k_0)$:
  \begin{align}
      &M^{(-k_0)}(\zeta,k)=X_1(\zeta)M^{X_3}(z_2(\zeta,k))X^{-1}_1(\zeta), \quad &&k\in B_\epsilon(-k_0),\label{M1-k_0}\\
      &M^{(k_0)}(\zeta,k)=Y_2(\zeta)M^{X_4}(z_1(\zeta,k))X^{-1}_2(\zeta), \quad &&k\in B_\epsilon(k_0).\label{M1k_0}
  \end{align}
  And $M^{(\pm k_0)}(k)\to I$ on $\partial B_\epsilon(\pm k_0)$ as $t\to \infty$, this ensures $M^{(\pm k_0)}(\zeta,k)$ is a good approximation of $M^{(2)}(\zeta,k)$ in $B_\epsilon(\pm k_0)$ for large $t$ and $\zeta\in\mathcal{I}_3$.

    \begin{lem}\label{lem_X12_bound}
        For $\zeta\in\mathcal{I}_3$ and $t\geq2$, the functions $X_1(\zeta)$ and $X_2(\zeta)$ are uniformly bounded, i.e.,
        \begin{equation*}
            \sup_{\zeta\in\mathcal{I}_3}\sup_{t\geq2}\left|\partial_x^lX_j^{\pm 1}(\zeta)\right|<C,\quad l=0,1,\,\,j=1,2.
        \end{equation*}
        The functions $b_0^{(\pm k_0)}(\zeta)$ and $b_1^{(\pm k_0)}(\zeta,k)$ satisfy
        \begin{align*}\label{Error_b0}
            &\left|b_0^{(k_0)}(\zeta)\right|=\re^{2\pi\nu_3-\pi\nu_2},  \quad &&\left|\partial_xb_0^{(k_0)}(\zeta)\right|\leq\frac{C\ln t}{t},\\
            &\left|b_0^{(-k_0)}(\zeta)\right|=\re^{\pi\nu_2-2\pi\nu_1}, \quad&&\left|\partial_xb_0^{(-k_0)}(\zeta)\right|\leq\frac{C\ln t}{t},
        \end{align*}
        and 
        \begin{equation*}\label{Errorb1-1}
             \begin{aligned}
            &\left|b_1^{(k_0)}(\zeta,k)-1\right|\leq C|k-k_0|(1+|\ln|k-k_0||), \quad &&\left|\partial_x b_1^{(k_0)}(\zeta,k)\right|\leq\frac{C|\ln |k-k_0||}{t},\\
            &\left|b_1^{(-k_0)}(\zeta,k)-1\right|\leq C|k+k_0|(1+|\ln|k+k_0||),\quad &&\left|\partial_x b_1^{(-k_0)}(\zeta,k)\right|\leq\frac{C|\ln |k+k_0||}{t}.
        \end{aligned}
        \end{equation*}
    
    \end{lem}
    \begin{proof}
     Direct calculation yields
        \begin{align*}
            &\rre\chi_1(-k)|_{k=-k_0}
            =-\frac{1}{2\pi\ri}\int^{+\infty}_{-k_0}\ri \pi\,{\rm{d}}\ln \left(1-8s|r_1(-s)|^2\right)=-\pi\nu_1,\\
            &\rre\chi_2(\zeta,-k)|_{k=-k_0}
            =-\frac{1}{2\pi\ri}\int^{+\infty}_{-k_0}\ri \pi\,{\rm{d}}\ln \left({1-8s|r_1(-s)|^2-|r_2(s)|^2}\right)=-\pi\nu_2,\\
            &\rre\chi_3(\zeta,-k)|_{k=-k_0}
            =-\frac{1}{2\pi\ri}\int_{k_0}^{-k_0}0\,{\rm{d}}\ln \left({1+8s|r_1(s)|^2-|r_2(-s)|^2}\right)=0,\\
            &\rre\chi_3(\zeta,k)|_{k=-k_0}
            =\frac{1}{2\pi\ri}\int_{k_0}^{-k_0}0\,{\rm{d}}\ln \left({1-8s|r_1(-s)|^2-|r_2(s)|^2}\right)=0.
        \end{align*}
        Combining $\ln_\pi(-2k_0)=\ln(-2k_0)$, we can calculate $|b_0^{(-k_0)}|=\re^{2\pi\nu_1-\pi\nu_2}$. Similarly, the following expressions hold:
        \begin{equation*}
            \rre\chi_1(k)|_{k=k_0}=0,\quad \rre\chi_2(k)|_{k=k_0}=0,\quad \rre\chi_3(k)|_{k=k_0}=\rre\chi_3(-k)|_{k=k_0}=\pi\nu_3-\pi\nu_2.
        \end{equation*}
        Hence $|b_0^{(k_0)}|=\re^{3\pi\nu_3-3\pi\nu_2-\pi(\nu_3-2\nu_2)}=\re^{2\pi\nu_3-\pi\nu_2}$.
 The remaining estimates can be established analogously by referring to Lemmas \ref{lem_delta} and \ref{lem_delta1}.
    \end{proof}

    \begin{lem}\label{lem_Error_v2-vk1_2}
    For each $\zeta\in\mathcal{I}_3$ and $t\geq 2$, the function $M^{(-k_0)}$ defined in equation \eqref{M1-k_0} is analytic and bounded function for $k\in B_\epsilon(-k_0)\setminus X^{- k_0}$, and satisfies the jump relation $M^{(-k_0)}_+(x,t,k)=M^{(-k_0)}_-(x,t,k)V_j^{(-k_0)}(x,t,k)$ across $k\in X_j^{-k_0}$, $j=1,2,3,4$.
    The eigenfunction $M^{(k_0)}$ defined in equation \eqref{M1k_0} is analytic and bounded function for $k\in B_\epsilon(k_0)\setminus X^{ k_0}$, and satisfies the jump relation $M^{(k_0)}_+(x,t,k)=M^{(k_0)}_-(x,t,k)V_j^{(k_0)}(x,t,k)$ across $k\in X_j^{k_0}$, $j=5,6,7,8$.  Moreover, the jump matrices $V^{(\pm k_0)}$ satisfy the following estimates:
        \begin{align*}
            &\left\|\partial_x^l\left(V^{(2)}(x,t,\cdot)-V^{(\pm k_0)}(x,t,\cdot)\right)\right\|_{L^1(X^{\pm k_0})}\leq \frac{C \ln t}{t},\\
            &\left\|\partial_x^l\left(V^{(2)}(x,t,\cdot)-V^{(\pm k_0)}(x,t,\cdot)\right)\right\|_{L^\infty(X^{\pm k_0})}\leq \frac{C \ln t}{\sqrt{t}}.
        \end{align*}
        In addition,
    \begin{equation*}\label{M1k1-I}
        \left\|\partial_x^l\left(\left(M^{(\pm k_0)}(x,t,\cdot)\right)^{-1}-I\right)\right\|_{L^\infty(\partial B_\epsilon(\pm k_0))}=\mathcal{O}\left(t^{-1/2}\right),
    \end{equation*}
    \begin{equation}\label{intM1-I}
    \begin{aligned}
        &\frac{1}{2\pi\ri} \int_{\partial B_\epsilon(-k_0)} \left(\left(M^{(-k_0)}(x,t,k)\right)^{-1}-I\right)\rd k=-\frac{X_1(\zeta)M^{X_3}_1(\zeta))X_1^{-1}(\zeta)}{2\sqrt{t}}+\mathcal{O}\left(t^{-1}\right),\\
        &\frac{1}{2\pi\ri} \int_{\partial B_\epsilon(k_0)} \left(\left(M^{(k_0)}(x,t,k)\right)^{-1}-I\right)\rd k=-\frac{X_2(\zeta)M^{X_4}_1(\zeta)X_2^{-1}(\zeta)}{2\sqrt{t}}+\mathcal{O}\left(t^{-1}\right),
    \end{aligned}
    \end{equation}
     uniformly for $\zeta\in\mathcal{I}_3$ and $l=0,1$. Moreover, \eqref{intM1-I} can be differentiated with respect to $x$ without increasing the error term.
    \end{lem}
    The proof follows a similar argument as in Lemma \ref{lem_Error_v2-vk1}, and is therefore omitted.

     \subsubsection{A small-norm RH problem}
     Using the functions $M^{(\pm k_0)}$ defined in equations \eqref{M1-k_0} and \eqref{M1k_0}, we construct
     \begin{equation}\label{hatM1}
     	\hat{M} = \begin{cases}
     		M^{(2)} \big( M^{(k_0)} \big)^{-1}, & k \in B_\epsilon(k_0), \\[6pt]
     		M^{(2)} \big( M^{(-k_0)} \big)^{-1}, & k \in B_\epsilon(-k_0), \\[6pt]
     		M^{(2)}, & \text{elsewhere},
     	\end{cases}
     \end{equation}
     and define the jump matrix $\hat{V}$ as
     \begin{equation}\label{hatV}
     	\hat{V} = \begin{cases}
     		M^{(\pm k_0)}_- V^{(2)} \big( M^{(\pm k_0)}_+ \big)^{-1}, & k \in \hat{\Sigma} \cap B_\epsilon(\pm k_0), \\[6pt]
     		\big( M^{(\pm k_0)} \big)^{-1}, & k \in \partial B_\epsilon(\pm k_0), \\[6pt]
     		V^{(2)}, & k \in \hat{\Sigma} \setminus B_\epsilon.
     	\end{cases}
     \end{equation}
     The contour $\hat{\Sigma}$ corresponding to $\hat{M}$ is shown in Figure \ref{fig_hatSigma}.

    \begin{figure}[htbp]
    	\centering
    	\begin{tikzpicture}[scale=1]
        
        \draw[very thick, black!40!green] (-4,0) -- (4,0);
        \draw[very thick, black!40!green] (0,2) -- (0,3.5);
        \draw[very thick, black!40!green] (0,-2) -- (0,-3.5);

        \draw[very thick, dashed, black!40!green] (0,-2) -- (0,2);
    		
    		\draw[very thick, black!40!green, -latex] (-4,0) -- (-3.2,0);
            \draw[very thick, black!40!green, -latex] (-4,0) -- (-0.5,0);
            \draw[very thick, black!40!green, -latex] (-4,0) -- (0.8,0);
            \draw[very thick, black!40!green, -latex] (-4,0) -- (3.5,0);

            \draw[very thick, black!40!green, -latex] (0,-2) -- (0,-3);
            \draw[very thick, black!40!green, -latex] (0,2) -- (0,3);

            \draw[very thick, black!40!green] (2+2/1.414,2/1.414) -- (0,-2);
            \draw[very thick, black!40!green] (2+2/1.414,-2/1.414) -- (0,2);

            \draw[very thick, black!40!green] (-2-2/1.414,2/1.414) -- (0,-2);
            \draw[very thick, black!40!green] (-2-2/1.414,-2/1.414) -- (0,2);

            \draw[very thick, black!40!green,-latex] (2,0) --(2+1.3/1.2,1.3/1.2);
            \draw[very thick, black!40!green,-latex] (2,0) -- (-1.3/1.2+2,-1.3/1.2);
            \draw[very thick, black!40!green,-latex] (2,0) -- (-1.3/1.2+2,1.3/1.2);
            \draw[very thick, black!40!green,-latex] (2,0) -- (2+1.3/1.2,-1.3/1.2);

            \draw[very thick,black!40!green,-latex] (-2,0) --(-2+1.3/1.2,1.3/1.2);
            \draw[very thick, black!40!green,-latex] (-2,0) -- (-1.3/1.2-2,-1.3/1.2);
            \draw[very thick, black!40!green,-latex] (-2,0) -- (-1.3/1.2-2,1.3/1.2);
            \draw[very thick, black!40!green,-latex] (-2,0) -- (-2+1.3/1.2,-1.3/1.2);

    		\fill (-2,0) circle (1.5pt);
            \fill (0,0) circle (1.5pt);
	      \fill (2,0) circle (1.5pt);

           \node[below] at (-2,-0.1) {$k_0$};
	     \node[below] at (0.3,0) {$\text{0}$};
	     \node[below] at (2,-0.1) {$-k_0$};

         \draw[very thick,black!20!blue] (-2,0) circle (1cm);
         \draw[very thick,black!20!blue] (2,0) circle (1cm);

         \draw[very thick, black!20!blue,-latex] (-1.9,1) -- (-2.1,1);
         \draw[very thick, black!20!blue,-latex] (2.1,1) -- (1.9,1);
         \node [above] at (-2,1) {$\partial B_\epsilon(k_0)$};
         \node [above] at (2,1) {$\partial B_\epsilon(-k_0)$};

          \node[right] at (4,0) {$\rre k$};
    		
    	\end{tikzpicture}
    	\caption{The jump contour $\hat \Sigma$ in the complex $k$-plane.}
    	\label{fig_hatSigma}
    \end{figure}

    \begin{rhp}\label{rhp_hatM1}
        Find a $3\times3$ matrix-valued function $\hat{M}(x,t,\cdot)\in I+\dot E^3(\mathbb{C}\setminus \hat{\Sigma})$ such that $\hat{M}_+=\hat{M}_-\hat{V}$ for $k\in \hat{\Sigma}$, where $\hat{\Sigma}$ is shown in Figure \ref{fig_hatSigma}.
    \end{rhp}
    
    With the definition $\tilde\Sigma=\hat{\Sigma}\setminus\left(\mathbb{R}\cup X^{\pm k_0}\cup \partial B_\epsilon(\pm k_0)\right)$, we present the following lemma.
    \begin{lem}\label{lem_hatW}
    Let $\hat{W}=\hat{V}-I$, the following estimates hold uniformly for $\zeta\in\mathcal{I}_3$, $t\geq2$ and $l=0,1$:
     \begin{align*}
            &\left\|(1+|\cdot|)\partial_x^l \hat{W}\right\|_{\left(L^1\cap L^\infty\right)(\mathbb{R})}\leq \frac{C}{t^{N}},\\
            &\left\|(1+|\cdot|)\partial_x^l \hat{W}\right\|_{\left(L^1\cap L^\infty\right)(\tilde{\Sigma})}\leq C\re^{-ct},\\
            &\left\|\partial_x^l \hat{W}\right\|_{\left(L^1\cap L^\infty\right)(\partial B_\epsilon(\pm k_0))}\leq \frac{C}{\sqrt{t}},\\
            &\left\|\partial_x^l \hat{W}\right\|_{L^1(X^{\pm k_0})}\leq \frac{C\ln t}{t},\\
            &\left\|\partial_x^l \hat{W}\right\|_{L^\infty(X^{\pm k_0})}\leq \frac{C\ln t}{\sqrt{t}}.
        \end{align*}
    \end{lem}
   The proof follows a similar argument as in Lemma \ref{lem_hatw}, and is therefore omitted.  The Lemma \eqref{lem_hatW} shows that
  \begin{equation*}
  \left\{
   \begin{aligned}
            &\left\|(1+|\cdot|)\partial_x^l \hat{W}\right\|_{L^1(\hat{\Sigma})}\leq \frac{C}{\sqrt{t}},\\
            &\left\|(1+|\cdot|)\partial_x^l \hat{W}\right\|_{L^\infty(\hat{\Sigma})}\leq \frac{C\ln t}{\sqrt{t}},
        \end{aligned}
      \right.\quad \zeta\in\mathcal{I}_3,\,\,t\geq2,\,\,l=0,1,
  \end{equation*}
  and we can obtain
  \begin{equation*}\label{hatW_Lp}
      \left\|(1+|\cdot|)\partial_x^l \hat{W}\right\|_{L^p(\hat{\Sigma})}\leq \frac{C\ln t^{\frac{p-1}{p}}}{\sqrt{t}},\quad 1\leq p\leq \infty.
  \end{equation*}
  The above estimates show that $\hat{W}\in (\dot{L}^3\cap L^\infty)(\hat{\Sigma})$, and thus define:
  \begin{equation*}
      \mathcal{C}_{\hat{W}(x,t,\cdot)}f=\mathcal{C}_-(f\hat
      W),\qquad \mathcal{C}_{\hat{W}}: (\dot{L}^3\cap L^\infty)(\hat{\Sigma})\to\dot{L}^3(\hat{\Sigma}).
  \end{equation*}
  \begin{lem}\label{lem_I-C_inverse1}
      For any sufficiently large $t$ and $\zeta\in\mathcal{I}_3$, $I-\mathcal{C}_{\hat{W}(x,t,\cdot)}\in \mathcal{B}(\dot L^3(\hat{\Sigma})) $ is invertible.
  \end{lem}

  According to the Lemma \ref{lem_I-C_inverse1} mentioned above, we define $\hat{\mu}_1(x,t,k)$ for sufficiently large $t$, $\zeta\in\mathcal{I}_3$ and $k\in \hat{\Sigma}$:
  \begin{equation*}
      \hat{\mu}_1=I+\left(I-\mathcal{C}_{\hat{W}}\right)^{-1}\mathcal{C}_{\hat{W}}I\in I+\dot L^3(\hat{\Sigma}).
  \end{equation*}

  \begin{lem}\label{lem_hatM_solution1}
      For sufficiently large time $t$, the RH problem \ref{rhp_hatM1} has a unique solution $\hat{M}\in I+\dot{E}^3(\mathbb{C} \setminus \hat{\Sigma})$, which can be written in the following form
      \begin{equation*}
          \hat{M}(x,t,k)=I+\mathcal{C}(\hat\mu_1 \hat W)=I+\frac{1}{2\pi\ri}\int_{\hat{\Sigma}}\frac{\hat{\mu}_1(x,t,k')\hat{W}(x,t,k')}{k'-k}\rd k'.
      \end{equation*}
  \end{lem}
  \begin{lem}\label{Error_mu1-I}
      For sufficiently large $t$ and $\zeta\in\mathcal{I}_3$, we have
      \begin{equation*}
          \left\|\partial_x^l(\hat{\mu}_1-I)\right\|_{L^p(\hat{\Sigma})}\leq \frac{C\ln t^{\frac{p-1}{p}}}{\sqrt{t}}, \quad 1<p<\infty, \,\, l=0,1.
      \end{equation*}
  \end{lem}

   \subsubsection{Asymptotics of $\phi(x,t)$ and $n(x,t)$}
  This subsection will provide the asymptotic forms of the solutions $\phi(x,t)$ and $n(x,t)$ of the YO equation \eqref{YOE} in region $\zeta\in\mathcal{I}_3$ as $t\to\infty$. First, define the nontangential limits
   \begin{equation*}
       Q_1(x,t)=\lim_{k\to\infty}k\left(\hat{M}(x,t,k)-I\right)=-\frac{1}{2\pi\ri}\int_{\hat{\Sigma}}\hat{\mu}_1(x,t,k)\hat{W}(x,t,k)\rd k.
   \end{equation*}
   \begin{lem}\label{lem_Q1}
       As $t\to\infty$, the following holds:
       \begin{equation*}\label{Q_leading2}
           Q_1(x,t)=-\frac{1}{2\pi\ri}\int_{\partial B_\epsilon }\hat{W}(x,t,k)\rd k+\mathcal{O}\left(\frac{\ln t}{t}\right).
       \end{equation*}
       Furthermore, the above equation can be differentiated term by term with respect to $x$ without increasing its error term.
   \end{lem}
  
   Combining the definition \eqref{hatM1} of function $\hat{M}$ and its associated jump matrix $\hat{V}$ in \eqref{hatV}, the following relationship holds for $t\to\infty$:
   \begin{align*}
       Q_1(x,t)&=-\frac{1}{2\pi\ri}\int_{\partial B_\epsilon }\left(\left(M^{(\pm k_0)}\right)^{-1}-I\right)\rd k+\mathcal{O}\left(\frac{\ln t}{t}\right)=\frac{X_1(\zeta)M^{X_3}_1(\zeta)X_1^{-1}(\zeta)}{2\sqrt{t}}+\frac{X_2(\zeta)M^{X_4}_1(\zeta)X_2^{-1}(\zeta)}{2\sqrt{t}}+\mathcal{O}\left(\frac{\ln t}{t}\right).
   \end{align*}

   By tallying all the transformations made to $M(x,t,k)$ in this section, we obtain
   \begin{equation*}
       M(x,t,k)=\hat{M}(x,t,k)R^{-1}(x,t,k) \Delta_1^{-1}(k), \qquad k\in \mathbb{C}\setminus\overline{B_\epsilon},
   \end{equation*}
   where $\Delta_1$ and $R$ are defined in equations \eqref{Delta1} and \eqref{trans_R}. Using the reconstruction formula \eqref{reconstruct}, we obtain
    \begin{align*}
        n(x,t)&=2\ri \frac{\partial}{\partial x}\lim\limits_{k\to\infty}k\left[\begin{pmatrix}
            1 & 0 & 1\\
        \end{pmatrix}\left(\hat{M}R^{-1} \Delta_1^{-1}-I\right)\right]_{1}=\mathcal{O}\left(\frac{\ln t}{t}\right),\\
    \phi(x,t)&=\ri \lim\limits_{k\to\infty}\left[\begin{pmatrix}
            0 & 1 & 0\\
        \end{pmatrix}\left(\hat{M}R^{-1} \Delta_1^{-1}-I\right)\right]_{1}\\
    &=\frac{\sqrt{2\pi}\left(2\sqrt{t}\right)^{-2\ri\hat\nu}{\rm{exp}}\left(\ri\hat\nu \ln(-2k_0)+\frac{3\pi \ri}{4}+2\ri tk_0^2-\frac{\pi\hat\nu}{2}-\pi\nu_2+ s_2\right)}{2\sqrt{t} {\alpha}^*(k_0)\Gamma(-\ri\nu)}+\mathcal{O}\left(\frac{\ln t}{t}\right),
    \end{align*}
    where $\hat{\nu}$ is given by equation \eqref{hatnu} and $s_2$ is given by equation \eqref{s2}.
    
It is then evident that the modulus of the complex component $\phi(x,t)$ is given by the following expression:
   \begin{align*}
        |\phi(x,t)|\simeq\frac{\sqrt{2k_0\hat\nu}}{\sqrt{t}}.
    \end{align*}

    \subsection{Long-time asymptotics in Region {\rm{IV}}}\label{subsec_region4}
	 When $(x,t)$ belongs to Region \text{IV}, we have $-\infty\leq \zeta\leq-M$, in other words, $\tau\in\mathcal{I}_4$. Still consider the dispersion relations $\tilde \theta_{ij}(\tau,k)$, $1\leq i<j\leq3$, given by equation \eqref{theta_tilde} defined by the parameter $\tau$.

	Similarly, here we consider the asymptotic behavior of the solution to the initial-value problem \eqref{YOE} as $x\to-\infty$, which is similar to the treatment in Region \text{I}. We refer to the approach adopted in the previous Region \text{III} and we do not repeat the details; instead, we only need to shift the analysis from the time variable $t$ to the spatial variable $x\to-\infty$, with the main modification being similar to the adjustment of the error terms as in Region \text{I}. Thus, the asymptotic behavior of the solution to the initial-value problem \eqref{YOE} for $\tau\in\mathcal{I}_4$ can be described by equation \eqref{region_4}.
	
	\begin{remark}
		Since as $\tau \to 0$, $k_0\to-\infty$, and according to Lemma \ref{lem_r_decomposition}, $r_1(k)$ and $r_2(k)$ vanish to all orders at $k=k_0$, it follows that the functions $\nu$, $\nu_1$, $\nu_2$, $\nu'$ and $s_2$ vanish to all orders as $\tau \to 0$. Consequently, equation \eqref{region_4} implies that as $x\to-\infty$:
		\begin{equation*}
			\begin{aligned}
				n(x,t)&=\mathcal{O}\left(\frac{1}{|x|^{N}}+\frac{C_N(\tau)}{|x|}\right),\qquad
				\phi(x,t)=\mathcal{O}\left(\frac{1}{|x|^{N}}+\frac{C_N(\tau)}{|x|}\right),
			\end{aligned}
		\end{equation*}
		uniformly for $\tau\in\mathcal{I}_4$. In particular, for any fixed $t\geq 0$, the above expression can be reduced to $\mathcal{O}\left({|x|^{-N}}\right) $ as $x \to -\infty$.
	\end{remark}

\begin{appendices}  

    \section{Proofs for the Direct Scattering Problem}\label{appendix:direct}
    \subsection{Proof of Proposition \ref{prop_Xkto0}}\label{Appendix_prop_kto0}
		Based on the transformation \eqref{diagonalize} applied to the diagonalization of the initial Lax pair \eqref{lax_phi}, we construct the function $\mathcal{X}(x,k):=P(k)X(x,k)$ to satisfy the following integral problem
		\begin{equation}\label{Vol_calX}
			\mathcal{X}(x,k)=P(k)-\int_{x}^{\infty}P(k)\re^{(x-y)\widehat{\mathcal{L}}(k)}\left(P^{-1}(k) \tilde{L}_1(y)\mathcal{X}(y,k)\right) \rd y,
		\end{equation}
		where $x\in\mathbb{R}$, $k\in (\overline{\mathbb{C}_+},\mathbb{R},\overline{\mathbb{C}_-})\setminus\{0\}$, and
		\begin{equation*}
			\tilde{L}_1=PL_1P^{-1}=\begin{pmatrix}
				0 & 2\ri n_0 & \overline{\phi}_0\\
				0 & 0 & 0\\
				0 & 2\phi_0 & 0\\
			\end{pmatrix},
		\end{equation*}
        is independent of $k$. It can be calculated that the kernel function
        \begin{equation*}
        	\mathcal{P}(x,y,k):=P(k)\re^{(x-y)\mathcal{L}(k)}P^{-1}(k),
        \end{equation*}
         of the above integral equation is analytic at $k=0$. It is obvious that the function $P(k)$ is analytic on $k \in \mathbb{C}$. Subsequently, the analysis of the Volterra integral equation \eqref{Vol_calX} shows that $\mathcal{X}$ is analytic in the corresponding region of $k$, and $\mathcal{X}$ and its $k$-derivatives have continuous extensions. The Taylor expansion satisfied by $\mathcal{X}$ is 
         \begin{equation*}
         	\mathcal{X}(x,k)=\mathcal{X}(x,0)+k\,\partial_k\mathcal{X}(x,0)+\frac{k^2}{2}\partial_k^2\mathcal{X}(x,0)+\cdots,\quad k\to 0,\,k\in (\overline{\mathbb{C}_+},\mathbb{R},\overline{\mathbb{C}_-})\setminus\{0\},
         \end{equation*}
		where the coefficients are smooth functions of $x\in \mathbb{R}$. By utilizing the properties of the Volterra integral equation \eqref{Vol_calX}, the partial derivative $\partial_k^j\mathcal{X}(x,k)$ converges rapidly to $\partial_k^jP(k)$ as $x\to\infty$ for any integer $j\geq 0$.
		
		The following analysis examines the behavior of $\partial_k^j\mathcal{X}(x,k)$ as $x\to-\infty$. First, we calculate
		\begin{equation*}
			\mathcal{P}(x,y,0)=\begin{pmatrix}
				1 & 0 & 0\\
				-\ri (x-y) & 1 & 0\\
				0 & 0 & 1\\
			\end{pmatrix},
		\end{equation*}
		and then
		\begin{equation*}
			\mathcal{P}(x,y,0)\tilde{L}_1(y)=\begin{pmatrix}
				0 & 2 \ri n_0(y) & \overline{\phi}_0(y)\\
				0 & 2(x-y)n_0(y) & -\ri(x-y)\overline{\phi}_0(y)\\
				0 & 2\phi_0(y) & 0\\
			\end{pmatrix},
		\end{equation*}
		from which we deduce that
		\begin{equation*}
			\mathcal{X}(x,0)=\begin{pmatrix}
				\mathcal{O}(1) & \mathcal{O}(1) & \mathcal{O}(1)\\
				\mathcal{O}(x) & \mathcal{O}(x) & \mathcal{O}(x)\\
				\mathcal{O}(1) & \mathcal{O}(1) & \mathcal{O}(1)\\
			\end{pmatrix},\quad\text{as}\,\, x\to -\infty.
		\end{equation*}
		Next, we conduct similar estimates for $\partial_k\mathcal{X}(x,0)$ and $\partial_k^2\mathcal{X}(x,0)$. Based on the aforementioned calculations for $\mathcal{P}(x,y,0)$, it can be easily derived that
		\begin{equation*}
			\partial_k\mathcal{P}(x,y,0)=\partial_k^2\mathcal{P}(x,y,0)=\begin{pmatrix}
				0 & 0 & 0\\
				0 & 0 & 0\\
				0 & 0 & 0\\
			\end{pmatrix},
		\end{equation*}
		which implies that
		\begin{equation*}
			\partial_k\mathcal{X}(x,0)=\begin{pmatrix}
				\mathcal{O}(x) & \mathcal{O}(x) & \mathcal{O}(x)\\
				\mathcal{O}(x^2) & \mathcal{O}(x^2) & \mathcal{O}(x^2)\\
				\mathcal{O}(x) & \mathcal{O}(x) & \mathcal{O}(x)\\
			\end{pmatrix},\qquad
				\partial_k^2\mathcal{X}(x,0)=\begin{pmatrix}
				\mathcal{O}(x^2) & \mathcal{O}(x^2) & \mathcal{O}(x^2)\\
				\mathcal{O}(x^3) & \mathcal{O}(x^3) & \mathcal{O}(x^3)\\
				\mathcal{O}(x^2) & \mathcal{O}(x^2) & \mathcal{O}(x^2)\\
			\end{pmatrix}, \quad\text{as}\,\, x\to -\infty.
		\end{equation*}

      According to the definition of $P(k)$ in the equation \eqref{diagonalize}, we can denote that $P^{-1}(k)$ has a simple pole at $k=0$:
		\begin{equation*}
			P^{-1}(k)=\frac{P^{(1)}}{k}+P^{(2)}+P^{(3)},
		\end{equation*}
		where
		\begin{equation*}
			P^{(1)}=\begin{pmatrix}
				-1/4 & 0 & 0\\
				0 & 0 & 0\\
				1/4 & 0 & 0\\
			\end{pmatrix},\quad
			P^{(2)}=\begin{pmatrix}
				0 & 0 & 0\\
				0 & 0 & 0\\
				0 & 1 & 0\\
			\end{pmatrix},\quad
			P^{(3)}=\begin{pmatrix}
				0 & 0 & 0\\
				0 & 0 & 1\\
				0 & 0 & 0\\
			\end{pmatrix}.
		\end{equation*}
	Thus, $X=P^{-1}\mathcal{X}$ has at most a first-order pole at $k=0$, with the expansion:
	\begin{equation*}
		X(x,k)=\frac{X^{(-1)}(x)}{k}+I+X^{(0)}(x)+k X^{(1)}(x)+\cdots,\quad k\to 0,
	\end{equation*}
	with
	\begin{align*}
		&X^{(-1)}(x)=P^{(1)}\mathcal{X}(x,0),\quad X^{(0)}_+(x)=\left( P^{(2)}+P^{(3)}\right) \mathcal{X}(x,0)+P^{(1)}\partial_k\mathcal{X}(x,0)-I,\\
		&X^{(1)}(x)=\frac{P^{(1)}}{2}\partial_k^2\mathcal{X}(x,0)+\left( P^{(2)}+P^{(3)}\right) \partial_k\mathcal{X}(x,0), \quad \text{etc}.
	\end{align*}
	In fact, $\partial_k^j\mathcal{X}(x,k)\to\partial_k^jP(k)$ rapidly as $x\to \infty$ and
	\begin{equation*}
		I=P^{-1} P=\left[\frac{P^{(1)}}{k}+P^{(2)}+P^{(3)} \right] \left[ P(0)+k\,P'(0)\right].
	\end{equation*}
	Therefore, the coefficients $X^{(l)}(x)$ $(l\geq-1)$ vanish rapidly as $x\to+\infty$.
	
	Based on the special forms of $P^{(j)}$, $j=1,2,3$, there exist complex-valued functions $\left\lbrace a_{ij},b_{ij},c_{ij} \right\rbrace_{i,j=1}^3 $ such that
	\begin{equation}\label{X_k0_pole}
		\begin{aligned}
			&X^{(-1)}(x)=P^{(1)}\mathcal{X}(x,0)=\frac{1}{4}\begin{pmatrix}
				-a_{11}(x) & -a_{12}(x) & -a_{13}(x)\\
				0 & 0 & 0\\
				a_{11}(x) & a_{12}(x) & a_{13}(x)\\
			\end{pmatrix},\\
			&X^{(0)}_+(x)=\left( P^{(2)}+ P^{(3)}\right) \mathcal{X}(x,0)+P^{(1)}\partial_k\mathcal{X}(x,0)-I\nonumber=-I+\frac{1}{4}\begin{pmatrix}
				-b_{11}(x) & -b_{12}(x) & -b_{13}(x)\\
				a_{31}(x) & a_{32}(x) & a_{33}(x)\\
				b_{11}(x)+4a_{21}(x) & b_{12}(x)+4a_{22}(x) & b_{13}(x)+4a_{23}(x)\\
			\end{pmatrix},\\
			&X^{(1)}(x)=\frac{P^{(1)}}{2}\partial_k^2\mathcal{X}(x,0)+\left( P^{(2)}+P^{(3)}\right) \partial_k\mathcal{X}(x,0)
			=\frac{1}{8}\begin{pmatrix}
				-c_{11}(x) & -c_{12}(x) & -c_{13}(x)\\
				b_{31}(x) & b_{32}(x) & b_{33}(x)\\
				c_{11}(x)+8b_{21}(x) & c_{12}(x)+8b_{22}(x) & c_{13}(x)+8b_{23}(x)
					\end{pmatrix}.
		\end{aligned}
	\end{equation}
	By exploiting the symmetries in equation \eqref{symmetry}, and comparing the coefficients of powers of $k$, They can be written in the form of equation \eqref{X-kto0}.
    
	Since the first and third rows of $\mathcal{X}(x,0)$ are bounded for all $ x \in \mathbb{R}$, it follows that $X^{(-1)}$ is also bounded. Therefore, there exists a bounded positive function $f_1(x)$ that decays rapidly as $x\to\infty$ and satisfies 
	\begin{equation*}
		\left|\alpha_{11}(x) \right| +\left|\alpha_{12}(x) \right| \leq f_1(x).
	\end{equation*}
	 Similarly, the second row of $\mathcal{X}(x,0)$, as well as the first row of $\partial_k\mathcal{X}(x,0)$, vanish rapidly as $x\to\infty$, and they grow at most linearly as $x\to-\infty$. Hence, we obtain the estimates in \eqref{alpha_i}.
	
	The proof for $Y(x,k)$ can be obtained similarly.
\hfill $\square$
\subsection{Proof of Proposition \ref{prop_s}}\label{Appendix_prop_s}
By carefully analyzing the exponential terms involved in the integral equation \eqref{sk}, we can derive the following analytic properties
		\begin{equation*}
			s_{11}(k):\,{\rm{Im}}\,k>0,\qquad s_{33}(k):\,{\rm{Im}}\,k<0.
		\end{equation*}
		Substituting the expansion of $X(x,k)$ as $k\to\infty$ as shown in Proposition \ref{prop_Xpm_k_infty}, we can obtain
		\begin{equation*}
			s(k)=I-\sum_{j=1}^{N}\frac{1}{k^j}\int_\mathbb{R} {\rm{e}}^{-y\widehat{\mathcal{L}}( k )}\left( L_1(x,k)X_j(x)\right)\rd y+\mathcal{O}\left( k^{-N-1}\right) .
		\end{equation*}
		Proposition \ref{prop_Xpm_k_infty} has already proven that the functions $X_j(x)$ and their derivatives are bounded, and that $ L_1\in \mathcal{S}(\mathbb{R}) $. Next, the off-diagonal elements of $s(k)$ as $k\to\infty$ can be obtained by integrating the above equation by parts, based on $\mathcal{L}^{-1}_{ii}(k)=\mathcal{O}(k^{-1})$, $i=1,2,3$. For the diagonal elements of $s(k)$, which do not involve exponential terms, the above equation can be directly expanded based on the form of $L_1$.

		Using the above-defined $\tilde{L}_1$ and $
	\mathcal{X}$, $s(k)$ can be rewritten as
		\begin{equation*}
			s(k)=I-\int_\mathbb{R} {\rm{e}}^{-x\widehat{\mathcal{L}}( k )}\left[ \left(\frac{P^{(1)}}{k}+P^{(2)}+P^{(3)} \right) \tilde{L}_1(x)\left( \mathcal{X}(x,0)+k\partial_k\mathcal{X}(x,0)+\cdots\right) \right] \rd x, \quad k\to 0.
		\end{equation*}
		First, consider the integral function
		\begin{equation*}
			\int_\mathbb{R} {\rm{e}}^{-x\widehat{\mathcal{L}}( k )}\left[ P^{(1)} \tilde{L}_1(x) \mathcal{X}(x,0) \right] \rd x,
		\end{equation*}
		we can obtain the diagonal elements mentioned above do not contain exponential terms. For off-diagonal elements, taking the element at position $(1,2)$ as an example, we calculate the following expression
		\begin{equation*}
			\left|\int_\mathbb{R} \left( {\rm{e}}^{-2\ri kx}-1\right) f(x)\rd x \right| \leq \int_\mathbb{R}\left|2\ri k x f(x) \right| \rd x=\mathcal{O}(k), \quad k\to 0.
		\end{equation*}
		Next, the first term in the expansion of $s(k)$ as $k\to 0$ is
		\begin{equation*}
			s^{(-1)}=-\int_\mathbb{R}  P^{(1)} \tilde{L}_1(x) \mathcal{X}(x,0) \rd x.
		\end{equation*}
		
		Based on the proof of Proposition \ref{prop_Xkto0}, we now determine the specific form of $s^{(-1)}$. In that proposition, we have already obtained a key result required for this derivation:
		\begin{equation*}
			\mathcal{X}(x,0)=\begin{pmatrix}
				4\alpha_{11}(x) & 4\alpha_{12}(x) & 4\alpha_{11}(x)\\
				\alpha_{14}(x) & \frac{8\alpha_{12}(x)\alpha_{14}(x)-\alpha^*_{15}(x)}{8\alpha_{11}(x)} & \alpha_{14}(x)\\
				\alpha_{15}(x) & \frac{\alpha_{12}(x)\alpha_{15}(x)-\alpha^*_{11}(x)}{\alpha_{11}(x) } & \alpha_{15}(x)\\
			\end{pmatrix}.
		\end{equation*}
		Substituting into the expression for $s^{(-1)}$, we can obtain the specific form of the first term in \eqref{s_k0}.

        Assuming $n_0(x)$ and $\phi_0(x)$ have compact support, the integral in \eqref{sk} converges for $k \in \mathbb{C} \setminus \{0\}$. In other words, all entries of $X(x, k), Y(x, k)$, and $s(k)$ are well-defined and analytic for $k \in \mathbb{C} \setminus \{0\}$. Since both $X$ and $Y$ satisfy \eqref{lax_X}, there exists a function $s(k)$, dependent only on $k$, such that \eqref{X_s} holds. We obtain  
      \begin{equation*}
	  	s(k)=\re^{-x\widehat{\mathcal{L}}(k)}\left(Y^{-1}(x,k)X(x,k) \right) .
	  \end{equation*}
      Taking the limit as $x \to -\infty$ yields the definition of $s$. The conclusions about $\det s$ follow directly from the properties of $X$ and $Y$.

		We can observe that $\overline{\mathcal{L}(\bar{k})}=-\mathcal{L}(k)$, and thus proceed to evaluate
		\begin{align*}
			s^\dagger\left(  \bar k \right) =&\left(\re^{-x\mathcal{L}\left(  \bar k \right) } Y^{-1}\left(  x,\bar k \right) X\left(  x,\bar k \right) \re^{x\mathcal{L}\left(  \bar k \right) }\right)^\dagger
			=\re^{-x\mathcal{L}( k)}\mathcal{A}(k) X^{-1}(x,k)Y(x, k)\mathcal{A}^{-1}(k)\re^{x\mathcal{L}(k)}
			=\mathcal{A}(k)s^{-1} (k)\mathcal{A}^{-1}(k).
		\end{align*}
		Moreover, for any $3 \times 3$ matrix $A$, we have 
		\begin{equation*}
			\mathcal{B}\re^{x\mathcal{L}(k)}\mathcal{B}A\mathcal{B}\re^{-x\mathcal{L}(k)}\mathcal{B}=\re^{-x\mathcal{L}(k)}A\re^{x\mathcal{L}(k)},
		\end{equation*} 
		and thus
		\begin{align*}
			s(-k)=&\re^{x\mathcal{L}(k)}Y^{-1}(x,-k)X(x,-k)\re^{-x\mathcal{L}(k)}
			=\mathcal{B}\left( \mathcal{B}\re^{x\mathcal{L}(k)}\mathcal{B}\right) Y^{-1}(x,k)X(x,k)\left( \mathcal{B}\re^{-x\mathcal{L}(k)}\mathcal{B}\right)\mathcal{B} 
			=\mathcal{B}s(k)\mathcal{B}.
		\end{align*}
\hfill
    \subsection{Proof of Proposition \ref{prop_XAkto0}}\label{Appendix_prop_kto02}

    Construct $\mathcal{X}^A(x,k):=P^A(k)X^A(x,k)$  to satisfy the following integral equation
		\begin{equation}\label{cal_XA}
			\mathcal{X}^A(x,k)=P^A(k)+\int_{x}^{\infty}P^A(k)\re^{-(x-y)\widehat{\mathcal{L}}(k)}\left(P^{T}(k) \tilde{L}^T_1(y)\mathcal{X}^A(y,k)\right) \rd y,
		\end{equation}
		where
		\begin{equation*}
			\tilde{L}^T_1=P^AL_1^TP^T=\begin{pmatrix}
				0 & 0 & 0\\
				2\ri n_0 & 0 & 2\phi_0\\
				\overline{\phi}_0 & 0 & 0\\
			\end{pmatrix},
		\end{equation*}
		is independent of $k$. It can be calculated that the kernel function of the above integral equation
		\begin{equation*}
			\tilde{\mathcal{P}}(x,y,k):=P^A(k)\re^{-(x-y) \mathcal{L}(k)} P^{T}(k),
		\end{equation*}
	 is analytic at $k=0$ and
		\begin{equation*}
			\tilde{\mathcal{P}}(x,y,0)=\begin{pmatrix}
				1 & \ri (x-y) & 0\\
				0 & 1 & 0\\
				0 & 0 & 1\\
			\end{pmatrix},\quad \partial_k\tilde{\mathcal{P}}(x,y,0)=\partial_k^2\tilde{\mathcal{P}}(x,y,0)=\begin{pmatrix}
			0 & 0 & 0\\
			0 & 0 & 0\\
			0 & 0 & 0\\
			\end{pmatrix}.
		\end{equation*}
		
	Analyzing the integral equation \eqref{cal_XA}, it can be seen that $\partial_k^j\mathcal{X}^A(x,k)\to\partial_k^jP^A(k)$ rapidly as $x\to\infty$  for any integer $j\geq0$, and $P^A(k)$ has a simple pole at $k = 0$. Next, multiply equation \eqref{cal_XA} by $k$, and regard it as a new Volterra integral equation for $k\mathcal{X}^A$. The similar analysis shows that the vector function  $k\mathcal{X}^A_{1}(x,k)$ is well-defined and analytic for $x\in\mathbb{R}$ and $\rim k<0$, and the vector function $k\mathcal{X}^A_{3}(x,k)$ is well-defined and analytic for $x\in\mathbb{R}$ and $\rim k>0$. Additionally, $k\mathcal{X}^A_{2}(x,k)$ is not analytic in any domain. 
	
	On the one hand, the derivative of $\partial_k^j\left( k\mathcal{X}^A\right) $ converges rapidly to $\partial_k^j\left( kP^A\right) $ as $x\to\infty$, for any integer $j\geq0$. On the other hand, the behavior of $k\mathcal{X}^A$ as $x\to-\infty$ is analyzed. Evaluate
		\begin{equation*}
			\tilde{\mathcal{P}}(x,y,0)\tilde{L}_1^T(y)=\begin{pmatrix}
				-2(x-y)n_0(y) & 0 & 2\ri (x-y)\phi_0(y)\\
				2\ri n_0(y) & 0 & 2\phi_0(y)\\
				\overline{\phi}_0(y) & 0 & 0\\
			\end{pmatrix},
		\end{equation*}
		from which we deduce that
		\begin{equation*}
			\partial_k^j\left( k\mathcal{X}^A\right) (x,0)=\begin{pmatrix}
				\mathcal{O}\left( x^{j+1}\right)  & \mathcal{O}\left( x^{j+1}\right) & \mathcal{O}\left( x^{j+1}\right)\\
				\mathcal{O}\left( x^{j}\right)  & \mathcal{O}\left( x^{j}\right) & \mathcal{O}\left( x^{j}\right)\\
				\mathcal{O}\left( x^{j}\right)  & \mathcal{O}\left( x^{j}\right) & \mathcal{O}\left( x^{j}\right)\\
			\end{pmatrix},\quad\text{as}\,\, x\to -\infty.
		\end{equation*}
		Since $k\mathcal{X}^A$ is bounded as $k\to 0$, $\mathcal{X}^A$ has at most a simple pole at $k=0$. Therefore, $X^A=P^T\mathcal{X}^A$ has at most a simple pole at $k = 0$,  the function $P^T$ is analytic at $k=0$ and can be decomposed into the following form
		\begin{equation*} 
			P^{T}(k)=k P_1+P_2+P_3,
		\end{equation*}
		where
		\begin{equation*}
			P_1=\begin{pmatrix}
				-4 & 0 & 0\\
				0 & 0 & 0\\
				0 & 0 & 0\\
			\end{pmatrix},\quad
			P_2=\begin{pmatrix}
				0 & 1 & 0\\
				0 & 0 & 0\\
				0 & 1 & 0\\
			\end{pmatrix},\quad
			P_3=\begin{pmatrix}
				0 & 0 & 0\\
				0 & 0 & 1\\
				0 & 0 & 0\\
			\end{pmatrix}.
		\end{equation*}

		Then the first three terms of the expansion of $X^A$ as $k\to0$ can be given by
		\begin{align*}
			&\left(X^A \right) ^{(-1)}(x)=\left( P_2+P_3\right) \left( k\mathcal{X}^A\right) (x,0), \\
			&\left(X^A \right) ^{(0)}(x)=P_1 \left(k\mathcal{X}^A \right)(x,0) +\left( P_2+P_3\right) \partial_k\left( k\mathcal{X}^A\right) (x,0)-I, \\
			&\left(X^A \right) ^{(1)}(x)=P_1\partial_k\left( k\mathcal{X}^A\right) (x,0)+\frac{P_2+P_3}{2} \partial_k^2\left( k\mathcal{X}^A\right) (x,0), \quad \text{etc}.
		\end{align*}

		Based on the special forms of $P^{(j)}$, $j=1,2,3$, there exist complex-valued functions $\left\lbrace d_{ij},e_{ij},f_{ij} \right\rbrace_{i,j=1}^3 $  such that
		\begin{equation*}\label{XA_k0_pole}
			\begin{aligned}
				&\left(X^A \right)^{(-1)}(x)=\left( P_2+P_3\right) \left( k\mathcal{X}^A\right) (x,0)=\begin{pmatrix}
					d_{21}(x) & d_{22}(x) & d_{23}(x)\\
					d_{31}(x) & d_{32}(x) & d_{33}(x)\\
					d_{21}(x) & d_{22}(x) & d_{23}(x)\\
				\end{pmatrix},\\
			&\left(X^A \right)^{(0)}(x)=P_1 \left(k\mathcal{X}^A \right)(x,0) +\left( P_2+P_3\right) \partial_k\left( k\mathcal{X}^A\right) (x,0)-I
			=-I+ \setlength{\arraycolsep}{2pt}\begin{pmatrix}
				-4b_{11}(x)+e_{21}(x) & -4b_{12}(x)+e_{22}(x) & -4b_{13}(x)+e_{23}(x) \\
				e_{31}(x) & e_{32}(x) & e_{33}(x)\\
				e_{21}(x) & e_{22}(x) & e_{23}(x)\\
			\end{pmatrix},\\
			&\left(X^A \right)^{(1)}(x)=P_1\partial_k\left( k\mathcal{X}^A\right) (x,0)+\frac{P_2+P_3}{2} \partial_k^2\left( k\mathcal{X}^A\right) (x,0)
			= \setlength{\arraycolsep}{2pt}\begin{pmatrix}
				-4e_{11}(x)+f_{21}(x)/2 & -4e_{12}(x)+f_{22}(x)/2 & -4e_{13}(x)+f_{23}(x)/2 \\
				f_{31}(x) & f_{32}(x) & f_{33}(x)\\
				f_{21}(x) & f_{22}(x) & f_{23}(x)\\
			\end{pmatrix}.
			\end{aligned}
		\end{equation*}
		By exploiting the symmetries in equation \eqref{symmetry}, and comparing the coefficients of powers of $k$, the proof can be completed.\hfill $\square$

\section{$N$-soliton solutions}\label{sec_Nsoliton}

 \subsection{$N$-soliton solution for $k_j \in \textbf{\textbf{Z}} \setminus \ri\mathbb{R}_+$}\label{subsec_N-soliton1}
   Similar to the solution process for the one-soliton case above, when $s_{11}(k)$ has $N$ simple zeros at $k_1,k_2,\cdots,k_N\in \textbf{Z}\setminus\ri\mathbb{R}_+$, $\tilde{M}$ can be expressed by:
   
		\begin{align*}
			\begin{pmatrix}
				 \tilde{M}_{11} & \tilde{M}_{12} & \tilde{M}_{13}\\
				 \tilde{M}_{21} & \tilde{M}_{22} & \tilde{M}_{23}\\
				 \tilde{M}_{31} & \tilde{M}_{32} & \tilde{M}_{33}\\
			\end{pmatrix}(x,t,k)&= I-\sum_{j=1}^N\frac{8\overline{k_j}\,\,\overline{C_{k_j}}\re^{-2\ri \overline{k_j}((\overline{k_j}-1)t+x)}}{k-\overline{k_j}}\begin{pmatrix}
				 \tilde{M}_{12}(x,t,\overline{k_j}) & 0 & 0\\
				 \tilde{M}_{22}(x,t,\overline{k_j}) & 0 & 0\\
				 \tilde{M}_{32}(x,t,\overline{k_j}) & 0 & 0\\
			\end{pmatrix}+\sum_{j=1}^N\frac{C_{k_j} \re^{2\ri k_j  \left( \left(k_j -1 \right)t+x\right)  }}{k-k_j } \begin{pmatrix}
			0 & \tilde{M}_{11}(x,t,k_j ) & 0\\
			0 & \tilde{M}_{21}(x,t,k_j ) & 0\\
			0 & \tilde{M}_{31}(x,t,k_j ) & 0\\
			\end{pmatrix}\\
            &\quad-\sum_{j=1}^N\frac{C_{k_j} \re^{2\ri k_j \left( \left(k_j -1 \right)t+x\right)  }}{k+k_j } \begin{pmatrix}
			0 & \tilde{M}_{13}(x,t,-k_j ) & 0\\
			0 & \tilde{M}_{23}(x,t,-k_j ) & 0\\
			0 & \tilde{M}_{33}(x,t,-k_j ) & 0\\
			\end{pmatrix}+\sum_{j=1}^N\frac{8\overline{k_j}\,\,\overline{C_{k_j}}\re^{-2\ri \overline{k_j}((\overline{k_j}-1)t+x)}}{k+\overline{k_j}}\begin{pmatrix}
				0 & 0 & \tilde{M}_{12}(x,t,-\overline{k_j})\\
				0 & 0 & \tilde{M}_{22}(x,t,-\overline{k_j})\\
				0 & 0 & \tilde{M}_{32}(x,t,-\overline{k_j})\\
			\end{pmatrix}.
		\end{align*}
    Then, the reconstructed formula \eqref{reconstruct} can be rewritten as
	\begin{equation}\label{reconstruct_soliton_N1}
		\left\lbrace 
		\begin{aligned}
			&n(x,t)=-2\ri\frac{\partial}{\partial x}\left[\sum_{j=1}^N8\overline{k_j}\,\,\overline{C_{k_j}}\re^{-2\ri \overline{k_j}((\overline{k_j}-1)t+x)} \left( \tilde{M}_{12}(x,t,\overline{k_j})+\tilde{M}_{32}(x,t,\overline{k_j})\right)\right]  ,\\
			&\phi(x,t)=-\sum_{j=1}^N8\ri \overline{k_j}\,\,\overline{C_{k_j}}\re^{-2\ri \overline{k_j}((\overline{k_j}-1)t+x)}\tilde{M}_{22}(x,t,\overline{k_j}) . 
		\end{aligned}\right.
	\end{equation}
	The algebraic equation is constructed in the following matrix form:
    \begin{equation}\label{matrix_3N1}
       \begin{pmatrix}
       	1+A_{11} & \cdots & A_{1N} & 0 & \cdots & 0 & -B_{11} & \cdots & -B_{1N} \\
       	\vdots & \ddots & \vdots & \vdots & \ddots & \vdots & \vdots & \ddots & \vdots &\\
       	A_{N1} & \cdots & 1+A_{NN} & 0 & \cdots & 0 & -B_{N1} & \cdots & -B_{NN} \\
       	0 & \cdots & 0 & 1+A_{11}-B_{11} & \cdots & A_{1N}-B_{1N} & 0 & \cdots & 0\\
       	\vdots & \ddots & \vdots & \vdots & \ddots & \vdots & \vdots & \ddots & \vdots &\\
       	0 & \cdots & 0 & A_{N1}-B_{N1} & \cdots & 1+A_{NN}-B_{NN} & 0 & \cdots & 0\\
       	-B_{11} & \cdots & -B_{1N} & 0 & \cdots & 0 & 1+A_{11} & \cdots & A_{1N} \\
       	\vdots & \ddots & \vdots & \vdots & \ddots & \vdots & \vdots & \ddots & \vdots &\\
       	-B_{N1} & \cdots & -B_{NN} & 0 & \cdots & 0 & A_{N1} & \cdots & 1+A_{NN} \\
       \end{pmatrix}\begin{pmatrix}
            \tilde{M}_{12}(x,t,\overline{k_1})\\ \vdots \\ \tilde{M}_{12}(x,t,\overline{k_N})\\
            \tilde{M}_{22}(x,t,\overline{k_1})\\ \vdots \\ \tilde{M}_{22}(x,t,\overline{k_N})\\
            \tilde{M}_{32}(x,t,\overline{k_1})\\ \vdots \\ \tilde{M}_{32}(x,t,\overline{k_N})\\
        \end{pmatrix}=\begin{pmatrix}
            a_1 \\ \vdots \\a_N \\1 \\ \vdots \\1 \\ b_1 \\ \vdots \\b_N \\
        \end{pmatrix},
    \end{equation}
    where the $3N\times 3N$ matrix on the left-hand side of the above equation is denoted as $\mathbb{A}$ and
    \begin{equation*}
    \begin{aligned}
      & A_{jn}:=\sum_{m=1}^N\frac{8\overline{k_n}\,\,\overline{C_{k_n}}C_{k_m}\re^{2\ri k_m  \left( \left(k_m -1 \right)t+x\right)-2\ri \overline{k_n}((\overline{k_n}-1)t+x)  }}{(\overline{k_j}-k_m)(k_m-\overline{k_n})},\quad &&a_j:=\sum_{m=1}^N\frac{C_{k_m}\re^{2\ri k_m  \left( \left(k_m -1 \right)t+x\right)  }}{\overline{k_j}-k_m } ,\\\\
      & B_{jn}:=\sum_{m=1}^N\frac{8\overline{k_n}\,\,\overline{C_{k_n}}C_{k_m} \re^{2\ri k_m \left( \left(k_m -1 \right)t+x\right)-2\ri \overline{k_n}((\overline{k_n}-1)t+x)  }}{(\overline{k_j}+k_m)(k_m-\overline{k_n}) },\quad && b_j:= -\sum_{m=1}^N\frac{C_{k_m} \re^{2\ri k_m \left( \left(k_m -1 \right)t+x\right)  }}{\overline{k_j}+k_m }.
    \end{aligned}
    \end{equation*}

    The matrix obtained by replacing the first column to the $3N$-th column of matrix $\mathbb{A}$ with column vector $$\begin{pmatrix}
        a_1 & \cdots & a_N & 1 & \cdots & 1 & b_1 & \cdots & b_N
    \end{pmatrix}^T$$
     are denoted as $\mathbb{A}^{(1)}_1$, $\cdots$, $\mathbb{A}^{(1)}_N$, $\mathbb{A}^{(2)}_1$, $\cdots$, $\mathbb{A}^{(2)}_N$, $\mathbb{A}^{(3)}_1$, $\cdots$, $\mathbb{A}^{(3)}_N$ respectively.  Using Cramer’s rule to solve the above algebraic problem \eqref{matrix_3N1}, we obtain the following: 
     \begin{equation*}
            \tilde{M}_{12}(x,t,\overline{k_j})=\frac{\det \mathbb{A}^{(1)}_j}{\det \mathbb{A}},\qquad \tilde{M}_{22}(x,t,\overline{k_j})=\frac{\det \mathbb{A}_{j}^{(2)}}{\det \mathbb{A}},\qquad \tilde{M}_{32}(x,t,\overline{k_j})=\frac{\det \mathbb{A}_{j}^{(3)}}{\det \mathbb{A}}.
        \end{equation*}
   Then the form of the $N$-soliton solution obtained from equation \eqref{reconstruct_soliton_N1} is
   \begin{equation*}
		\left\lbrace 
		\begin{aligned}
			&n(x,t)=-2\ri\frac{\partial}{\partial x}\left[\sum_{j=1}^N8\overline{k_j}\,\,\overline{C_{k_j}}\re^{-2\ri \overline{k_j}((\overline{k_j}-1)t+x)} \left( \frac{\det \mathbb{A}^{(1)}_j+\det \mathbb{A}^{(3)}_j}{\det \mathbb{A}}\right)\right]  ,\\
			&\phi(x,t)=-\sum_{j=1}^N8\ri \overline{k_j}\,\,\overline{C_{k_j}}\re^{-2\ri \overline{k_j}((\overline{k_j}-1)t+x)}\frac{\det \mathbb{A}_{j}^{(2)}}{\det \mathbb{A}} . 
		\end{aligned}\right.
	\end{equation*}
  \subsection{$N$-soliton solution for $k_j \in \textbf{Z} \cap\ri\mathbb{R}_+$}\label{subsec_N-soliton2} 
    Similar to the $N$-soliton solution in another scenario discussed in Subsection \ref{subsec_N-soliton1},  we present the expression for $\tilde{M}$:
    \begin{equation*}
    	\begin{aligned}
    		\tilde{M}(x,t,k)&=I+\frac{C_{\ri\eta_j} \re^{-4\eta_j(x-t)}}{k-\ri\eta_j } \begin{pmatrix}
    			0 & 0 & \tilde{M}_{11}(x,t,\ri\eta_j )\\
    			0 & 0 & \tilde{M}_{21}(x,t,\ri\eta_j )\\
    			0 & 0 & \tilde{M}_{31}(x,t,\ri\eta_j )\\
    		\end{pmatrix}-\frac{C_{\ri\eta_j} \re^{-4\eta_j(x-t)}}{k+\ri\eta_j } \begin{pmatrix}
    			\tilde{M}_{31}(x,t,\ri\eta_j ) & 0 & 0\\
    			\tilde{M}_{21}(x,t,\ri\eta_j ) & 0 & 0\\
    			\tilde{M}_{11}(x,t,\ri\eta_j ) & 0 & 0\\
    		\end{pmatrix}.
    	\end{aligned}
    \end{equation*}
     The above system of equations is equivalent to the following algebraic equation in matrix form:
    \begin{equation*}\label{matrix_3N}
    	\begin{pmatrix}
    		1 & \cdots & 0 & 0 & \cdots & 0 & D_{11} & \cdots & D_{1N} \\
    		\vdots & \ddots & \vdots & \vdots & \ddots & \vdots & \vdots & \ddots & \vdots &\\
    		0 & \cdots & 1 & 0 & \cdots & 0 & D_{N1} & \cdots & D_{NN} \\
    		0 & \cdots & 0 & 1+D_{11} & \cdots & D_{1N} & 0 & \cdots & 0\\
    		\vdots & \ddots & \vdots & \vdots & \ddots & \vdots & \vdots & \ddots & \vdots &\\
    		0 & \cdots & 0 & D_{N1} & \cdots & 1+D_{NN} & 0 & \cdots & 0\\
    		D_{11} & \cdots & D_{1N} & 0 & \cdots & 0 & 1 & \cdots & 0 \\
    		\vdots & \ddots & \vdots & \vdots & \ddots & \vdots & \vdots & \ddots & \vdots &\\
    		D_{N1} & \cdots & D_{NN} & 0 & \cdots & 0 & 0 & \cdots & 1 \\
    	\end{pmatrix}\begin{pmatrix}
    		\tilde{M}_{11}(x,t,\ri\eta_1)\\ \vdots \\ \tilde{M}_{11}(x,t,\ri\eta_N)\\
    		\tilde{M}_{21}(x,t,\ri\eta_1)\\ \vdots \\ \tilde{M}_{21}(x,t,\ri\eta_N)\\
    		\tilde{M}_{31}(x,t,\ri\eta_1)\\ \vdots \\ \tilde{M}_{31}(x,t,\ri\eta_N)\\
    	\end{pmatrix}=\begin{pmatrix}
    		1 \\ \vdots \\1 \\0 \\ \vdots \\0 \\ 0 \\ \vdots \\0 \\
    	\end{pmatrix},
    \end{equation*}
    where the $3N\times 3N$ matrix on the left-hand side of the above equation is defined as $\mathbb{B}$ and $
    	 D_{jn}:=\frac{C_{\ri\eta_n}\re^{-4\eta_n(x-t)}}{\ri(\eta_j+\eta_n)}.$
    	 
    	 Similarly, let $\mathbb{B}^{(1)}_1$, $\cdots$, $\mathbb{B}^{(1)}_N$, $\mathbb{B}^{(2)}_1$, $\cdots$, $\mathbb{B}^{(2)}_N$, $\mathbb{B}^{(3)}_1$, $\cdots$, $\mathbb{B}^{(3)}_N$ be matrices defined as follows: replace the first column, second column, and up to the $3N$-th column of matrix $\mathbb{B}$ respectively with the vector $\begin{pmatrix}
    	 	1 & \cdots & 1 & 0 & \cdots & 0 & 0 & \cdots & 0
    	 \end{pmatrix}^T$. Thus we have
    	 \begin{equation*}
    	 	\tilde{M}_{11}(x,t,\ri\eta_j)=\frac{\det \mathbb{B}^{(1)}_j}{\det \mathbb{B}},\qquad \tilde{M}_{21}(x,t,\ri\eta_j)=\frac{\det \mathbb{B}_{j}^{(2)}}{\det \mathbb{B}},\qquad \tilde{M}_{31}(x,t,\ri\eta_j)=\frac{\det \mathbb{B}_{j}^{(3)}}{\det \mathbb{B}}.
    	 \end{equation*}
    	 Then the form of the $N$-soliton solution obtained from equation \eqref{reconstruct_soliton_N1} is
    	 \begin{equation*}
    	 	\left\lbrace 
    	 	\begin{aligned}
    	 		&n(x,t)=-2\ri\frac{\partial}{\partial x}\left[\sum_{j=1}^NC_{\ri\eta_j} \re^{-4\eta_j(x-t)} \left( \frac{\det \mathbb{B}^{(1)}_j+\det \mathbb{B}^{(3)}_j}{\det \mathbb{B}}\right)\right]  ,\\
    	 		&\phi(x,t)=\sum_{j=1}^NC_{\ri\eta_j} \re^{-4\eta_j(x-t)}\frac{\det \mathbb{B}_{j}^{(2)}}{\det \mathbb{B}}=0 . 
    	 	\end{aligned}\right.
    	 \end{equation*}

 \section{The model problems}\label{AppendixA}
   Let $X_j=\left\{z\in \mathbb{C}\,|\,z=s\re^{\frac{(2j-1)\pi\ri}{4}}, 0\leq s\leq\infty\right\}$, $j=1,2,3,4$, and $X=\bigcup_{j=1}^4X_j$ as shown in Figure \eqref{fig_modelX}. Define the function $\nu=-\frac{1}{2\pi}\ln(1-8k_0\left|r_1(-k_0) \right|^2 )$, and provide the following four model problems concerning $M^{X_j}$, $j=1,2,3,4$.

    \begin{figure}[htbp]
    	\centering
    	\begin{tikzpicture}[scale=1.2]
    		
            \draw[very thick, black!20!blue] (2/1.414,2/1.414) -- (-2/1.414,-2/1.414);
            \draw[very thick, black!20!blue] (-2/1.414,2/1.414) -- (2/1.414,-2/1.414);

            \draw[very thick, black!20!blue,-latex] (0,0)--(-1.3/1.414,-1.3/1.414);
            \draw[very thick, black!20!blue,-latex] (0,0)--(-1.3/1.414,1.3/1.414);
            \draw[very thick, black!20!blue,-latex] (0,0)--(1.3/1.414,-1.3/1.414);
            \draw[very thick, black!20!blue,-latex] (0,0)--(1.3/1.414,1.3/1.414);

            \node[red!70!black,above,left] at (-1.3/1.414,-1.3/1.414) {$X_3$};
            \node[red!70!black,above,left] at (-1.3/1.414,1.3/1.414) {$X_2$};
            \node[red!70!black,above,right] at (1.3/1.414,-1.3/1.414) {$X_4$};
            \node[red!70!black,above,right] at (1.3/1.414,1.3/1.414) {$X_1$};

            \fill (0,0) circle (1.2pt);
         
          \node[below] at (0,-0.1) {$0$};
    		
    	\end{tikzpicture}
    	\caption{The jump contour $X$ in the complex $z$-plane.}
    	\label{fig_modelX}
    \end{figure}
    
    \begin{rhp}\label{pcmodel_k1}
    Find a $3\times3$ matrix-valued function $M^{X_1}(q,z_1)$ with the following properties:
    \begin{itemize}
		\item The function $M^{X_1}(q,z_1)$ is holomorphic for $z_1\in\mathbb{C}\setminus X$.
		\item The function $M^{X_1}(q,z_1)$ is analytic for $z_1\in\mathbb{C}\setminus X$, and satisfies the following relationship:
		\begin{equation*}
			M^{X_1}_+(q,z_1)=M^{X_1}_-(q,z_1)v^{X_1}(q,z_1), \quad z_1\in X,
		\end{equation*}
    where 
    \begin{align*}
         &v^{X_1}_1(q)=\begin{pmatrix}
            1 & 0 & 0\\
            0 & 1 & 0\\
            0 & \hat{q}z_1 ^{-2\ri\nu(q)}\re^{\frac{\ri z_1^2}{2}} & 1\\
        \end{pmatrix},\, z_1\in X_1, \quad &&v^{X_1}_2(q)=\begin{pmatrix}
            1 & 0 & 0\\
            0 & 1 & 8k_0q^*z_1^{2\ri\nu(q)}\re^{-\frac{\ri z_1^2}{2}}\\
            0 & 0 & 1\\
        \end{pmatrix},\, z_1\in X_2,\\
        &v^{X_1}_3(q)=\begin{pmatrix}
            1 & 0 & 0\\
            0 & 1 & 0\\
            0 & -qz_1^{-2\ri\nu(q)}\re^{\frac{\ri z_1^2}{2}}  & 1\\
        \end{pmatrix},\, z_1\in X_3,\quad
        &&v^{X_1}_4(q)=\begin{pmatrix}
            1 & 0 & 0\\
            0 & 1 & -8k_0\hat{q}^*z_1^{2\ri\nu(q)}\re^{-\frac{\ri z_1^2}{2}}\\
            0 & 0 & 1\\
        \end{pmatrix}, z_1\in X_4,
    \end{align*}
    with the branch cut running along the positive real axis, i.e., $z_1^{2\ri\nu}=\re^{2\ri\nu\ln_0(z_1)}$.
		
		\item  As $z_1\rightarrow\infty$, $M^{X_1}(q,z_1)=I+\mathcal{O}(z_1^{-1})$.
		\item  As $z_1\rightarrow0$, $M^{X_1}(q,z_1)=\mathcal{O}(1)$.
	\end{itemize}
     
    \end{rhp}

       The solution $M^{X_1}(q,z_1)$ of the RH problem \ref{pcmodel_k1} admits the following expansion
    \begin{equation}\label{Mx1_expand}
        M^{X_1}(q,z_1)=I+\frac{M_1^{X_1}(q)}{z_1}+\mathcal{O}\left(\frac{1}{z_1^2}\right), \quad z_1\to\infty,
    \end{equation}
    where
    \begin{equation*}
        M_1^{X_1}(q)=\begin{pmatrix}
            0 & 0 & 0\\
            0 & 0 & \alpha_{23}\\
            0 & \alpha_{32} & 0\\
        \end{pmatrix},
    \end{equation*}
and
     \begin{equation*}
           \alpha_{23}=\frac{\sqrt{2\pi}\re^{-\frac{\pi\ri}{4}}\re^{-\frac{5\pi\nu}{2}}}{q\Gamma(-\ri\nu)},\qquad
          \alpha_{32}=\frac{\sqrt{2\pi}\re^{\frac{\pi\ri}{4}}\re^{\frac{3\pi\nu}{2}}}{8k_0q^*\Gamma(\ri\nu)}.\\ 
       \end{equation*}

        \begin{proof}
    Define the following two matrices
        \begin{equation*}
            z_1^{\ri\nu\tilde{\sigma}_3}=\begin{pmatrix}
                1 & 0 & 0\\
                0 & z_1^{\ri\nu} & 0\\
                0 & 0 & z_1^{-\ri\nu}\\
            \end{pmatrix},\qquad 
            \re^{\frac{\ri z_1^2}{4}\tilde{\sigma}_3}=\begin{pmatrix}
                1 & 0 & 0\\
                0 & \re^{\frac{\ri z_1^2}{4}} & 0\\
                0 & 0 & \re^{-\frac{\ri z_1^2}{4}}\\ 
            \end{pmatrix}.
        \end{equation*}

        \begin{figure}[htbp]
    	\centering
    	\begin{tikzpicture}[scale=1.2]
    		
    		\draw[very thick, black!20!blue] (-2,0) -- (2,0);

            \draw[very thick, black!20!blue,-latex] (0,0) -- (1.3,0);
            \draw[very thick, black!20!blue,-latex] (-2,0) -- (-1,0);

            \draw[very thick, black!20!blue,-latex,dashed] (0,0) -- (-1.2/1.414,-1.2/1.414);
            \draw[very thick, black!20!blue,-latex,dashed] (0,0) -- (1.2/1.414,-1.2/1.414);
            \draw[very thick, black!20!blue,-latex,dashed] (0,0) -- (-1.2/1.414,1.2/1.414);
            \draw[very thick, black!20!blue,-latex,dashed] (0,0) -- (1.2/1.414,1.2/1.414);

            \draw[very thick, black!20!blue,dashed] (2/1.414,2/1.414) -- (-2/1.414,-2/1.414);
            \draw[very thick, black!20!blue,dashed] (-2/1.414,2/1.414) -- (2/1.414,-2/1.414);

            \node[red!70!black,above] at (1.3,0.3) {$\Omega_1$};
            \node[red!70!black,above] at (0,0.9) {$\Omega_2$};
            \node[red!70!black,above] at (-1.3,0.3) {$\Omega_3$};
            \node[red!70!black,below] at (-1.3,-0.5) {$\Omega_4$};
            \node[red!70!black,below] at (0,-0.9) {$\Omega_5$};
            \node[red!70!black,below] at (1.3,-0.5) {$\Omega_6$};

            \node[above,right] at (-1.1/1.414,-1.1/1.414) {$X_3$};
            \node[above,right] at (-1.1/1.414,1.1/1.414) {$X_2$};
            \node[above,left] at (1.1/1.414,-1.1/1.414) {$X_4$};
            \node[above,left] at (1.1/1.414,1.1/1.414) {$X_1$};
            \node[below] at (-1.1,0) {$X_5$};
            \node[below] at (1.2,0) {$X_6$};
            
            \fill (0,0) circle (1.2pt);
         
          \node[right] at (2,0) {$\mathbb{R}$};
          \node[below] at (0,-0.1) {$0$};
    		
    	\end{tikzpicture}
    	\caption{The domain $\Omega$ in the complex $k$-plane.}
    	\label{fig_Omega}
    \end{figure}

        Introduce transformation 
        \begin{equation*}
            \psi(q,z_1)=M^{X_1}(q,z_1)\mathcal{P}(q,z_1),
        \end{equation*}
        where
        \begin{equation*}
            \mathcal{P}(q,z_1)=\left\lbrace
            \begin{aligned}
            &v^{X_1}_1z_1^{\ri\nu\tilde{\sigma}_3},\quad && z_1\in \Omega_1,\\
            &z_1^{\ri\nu\tilde{\sigma}_3},\quad && z_1\in \Omega_2,\\
            &\left(v^{X_1}_2\right)^{-1} z_1^{\ri\nu\tilde{\sigma}_3},  \quad && z_1\in \Omega_3,\\
            &v^{X_1}_3z_1^{\ri\nu\tilde{\sigma}_3},\quad && z_1\in \Omega_4,\\
             &z_1^{\ri\nu\tilde{\sigma}_3},\quad && z_1\in \Omega_5,\\
             &\left(v^{X_1}_4\right)^{-1} z_1^{\ri\nu\tilde{\sigma}_3},  \quad && z_1\in \Omega_6,
                \end{aligned}
            \right.
        \end{equation*}
        and $\Omega_j$, $j=1,2,\cdots,6$, are defined as shown in Figure \ref{fig_Omega}.
        The direct calculation yields $\psi_+(q,z_1)=\psi_-(q,z_1)v^{\psi}(q,z_1)$, $z_1\in\mathbb{R}$. For $z_1\in X_5$, we have $z_{1+}^{\ri\nu}=z_{1-}^{\ri\nu}$, and
        \begin{align*}
            v^{\psi}_5(q,z_1)=\left(v^{X_1}_3z_1^{\ri\nu\tilde{\sigma}_3}\right)_-^{-1}\left(v^{X_1}_1z_1^{\ri\nu\tilde{\sigma}_3}\right)_+=
                \re^{-\frac{\ri z_1^2}{4}\tilde{\sigma}_3}\begin{pmatrix}
                    1 & 0 & 0\\
                    0 & 1 & -8k_0q^*\\
                    0 & q & 1-8k_0|q|^2\\
                \end{pmatrix}\re^{\frac{\ri z_1^2}{4}\tilde{\sigma}_3}:= \re^{-\frac{\ri z_1^2}{4}\tilde{\sigma}_3}\mathcal{V}(q)\re^{\frac{\ri z_1^2}{4}\tilde{\sigma}_3},
        \end{align*}
        and for $z_1\in X_6$, we have $z_{1-}^{\ri\nu}z_{1+}^{-\ri\nu}=1-8k_0|q|^2$,
        \begin{align*}
            v^{\psi}_6(q,z_1)&=\left(\left(v^{X_1}_4\right)^{-1} z_1^{\ri\nu\tilde{\sigma}_3}\right)_-^{-1}\left(v^{X_1}_1z_1^{\ri\nu\tilde{\sigma}_3}\right)_+=
                \re^{-\frac{\ri z_1^2}{4}\tilde{\sigma}_3}\begin{pmatrix}
                    1 & 0 & 0\\
                    0 & \left(1-8k_0|\hat{q}|^2z_{1-}^{2\ri\nu}z_{1+}^{-2\ri\nu}\right)z_{1-}^{-\ri\nu}z_{1+}^{\ri\nu} & -8k_0\hat q^*z_{1-}^{\ri\nu}z_{1+}^{-\ri\nu}\\
                    0 & \hat q z_{1-}^{\ri\nu}z_{1+}^{-\ri\nu} & z_{1-}^{\ri\nu}z_{1+}^{-\ri\nu}\\
                \end{pmatrix}\re^{\frac{\ri z_1^2}{4}\tilde{\sigma}_3}\\
             &=\re^{-\frac{\ri z_1^2}{4}\tilde{\sigma}_3}\begin{pmatrix}
                    1 & 0 & 0\\
                    0 & 1 & -8k_0q^*\\
                    0 & q & 1-8k_0|q|^2\\
                \end{pmatrix}\re^{\frac{\ri z_1^2}{4}\tilde{\sigma}_3}. 
            \end{align*}
        Thus, $v^{\psi}(q,z_1)=\re^{-\frac{\ri z_1^2}{4}\tilde{\sigma}_3}\mathcal{V}(q)\re^{\frac{\ri z_1^2}{4}\tilde{\sigma}_3}$, $z_1\in\mathbb{R}$. Define $
            \Psi=\psi \re^{-\frac{\ri z_1^2}{4}\tilde{\sigma}_3}=\hat{\Psi} z_1^{\ri\nu\tilde{\sigma}_3}\re^{-\frac{\ri z_1^2}{4}\tilde{\sigma}_3}.$
        As $z_1\to\infty$, the following two expansions hold:
        \begin{align*}
            &\psi=\left(I+\frac{M_1^{X_1}(q)}{z_1}+\mathcal{O}\left(\frac{1}{z_1^2}\right)\right)z_1^{\ri\nu\tilde{\sigma}_3},\qquad \hat{\Psi}=I+\frac{M_1^{X_1}(q)}{z_1}+\mathcal{O}\left(\frac{1}{z_1^2}\right).
        \end{align*}

        Moreover, the calculation shows that $\Psi$ satisfies the following jump relations
        \begin{equation*}
            \Psi_+(z_1,q)=\Psi_-(z_1,q)\mathcal{V}(q),\quad z_1\in\mathbb{R},
        \end{equation*}
            and
        \begin{equation*}
            \left(\partial_{z_1}\Psi+\frac{\ri z_1 \tilde{\sigma}_3}{2}\Psi\right)_+=\left(\partial_{z_1}\Psi+\frac{\ri z_1 \tilde{\sigma}_3}{2}\Psi\right)_-\mathcal{V}(q).
        \end{equation*}
        The above two relations imply that the function $\left(\partial_{z_1}\Psi+\frac{\ri z_1 \tilde{\sigma}_3}{2}\Psi\right)\Psi^{-1}$ has no jump on $\mathbb{R}$. According to Liouville's theorem, it can be concluded that $\left(\partial_{z_1}\Psi+\frac{\ri z_1 \tilde{\sigma}_3}{2}\Psi\right)\Psi^{-1}$ is an entire function on $\mathbb{C}$. Then we have
        \begin{align}\label{entire}
            \left(\partial_{z_1}\Psi+\frac{\ri z_1 \tilde{\sigma}_3}{2}\Psi\right)\Psi^{-1}&=\frac{\ri z_1}{2}\left[\tilde{\sigma}_3,\hat{\Psi}\right]\hat{\Psi}^{-1}+\mathcal{O}\left(z_1^{-1}\right)=\frac{\ri }{2}\left[\tilde{\sigma}_3,M_1^{X_1}\right]+\mathcal{O}\left(z_1^{-1}\right).
        \end{align}
    Assume the following matrix form:
    \begin{equation*}
        \frac{\ri }{2}\left[\tilde{\sigma}_3,M_1^{X_1}\right]:=\begin{pmatrix}
            0 & 0 & 0\\
            0 & 0 & \tilde{\alpha}_{23}\\
            0 & \tilde{\alpha}_{32} & 0\\
        \end{pmatrix},
    \end{equation*}
    where $\alpha_{23}=-\ri \tilde{\alpha}_{23}$, $\alpha_{32}=\ri \tilde{\alpha}_{32}$. Given that $\left(\partial_{z_1}\Psi+\frac{\ri z_1 \tilde{\sigma}_3}{2}\Psi\right)\Psi^{-1}$ is an entire function and equation \ref{entire}, we have
    \begin{equation*}
        \partial_{z_1}\Psi+\frac{\ri z_1 \tilde{\sigma}_3}{2}\Psi=\frac{\ri }{2}\left[\tilde{\sigma}_3,M_1^{X_1}\right]\Psi.
    \end{equation*}
   More importantly,
   \begin{equation}\label{relation_Psi}
   \begin{aligned}
       &\partial_{z_1}\Psi_{22}+\frac{\ri z_1 }{2}\Psi_{22}=\tilde{\alpha}_{23} \Psi_{32},\quad  \partial_{z_1}\Psi_{32}-\frac{\ri z_1 }{2}\Psi_{32}=\tilde{\alpha}_{32} \Psi_{22},\\
       &\partial_{z_1}\Psi_{23}+\frac{\ri z_1 }{2}\Psi_{23}=\tilde{\alpha}_{23} \Psi_{33},\quad  \partial_{z_1}\Psi_{33}-\frac{\ri z_1 }{2}\Psi_{33}=\tilde{\alpha}_{32} \Psi_{23}.
   \end{aligned}
   \end{equation}
   Therefore,
   \begin{align*}
       &\partial_{z_1}^2\Psi_{22}+\left(\frac{\ri}{2}+\frac{z_1^2}{4}-\tilde{\alpha}_{23}\tilde{\alpha}_{32}\right)\Psi_{22}=0,\quad \partial_{z_1}^2\Psi_{33}+\left(-\frac{\ri}{2}+\frac{z_1^2}{4}-\tilde{\alpha}_{23}\tilde{\alpha}_{32}\right)\Psi_{33}=0.
   \end{align*}
       For $\rim z_1>0$, we first denote $z_1=\re^{\frac{3\pi\ri}{4}}\xi$ and $\Psi_{22}^+(z_1)=\Psi_{22}^+(\re^{\frac{3\pi\ri}{4}}\xi)=g(\xi)$, $-\frac{3\pi}{4}<\arg (\xi)<\frac{\pi}{4}$, $\frac{\pi}{4}<\arg (-\xi)<\frac{5\pi}{4}$. Then we have
       \begin{equation*}
           \frac{\rd^2g(\xi)}{\rd \xi^2}+\left(\frac{1}{2}-\frac{\xi^2}{4}+a\right)g(\xi)=0,
       \end{equation*}
       where $a=\ri \tilde{\alpha}_{23}\tilde{\alpha}_{32}$. The function $ g(\xi)=c_1D_a(\xi)+c_2D_a(-\xi)$ satisfies the Weber equation 
       \begin{equation*}
           D_a(\xi)=\left\{
           \begin{aligned}
               &\xi^a\re^{-\frac{\xi^2}{4}}\left(1+\mathcal{O}\left(\frac{1}{\xi^2}\right)\right) ,\quad && |\arg \xi|<\frac{3\pi}{4},      \\
               &\xi^a\re^{-\frac{\xi^2}{4}}\left(1+\mathcal{O}\left(\frac{1}{\xi^2}\right)\right) -\frac{\sqrt{2\pi}}{\Gamma(-a)}\re^{a\pi \ri}\xi^{-a-1}\re^{\frac{\xi^2}{4}}\left(1+\mathcal{O}\left(\frac{1}{\xi^2}\right)\right),\quad && \frac{\pi}{4}<\arg \xi<\frac{5\pi}{4},      \\
               &\xi^a\re^{-\frac{\xi^2}{4}}\left(1+\mathcal{O}\left(\frac{1}{\xi^2}\right)\right) -\frac{\sqrt{2\pi}}{\Gamma(-a)}\re^{-a\pi \ri}\xi^{-a-1}\re^{\frac{\xi^2}{4}}\left(1+\mathcal{O}\left(\frac{1}{\xi^2}\right)\right),\quad && -\frac{5\pi}{4}<\arg \xi<-\frac{\pi}{4},   
  \end{aligned}\right.
       \end{equation*}
       where $D_a(z)$ represents the parabolic cylinder function.
       
       Moreover, based on the asymptotic properties of the Weber equation and combined with the asymptotic behavior of $\psi$ as $z_1\to\infty$, we can obtain
       \begin{equation*}
           g(\xi)\to z_1^{\ri\nu}\re^{-\frac{\ri z_1^2}{4}}, \quad a=\ri\nu,\quad c_1=\re^{-\frac{3\pi\nu}{4}},\quad c_2=0.
       \end{equation*}
       In other words,
       \begin{equation*}
           \Psi^+_{22}(z_1)=\re^{-\frac{3\pi\nu}{4}}D_a(\re^{-\frac{3\pi\ri}{4}}z_1),\qquad a=\ri\nu.
       \end{equation*}

       For $\rim z_1<0$, let $z_1=\re^{-\frac{\pi\ri}{4}}\xi$ and $\Psi_{22}^-(z_1)=\Psi_{22}^-(\re^{-\frac{\pi\ri}{4}}\xi)=g(\xi)$, $-\frac{3\pi}{4}<\arg \xi<\frac{\pi}{4}$. Similarly, we obtain
       \begin{align*}
           &\Psi^-_{22}(z_1)=\re^{-\frac{7\pi\nu}{4}}D_a(\re^{\frac{\pi\ri}{4}}z_1),\quad a=\ri\nu.
       \end{align*}
       Thus
       \begin{equation*}
           \Psi_{22}(q,z_1)=\left\{
           \begin{aligned}
               &\re^{-\frac{3\pi\nu}{4}}D_{\ri\nu}(\re^{-\frac{3\pi\ri}{4}}z_1),\quad &&\rim z_1>0,\\
               &\re^{-\frac{7\pi\nu}{4}}D_{\ri\nu}(\re^{\frac{\pi\ri}{4}}z_1),\quad &&\rim z_1<0.
           \end{aligned}
           \right.
       \end{equation*}
       Similarly, for $\rim z_1>0$, $z_1=e^{\frac{\pi\ri}{4}}\xi$ and for $\rim z_1<0$, $z_1=e^{-\frac{3\pi\ri}{4}}\xi$,
       \begin{equation*}
           \Psi_{33}(q,z_1)=\left\{
           \begin{aligned}
               &\re^{\frac{\pi\nu}{4}}D_{-\ri\nu}(\re^{-\frac{\pi\ri}{4}}z_1),\quad &&\rim z_1>0,\\
               &\re^{\frac{5\pi\nu}{4}}D_{-\ri\nu}(\re^{\frac{3\pi\ri}{4}}z_1),\quad &&\rim z_1<0.
           \end{aligned}
           \right.
       \end{equation*}
       Based on the relationships \eqref{relation_Psi}, we can further obtain
       \begin{equation*}
           \Psi_{23}(q,z_1)=\left\{
           \begin{aligned}
               &\re^{\frac{\pi\nu}{4}}\tilde{\alpha}_{32}^{-1}\left(\partial_{z_1}D_{-\ri\nu}(\re^{-\frac{\pi\ri}{4}}z_1)-\frac{\ri z_1}{2}D_{-\ri\nu}(\re^{-\frac{\pi\ri}{4}}z_1)\right),\quad &&\rim z_1>0,\\
               &\re^{\frac{5\pi\nu}{4}}\tilde{\alpha}_{32}^{-1}\left(\partial_{z_1}D_{-\ri\nu}(\re^{\frac{3\pi\ri}{4}}z_1)-\frac{\ri z_1}{2}D_{-\ri\nu}(\re^{\frac{3\pi\ri}{4}}z_1)\right),\quad &&\rim z_1<0,
           \end{aligned}
           \right.
       \end{equation*}
       \begin{equation*}
           \Psi_{32}(q,z_1)=\left\{
           \begin{aligned}
               &\re^{-\frac{3\pi\nu}{4}}\tilde{\alpha}_{23}^{-1}\left(\partial_{z_1}D_{\ri\nu}(\re^{-\frac{3\pi\ri}{4}}z_1)+\frac{\ri z_1}{2}D_{\ri\nu}(\re^{-\frac{3\pi\ri}{4}}z_1)\right),\quad &&\rim z_1>0,\\
               &\re^{-\frac{7\pi\nu}{4}}\tilde{\alpha}_{23}^{-1}\left(\partial_{z_1}D_{\ri\nu}(\re^{\frac{\pi\ri}{4}}z_1)+\frac{\ri z_1}{2}D_{\ri\nu}(\re^{\frac{\pi\ri}{4}}z_1)\right),\quad &&\rim z_1<0.
           \end{aligned}
           \right.
       \end{equation*}
       Finally, based on $\Psi_-^{-1}\Psi_+=V_1(q)$ and $\det\Psi=1$, we have
       \begin{align*}
           q&=\Psi_{22}^-\Psi_{32}^+-\Psi_{32}^-\Psi_{22}^+=\frac{\re^{-\frac{5\pi\nu}{2}}}{\tilde{\alpha}_{23}}W\left(D_{\ri\nu}(\re^{\frac{\pi\ri}{4}}z_1),D_{\ri\nu}(\re^{-\frac{3\pi\ri}{4}}z_1)\right)=\frac{\sqrt{2\pi}\re^{\frac{\pi\ri}{4}}\re^{-\frac{5\pi\nu}{2}}}{\tilde{\alpha}_{23}\Gamma(-\ri\nu)},\\
           q^*&=-\frac{1}{8k_0}\left(\Psi_{33}^-\Psi_{23}^+-\Psi_{23}^-\Psi_{33}^+\right)=-\frac{\re^{\frac{3\pi\nu}{2}}}{8k_0\tilde{\alpha}_{32}}W\left(D_{-\ri\nu}(\re^{\frac{3\pi\ri}{4}}z_1),D_{-\ri\nu}(\re^{-\frac{\pi\ri}{4}}z_1)\right)=\frac{\sqrt{2\pi}\re^{-\frac{\pi\ri}{4}}\re^{\frac{3\pi\nu}{2}}}{8k_0\tilde{\alpha}_{32}\Gamma(\ri\nu)}.
       \end{align*}
       Then,
       \begin{align*}
           &\alpha_{23}=-\ri \tilde{\alpha}_{23}=\frac{\sqrt{2\pi}\re^{-\frac{\pi\ri}{4}}\re^{-\frac{5\pi\nu}{2}}}{q\Gamma(-\ri\nu)},\quad \alpha_{32}=\ri \tilde{\alpha}_{32}=\frac{\sqrt{2\pi}\re^{\frac{\pi\ri}{4}}\re^{\frac{3\pi\nu}{2}}}{8k_0q^*\Gamma(\ri\nu)}. \qedhere
       \end{align*} 

    \end{proof}

     \begin{rhp}\label{pcmodel_-k1}
    Find a $3\times3$ matrix-valued function $M^{X_2}(q,z_2)$ with the following properties:
    \begin{itemize}
		\item The function $M^{X_2}(q,z_2)$ is holomorphic for $z_2\in\mathbb{C}\setminus X$.
		\item The function $M^{X_2}(q,z_2)$ is analytic for $z_2\in\mathbb{C}\setminus X$, and satisfies the following relationship:
		\begin{equation*}
			M^{X_2}_+(q,z_2)=M^{X_2}_-(q,z_2)v^{X_2}(q,z_2), \quad z_2\in X,
		\end{equation*}
    where 
    \begin{align*}
         &v^{X_2}_5(q)=\begin{pmatrix}
            1 & -qz_2^{-2\ri\nu(q)}\re^{\frac{\ri z_2^2}{2}} & 0\\
            0 & 1 & 0\\
            0 & 0 & 1\\
        \end{pmatrix}, z_2\in X_1,\quad
        &&v^{X_2}_6(q)=\begin{pmatrix}
            1 & 0 & 0\\
            -8k_0\hat{q}^*z_2^{2\ri\nu(q)}\re^{-\frac{\ri z_2^2}{2}} & 1 & 0\\
            0 & 0 & 1\\
        \end{pmatrix},z_2\in X_2,\\
        &v^{X_2}_7(q)=\begin{pmatrix}
            1 & \hat{q}z_2^{-2\ri\nu_2(q)}\re^{\frac{\ri z_2^2}{2}} & 0\\
            0 & 1 & 0\\
            0 & 0 & 1\\
        \end{pmatrix},z_2\in X_3,\quad
        &&v^{X_2}_8(q)=\begin{pmatrix}
            1 & 0 & 0\\
            8k_0q^*z_2^{2\ri\nu(q)}\re^{-\frac{\ri z_2^2}{2}} & 1 & 0\\
            0 & 0 & 1\\
        \end{pmatrix},z_2\in X_4,
    \end{align*}
    with the branch cut running along the negative real axis, i.e., $z_2^{2\ri\nu}=\re^{2\ri\nu\ln_\pi(z_2)}$.
		
		\item  As $z_2\rightarrow\infty$, $M^{X_2}(q,z_2)=I+\mathcal{O}(z_2^{-1})$.
		\item  As $z_2\rightarrow0$, $M^{X_2}(q,z_2)=\mathcal{O}(1)$.
	\end{itemize}
     
    \end{rhp}

      The solution $M^{X_2}(q,z_2)$ of the RH problem \ref{pcmodel_-k1} admits the following expansion
    \begin{equation}\label{Mx2_expand}
        M^{X_2}(q,z_2)=I+\frac{M_1^{X_2}(q)}{z_2}+\mathcal{O}\left(\frac{1}{z_2^2}\right), \quad z_2\to\infty,
    \end{equation}
    where
    \begin{equation*}
        M_1^{X_2}(q)=\begin{pmatrix}
            0 & \alpha_{12} & 0\\
            \alpha_{21} & 0 & 0\\
            0 & 0 & 0\\
        \end{pmatrix},
    \end{equation*}
    and
     \begin{equation*}
           \alpha_{12}=\frac{\sqrt{2\pi}\re^{-\frac{\pi\ri}{4}}\re^{-\frac{\pi\nu}{2}}}{8k_0q^*\Gamma(\ri\nu)},\qquad \alpha_{21}=\frac{\sqrt{2\pi}\re^{\frac{\pi\ri}{4}}\re^{-\frac{\pi\nu}{2}}}{q\Gamma(-\ri\nu)}.\\ 
       \end{equation*}

    The proof of this result is similar to the procedure of RH problem \ref{pcmodel_k1}.

     \begin{rhp}\label{pcmodel1_-k1}
    Find a $3\times3$ matrix-valued function $M^{X_3}(\zeta,z_2)$ with the following properties:
    \begin{itemize}
		\item The function $M^{X_3}(\zeta,z_2)$ is holomorphic for $z_2\in\mathbb{C}\setminus X$.
		\item The function $M^{X_3}(\zeta,z_2)$ is analytic for $z_2\in\mathbb{C}\setminus X$, and satisfies the following relationship:
		\begin{equation*}
			M^{X_3}_+(\zeta,z_2)=M^{X_3}_-(\zeta,z_2)V^{X_3}(\zeta,z_2), \quad z_2\in X,
		\end{equation*}
    where 
\begin{align*}
         &V^{X_3}_1(\zeta,z_2)=\begin{pmatrix}
					1 & -\hat{\alpha}^*(k_0)z_2^{-2\ri\hat\nu}\re^{\frac{\ri z_2^2}{2}}  & 0\\
					0  & 1 & 0\\
					0 & 0 & 1\\
				\end{pmatrix},\, z_2\in X_1,\quad
        &&V^{X_3}_2(\zeta,z_2)=\begin{pmatrix}
					1 & 0 & 0\\
					-8k_0\tilde{\alpha}(k_0)z_2^{2\ri\hat\nu}\re^{-\frac{\ri z_2^2}{2}} & 1 & 0\\
					0 & 0 & 1\\
				\end{pmatrix},z_2\in X_2,\\
        &V^{X_3}_3(\zeta,z_2)=\begin{pmatrix}
					1 & \tilde{\alpha}^*(k_0)z_2^{-2\ri\hat\nu}\re^{\frac{\ri z_2^2}{2}} & 0\\
					0 & 1 & 0\\
					0 & 0 & 1\\
				\end{pmatrix},z_2\in X_3,\quad
        &&V^{X_3}_4(\zeta,z_2)=\begin{pmatrix}
					1 & 0  & 0\\
					8k_0\hat{\alpha}(k_0)z_2^{2\ri\hat\nu}\re^{-\frac{\ri z_2^2}{2}}  & 1  & 0\\
					0  & 0  & 1\\
				\end{pmatrix},\, z_2\in X_4,
     \end{align*}
     with the branch cut running along the negative real axis, i.e., $z_2^{2\ri\nu}=\re^{2\ri\nu\ln_\pi(z_2)}$.
		
		\item  As $z_2\rightarrow\infty$, $M^{X_3}(\zeta,z_2)=I+\mathcal{O}(z_2^{-1})$.
		\item  As $z_2\rightarrow0$, $M^{X_3}(\zeta,z_2)=\mathcal{O}(1)$.
	\end{itemize}
     
    \end{rhp}

       The solution $M^{X_3}(\zeta,z_2)$ of the RH problem \ref{pcmodel1_-k1} admits the following expansion
    \begin{equation*}\label{Mx3_expand}
        M^{X_3}(\zeta,z_2)=I+\frac{M_1^{X_3}(\zeta)}{z_2}+\mathcal{O}\left(\frac{1}{z_2^2}\right), \quad z_2\to\infty,
    \end{equation*}
    where
    \begin{equation*}
        M_1^{X_3}(\zeta)=\begin{pmatrix}
            0 & \bar\alpha_{12} & 0\\
            \bar\alpha_{21} & 0 & 0\\
            0 & 0 & 0\\
        \end{pmatrix},
    \end{equation*}
and
      \begin{equation*}
           \bar\alpha_{12}=\frac{\sqrt{2\pi}\re^{-\frac{\pi\ri}{4}}\re^{-\frac{\pi\hat\nu}{2}}}{8k_0{\alpha}(k_0)\re^{2\pi(\nu_2-\nu_1)}\Gamma(\ri\hat\nu)},\qquad \bar\alpha_{21}=\frac{\sqrt{2\pi}\re^{\frac{\pi\ri}{4}}\re^{-\frac{\pi\hat\nu}{2}}}{ {\alpha}^*(k_0)\re^{2\pi\nu_1}\Gamma(-\ri\hat\nu)}.\\ 
       \end{equation*}

       \begin{proof}
           Similar to the proof of RH problem \ref{pcmodel_k1}, only the differences in the proof process are introduced below. First, define the jump $z_{2}^{2\ri\tilde{v}}=\re^{2\ri\tilde{\nu}\ln_\pi(z_2)}$ in $z_2\in X_5$, where $X_5$ is given in Figure \ref{fig_Omega}. Moreover, as a real-valued function of $\zeta$, $\tilde{\nu}=\tilde{\nu}(\zeta)$ satisfies $\tilde{\nu}=\hat\nu=-\frac{1}{2\pi }\ln\left(1-8k_0\alpha(k_0)|^2\re^{2\pi\nu_2}\right)$.
 Define the matrix functions:
           \begin{equation*}
            z_2^{\ri\tilde\nu\bar{\sigma}_3}=\begin{pmatrix}
                z_2^{\ri\tilde\nu} & 0 & 0\\
                0 & z_2^{-\ri\tilde\nu} & 0\\
                0 & 0 & 1\\
            \end{pmatrix},\qquad 
            \re^{\frac{\ri z_2^2}{4}\bar{\sigma}_3}=\begin{pmatrix}
                \re^{\frac{\ri z_2^2}{4}} & 0 & 0\\
                0 & \re^{-\frac{\ri z_2^2}{4}} & 0\\
                0 & 0 & 1\\ 
            \end{pmatrix}.
        \end{equation*}
        and
        \begin{equation*}
           \tilde{\mathcal{P}}(\zeta,z_2)=\left\lbrace
            \begin{aligned}
            &V^{X_3}_1z_2^{-\ri\tilde\nu\bar{\sigma}_3},\quad && z_2\in \Omega_1,\\
            &z_2^{-\ri\tilde\nu\bar{\sigma}_3},\quad && z_2\in \Omega_2,\\
            &\left(V^{X_3}_2\right)^{-1} z_2^{-\ri\tilde\nu\bar{\sigma}_3},  \quad && z_2\in \Omega_3,\\
            &V^{X_3}_3z_2^{-\ri\tilde\nu\bar{\sigma}_3},\quad && z_2\in \Omega_4,\\
             &z_2^{-\ri\tilde\nu\bar{\sigma}_3},\quad && z_2\in \Omega_5,\\
             &\left(V^{X_3}_4\right)^{-1} z_2^{-\ri\tilde\nu\bar{\sigma}_3},  \quad && z_2\in \Omega_6.
                \end{aligned}
            \right.
        \end{equation*}
        The form of the jump matrix $v^{\psi}$ corresponding to the transformation $\psi(\zeta,z_2)=M^{X_3}(\zeta,z_2) \tilde{\mathcal{P}}(\zeta,z_2)$ for $z_2\in X_5$ is:
        \begin{align*}
            V^{\psi}_5(\zeta,z_2)&=\left(V^{X_3}_3z_2^{-\ri\tilde\nu\bar{\sigma}_3}\right)_-^{-1}\left(\left(V^{X_3}_2\right)^{-1} z_2^{-\ri\tilde\nu\bar{\sigma}_3}\right)_+=
                \re^{\frac{\ri z_2^2}{4}\bar{\sigma}_3}z_{2-}^{\ri\tilde\nu\bar{\sigma}_3}
                \begin{pmatrix}
                1-8k_0|\tilde{\alpha}(k_0)|^2z_{2-}^{-2\ri\hat\nu}z_{2+}^{2\ri\hat\nu} & -\tilde{\alpha}^*(k_0)z_{2-}^{-2\ri\hat\nu} & 0\\
                8k_0\tilde{\alpha}(k_0)z_{2+}^{2\ri\hat\nu} & 1 & 0\\
                0 & 0 & 1\\
            \end{pmatrix}z_{2+}^{-\ri\tilde\nu\bar{\sigma}_3}\re^{-\frac{\ri z_2^2}{4}\bar{\sigma}_3}\\
            &= \re^{\frac{\ri z_2^2}{4}\bar{\sigma}_3}
            \begin{pmatrix}
                \left(1-8k_0|{\alpha}(k_0)|^2\re^{2\pi\nu_2}\right)z_{2-}^{\ri\tilde{\nu}}z_{2+}^{-\ri\tilde{\nu}} & -\tilde{\alpha}^*(k_0)z_{2-}^{-2\ri\hat\nu}z_{2-}^{\ri\tilde{\nu}}z_{2+}^{\ri\tilde{\nu}} & 0\\
                8k_0\tilde{\alpha}(k_0)z_{2+}^{2\ri\hat\nu}z_{2-}^{-\ri\tilde{\nu}}z_{2+}^{-\ri\tilde{\nu}} & z_{2-}^{-\ri\tilde{\nu}}z_{2+}^{\ri\tilde{\nu}} & 0\\
                0 & 0 & 1\\
            \end{pmatrix}\re^{-\frac{\ri z_2^2}{4}\bar{\sigma}_3}\\
           &= \re^{\frac{\ri z_2^2}{4}\bar{\sigma}_3}
           \begin{pmatrix}
                1 & -{\alpha}^*(k_0)\re^{2\pi\nu_1} & 0\\
                8k_0{\alpha}(k_0)\re^{2\pi(\nu_2-\nu_1)} & 1-8k_0|{\alpha}(k_0)|^2\re^{2\pi\nu_2} & 0\\
                0 & 0 & 1\\
            \end{pmatrix}\re^{-\frac{\ri z_2^2}{4}\bar{\sigma}_3}:=\re^{\frac{\ri z_2^2}{4}\bar{\sigma}_3}\tilde{\mathcal{V}}(\zeta)\re^{-\frac{\ri z_2^2}{4}\bar{\sigma}_3}.
        \end{align*}
        The above calculation is worth explaining: from $z_{2-}^{-2\ri\hat\nu}z_{2+}^{2\ri\hat\nu}=\re^{\ri(2\nu_1-\nu_2)(\ln_0(z_{2+})-\ln_0(z_{2-}))}\re^{\ri(2\nu_3-\nu_2)(\ln_\pi(z_{z+})-\ln_\pi(z_{z-}))}=\re^{-2\pi(2\nu_3-\nu_2)}$ and $z_{2-}^{-\ri\tilde{\nu}}z_{2+}^{\ri\tilde{\nu}}=\re^{-2\pi\hat\nu}=1-8k_0\alpha(k_0)|^2\re^{2\pi\nu_2}$ for $z_2\in X_5$, we have 
        \begin{equation*}
            \left(1-8k_0|\tilde{\alpha}(k_0)|^2z_{2-}^{-2\ri\hat\nu}z_{2+}^{2\ri\hat\nu}\right)z_{2-}^{\ri\tilde{\nu}}z_{2+}^{-\ri\tilde{\nu}}=
                \left(1-8k_0|{\alpha}(k_0)|^2\re^{2\pi\nu_2}\right)z_{2-}^{\ri\tilde{\nu}}z_{2+}^{-\ri\tilde{\nu}}=1,
                \end{equation*}
        \begin{equation*}
            \tilde{\alpha}^*(k_0)z_{2-}^{-2\ri\hat\nu}z_{2-}^{\ri\tilde{\nu}}z_{2+}^{\ri\tilde{\nu}}=\tilde{\alpha}^*(k_0)z_{2-}^{-2\ri\hat\nu}z_{2+}^{2\ri\tilde\nu}z_{2-}^{\ri\tilde{\nu}}z_{2+}^{-\ri\tilde{\nu}}=\tilde{\alpha}^*(k_0)z_{2-}^{-2\ri\hat\nu}z_{2+}^{2\ri\hat\nu}z_{2-}^{\ri\tilde{\nu}}z_{2+}^{-\ri\tilde{\nu}}=\alpha^*(k_0)\re^{2\pi\nu_3}\re^{-2\pi(2\nu_3-\nu_2)}\re^{2\pi\hat\nu}={\alpha}^*(k_0)\re^{2\pi\nu_1},
        \end{equation*}
        and 
        \begin{equation*}
           8k_0\tilde{\alpha}(k_0)z_{2+}^{2\ri\hat\nu}z_{2-}^{-\ri\tilde{\nu}}z_{2+}^{-\ri\tilde{\nu}}=8k_0\tilde{\alpha}(k_0)z_{2+}^{2\ri\hat\nu}z_{2+}^{-2\ri\tilde\nu}z_{2-}^{-\ri\tilde{\nu}}z_{2+}^{\ri\tilde{\nu}}= 8k_0\tilde{\alpha}(k_0)z_{2-}^{-\ri\tilde{\nu}}z_{2+}^{\ri\tilde{\nu}}=8k_0\alpha(k_0)\re^{2\pi\nu_3}\re^{-2\pi\hat\nu}=8k_0{\alpha}(k_0)\re^{2\pi(\nu_2-\nu_1)}.
        \end{equation*}
      On the other hand, for $z_2\in X_6$, a similar calculation yields:
      \begin{align*}
            V^{\psi}_6(\zeta,z_2)&=\left(\left(V^{X_3}_4\right)^{-1} z_2^{-\ri\tilde\nu\bar{\sigma}_3}\right)_-^{-1}\left(V^{X_3}_1z_2^{-\ri\tilde\nu\bar{\sigma}_3}\right)_+=
                \re^{\frac{\ri z_2^2}{4}\bar{\sigma}_3}z_{2}^{\ri\tilde\nu\bar{\sigma}_3}
                \begin{pmatrix}
                1 & -\hat{\alpha}^*(k_0)z_{2+}^{-2\ri\hat\nu} & 0\\
                8k_0\hat{\alpha}(k_0)z_{2-}^{2\ri\hat\nu} & 1-8k_0|\hat{\alpha}(k_0)|^2z_{2-}^{2\ri\hat\nu}z_{2+}^{-2\ri\hat\nu} & 0\\
                0 & 0 & 1\\
            \end{pmatrix}z_{2}^{-\ri\tilde\nu\bar{\sigma}_3}\re^{-\frac{\ri z_2^2}{4}\bar{\sigma}_3}\\
           &= \re^{\frac{\ri z_2^2}{4}\bar{\sigma}_3}
           \begin{pmatrix}
                1 & -\hat{\alpha}^*(k_0)z_{2+}^{-2\ri\hat\nu}z_2^{2\ri\tilde{\nu}} & 0\\
                8k_0\hat{\alpha}(k_0)z_{2-}^{2\ri\hat\nu}z_2^{-2\ri\tilde{\nu}} & 1-8k_0|\hat{\alpha}(k_0)|^2z_{2-}^{2\ri\hat\nu}z_{2+}^{-2\ri\hat\nu} & 0\\
                0 & 0 & 1\\
            \end{pmatrix}\re^{-\frac{\ri z_2^2}{4}\bar{\sigma}_3}=\re^{\frac{\ri z_2^2}{4}\bar{\sigma}_3}\tilde{\mathcal{V}}(\zeta)\re^{-\frac{\ri z_2^2}{4}\bar{\sigma}_3},
        \end{align*}
        because $|\hat{\alpha}(k_0)|^2z_{2-}^{2\ri\hat\nu}z_{2+}^{-2\ri\hat\nu}=|{\alpha}(k_0)|^2\re^{2\pi\nu_2}$, $\hat{\alpha}^*(k_0)z_{2+}^{-2\ri\hat\nu}z_2^{2\ri\tilde{\nu}}=\hat{\alpha}^*(k_0)z_{2+}^{-2\ri\hat\nu}z_{2+}^{2\ri\tilde{\nu}}=\alpha^*(k_0)\re^{2\pi\nu_1}$, and $$\hat{\alpha}(k_0)z_{2-}^{2\ri\hat\nu}z_2^{2\ri\tilde{\nu}}=\hat{\alpha}(k_0)z_{2-}^{2\ri\hat\nu}z_{2+}^{2\ri\hat{\nu}}z_{2+}^{-2\ri\hat{\nu}}z_{2+}^{2\ri\tilde{\nu}}=\hat{\alpha}(k_0)z_{2-}^{2\ri\hat\nu}z_{2+}^{2\ri\hat{\nu}}=\alpha(k_0)\re^{2\pi\nu_1}\re^{2\pi\nu_2-4\pi\nu_1}={\alpha}(k_0)\re^{2\pi(\nu_2-\nu_1)}.$$
        It should also be noted here that although $z_{2+}^{\ri\tilde\nu}=z_{2-}^{\ri\tilde\nu}$, in the computation of the matrix $V^\psi(\zeta,z_2)$ for $z_2\in X_6$, we still need to choose a uniform direction for $z_2$ approaching $X_6$. Here we choose the approach from the upper half-plane.

        The subsequent solving procedure is the same as the standard procedure for the RH problem \ref{pcmodel_k1}, and will not be repeated here. Moreover, since $\hat\nu=\tilde{\nu}$ holds as a real-valued function, in the result of RH problems \ref{pcmodel1_-k1} and \ref{pcmodel1_k1}, we still use the notation $\hat\nu$.

       \end{proof}

        \begin{rhp}\label{pcmodel1_k1}
    Find a $3\times3$ matrix-valued function $M^{X_4}(\zeta,z_1)$ with the following properties:
    \begin{itemize}
		\item The function $M^{X_4}(\zeta,z_1)$ is holomorphic for $z_1\in\mathbb{C}\setminus X$.
		\item The function $M^{X_4}(\zeta,z_1)$ is analytic for $z_1\in\mathbb{C}\setminus X$, and satisfy the following relationship:
		\begin{equation*}
			M^{X_4}_+(\zeta,z_1)=M^{X_4}_-(\zeta,z_1)V^{X_4}(\zeta,z_1), \quad z_1\in X,
		\end{equation*}
    where 
     \begin{align*}
        & V^{X_4}_5(\zeta,z_1)=\begin{pmatrix}
					1 & 0 & 0\\
					0 & 1 & 0\\
					0 & \tilde{\alpha}^*(k_0)z_1^{-2\ri\hat\nu}\re^{\frac{\ri z_1^2}{2}} & 1\\
				\end{pmatrix},\, z_1\in X_1,\quad 
      &&V^{X_4}_6(\zeta,z_1)=\begin{pmatrix}
					1 & 0  & 0\\
					0 & 1 & 8k_0\hat{\alpha}(k_0)z_1^{-2\ri\hat\nu}\re^{\frac{\ri z_1^2}{2}}\\
					0 & 0  & 1\\
				\end{pmatrix},\, z_1\in X_2,\\
        & V^{X_4}_7(\zeta,z_1)=\begin{pmatrix}
					1 & 0 & 0\\
					0 & 1 & 0\\
					0 & -\hat{\alpha}^*(k_0)z_1^{-2\ri\hat\nu}\re^{\frac{\ri z_1^2}{2}} & 1\\
				\end{pmatrix},\, z_1\in X_3,\quad 
        &&v V^{X_4}_8(\zeta,z_1)=\begin{pmatrix}
					1 & 0 & 0\\
					0 & 1 & -8k_0\tilde{\alpha}(k_0)z_1^{2\ri\hat\nu}\re^{-\frac{\ri z_1^2}{2}}\\
					0  & 0  & 1\\
				\end{pmatrix},\, z_1\in X_4,
     \end{align*}
     with the branch cut running along the positive real axis, i.e., $z_1^{2\ri\nu}=\re^{2\ri\nu\ln_0(z_1)}$.
		
		\item  As $z_1\rightarrow\infty$, $M^{X_4}(\zeta,z_1)=I+\mathcal{O}(z_1^{-1})$.
		\item  As $z_1\rightarrow0$, $M^{X_4}(\zeta,z_1)=\mathcal{O}(1)$.
	\end{itemize}
     
    \end{rhp}

       The solution $M^{X_4}(\zeta,z_1)$ of the RH problem \ref{pcmodel1_k1} admits the following expansion
    \begin{equation*}\label{Mx41_expand}
        M^{X_4}(\zeta,z_1)=I+\frac{M_1^{X_4}(\zeta)}{z_1}+\mathcal{O}\left(\frac{1}{z_1^2}\right), \quad z_1\to\infty,
    \end{equation*}
    where
    \begin{equation*}
        M_1^{X_4}(\zeta)=\begin{pmatrix}
            0 & 0 & 0\\
            0 & 0 & \bar\alpha_{23}\\
            0 & \bar\alpha_{32} & 0\\
        \end{pmatrix},
    \end{equation*}
and
      \begin{equation*}
           \bar\alpha_{23}=\frac{\sqrt{2\pi}\re^{-\frac{\pi\ri}{4}}\re^{-\frac{5\pi\hat\nu}{2}}}{{\alpha}^*(k_0)\re^{2\pi\nu_1}\Gamma(-\ri\hat\nu)},\qquad
          \bar\alpha_{32}=\frac{\sqrt{2\pi}\re^{\frac{\pi\ri}{4}}\re^{\frac{3\pi\hat\nu}{2}}}{8k_0{\alpha}(k_0)\re^{2\pi(\nu_2-\nu_1)}\Gamma(\ri\hat\nu)}.\\ 
       \end{equation*}
       
 The proof of this result is similar to the procedures of RH problems \ref{pcmodel_k1} and \ref{pcmodel1_-k1}. 
\end{appendices}\\
\par
   
\noindent{\bf Declaration of competing interest.}  

The authors have no conflicts 
of interest to declare that are relevant to the content of this paper.
	
\noindent{\bf Data availability.} 
    
    This manuscript has no associated data.

\subsection*{\bf Acknowledgements}
	This work is supported by the National Natural Science Foundation of China, Grant No. 12371247 and No. 12431008, Beijing Natural Science Foundation Grant No. 1262012 and No. JQ26004 and Key Project of the Natural Science Foundation of Inner Mongolia Autonomous Region Grant No. 2026ZD036.
	
	\bibliographystyle{amsplain}

\end{document}